\renewcommand{\tocsection}[3]{%
  \indentlabel{\@ifnotempty{#2}{\bfseries\ignorespaces#1 #2\quad}}\bfseries#3}
\renewcommand{\tocsubsection}[3]{%
  \indentlabel{\@ifnotempty{#2}{\ignorespaces#1 #2\quad}}#3}
\newcommand\@dotsep{4.5}
\def\@tocline#1#2#3#4#5#6#7{\relax
  \ifnum #1>\c@tocdepth % then omit
  \else
    \par \addpenalty\@secpenalty\addvspace{#2}%
    \begingroup \hyphenpenalty\@M
    \@ifempty{#4}{%
      \@tempdima\csname r@tocindent\number#1\endcsname\relax
    }{%
      \@tempdima#4\relax
    }%
    \parindent\z@ \leftskip#3\relax \advance\leftskip\@tempdima\relax
    \rightskip\@pnumwidth plus1em \parfillskip-\@pnumwidth
    #5\leavevmode\hskip-\@tempdima{#6}\nobreak
    \leaders\hbox{$\m@th\mkern \@dotsep mu\hbox{.}\mkern \@dotsep mu$}\hfill
    \nobreak
    \hbox to\@pnumwidth{\@tocpagenum{\ifnum#1=1\bfseries\fi#7}}\par% <-- \bfseries for \section page
    \nobreak
    \endgroup
  \fi}
\renewcommand\csname r@tocindent0\endcsname{0pt}
\def\l@subsection{\@tocline{2}{1pt}{2.5pc}{5pc}{}}
\newcommand{\verti}[1]{{\left\vert\kern-0.25ex\left\vert\kern-0.25ex\left\vert #1 
    \right\vert\kern-0.25ex\right\vert\kern-0.25ex\right\vert}}
\definecolor{BB}{rgb}{0, 0.35, .75}
\definecolor{BO}{rgb}{.7, .3, 0}
\definecolor{BeauBlue}{rgb}{0, 0.2, .85}
\definecolor{BeauOrange}{rgb}{.85, .2, 0}
\setlist{leftmargin=*}
\numberwithin{equation}{section}
\newtheoremstyle{corsivo}
   {\medskipamount}{\medskipamount}%
   {\itshape}{}%
   {\bfseries}{}%
   { }
   {\thmname{#1}\thmnumber{\@ifnotempty{#1}{ }\@upn{#2}}%
    \thmnote{ {\bfseries\boldmath(#3)}}.}%
\theoremstyle{corsivo}
\newtheorem{theorem}{Theorem}[section]
\newtheorem*{theorem*}{Theorem}
\newtheorem{lemma}[theorem]{Lemma}
\newtheorem{corollary}[theorem]{Corollary}
\newtheorem{proposition}[theorem]{Proposition}
\newtheorem{definition}[theorem]{Definition}
\newtheoremstyle{dritto}
   {\medskipamount}{\medskipamount}%
   {\rmfamily}{}%
   {\bfseries}{}%
   { }
   {\thmname{#1}\thmnumber{\@ifnotempty{#1}{ }\@upn{#2}}%
    \thmnote{ {\bfseries\boldmath(#3)}}.}%
\theoremstyle{dritto}
\newtheorem{remark}[theorem]{Remark}
\newcommand{\sub}[1]{_{\mathrm{#1}}}
\newcommand{\su}[1]{^{\mathrm{#1}}}
\newcommand{\eps}{\varepsilon}
\newcommand{\s}{\sigma} 
\newcommand{\Id}{\mathbb{1}}  %%%%% Identity operator on H
\newcommand{\ex}{\mathrm{e}}
\newcommand{\ee}{\mathrm{e}}
\newcommand{\iu}{\mathrm{i}}
\newcommand{\ii}{\mathrm{i}}   %%%%% Imaginary unit
\newcommand{\di}{\mathrm{d}}
\newcommand{\dd}{\mathrm{d}}
\newcommand{\Bs}{\mathcal{B}}                     %%%% B-spaces
\newcommand{\cZ}{\mathcal{Z}}
\newcommand{\N}{\mathbb{N}}
\newcommand{\Z}{\mathbb{Z}}
\newcommand{\R}{\mathbb{R}}
\newcommand{\C}{\mathbb{C}}
\newcommand{\T}{\mathbb{T}}
\newcommand{\Do}{\mathcal{D}}
\newcommand{\Hi}{\mathcal{H}}
\newcommand{\Ss}{\mathcal{B}_{0}}
\newcommand{\F}{\mathcal{F}}
\newcommand{\scal}[2]{\left\langle #1 , #2 \right\rangle}                
\newcommand{\inner}[2]{\left\langle #1 , #2 \right\rangle}     
\newcommand{\norm}[1]{\left\| #1 \right\|}
\newcommand{\set}[1]{ \left\{  #1 \right\}} 
\DeclareMathOperator{\Tr}{Tr}         %  Hilbert space trace
\DeclareMathOperator{\re}{Re}
\DeclareMathOperator{\Span}{Span} 
\DeclareMathOperator{\End}{End}
\DeclareMathOperator{\Sp}{Spectrum}
\DeclareMathOperator{\dist}{dist}
\newcommand{\ie}{{\sl i.\,e.\ }}   %%% id est
\newcommand{\eg}{{\sl e.\,g.\ }} %%%  exemplum gratiae
\newcommand{\Or}{{\mathcal{O}}}
\newcommand{\abs}[1]{\left\lvert#1\right\rvert}
\newcommand{\virg}[1]{``#1''}
\renewcommand{\(}{\left(}
\renewcommand{\)}{\right)}
\newcommand{\Cr}{{\mathcal{C}}}
\let\oldfootnote\footnote
\renewcommand{\footnote}[1]{\oldfootnote{\  #1}}
\title[Spin transport and lack of quantisation]{Spin transport and lack of quantisation for time-reversal symmetric insulators on the honeycomb structure}
\author[L.~Fresta, G.~Marcelli]{Luca Fresta  \and Giovanna Marcelli}
\date{\today}
\begin{document}

\begin{abstract}
We investigate spin transport in a class of time-reversal symmetric insulators on the honeycomb structure, the Kane--Mele model being an emblematic example in this class. 
We derive the spin conductivity by the linear response \`a la Kubo and show that it is well-defined and independent of the choice of the spin current.
For models that do not conserve the spin, we demonstrate that the deviation of the spin conductivity from the quantised value is, at worst, quadratic in the spin-non-conserving terms, thus improving previous results. 
Additionally, we show that the leading-order corrections are actually quadratic for some models in the class, demonstrating that the spin conductivity is not universally quantised. Consequently, our results show that, in general, there is no direct connection between the spin conductivity and the Fu--Kane--Mele index.
\end{abstract}

\maketitle

\vspace{-12mm}
\tableofcontents
\goodbreak

%%%%%%%%%%%%%%%%%%%%%%%%%%%%%%%%%%%%%%%%%

\section{Introduction}
The study of topologically protected phases of matter has received a great deal of attention in the physics and mathematics community over the last decades.
The most well-known example of a topological effect in condensed-matter systems is the \textit{Integer Quantum Hall Effect} (IQHE), which consists in the quantisation in units of $e^{2}/h$ of the Hall conductivity in two-dimensional samples at low temperatures and sufficient level of impurities \cite{vonKlitzing}. 
The striking feature of the IQHE is that the Hall conductivity, a physical observable depending on the complex microscopic details of the Hamiltonian, is universal and linked with a topological invariant. 
For non-interacting electrons, this is the Chern number of the vector bundle associated with the Fermi projector \cite{TKNN,Avron83} in the translation invariant case, and a Fredholm index in the heterogeneous case \cite{Bellissard94,Avron94}. The quantisation of the Hall conductance is strikingly robust and persists for gapped interacting electron systems as well \cite{Hastings, Porta17}, see also \cite{AvronSeiler, Ryu-inv,Bachmann18}. 
The connection with topological invariants can also be established via the effective gauge theory which emerges by marginalising over the fermionic degrees of freedom \cite{FrohlichKerl,FrohlichZee}, see also the more recent work \cite{FW}.
The above findings are based on the so-called Kubo formula for the Hall conductivity, which was laid on mathematically solid ground in \cite{Teufel,Bachmann20II}.

Another primary example of topologically protected phases are the so-called \textit{time-reversal topological insulators}, which were theoretically predicted in condensed-matter physics in \cite{KaneMele2005} and subsequently observed in experiments \cite{Molenkamp,Hasanexp}.
Time-reversal topological insulators are time-reversal-symmetric materials that exhibit the \textit{Quantum Spin Hall Effect} (QSHE): even though they are normal insulators in the bulk, they carry edge modes whose signature is a robust non-zero spin current \cite{SchulzBaldes}. See also \cite{Frohlich,FrohlichST} for the prediction of this effect in non-relativistic many-body systems.
The topological properties of such materials are captured by a topological invariant known as the
Fu--Kane--Mele index \cite{FuKaneMele} and directly related with the edge modes of the system \cite{GrafPorta,Avila}; see also the recent work \cite{BBR} for the extension of this index to interacting many-body systems.

For spin-conserving systems, the spin Hall conductivity is connected with the Fu--Kane--Mele index, as is the case for the charge Hall conductivity with the Chern number: in particular, the spin Hall conductivity is quantised in units of $e/h$, the quantisation integer modulo two being the Fu--Kane--Mele index \cite{Bernevig2,HasanKane}.
However, in spin-non-conserving systems, it remains both experimentally and theoretically unclear whether any spin response function can be associated with the Fu–Kane–Mele index. On the one hand, experimental measurements are imprecise, likely because of the presence of magnetic impurities that partially break time-reversal symmetry \cite{Molenkamp,Mol2}.
On the other hand, theoretical problems already arise in identifying a suitable spin-current operator, the system lacking the associated conserved quantity \cite{ShiZhangXiaoNiu}. In this regard, two operators serve as natural generalizations of the spin current for spin-conserving systems: the \textit{conventional} spin current, defined as the product of the charge current with the spin operator, and the \textit{proper} spin current introduced in \cite{ShiZhangXiaoNiu,Zhang}, which is tied with a continuity equation and with the Onsager relations.

In \cite{MaPaTa}, the authors initiated a systematic study of spin transport in two-dimensional insulators, with the long-term goal of exploring possible connections with the Fu–Kane–Mele index. They showed that for any gapped, periodic, short-range, discrete, non-interacting Hamiltonian, one can choose a suitable definition of the spin current operator such that the Kubo-like terms for spin Hall conductivity and spin Hall conductance coincide. This equivalence relies on the vanishing of the mesoscopic average of the spin-torque response. The study of spin transport was further developed in subsequent work \cite{MaPaTe}, which considered gapped, periodic one-particle Hamiltonians in both discrete and continuum settings, in arbitrary spatial dimension $d$. Addressing both the conventional and the proper definitions of the spin current operator, the authors derived a general formula for the spin conductivity by constructing the non-equilibrium almost-stationary state (NEASS). In a specific class of lattice-periodic models with discrete rotational symmetry, they established the equality of the spin conductivity tensors arising from the two different definitions of the spin current operator.

In this work, using the Kubo formula, we continue the study of the spin conductivity initiated in \cite{MaPaTa, MaPaTe,MaMo}, by specialising to short-range, translation and $2\pi/3$-rotation symmetric models on the two-dimensional honeycomb structure.
The presence of spatial symmetries is crucial in our analysis to remove the ill-posedness problems and the ambiguities related to the choice of the spin current. 
Unlike \cite{MaPaTe}, in which the NEASS approach is used, here we follow Kubo's approach to transport coefficients. 
Moreover, by further exploiting spatial inversion symmetries of the honeycomb structure, we prove that the spin conductivity is an antisymmetric tensor. This property makes the spin conductivity invariant under rotations, and thus establishes that it measures an intrinsic transverse response, independent of the orientation of the laboratory \cite{Sinova}. 

In addition to reviewing key well-posedness findings, we prove two novel results for short-range, translation- and $2\pi/3$-rotation symmetric insulators that are time-reversal symmetric and that almost conserve the spin.
First, we show that the deviation of the spin conductivity from the quantised value (which is observed in spin-conserving systems) is, at worst, quadratic in the spin-non-conserving terms, improving previous results \cite{SchulzBaldes, MaPaTe}. Note that in theoretical units, which we henceforth adopt, quantisation of the spin conductivity means valued in $ \frac{1}{2 \pi} \mathbb{Z}$. 
Additionally, we show that in certain models the leading-order corrections to the quantised value are indeed quadratic, implying that spin conductivity is not universally quantised. 
As a result, our findings demonstrate that no general connection exists between spin conductivity and the Fu–Kane–Mele index when implementing the linear response by Kubo's formula. To the best of our knowledge, this is the first rigorous result in this direction.

Let us now state these results more precisely. We refer to insulators as pairs $(H,\mu)$ where $H$ is a Hamiltonian with a spectral gap, and where $\mu \in \mathbb{R}$ is in the spectral gap. An insulator $(H,\mu)$ is short-range, translation- and $2\pi/3$-rotation symmetric if $H$ is such, see \eqref{eq:gamma-per-rot-operators} and Section \ref{sec:Honeycomb_structure} for details.

Let us denote by  $S_{z}=\frac{1}{2}\s_{z}$ the spin operator in the $z$ direction. 
To quantify spin conservation, we follow \cite[Eq.~(1)]{SchulzBaldes} and split any Hamilton operator $H$ as follows
\begin{equation*}
H\su{sc}:=H+2 [S_z,H]S_z,\qquad H\su{snc}:=2 [H,S_z]S_z \;,
\end{equation*}
so that $[H\su{sc},S_{z}]= 0$ and $H=H\su{sc}+H\su{snc}$.
We will also need a stronger norm on the space of bounded operators, defined in \eqref{eqn:verti} and denoted by $\verti{\, \cdot \,}$.
An insulator $(H,\mu)$ almost conserves the spin when $\verti{[H,S_z]}$ or equivalently $\verti{H\su{snc}}$ is small depending on the distance between $\mu$ and $\sigma(H)$, see Definition \ref{def:almspincon} for details. In particular, we will always consider $H^{\mathrm{snc}}$ small enough so that also $(H\su{sc},\mu)$ is an insulator. 

We summarize our results in the following theorem.
\begin{theorem} \label{thm:intro}
\begin{enumerate}[label=(\roman*), ref=(\roman*)]
\item \label{it:i} For any short-range, translation- and $2\pi/3$-rotation symmetric insulator $(H,\mu)$, the spin conductivity tensor, which we denote by $\sigma^{s}(H,\mu) =\big(\sigma^{s}_{ij} (H,\mu) \big)_{i,j=1,2}$, is well-defined and independent of the choice of the spin current operator. Moreover, if $H$ is (spatially) inversion symmetric it holds true that $\sigma^{s}_{i j} (H,\mu) = - \sigma^{s}_{j i} (H,\mu)$, for any $i,j=1,2$.
\item\label{it:ii} For any short-range, translation- and $2\pi/3$-rotation symmetric insulator $(H,\mu)$, that is time-reversal symmetric and almost conserves the spin, there exists a constant $C$ independent of $\verti{[H,S_z]}$ such that

\begin{equation*}
\mathrm{dist}\Big(\sigma^{s}_{1 2}(H,\mu) \,,\,\frac{1}{2 \pi} \mathbb{Z} \Big) \leq  C \verti{[H,S_z]}^{2}.
\end{equation*}
\item\label{it:iii} There exist short-range, translation- and $2\pi/3$-rotation symmetric insulators $(H,\mu)$, that are time-reversal symmetric and almost conserve the spin, and a constant $C>0$ independent of $\verti{[H,S_z]}$ such that
\begin{equation*}
\mathrm{dist}\Big(\sigma^{s}_{1 2}(H,\mu) \,,\,\frac{1}{2 \pi} \mathbb{Z} \Big)  \geq C \verti{[H,S_z]}^{2}.
\end{equation*}
\end{enumerate}
\end{theorem}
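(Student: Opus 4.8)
\textbf{Part \ref{it:i}.} The plan is to write the static Kubo formula for the spin conductivity both with the \emph{conventional} spin current $J^{s}_{j}=\tfrac12\{J_{j},S_{z}\}$, where $J_{j}=\iu[H,X_{j}]$ is the charge current, and with the \emph{proper} spin current of \cite{ShiZhangXiaoNiu}, and to show that both expressions are well-defined and equal. Well-posedness rests on the standard ingredients: the spectral gap at $\mu$ yields, through Combes--Thomas estimates, an exponentially localised Fermi projection $P=\chi_{(-\infty,\mu]}(H)$, while short-rangedness makes the current operators bounded and $[P,X_{j}]$ bounded and exponentially localised in the $j$-th direction, so that after a Bloch decomposition all Kubo terms become finite integrals over the Brillouin torus of traces of finite-rank operators. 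The two spin currents differ by the torque-dipole operator $\tfrac{\iu}{2}\{X_{j},[H,S_{z}]\}$, hence the two conductivities differ by the Kubo response of the spin torque $[H,S_{z}]$; I would show that this difference vanishes by exploiting the discrete $C_{3}$ rotational symmetry of the honeycomb structure, which kills the mesoscopic average of the torque response exactly as in \cite{MaPaTa}. For the antisymmetry, conjugating the Kubo formula by the unitary implementing a $2\pi/3$ rotation gives $\sigma^{s}=R^{\mathsf T}\sigma^{s}R$ with $R$ the corresponding rotation matrix; since $2\pi/3\notin\pi\Z$ this forces $\sigma^{s}=\alpha\,\Id+\beta\,\epsilon$ with $\epsilon$ the $2\times2$ antisymmetric symbol, and the scalar (longitudinal, dissipative) part $\alpha\,\Id$ is removed because $(H,\mu)$ is an insulator, equivalently by the $\mathrm{AII}$ time-reversal symmetry; hence $\sigma^{s}_{ij}=\beta\,\epsilon_{ij}=-\sigma^{s}_{ji}$. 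The main obstacle here is not any single computation but the systematic use of the spatial symmetries to cure the ill-posedness and ambiguity that are intrinsic to spin-non-conserving systems.

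\textbf{Part \ref{it:ii}.} The key is a spin-block parity argument. By \ref{it:i} we may compute with the conventional spin current. Decompose $\Hi=\Hi_{\uparrow}\oplus\Hi_{\downarrow}$ according to $\s_{z}$; then $H\su{sc}$, the spin $S_{z}$ and the position operators $X_{j}$ are block-diagonal, whereas $H\su{snc}$ is block-off-diagonal, since a direct computation from the definition gives $\{H\su{snc},S_{z}\}=0$. Interpolating by $H_{\lambda}:=H\su{sc}+\lambda H\su{snc}$ for $\lambda\in[0,1]$, the smallness of $\verti{H\su{snc}}$ guarantees that every $H_{\lambda}$ is an insulator with a gap of uniform size around $\mu$, so $g(\lambda):=\sigma^{s}_{12}(H_{\lambda},\mu)$ is real-analytic on $[0,1]$. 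At $\lambda=0$ everything entering the Kubo integrand---$P(H_{0})$, its $k$-derivatives, the zeroth-order spin current, $S_{z}$---is block-diagonal, so $g(0)$ reduces to $\tfrac12(\sigma^{c}_{\uparrow}-\sigma^{c}_{\downarrow})$, a half-difference of Chern numbers, which lies in $\tfrac{1}{2\pi}\Z$ because $\mathrm{AII}$ time-reversal symmetry forces $\sigma^{c}_{\downarrow}=-\sigma^{c}_{\uparrow}$. By first-order perturbation theory the $\lambda$-derivatives at $0$ of the Fermi projection and of the spin current are block-off-diagonal, so every term of $g'(0)$ contains exactly one block-off-diagonal factor and all others block-diagonal, making the fibre integrand block-off-diagonal and hence traceless: $g'(0)=0$. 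Finally, the usual Riesz-projection and resolvent estimates (in the stronger norm $\verti{\,\cdot\,}$) give $|g''(\lambda)|\le C\verti{H\su{snc}}^{2}$ uniformly on $[0,1]$, so Taylor's theorem with integral remainder yields $|g(1)-g(0)|\le\tfrac12\sup_{[0,1]}|g''|\le C\verti{H\su{snc}}^{2}\le C'\verti{[H,S_{z}]}^{2}$; since $g(0)\in\tfrac{1}{2\pi}\Z$ the claim follows. The main obstacle is organising the trace-per-unit-volume and perturbative estimates so that both the first-order cancellation and the uniform second-order bound are rigorous, and establishing (or quoting) the quantisation of $g(0)$.

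\textbf{Part \ref{it:iii}.} The plan is to exhibit an explicit family. Take $H\su{sc}$ to be the spin-conserving part of the Kane--Mele model---nearest-neighbour hopping, intrinsic spin--orbit coupling, and possibly a staggered sublattice potential---tuned to a non-trivial phase, and let $H\su{snc}=\lambda\,V$ with $V$ a Rashba-type term compatible with the symmetries defining $\mathrm{AII}^{\hexagon}$; then $H_{\lambda}=H\su{sc}+\lambda V$ lies in the class with a uniform gap for $\lambda$ small. By \ref{it:ii}, $g(\lambda)=g(0)+a_{2}\lambda^{2}+O(\lambda^{3})$ with $g(0)\in\tfrac{1}{2\pi}\Z$, so it suffices to show that the second-order coefficient $a_{2}=\tfrac12 g''(0)$ is nonzero for a suitable choice of parameters. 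Since the model is lattice-periodic with finite-dimensional Bloch fibres, $a_{2}$ is an explicit integral over the Brillouin torus of a second-order perturbation-theory expression built from $P_{k}$, $\partial_{k}P_{k}$, $S_{z}$ and $V_{k}$; I would evaluate it in a regime where the computation becomes tractable---close to a band-touching transition, or in a flat-band or large-mass limit---verify $a_{2}\ne0$ there, and conclude by continuity that $a_{2}\ne0$ on an open set of parameters. Given $a_{2}\ne0$, for $\lambda$ small one has $\dist\!\big(\sigma^{s}_{12}(H_{\lambda},\mu),\tfrac{1}{2\pi}\Z\big)=|g(\lambda)-g(0)|\ge\tfrac12|a_{2}|\,\lambda^{2}\ge c\,\verti{[H_{\lambda},S_{z}]}^{2}$, since $\verti{[H_{\lambda},S_{z}]}=\lambda\verti{[V,S_{z}]}$ with $\verti{[V,S_{z}]}>0$ fixed. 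The heart of the argument---and the main obstacle---is the non-vanishing of $a_{2}$: an honest, if limit-assisted, computation is unavoidable, and the model and regime must be chosen so that all defining symmetries and the spectral gap are retained while $a_{2}$ stays off zero.
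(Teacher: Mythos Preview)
\textbf{Part \ref{it:i}.} Your treatment of well-posedness and of the torque term via $C_{3}$ symmetry matches the paper. The antisymmetry argument, however, has a gap: after $C_{3}$ reduces $\sigma^{s}$ to $\alpha\,\Id+\beta\,\epsilon$, you claim $\alpha=0$ follows from the insulator condition or from time reversal. Neither works directly for \emph{spin} conductivity. The vanishing of the longitudinal charge conductivity for gapped systems relies on $J_{i}$ being the time derivative of $X_{i}$, which fails for $J^{s}_{i}$ when $[H,S_{z}]\neq 0$; and under the antiunitary $\Theta$ the spin current is even while the linear-response operator does not transform with a simple sign, so time reversal alone does not kill $\sigma^{s}_{ii}$. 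The paper instead uses the $\mathrm{v\otimes h}$ reflection $R_{\mathrm{v\otimes h}}$ built into the $\mathrm{AII}^{\hexagon}$ class: since $R_{\mathrm{v\otimes h}}S_{z}R_{\mathrm{v\otimes h}}^{-1}=-S_{z}$ and $R_{\mathrm{v\otimes h}}X_{i}R_{\mathrm{v\otimes h}}^{-1}=(-1)^{i-1}X_{i}$, one gets $\tau(J^{\mathrm{conv}}_{i}L_{\eta,i})=-\tau(J^{\mathrm{conv}}_{i}L_{\eta,i})$. This is easily repaired, but you should invoke the right symmetry.

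\textbf{Part \ref{it:ii}.} Your interpolation argument is correct and equivalent to the paper's, just organised differently. The paper does not interpolate; it works directly with the decomposition $\sigma^{s}_{ij}=\Sigma^{\mathrm{sc}}_{ij}+\Sigma^{\mathrm{snc}}_{ij}$ and proves an algebraic lemma: if $A,B,F$ all lie in the space $\mathcal{Z}(S_{z})$ of operators whose commutator with $S_{z}$ is $O(\verti{[H,S_{z}]})$, then $|\tau(A[S_{z},F]B)|\le C\verti{[H,S_{z}]}^{2}$. The proof inserts $4S_{z}^{2}=\Id$ and redistributes one $S_{z}$ onto $A$ or $B$, extracting a second commutator. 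Your block-parity cancellation of $g'(0)$ is the same mechanism in perturbative disguise; both approaches use exactly once that $S_{z}^{2}$ is central, which is why the improvement stops at second order.

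\textbf{Part \ref{it:iii}.} Here there is a genuine gap. You propose to compute $a_{2}=\tfrac12 g''(0)$ by second-order perturbation theory, possibly ``close to a band-touching transition.'' Two problems. First, the paper shows that for the standard Kane--Mele model (nearest-neighbour Rashba only) the spin-conductivity jump across the critical line is exactly $-\tfrac{1}{2\pi}$, quantised; the non-quantised correction appears only after adding a \emph{next-to-nearest-neighbour} Rashba term $H_{\mathrm{R}2}$ with coupling $r\neq 0$. So the choice of $V$ is not innocent, and the paper's construction of this specific extension is part of the result. Second, near the band-touching the gap is smaller than $\|H^{\mathrm{snc}}\|$, so $\|H^{\mathrm{snc}}\|\cdot\|(H^{\mathrm{sc}}-\mu)^{-1}\|$ is large and perturbation theory in $\lambda_{\mathrm{R}}$ gives no control; the paper flags this explicitly as the main obstacle. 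Their workaround is not to compute $a_{2}$ at all, but to establish an imaginary-time (Wick-rotated) integral representation of $\sigma^{s}_{12}$ and compute the \emph{discontinuity} $\delta\sigma^{s}_{12}$ across the critical curve by isolating the singular contribution of the Green function near the Dirac points. The result is the closed formula $\delta\sigma^{s}_{12}=-\tfrac{1}{2\pi}\big(1+\lambda_{\mathrm{R}}^{2}r/(2t\lambda_{\mathrm{SO}}-\lambda_{\mathrm{R}}^{2}r)\big)$, manifestly non-quantised for $r\neq 0$, and a short contradiction argument then yields the lower bound. Your plan to verify $a_{2}\neq 0$ by an explicit integral at a fixed gapped point is not wrong in principle, but you have not identified a regime where it is actually tractable, and the regime you name (near band-touching) is precisely where it fails.
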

Each item of Theorem \ref{thm:intro} is elaborated in different sections of the manuscript: part $(i)$ is addressed in Propositions \ref{thm:main} and \ref{prop:antisymmetry}, part $(ii)$ in Theorem \ref{cor:main1}, and finally part $(iii)$ in Theorem \ref{thm:main2}. 
As previously mentioned, the proof of $(i)$ follows the ideas of \cite{MaPaTa, MaPaTe}, but relies on Kubo’s approach to the computation of transport coefficients. 
Our analysis shows that the derivation of the spin conductivity by Kubo's formula agrees with the one obtained via NEASS, see Remark \ref{rmk: NEASS}. 
The proof of $(ii)$ involves more refined algebraic manipulations than in previous work, allowing us to extract commutators between operators that nearly commute with the spin operator, thus providing improved estimates. Iterating these manipulations does not lead to further cancellations or sharper bounds; yet, $(ii)$ does not rule out the possibility that the corrections could be universally of higher order. This is however not the case due to $(iii)$.
To prove $(iii)$, we explicitly construct an extension of the Kane--Mele model, introduced in \cite{KaneMele2005,KaneMele_graphene}, with next-to-nearest-neighbour Rashba interaction (see \eqref{eqn:hKM}) and analyse the phase diagram for suitable parameter choices. In Proposition \ref{prop:robust-dirac-points}, we show that, like in the Kane--Mele model, this diagram features three insulating phases separated by two semi-metallic phases (i.e., a metallic phase with a point-like Fermi surface). Using the ``imaginary-time'' representation of the Kubo formula for the spin conductivity, 
established via a Wick rotation (as in \cite{Porta16, Porta20}) and detailed in Theorem \ref{prop:imaginary-time-repr}, we compute the discontinuity of the spin conductivity across the phase transition in Proposition \ref{prop-discontinuity}.
The discontinuity is not quantised, with the leading-order corrections are quadratic in the spin-non-conserving terms. Using continuity arguments and exact results in the absence of Rashba interaction, we conclude that the spin conductivity is not universally quantised for such models. A key challenge in proving Proposition \ref{prop-discontinuity} is the presence of an arbitrarily small spectral gap, which is in particular smaller than the size of the spin-non conserving terms; in other words, $\|H\su{snc} \| \| (H\su{sc}  - \mu \mathbb{1})^{-1}\|$ is not small, making perturbative arguments useless.

Furthermore, we want to point out that the extension of the Kane--Mele model we propose draws inspiration from renormalization group considerations, compare with \cite{Porta16,Porta20} in the case of charge transport: the Kane--Mele model with an additional Hubbard-type interaction, exhibits large-scale properties at the critical energy that are effectively captured by a single-particle Hamiltonian which includes next-to-nearest-neighbour spin interactions.
Although the additional Rashba coupling is negligible in actual physical experiments and thus neglected in theoretical considerations, it turns out to be crucial for proving our non-universality result.
It is worth noting that the spin conductivity of the Kane–Mele model with two-body Hubbard interaction was studied in \cite{MPS}, but only in the spin-conserving case, hence the present work is complementary to their findings.  We plan to come back to the problem of studying the spin conductivity for the spin-non-conserving Kane--Mele model with Hubberd-type interaction in the future.

A last remark is in order. For quantum Hall systems at low temperature, disorder plays a fundamental role for the observation of quantised plateaus in the Hall conductivity \cite{Graf review,Bellissard94}. The role of disorder should  be important for topological insulators in general, although there seems to be a spectral gap in the systems studied experimentally \cite{Molenkamp,Mol2,Mol3}.
In any case, it would be interesting to extend the analysis of this work to models with a mobility gap, as is the case when suitable random potentials are added \cite{AizenmanWarzel}. Given the results in \cite{Bouclet}, we believe this extension to be at reach, and plan to come back to this problem in future work.
\medskip

\textbf{Structure of the paper}. In Section~\ref{sec:KM} we establish the framework for describing non-interacting fermions on the honeycomb structure and introduce the algebra of short-range, translation and $2\pi/3$-rotation symmetric linear operators $\mathcal{P}_{0}^{\hexagon}$. In Section~\ref{sec:spin-conductivity} we introduce the spin conductivity via the Kubo formula, and prove that for insulators in $\mathcal{P}_{0}^{\hexagon}$ they are well-defined and coincide for both \textit{proper} and \textit{conventional} spin current operator (Theorem \ref{thm:intro}\ref{it:i}). We then analyse time-reversal symmetric models in $\mathcal{P}_{0}^{\hexagon}$ for which the spin is almost conserved and show that the deviation of the spin conductivity from the quantised value is at worst quadratic in the spin-non-conserving terms (Theorem \ref{thm:intro}\ref{it:ii}). In Section~\ref{sec:phase-transition} we introduce the extended Kane--Mele model, study its phase diagram, and prove that the deviations from the quantised value are quadratic in the spin-non-conserving terms (Theorem \ref{thm:intro}\ref{it:iii}).
The latter result is based on an imaginary-time representation of the spin conductivity, which is proven in Appendix \ref{app:imaginary} via the Wick rotation.
\bigskip

{\bf Acknowledgements.}
We are deeply grateful to M.~
Porta for inspiring discussions on transport for electron systems and for his constant support, without which this work would not have been possible. G.~M.~is grateful to G.~Panati for introducing her to this research topic and for stimulating discussions. We also acknowledge G.M.~Graf and D.~Monaco for valuable and stimulating discussions.
We acknowledge the financial support from the European Research Council (ERC), under the European Union's Horizon 2020 research and innovation programme (ERC Starting Grant MaMBoQ, no. 802901).
The work of L.~F.~has been supported by the German Research Foundation (DFG) under Germany's Excellence Strategy - GZ 2047/1, Project-ID 390685813, and under SFB 1060 - Project-ID 211504053.
G.~M. acknowledges financial support from the Independent Research Fund Denmark--Natural Sciences, grant DFF–10.46540/2032-00005B and from the European Research Council through the ERC CoG UniCoSM, grant agreement n.724939.
We finally thank the anonymous referee for useful comments.

\section{The honeycomb structure}
\label{sec:KM}
\label{sec:Honeycomb_structure}

In this section, we introduce the Hilbert space $\mathcal{H}$ to describe electrons on the honeycomb structure, the relevant symmetries
and the subalgebra $\mathcal{P}_{0}^{\hexagon}(\mathcal{H})$ of short-range translation- and $2\pi/3$-rotation symmetric operators. Models in $\mathcal{P}_{0}^{\hexagon}(\mathcal{H})$ have desirable properties for studying spin transport, see Section \ref{sec:spin-conductivity}.  
Our notation (as well as Figure \ref{fig:honeycomb}) is largely borrowed from \cite[Appendix A]{MaPaTa}.
\medskip

We consider non-interacting electrons on a two-dimensional honeycomb structure $\Cr\subset \R^2$, see Figure \ref{fig:honeycomb}. This space configuration is characterized by the \emph{nearest-neighbour vectors}  
\begin{equation}\label{eq:d-vectors}
 d_1 = \frac{1}{2}( 1,-\sqrt{3}), \qquad d_2 = \frac{1}{2}(1, \sqrt{3}), \qquad  d_3 =(-1 , 0),
\end{equation}
and by the \emph{next-to-nearest-neighbour vectors} 
\begin{equation}\label{eq:a-vectors}
a_1  =  d_2 -  d_3 , \qquad  a_2 =  d_3- d_1 , \qquad  a_3 =  d_1- d_2  \;.
\end{equation}
\begin{figure}[htb]
\centering
\includegraphics{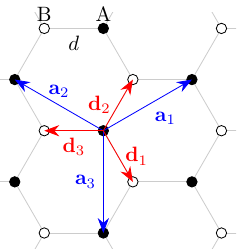}
\caption{\label{fig:honeycomb}The honeycomb structure.}
\end{figure}

\noindent
The vectors\footnote{Clearly, one of the $a_i$'s is superfluous, since it is an integer linear combination of the two others.} $\{ a_i\}_{i=1,2,3}$ generate the Bravais lattice $\Gamma := \Span_\Z\{  a_1,  a_2,  a_3\} \cong \Z^2$. 
The Bravais lattice $\Gamma$ acts on the honeycomb structure $\Cr$ by translations, \ie for any $\gamma\in\Gamma$
\begin{equation*}%\label{eqn:T}
\mathrm{T}_\gamma x:= x+\gamma\quad \text{ for all $x\in\Cr$}
\end{equation*}
which defines a group action $\mathrm{T}:\Gamma\times\Cr\to\Cr$.\\
\noindent
To describe non-interacting electrons with spin degrees of freedom we consider the Hilbert space
\begin{equation*}
\mathcal{H}:= \ell^2(\Cr) \otimes \mathbb{C}^{2} \;.
\end{equation*}
Let 
$\upsilon_A$ and $\upsilon_B$ denote respectively any vector reaching 
an $A$-site and a $B$-site (any black and white dot in Figure \ref{fig:honeycomb}) and introduce the triangular lattices
\[
\Gamma_\sharp:=\Gamma + \upsilon_\sharp,\qquad \sharp\in\{A,B\}.
\]
Observe that
\begin{equation*}%\label{eqn:crystal}
\Cr=\Gamma_A \cup \Gamma_B=\Gamma\, \cup\,(\Gamma+d_i)\quad\text{for every $i = 1,2,3$,}
\end{equation*}
where in the last equality we have chosen $\upsilon_A=(0,0)$ (thus, $\Gamma_A\equiv \Gamma$) and $\upsilon_B\in\{d_1,  d_2, d_3\}$, as shown in Figure \ref{fig:honeycomb}. Equivalently, one writes that
\[
\label{eqn:dimCr}
\Cr=\set{x=\gamma_x+[x],\,\text{ with }\, \gamma_x\in \Gamma,\, [x]\in\{0, d_i\}}\quad\text{for every $i=1,2,3$.}
\]
The above rewriting of elements in $\Cr$ is called \emph{dimerization} and depends on the chosen $ d_i$. With each dimerization one associates the \emph{dimerization isomorphism} 
\begin{equation}\label{eqn:dim}
\mathcal{D}_{i}\colon\ell^2(\Cr) \to \ell^2(\Gamma, \C^2)\,\text{ such that }\, (\mathcal{D}_{i}\psi)(\gamma):=
\begin{pmatrix}
\psi(\gamma)\\ 
\psi( d_i+\gamma)
\end{pmatrix}\text{for all $\gamma\in\Gamma$}
\end{equation}
which is extended to $\mathcal{H}$ by setting, with abuse of notation, $\mathcal{H} \ni \psi = (\psi_{\uparrow}, \psi_{\downarrow}) \mapsto {\mathcal{D}}_{i} \psi:= \begin{pmatrix}
\mathcal{D}_{i}\psi_{\uparrow}\\
\mathcal{D}_{i}\psi_{\downarrow}
\end{pmatrix}$. Then, one has that
\begin{equation*}%\label{eqn:dimiso}
\Hi \cong\ell^2(\Gamma)\otimes \C^4\cong\ell^2(\Z^2)\otimes \C^4.
\end{equation*}
We denote the algebra of linear bounded operators acting on $\Hi$ by $\Bs(\Hi)$ and by $\| \cdot \|$ the operator norm. For any operator $A \in \Bs(\Hi)$, we denote by ${[A]}_{{\mathcal{D}}_{i}}:={\mathcal{D}}_{i}\, A \, {\mathcal{D}}_{i}^{-1}$ the corresponding operator acting on $\ell^2(\Z^2)\otimes \C^4$. We also adopt the following notation: for every $n,m\in\Z^2$
\begin{equation}\label{eqn:dimnot}
A^{(i)}_{n,m}\equiv {\big([A]_{{\mathcal{D}}_i}\big)}_{n,m}:=\scal{\delta_n}{[A]_{{\mathcal{D}}_i}\delta_m}
\end{equation}
where $\delta_n$ is defined as usual ${(\delta_n)}_m=\delta_{n,m}$ and $\scal{\,\cdot\,}{\,\cdot\,}$ is the standard scalar product on $\ell^2(\Z^2)$. Obviously, $A$ is characterized by the family of matrices $\{A^{(i)}_{n,m}\}_{n,m\in\Z^2}\subset \End\big( \C^4\big)$. Unless specified, we will consider the $i=3$ dimerization and, by abuse of notation, refer simply to $\{ A_{n,m}\}_{n,m\in\Z^2}$.
For such matrices we use the notation $\abs{A_{n,m}}$ for the matrix norm. 

A special type of bounded operators are \emph{short-range operators}. A linear operator $A \in \Bs(\Hi)$ is called short-range if and only if there exist constants $C, \xi >0$ such that
\begin{equation}\label{eq: exp-decay}
\abs{{A}_{n,m}}\leq C\ex^{-\abs{n-m}_1/\xi}\qquad\forall\,n,m\in\Z^2,    
\end{equation}
where we define $\abs{n}_1:=\abs{n_1}+\abs{n_2}$ for every $n=(n_1,n_2)\in\Z^2$. Short-range operators are indeed bounded operators by the well-known Schur--Hölmgren estimate
\begin{equation}\label{eqn:SH-bound}
\|A\| \leq \max \bigg\{ \sup_{n \in \Z^{2}} \sum_{m \in \Z^{2}} |A_{n,m}| , \sup_{m\in \Z^{2}} \sum_{n \in \Z^{2}} |A_{n,m}|\bigg\} \;.
\end{equation}
We denote the subalgebra of short-range operators by
\begin{equation*}\label{eq:short-range-set}
\Ss(\Hi) := \{ A \in \Bs(\Hi) \,| \, \text{$A$ is short-range} \} \;.
\end{equation*}

A particular role is also played by $\Gamma$-periodic operators in $\Bs(\Hi)$. To provide a precise definition, we introduce the set of \emph{translation operators} $(T_{ v})_{v \in \R^{2}}$ on $\ell^2(\mathcal{C})$, defined as
\begin{equation}\label{eq:transl} 
(T_{ v} \psi)( x) = \begin{cases}
\psi( x -v)             & \text{ if }  { x- v} \in \mathcal{C}\\
0                                    & \text{ otherwise}
\end{cases} \qquad \text{for all } \psi \in \ell^2(\mathcal{C}) \;.
\end{equation}
A linear operator $A \in \Bs(\Hi)$ is said to be $\Gamma$-\emph{periodic} (or simply \emph{periodic}) if and only if
\begin{equation*}%\label{eq:priodic-operators}
[A,T_\gamma]=0 \quad \forall \gamma\in\Gamma \;.
\end{equation*}
We denote the subalgebras of $\Gamma$-periodic bounded and short-range operators respectively by
\begin{equation}\label{eq:gamma-per-operators}
\begin{split}
\mathcal{P}(\Hi)&:= \{A \in \Bs(\Hi) \,| \, \text{$A$ is $\Gamma$-periodic} \} \,, 
\\  
\mathcal{P}_{0}(\Hi) & :=\{A \in \Ss(\Hi) \,| \, \text{$A$ is $\Gamma$-periodic} \} \;.
\end{split}
\end{equation}
The analysis of the operators in $\mathcal{P}(\Hi)$ is typically performed with the aid of the so-called Bloch–Floquet transform, see \eg \cite{Kuchment, Teufel1} and references therein.
\begin{definition}[Bloch--Floquet Transform]
\label{def:BF}
Let $\T^2_*:= \R^2/\Gamma^*$ be the \emph{Brillouin torus}, where the dual lattice $\Gamma^*$ is defined by
\[ 
\Gamma^* := \{  k \in {\R^2} :\,   k   \gamma \in 2\pi\Z\quad \text{ for all } \gamma \in \Gamma\}. 
\]
For any $i=1,2,3$, the Bloch--Floquet transform 
\begin{equation*}
\F_{i} \colon \Hi\to L^2(\T^2_*, \C^4)
\end{equation*}
is the unitary extension of the following map defined on $\psi \in \Hi$ being compactly supported:
\begin{equation}\label{eq:Fourier-transf}
\(\F_i\psi\)(k):=\frac{1}{\sqrt{2\pi}}\sum_{\gamma\in\Gamma}\ee^{-\iu k\gamma}\(\mathcal{D}_{i}\psi\)(\gamma)=\frac{1}{\sqrt{2\pi}}\sum_{\gamma\in\Gamma}\ee^{-\iu k\gamma}\begin{pmatrix}
\psi_\uparrow(\gamma)\\ 
\psi_\uparrow( d_i+\gamma)\\
\psi_\downarrow(\gamma)\\ 
\psi_\downarrow( d_i+\gamma)
\end{pmatrix} \;,
\end{equation}
where $\mathcal{D}_{i}$ denoting the dimerization isomorphism introduced in \eqref{eqn:dim}.
\end{definition}

The Bloch--Floquet transform induces the following transformation on $\Bs(\Hi)$, $A \mapsto A_i:=\F_i\, A\, \F_i^*$, the latter acting on $L^2(\T^2_*, \C^4)$. For any $\Gamma$-periodic operator $A$, the operator $A_i$ is \emph{fibered} or \emph{decomposable} in the sense that for every $\psi\in L^2(\T^2_*, \C^4) $
\[
(A_i\psi)(k) = A_i(k) \psi( k)   \qquad \text{ for all }  k \in \T^2_* \;.
\]
The matrix $A_i(k)$ is called the \emph{fiber of the operator} $A$ at the point $k$, and, as the notation suggests, depends on the choice of the dimerization. For any $\Gamma$-periodic operator $A$ we will use the \emph{fiber direct integral} notation
\begin{equation}\label{eq:BF-fibration}
\F_i  \, A \, \F_i ^{*} = \int_{\T^2_*}^{\oplus}\di k\, A_i(k).
\end{equation}
Besides bounded linear operators, we will eventually consider densely-defined linear unbounded operators, constructed in terms of the position operator, see Sections \ref{sec:linear-response} and \ref{subsec:spin-cond-AII}.

Besides translations, other symmetries will play an important role in the rest of this work. These symmetries act non-trivially on the spin degrees of freedom. We let
\begin{equation*}
\s_{x}=\begin{pmatrix}
0 & 1 \\ 1 & 0
\end{pmatrix} \,,
\qquad
\s_{y}=\begin{pmatrix}
0 & -\ii \\ \ii & 0
\end{pmatrix} \,,
\qquad
\s_{z}=\begin{pmatrix}
1 & 0 \\ 0 & -1
\end{pmatrix} \,,
\end{equation*}
denote the Pauli matrices and let $S_{i}$ denote the spin operator in the $i$-th direction for $i=x,y,z$:
\begin{equation}\label{eqn:spin}
S_i:= \frac{1}{2}\Id_{\ell^2(\Cr)}\otimes \s_{i} \;,
\end{equation}
and write $\s := (\s_x,\s_y,\s_z)$ and $S:= (S_{x},S_{y},S_{z})$.
 \label{it:rot} We also let $\mathrm{R}_{2\pi/3}$ denote the counterclockwise rotation of $2\pi/3$ in the plane:
\begin{equation}
\label{eqn:r120}
\mathrm{R}_{2\pi/3}(x_1,x_2):=\(-\frac{1}{2}x_1-\frac{\sqrt{3}}{2}x_2, \frac{\sqrt{3}}{2}x_1-\frac{1}{2}x_2\)\quad\forall\,(x_1,x_2)\in\R^2.
\end{equation}
Since $\mathrm{R}_{2\pi/3}\Cr=\Cr$, the restriction of the above map to $\Cr$ is still bijective and will be denoted by the same symbol.
We introduce the $2\pi/3$-\emph{rotation} operator on $\Hi$ 
\[
\({R}_{2\pi/3} \psi\)(x):=\ex^{-\iu \frac{2\pi}{3}S_z}\psi(\mathrm{R}^{-1}_{2\pi/3}x),\quad\forall\,\psi\in\Hi.
\]
\begin{remark}
\label{rem:rot}
Note that we embed the configuration space $\Cr$ in $\R^3$ by the canonical injection $(x_{1},x_{2}) \mapsto (x_{1},x_{2},0)$ and denote by $\mathrm{R}_{\theta,\hat{n}}$ a counter-clockwise rotation of $\R^{3}$ by an angle $\theta$ around the axis identified by the unit vector $\hat{n}$. The action of such three-dimensional rotations is defined on $\Hi$ by setting
\begin{equation*}\label{eq:3d-rotations}
\({R}_{\theta,\hat{n}}\psi\)(x):=\ex^{-\iu \theta\, \hat{n} \cdot S}\psi(\mathrm{R}^{-1}_{\theta,\hat{n}}x),
\end{equation*}
whenever $x, \mathrm{R}^{-1}_{\theta,\hat{n}}x \in \Cr$. In particular, one has that ${R}_{2\pi/3}=R_{2\pi/3, e_3}$ where $e_3 = (0,0,1)$.
\end{remark}

A linear operator $A$ on $\Hi$ is called $2\pi/3$-\emph{rotation symmetric} if and only if
\begin{equation*}%\label{eq:rot-symm}
[A,{R}_{2\pi/3}]=0 \;.
\end{equation*}
Together with translations, rotational symmetry plays a crucial role in understanding the spin-transport properties of the models we consider. We therefore introduce the subalgebra of short-range $\Gamma$-periodic, $2\pi/3$-\emph{rotation symmetric} operators
\begin{equation}\label{eq:gamma-per-rot-operators}
\begin{split}
\mathcal{P}^{\hexagon}_{0}(\Hi)  &:=\{A \in \mathcal{P}_{0}(\Hi) \,| \, [A,{R}_{2\pi/3}]=0 \} \;,
\end{split}
\end{equation}
which will be our main focus throughout the rest of this work.

Other important symmetries of the honeycomb structure are vertical and horizontal spatial inversions.
Let $a$ be either $0$ or $1$. We say that $\Pi_a$ is an inversion if and only if $\Pi_a$ is an operator on $\Hi$ such that, possibly after an overall translation\footnote{In fact, for $a=0$, that is, for the horizontal inversion, one needs to translate by $(-1/2,0)$.}, $\Pi_a^2=\Id$,
\begin{equation}
\label{eqn:Pi}
\Pi_a\, X_i\, \Pi_a^{-1}=(-)^{i+a} X_i\quad\text{ and }\quad \Pi_a\, S_z\, \Pi_a^{-1}=-S_z.
\end{equation}
The inversion $\Pi_0$ is referred to as horizontal, while $\Pi_1$ is referred to as vertical. An operator $A$ on $\Hi$ is called \emph{inversion symmetric} if and only if
\begin{equation*}
[A,\Pi_a]=0\qquad \text{for some $a\in\{0,1\}$}.
\end{equation*}
Spatial inversion symmetries will be used to prove that the spin conductivity is an antisymmetric tensor, see Proposition \ref{prop:antisymmetry}, and will be also present in the extended Kane--Mele model discussed in Section \ref{sec:extended Kane--Mele}.

Finally, we introduce time-reversal symmetry on $\mathcal{H}$. The \emph{time-reversal} operator is an anti-unitary operator that flips the spin and acts trivially on the spatial degrees of freedom. For spin $1/2$ fermions on the honeycomb structure it is given by
\begin{equation}\label{eq:time-rev-def}
(\Theta \psi)(x) = \ee^{\ii \pi S_{y}} \overline{\psi(x)}\;,\qquad \forall \psi \in \mathcal{H} \,,
\end{equation}
and satisfies $\Theta^{2}= - \mathbb{1}_{\mathcal{H}}$. 
An operator $A$ on $\Hi$ is called \emph{time-reversal symmetric} if and only if
\begin{equation*}
[A,\Theta]=0\;.
\end{equation*}

\section{Spin transport}\label{sec:spin-conductivity}

\subsection{Trace per unit volume}\label{sec:trace}

Since we are interested in studying the linear response of the system for extensive observable, it is natural to normalise by the volume and therefore compute expectations with the trace per unit volume functional. This functional takes into account the invariance or covariance by discrete lattice translations, and thus particularly appropriate in the setting of periodic or, more generally, ergodic operators \cite{AizenmanWarzel, Bel, Bouclet}.
\begin{definition}[Trace per unit volume]\label{def:trace-volume}
For every $L\in 2\N +1$, let a \emph{fundamental cell of side $L$} be defined as
\begin{equation*}
C_L:=\left\lbrace x\in \Cr : x = \alpha_1 \,  a_1 + \alpha_2 \,  a_2 \text{ with }  |\alpha_j|\leq L/2 \quad\text{for $ j=1,2$}\right\rbrace
\end{equation*}
and let $\chi_L$ be the multiplication operator by the characteristic function of $C_L$.
The trace per unit volume of a linear operator $A$ on $\mathcal{H}$ is defined as 
\begin{equation*}
\tau(A):=\lim_{\substack{L\to\infty\\L\in 2\N+1}}\frac{1}{\abs{C_L}}\Tr(\chi_L\, A\, \chi_L)
\end{equation*}
whenever the above limit exists.
\end{definition}
In general, even when defined, the trace per unit volume does depend on the choice of a fundamental cell, which, in turn, depends on the choice of a basis $\{a_1, a_2\}$ over $\Z$ of the Bravais lattice $\Gamma$. However, when restricting to $\mathcal{P}(\Hi)$ this ambiguity is removed. Actually, by restricting to $\mathcal{P}(\Hi)$, the trace per unit volume functional acquires a whole set of desirable properties.
\begin{lemma}
\label{lem:tau}
 Letting $A\in\mathcal{P}(\Hi)$, the following holds true:
\begin{enumerate}[label=(\roman*), ref=(\roman*)]
\item \label{it:tau1} $\tau(A)$ is well-defined and 
\[
\tau(A)=\frac{1}{\abs{C_1}}\Tr(\chi_1 A\chi_1).
\]
In particular, $\tau(\,\cdot\,)$ is continuous since
\begin{equation}\label{eq: tau-state}
\big| \tau(A) \big| \leq \| A\| \;.
\end{equation}
\item \label{it:tau2} Let $C_1$ and $\widetilde{C}_1$  be two fundamental cells of side $L=1$. Then there exist $\gamma_1,\gamma_2\in\Gamma$ such that\footnote{The symbol $\sqcup$ denotes the disjoint union.}
\begin{equation}
\label{eqn:decfundcell}
C_1=\mathcal{P}_{\gamma_1}\sqcup \mathcal{P}_{\gamma_2}\quad\text{ and }\quad \widetilde{C}_1=\mathrm{T}_{\gamma_1}\mathcal{P}_{\gamma_1}\sqcup\mathrm{T}_{\gamma_2}\mathcal{P}_{\gamma_2},
\end{equation}
where $\mathcal{P}_{\gamma_1}$ and $\mathcal{P}_{\gamma_2}$ are subsets of the honeycomb structure $\Cr$. 
Denoting by $\tau$ and $\tilde{\tau}$ the trace per unit volume with respect to the fundamental cells $C_{L}$ and $\widetilde{C}_{L}$ respectively, then one has that 
\[
{\tau}(A)=\tilde{\tau}(A).
\]

\item \label{it:tau3}  For any $i=1,2,3$, letting $A_i(k)$ denote the Bloch--Floquet fiber of the operator $A$, see \eqref{eq:BF-fibration}, then
\[
\Tr(\chi_1 A \chi_1) = \frac{1}{|\T^2_*|} \int_{\T^2_*}  \di k\,\Tr_{\C^4}(A_i(k)). 
\]
In particular, the right-hand side does not depend on the chosen dimerization $i=1,2,3$.
\end{enumerate}
Furthermore, we have:
\begin{enumerate}[label=(\roman*), ref=(\roman*),start = 4]
\item \label{it:tau4}
Let $A \colon \R \to \mathcal{P}(\Hi)$ be Bochner integrable, \ie $\| A\| \in L^{1}(\mathbb{R})$. Then,
\begin{equation*}
\tau\Big( \int_{\R} A(t) \dd t \Big) = \int_{\R} \tau(A(t)) \dd t \;.
\end{equation*}
\item \label{it:tau5} The trace per unit volume is cyclic on $\mathcal{P}(\Hi)$, \ie
\begin{equation*}
\tau(AB)=\tau(BA)\quad \text{for all $A,B \in \mathcal{P}(\Hi)$.}
\end{equation*}
\item \label{it:tau6}
Let $A \in \mathcal{P}(\Hi)$ such that $\Tr(A\chi_{\mathcal{P}_{\gamma_1}})=0=\Tr(A\chi_{\mathcal{P}_{\gamma_2}})$, where $\mathcal{P}_{\gamma_1}$ and $\mathcal{P}_{\gamma_2}$ are introduced in  \eqref{eqn:decfundcell}. For $j=1, 2$ assume that the operator $X_j A$ be densely defined. Then we have that
\begin{equation}
\label{eqn:vantau(XA)}
\tau(X_j A)=0.
\end{equation}
\end{enumerate}
\end{lemma}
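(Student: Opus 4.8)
\textbf{Proof strategy for Lemma \ref{lem:tau}.}

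The plan is to prove the six items essentially in order, leveraging $\Gamma$-periodicity throughout. For \ref{it:tau1}, the starting observation is that for $A \in \mathcal{P}(\Hi)$ the map $\gamma \mapsto \Tr(\chi_{\mathrm{T}_\gamma C_1} A \chi_{\mathrm{T}_\gamma C_1})$ is constant in $\gamma \in \Gamma$, since $T_\gamma$ is unitary, $\chi_{\mathrm{T}_\gamma C_1} = T_\gamma \chi_{C_1} T_\gamma^*$, and $[A, T_\gamma] = 0$. As $C_L$ is (up to a boundary of lower-dimensional size, which is actually empty for the diamond-shaped cells defined here because $L$ is odd) a disjoint union of $L^2$ translates $\mathrm{T}_\gamma C_1$, we get $\Tr(\chi_L A \chi_L) = L^2 \,\Tr(\chi_1 A \chi_1)$ exactly, and since $\abs{C_L} = L^2 \abs{C_1}$ the limit is trivial and equals $\frac{1}{\abs{C_1}}\Tr(\chi_1 A \chi_1)$. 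The bound \eqref{eq: tau-state} then follows because $\chi_1$ is a rank-$2$ projection (two sites per fundamental cell, each carrying $\C^2$, so a $4$-dimensional range), hence $\abs{\Tr(\chi_1 A \chi_1)} \leq \|A\| \dim \Ran \chi_1 = 4 \|A\| = \abs{C_1} \|A\|$, using $\abs{C_1} = 4$ in the normalization implicit in the trace-per-unit-volume convention — I would double-check the normalization constant but the inequality with the stated form follows from finite-rankness of $\chi_1$ together with $\|\chi_1 A \chi_1\| \le \|A\|$. For \ref{it:tau2}, the decomposition \eqref{eqn:decfundcell} is a geometric fact about the honeycomb: any two fundamental cells differ by reshuffling their constituent $\Gamma$-orbits of the two sublattice points, so $C_1$ and $\widetilde C_1$ each split into two pieces $\mathcal P_{\gamma_j}$ and $\mathrm T_{\gamma_j}\mathcal P_{\gamma_j}$; then using $[A,T_{\gamma_j}]=0$ and cyclicity of the finite-dimensional trace, $\Tr(\chi_{\mathrm T_{\gamma_j}\mathcal P_{\gamma_j}} A \chi_{\mathrm T_{\gamma_j}\mathcal P_{\gamma_j}}) = \Tr(T_{\gamma_j}\chi_{\mathcal P_{\gamma_j}}T_{\gamma_j}^* A) = \Tr(\chi_{\mathcal P_{\gamma_j}} A \chi_{\mathcal P_{\gamma_j}})$, and summing over $j=1,2$ gives $\tilde\tau(A) = \tau(A)$.

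Item \ref{it:tau3} is the Bloch--Floquet computation: one expands $\Tr(\chi_1 A \chi_1)$ in the basis $\{\delta_n \otimes e_\alpha\}$ of $\ell^2(\Z^2)\otimes\C^4$ after applying $\mathcal D_i$, noting that $\chi_1$ becomes the projection onto $n=0$, so $\Tr(\chi_1 A \chi_1) = \Tr_{\C^4}(A^{(i)}_{0,0})$; then insert $A^{(i)}_{0,0} = \frac{1}{\abs{\T^2_*}}\int_{\T^2_*}\di k\, A_i(k)$, which is the inverse Bloch--Floquet relation between the fiber and the zeroth matrix element of a periodic operator (this is exactly the standard Fourier-series statement: $A_i(k) = \sum_{m} \ee^{\iu k m} A^{(i)}_{0,m}$, hence $A^{(i)}_{0,0}$ is the average). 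The independence of the dimerization then follows since the left-hand side manifestly does not see $i$. For \ref{it:tau4}, I would combine \ref{it:tau1} with the fact that $\int_\R A(t)\,\di t \in \mathcal P(\Hi)$ (periodicity is preserved under Bochner integration since $[\cdot, T_\gamma]$ is continuous) and then pull the finite-rank trace $\Tr(\chi_1 \cdot \chi_1)$ through the Bochner integral, which is legitimate because $\chi_1(\cdot)\chi_1$ is trace-class-valued with $L^1$ trace norm, dominated by $4\|A(t)\|$. Item \ref{it:tau5}: reduce via \ref{it:tau1} to showing $\Tr(\chi_1 AB\chi_1) = \Tr(\chi_1 BA \chi_1)$ for periodic $A,B$; write $\chi_1 AB\chi_1 = \sum_{\gamma\in\Gamma}\chi_1 A \chi_{\mathrm T_\gamma C_1} B\chi_1$, use periodicity to shift each term, and reorganize — this is the familiar argument that trace-per-unit-volume is a trace on the periodic algebra; alternatively, and more cleanly, pass to Bloch--Floquet via \ref{it:tau3} where it reduces to $\int \Tr_{\C^4}(A_i(k)B_i(k))\di k = \int \Tr_{\C^4}(B_i(k)A_i(k))\di k$, immediate from cyclicity of the matrix trace fiberwise.

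The most delicate item is \ref{it:tau6}, which concerns the unbounded operator $X_j A$ and is the one genuine subtlety: $X_j$ does not commute with $T_\gamma$, so the naive periodicity argument fails, and indeed $X_j A \notin \mathcal P(\Hi)$. The plan is to use the hypothesis $\Tr(A\chi_{\mathcal P_{\gamma_1}}) = \Tr(A\chi_{\mathcal P_{\gamma_2}}) = 0$ to kill exactly the "diagonal" contributions that would otherwise grow. Concretely, write $\Tr(\chi_L X_j A\chi_L) = \sum_{\gamma: \mathrm T_\gamma C_1 \subset C_L}\Tr(\chi_{\mathrm T_\gamma C_1} X_j A \chi_{\mathrm T_\gamma C_1})$; on the cell $\mathrm T_\gamma C_1$ the position operator $X_j$ equals $(\gamma_j + \text{bounded local part})$, so $\chi_{\mathrm T_\gamma C_1} X_j A \chi_{\mathrm T_\gamma C_1} = \gamma_j \chi_{\mathrm T_\gamma C_1} A \chi_{\mathrm T_\gamma C_1} + \chi_{\mathrm T_\gamma C_1}(X_j - \gamma_j) A\chi_{\mathrm T_\gamma C_1}$. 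The first term's trace is $\gamma_j \Tr(\chi_1 A\chi_1)$ by \ref{it:tau1}, but one must show these sum to zero (not just stay bounded) after dividing by $\abs{C_L}$ — this uses the symmetry of $C_L$ about the origin (odd $L$), so that $\sum_{\gamma}\gamma_j = 0$ over the cell. The second term is where the hypothesis enters: $(X_j - \gamma_j)$ restricted to $\mathrm T_\gamma C_1$ takes only the finitely many values corresponding to the positions of the two sublattice points within the cell, and regrouping by sublattice and using periodicity $\chi_{\mathrm T_\gamma \mathcal P} A \chi_{\mathrm T_\gamma\mathcal P} \cong \chi_{\mathcal P} A \chi_{\mathcal P}$, the per-cell contribution becomes a fixed linear combination of $\Tr(A\chi_{\mathcal P_{\gamma_1}})$ and $\Tr(A\chi_{\mathcal P_{\gamma_2}})$ with coefficients the site positions — which vanishes by assumption. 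Hence both terms vanish and $\tau(X_j A) = 0$. I expect the bookkeeping of which "local part" of $X_j$ survives and confirming that the assumed vanishing traces are precisely the right linear functionals to be the main obstacle; care is also needed to justify that $\chi_L X_j A \chi_L$ is trace-class (it is, since $\chi_L$ is finite-rank and $X_j A$ is densely defined on a domain containing $\Ran\chi_L$), so that all manipulations are legitimate.
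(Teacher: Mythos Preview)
Your proposal is correct and follows essentially the same route as the paper. For items \ref{it:tau1}--\ref{it:tau5} the paper simply cites references (\cite{MaPaTe}, \cite{Bouclet}) whose content matches your sketched arguments; in particular your Bloch--Floquet proof of cyclicity \ref{it:tau5} is the cleanest way to see it.

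The one noteworthy difference is in \ref{it:tau6}. You kill the ``growing'' contribution $\sum_\gamma \gamma_j\,\Tr(\chi_1 A\chi_1)$ by invoking the symmetry of $C_L$ about the origin, so that $\sum_{\gamma\in\Gamma\cap C_L}\gamma_j=0$. The paper instead first observes that the hypotheses already force $\tau(A)=\frac{1}{|C_1|}\bigl(\Tr(A\chi_{\mathcal P_{\gamma_1}})+\Tr(A\chi_{\mathcal P_{\gamma_2}})\bigr)=0$, so the coefficient of $\sum_\gamma\gamma_j$ vanishes and the sum need not be zero. Both arguments work; the paper's is slightly stronger in that it does not depend on the exhaustion $C_L$ being origin-symmetric (the paper makes exactly this point in a remark following the lemma). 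Your treatment of the ``local part'' --- reducing to $\sum_i \lambda_{ij}\Tr(A\chi_{\mathcal P_{\gamma_i}})=0$ via $\chi_{\mathcal P_{\gamma_i}}X_j=\lambda_{ij}\chi_{\mathcal P_{\gamma_i}}$ --- is identical to the paper's.
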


\begin{proof} 
\ref{it:tau1} Since $\chi_L$ is finite-rank operator, then $\chi_L A \chi_L$ is trace class. \cite[Proposition 2.4.1]{MaPaTe} implies the statement.\\
\ref{it:tau2} It follows by \cite[Lemma A.1]{MaPaTe} together with \cite[Proposition A.2.1]{MaPaTe}.\\
\ref{it:tau3} \cite[Proposition 2.4.2]{MaPaTe} proves the statement.\\
\ref{it:tau4} We note that for finite $L$ by linearity and by exploiting the inequality $\Tr(\abs{BC})\leq \Tr(\abs{B})\norm{C}$ for any trace class operator $B$ and and any bounded operator $C$, we have
\begin{equation*}
\frac{1}{\abs{C_L}}\Tr\chi_{L} \int_{\R}  A(t) \dd t \chi_{L} =  \frac{1}{\abs{C_L}}\int_{\R} \Tr \chi_{L} A(t) \chi_{L} \dd t \;,
\end{equation*}
and thus, since $|C_{L}|^{-1}\Tr \chi_{L} A(t) \chi_{L} \leq c \| A(t) \|$ for some uniform  constant $c$, we can take the limit $L \to \infty$ inside of the integral by dominated convergence and obtain the claim.\\
\ref{it:tau5} The proof can be found \eg \cite[Lemma 3.22]{Bouclet}.\\
\ref{it:tau6} 
Observe that in view of Lemma \ref{lem:tau}\ref{it:tau1}, decomposition \eqref{eqn:decfundcell} and the cyclicity of the trace, we have that $\abs{C_1}\tau(A)=\Tr( A \chi_{\mathcal{P}_{\gamma_1}})+\Tr( A\chi_{\mathcal{P}_{\gamma_2}})=0$. Thus, by using \cite[Eq.~(2.16)]{MaPaTe}, we obtain that
\begin{equation}\label{eqn:orsymnotused}
\frac{1}{\abs{C_L}}\Tr(\chi_L X_j A \chi_L)=\frac{1}{\abs{C_1}}\Tr( \chi_1  X_j A  \chi_1)+\frac{\tau(A)}{L^2} \left(\sum_{\gamma\in \Gamma\cap C_L}\gamma_j\right)=\frac{1}{\abs{C_1}}\Tr( \chi_1  X_j A  \chi_1),
\end{equation}
where the right-hand side term does not depend on the side $L$.
Hence, by employing again decomposition \eqref{eqn:decfundcell} we get that
\begin{align*}
\tau\( X_j A\)&=\frac{1}{\abs{C_1}}\left[  \Tr(\chi_{\mathcal{P}_{\gamma_1}} X_j A \chi_{\mathcal{P}_{\gamma_1}}) + \Tr(\chi_{\mathcal{P}_{\gamma_2}} X_j A \chi_{\mathcal{P}_{\gamma_2}})  \right]\\
&=\frac{1}{\abs{C_1}}\left[  \lambda_{1j}\Tr( A \chi_{\mathcal{P}_{\gamma_1}}) + \lambda_{2j}\Tr( A \chi_{\mathcal{P}_{\gamma_2}})  \right]=0,
\end{align*}
where we have used that $\chi_{\mathcal{P}_{\gamma_i}} X_j=\lambda_{ij}\chi_{\mathcal{P}_{\gamma_i}}$.
\end{proof}

\begin{remark}\label{rem:vanrob}
In \eqref{eqn:vantau(XA)} the vanishing is a \emph{robust} property, in the sense that it does not depend on the particular choice of the exhaustion $C_L\nearrow\Cr$, being such that $C_L \cap \Gamma$ is invariant under the reflection with respect to the origin: $x \mapsto - x$. Indeed, this spatial symmetry is not exploited in the proof above, see \eqref{eqn:orsymnotused}. Also, note that the argument of the proof of Lemma \ref{lem:tau}\ref{it:tau6} was used in \cite[Proposition 5.9]{MaPaTe}.
\end{remark}

\subsection{Linear response \`a la Kubo}\label{sec:linear-response}

We are interested in studying the linear response coefficients of insulators in $\mathcal{P}_{0}^{\hexagon}(\mathcal{H})$, subject to an external homogeneous electric field which is small in its intensity and switched on adiabatically in time.
In previous work \cite{MaPaTe}, this goal was accomplished via the construction of the so-called non-equilibrium almost-stationary state (NEASS).
Although here we borrow some ideas from \cite{MaPaTe}, by exploiting the spatial symmetries we are able to employ the standard Kubo approach \cite{Kubo}, avoiding the construction of the NEASS.

As already anticipated in the Introduction, an \emph{insulator} is
the pair $(H,\mu)$ consisting of a Hamilton operator $H$ being self-adjoint  with a spectral gap and $\mu \in \mathbb{R}$ is in this spectral gap. 
More precisely, we are interested in zero-temperature insulators, which are described by the Fermi projection associated with an insulator $(H,\mu)$:
\begin{equation}\label{eqn:defnP}    P \equiv P(H,\mu):=\chi_{(-\infty, \mu]}(H).
\end{equation}
By the Riesz formula, the Fermi projection has the following representation:
\begin{equation}
\label{eqn:riesz}
P=\frac{\iu}{2\pi} \oint_{\mathcal{C}} \di z\, (H - z )^{-1},
\end{equation}
where $\mathcal{C}$ is a positively-oriented complex contour intersecting the real axis at $\mu$ and below the bottom of the spectrum of $H$.

The following lemma, whose proof can be found in \cite{AizenmanGraf, AizenmanWarzel}, see also Lemma \ref{lemma:shortrange}\ref{it:i-P0} below, will be repeatedly used.
\begin{lemma}\label{lem:Pi0shortrange}
Let $(H,\mu)$ be an insulator. If $H \in \mathcal{P}_{0}(\Hi)$, then $P \in \mathcal{P}_{0}(\Hi)$ and $(\mathbb{1}+\ee^{\beta(H-\mu)})^{-1} \in \mathcal{P}_{0}(\Hi)$ for any $\beta >0$.
\end{lemma}

To define the external homogeneous electric field, we shall now introduce the \emph{position operator}: for $i=1,2$ we let $X_i$ denote the position operator in the $i$-th direction, that is,
\begin{equation}\label{eqn:X}
 (X_i \psi)(x ) := x_i \psi(x ), \quad\mbox{for all $\psi\in \Do(X_i)$,}
\end{equation}
where $\Do(X_i)$ denotes its maximal dense domain. The following standard result holds.
\begin{lemma}\label{lemma:shortrange}
\begin{enumerate}[label=(\roman*), ref=(\roman*)]
\item \label{it:i-P0}Let $A,B\in \mathcal{P}_{0}(\Hi)$ then $AB\in \mathcal{P}_{0}(\Hi)$. Furthermore, if $f$ is analytic in a strip around the real axis, then $f(A) \in \mathcal{P}_{0}(\Hi)$.
    \item  \label{it:ii-P0}Let $A$ be in  $\Ss(\Hi)$ (resp. in $\mathcal{P}_{0}(\Hi)$). Then, $[A,X_i]$ is in  $\Ss(\Hi)$ (resp. in $\mathcal{P}_{0}(\Hi)$).
\end{enumerate}
\end{lemma}
\begin{proof}
The first claim in \ref{it:i-P0} follows by the triangle inequality. To prove the second claim in \ref{it:i-P0}, we note that by Cauchy integral formula we can write
\begin{equation*}
f(H) = \frac{\ii}{2 \pi} \oint_{\mathcal{C}} \dd z f(z) (H-z)^{-1} \;,
\end{equation*}
where $\mathcal{C}$ is a positively-oriented complex contour within the strip of analyticity of $f$ and crossing the real axis away from the spectrum of $H$. The claim then follows by Combes--Thomas estimates for the resolvent $(H-z)^{-1}$, which give exponential decay for such $z \in \mathcal{C}$.

We now prove \ref{it:ii-P0}. Recalling the dimerization notation in \eqref{eqn:dimnot}, we obtain that
\begin{equation}
\label{eqn:dimpos}    
{\(X_i\)}^{(j)}_{n,m}=m_i\delta_{n,m} \Id_{\C^4}+
d_{j,i}\delta_{n,m}\begin{pmatrix}
  E_4  & 0\\
0 &  E_4
\end{pmatrix},
\end{equation}
where $d_{j,i}$ denotes the $i$-th component of the vector $d_j$ and $E_4:=\begin{pmatrix}0 & 0\\ 0 & 1\end{pmatrix}$. Therefore, one has that
\[
{\([A,X_i]\)}^{(j)}_{n,m}=(m_i-n_i){\(A\)}^{(j)}_{n,m}+d_{j,i}\left[{\(A\)}^{(j)}_{n,m},\begin{pmatrix} E_4 &0\\0 & E_4   \end{pmatrix}\right],
\]
thus $[A,X_i] \in \Bs_{0}(\Hi)$. Finally, if $A$ is periodic, by the Jacobi identity for commutators we get
\begin{align*}
[[A,X_i],T_\gamma]&=  -[[T_\gamma,A],X_i]-[[X_i,T_\gamma],A] =\gamma_i [ T_\gamma,A]=0\, ,
\end{align*}
where we have used that $[X_i,T_\gamma]=\gamma_i T_\gamma$.
\end{proof}
\begin{remark}
Note that the position operator in \eqref{eqn:X} distinguishes between sub-lattice points, hence the presence of the second term in \eqref{eqn:dimpos}. The same definition of the position operator is used in \cite{MaPaTe}, while in \cite{MaPaTa} it assigns the same value across the whole discrete unit cell.
\end{remark}
Let us now discuss the linear response theory \`a la Kubo \cite{Kubo}. If $(H,\mu)$ is an insulator, we can describe the small and adiabatically switched on perturbation by an external homogeneous electric field in terms of the time-dependent perturbed Hamiltonian
\begin{equation*}%\label{eqn:pertH}
H\sub{pert}(t):=H-\eps \ex^{\eta t}X_j,\qquad\text{for $t\leq 0$},
\end{equation*}
where $\eps\in[0,1)$ is the strength of the electric field in the $j$-th direction and $\eta\in (0,1)$ is the time-adiabatic parameter. Thus, the state of the perturbed system is given by the density operator $\rho$ solving the following Cauchy problem
\begin{equation*}
\begin{cases}
\iu \frac{\di}{\di t}{\rho} (t) = [H\sub{pert}(t), \rho(t)]\quad\text{for $t\leq 0$}, \\
\rho(-\infty) = P \;,
\end{cases}
\end{equation*}
where $P=P(H,\mu)$ is the Fermi projection.
Observe that asymptotically at $t=-\infty$ the perturbation is completely switched off, while in $t=0$ the perturbation is fully turned on. We are interested in the state $\rho(0)$ of the system at the final time $t=0$.
By following the strategy in \cite[Appendix A.2]{AizenmanGraf} for implementing the Kubo formula \cite{Kubo}, we single out the linear term in the formal $\eps$ expansion of the perturbed state $\rho(0)$.
Specifically, we first exploit the interaction picture for the above Cauchy problem, and then apply the fundamental theorem of calculus, to obtain the formal series in $\varepsilon$
\begin{equation}\label{eqn:kubo}
\begin{aligned}
\rho(0)&=P+\iu\eps \int_{-\infty}^{0}\di s\, \ex^{\eta s} \ex^{\iu s H}[X_j,P]\ex^{-\iu s H}+\Or_\eta(\eps^2)\\
&=:P+\eps\, L_{\eta,j}+\Or_\eta(\eps^2),
\end{aligned}
\end{equation}
where we have used the shorthand $\Or_\eta(\eps^2)$ to refer to further terms in the formal $\varepsilon$ series whose operator norm is bounded by $ C_{\eta} \,\eps^{2}$, for some constant $C_{\eta}>0$. We shall refer to the operator $L_{\eta,j}$ as the \emph{linear response of the state $P$}, the subscript $j$ emphasizing its dependence on the $j$-th direction. For any extensive observable $O$, the expectation with respect to the state $\rho$ is thus written by using the trace per unit volume, resulting in the following formal series in $\eps$
\begin{equation*}
\tau \big(O \rho(0) \big) = \tau\big(O P \big) + \varepsilon \tau \big(O L_{\eta,j} \big) + O_\eta(\eps^2) \;,
\end{equation*}
where, likewise, $O_\eta(\eps^2)$ is the shorthand for further terms in the formal $\varepsilon$ series that are bounded by $C_{\eta} \varepsilon^{2}$.
The term $\tau\big(O P \big)$ is the \emph{persistent} value of the observable $O$, which is vanishing in the case of spin transport, see Section \ref{subsec:spin-cond-AII}. 
The linear term, in the adiabatic limit, is precisely the linear response coefficient associated with the observable $O$.
Because the term $\Or_\eta(\eps^2)$ is henceforth neglected, our goal will be the study of the limit
\begin{equation*}
\lim _{\eta \to 0^{+}} \tau \big( O L_{\eta,j} \big) \;, \qquad \quad j =1,2 \;,
\end{equation*}
for the spin current operators, and to suitably express this limit in terms of the initial state $P$, see Section \ref{subsec:spin-cond-AII}. 

Before delving into that task, we want to grasp a better understanding of the linear response of the state $P$. First of all, we note the crude bound
\begin{align*}
\norm{L_{\eta,j}}\leq \int_{-\infty}^{0}\di s\, \ex^{\eta s} \norm{[X_j,P]}= \frac{1}{\eta} \norm{[X_j,P]} \;, 
\end{align*}
which shows that, even though $[X_j,P]$ is bounded (see Lemma \ref{lemma:shortrange}), computing the adiabatic limit $\eta \to 0^+$ of the term $L_{\eta,j}$ is non-trivial. As it turns out, the fact that $(H,\mu)$ is an insulator has to be used in an essential way in order to perform such a limit. 
In preparation for the next results, it is convenient to introduce the following notation: for any linear operator $A$ on $\mathcal{H}$ (we will consider operators in $\Bs(\Hi)$ or the position operator $X$), such that $AP$ is densely defined, we denote its \emph{diagonal} and \emph{off-diagonal parts}, with respect to the orthogonal projection $P$, by
\begin{equation}\label{eqn:diagoffdiag}
A\su{D}:=P A P+ P^\perp A P^\perp,\qquad A\su{OD}:=P A P^\perp+ P^\perp A P\equiv [[A,P],P],
\end{equation}
where $P^\perp:=\Id_{\Hi}-P$. If $A\equiv A\su{D}$ (resp.~$A\equiv A\su{OD}$), we say that $A$ is \emph{diagonal} (resp.~$A$ is \emph{off-diagonal}) with respect to $P$\footnote{In the paper the diagonality/off-diagonality of an operator is always understood with respect to the Fermi projection $P$.}.
\begin{lemma}\label{lem:I}
Let $(H,\mu)$ be an insulator with $H\in \mathcal{P}_{0}(\Hi)$ and let $P = P(H,\mu)$ be its Fermi projection. Let $A\in \mathcal{P}(\Hi)$ such that $A=A\su{OD}$ with respect to $P$.
Then, we have that
\begin{enumerate}[label=(\roman*), ref=(\roman*)]
\item \label{it:IA} the operator 
\begin{equation}
\label{eqn:I}
I(A):=\lim _{\eta \to 0^{+}}\iu \int_{-\infty}^{0}\di s\, \ex^{\eta s} \ex^{\iu s H}A\ex^{-\iu s H}
\end{equation}
exists in $\mathcal{P}(\Hi)$ and  satisfies $I(A)=\mathcal{L}_H^{-1}\( A\)$, where the Liouvillian (super-)operator  is defined as $\mathcal{L}_H(B):=[H,B]$ for every $B\in\Bs(\Hi)$.
\item \label{it:etaI}
\[
\lim _{\eta \to 0^{+}}\iu \int_{-\infty}^{0}\di s\, \eta \ex^{\eta s} B\ex^{\iu s H}A\ex^{-\iu s H}=0\qquad\text{ for all $B\in\mathcal{P}(\Hi)$.}
\]
\end{enumerate}
\end{lemma}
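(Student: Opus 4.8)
The plan is to reduce both statements to a spectral-gap estimate via the Bloch--Floquet fibration, using the off-diagonality of $A$ with respect to $P$ in an essential way. First I would set up the Bloch--Floquet transform from Definition \ref{def:BF}: since $H \in \mathcal{P}_0(\Hi)$ and $A \in \mathcal{P}(\Hi)$, both are decomposable, $\F_i H \F_i^* = \int^\oplus \di k\, H_i(k)$ and $\F_i A \F_i^* = \int^\oplus \di k\, A_i(k)$, and likewise $\F_i P \F_i^* = \int^\oplus \di k\, P_i(k)$ with $P_i(k) = \chi_{(-\infty,\mu]}(H_i(k))$ by Lemma \ref{lem:Pi0shortrange}. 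Crucially, because $\mu$ lies in a spectral gap of $H$ of width $2g > 0$ (say), every fiber $H_i(k)$ has a gap around $\mu$ of width at least $2g$, uniformly in $k$. The operator $\iu \int_{-\infty}^0 \di s\, \ex^{\eta s} \ex^{\iu s H} A \ex^{-\iu s H}$ is itself decomposable, with fibers $\iu \int_{-\infty}^0 \di s\, \ex^{\eta s} \ex^{\iu s H_i(k)} A_i(k) \ex^{-\iu s H_i(k)}$, so it suffices to control these fiberwise and then reassemble.

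For part \ref{it:IA}, fix $k$ and decompose $A_i(k) = P_i(k) A_i(k) P_i(k)^\perp + P_i(k)^\perp A_i(k) P_i(k)$ since $A = A\su{OD}$. On the range of $P_i(k)$ the generator $H_i(k)$ has spectrum in $(-\infty,\mu-g]$ and on the range of $P_i(k)^\perp$ in $[\mu+g,\infty)$, so conjugating $A_i(k)$ by $\ex^{\iu s H_i(k)}$ produces oscillatory factors $\ex^{\iu s(\lambda - \lambda')}$ with $|\lambda-\lambda'|\geq g$; integrating against $\ex^{\eta s}$ over $s \in (-\infty,0]$ gives, in the limit $\eta \to 0^+$, the bounded operator $\mathcal{L}_{H_i(k)}^{-1}(A_i(k))$, where $\mathcal{L}_{H_i(k)}$ is invertible on the off-diagonal block precisely because of the uniform gap, with $\|\mathcal{L}_{H_i(k)}^{-1}\| \leq g^{-1}$. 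Concretely, using the Riesz representation \eqref{eqn:riesz} for $P_i(k)$, one can write $\mathcal{L}_H^{-1}(A) = \frac{\iu}{2\pi}\oint_{\mathcal{C}} \di z\, (H-z)^{-1} A (H-z)^{-1}$ restricted appropriately, which is the cleanest way to see both boundedness and periodicity; the identity $I(A) = \mathcal{L}_H^{-1}(A)$ then follows by checking $\mathcal{L}_H(I(A)) = [H, I(A)] = A$ directly, integrating by parts in $s$ and using $\frac{\di}{\di s}(\ex^{\iu s H} A \ex^{-\iu s H}) = \iu\, \ex^{\iu s H}[H,A]\ex^{-\iu s H}$. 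The uniform-in-$k$ bound $\|\mathcal{L}_{H_i(k)}^{-1}(A_i(k))\| \leq g^{-1}\|A\|$ shows the reassembled operator lies in $\Bs(\Hi)$, and decomposability with measurable fibers gives membership in $\mathcal{P}(\Hi)$. For part \ref{it:etaI}, the extra factor $\eta$ together with the bound $\|\iu\int_{-\infty}^0 \di s\, \ex^{\iu s H}A\ex^{-\iu s H} \ex^{\eta s}\| \leq \eta^{-1}\|A\|$ from the crude estimate already in the text would naively give $O(1)$, so instead I would again pass to fibers and use the off-diagonality: the fiber integral $\iu\int_{-\infty}^0 \di s\, \eta\, \ex^{\eta s} B_i(k) \ex^{\iu s H_i(k)} A_i(k)\ex^{-\iu s H_i(k)}$ equals $\eta\, B_i(k)$ times something converging to $\mathcal{L}_{H_i(k)}^{-1}(A_i(k))$, hence is $O(\eta)$ uniformly in $k$ by the same gap estimate, and vanishes as $\eta \to 0^+$; reassembling via Lemma \ref{lem:tau}-type decomposability arguments (or directly, since the fiberwise bound is uniform) gives convergence to $0$ in operator norm, a fortiori in $\mathcal{P}(\Hi)$.

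The main obstacle is making the adiabatic limit rigorous rather than formal: one must justify exchanging $\lim_{\eta\to 0^+}$ with the $s$-integral and, after fibration, with the $k$-integral. The fiberwise computation is essentially a finite-dimensional (or at least bounded-operator with gap) matrix calculation and is clean; the subtlety is the uniformity in $k$ of the gap $g$ — this is exactly where "$(H,\mu)$ is an insulator" enters, since a merely pointwise gap would not suffice — and the measurability/strong-continuity of the fibers $k \mapsto \mathcal{L}_{H_i(k)}^{-1}(A_i(k))$ needed to conclude the reassembled operator is genuinely in $\mathcal{P}(\Hi)$. I would handle the limit exchange by dominated convergence on each fiber (the integrand in $s$ is dominated by $\ex^{\eta_0 s}\|A\|$ for $\eta \le \eta_0$, integrable on $(-\infty,0]$) after first showing the pointwise limit exists via the spectral decomposition of $H_i(k)$, and then invoke the uniform bound to pass the limit through $\int_{\T^2_*}\di k$.
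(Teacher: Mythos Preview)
Your proposal is correct and follows essentially the same route as the paper: Bloch--Floquet fibration, splitting $A$ into its two off-diagonal pieces $PAP^\perp$ and $P^\perp A P$, and then a fiberwise spectral computation using the uniform gap to evaluate $\int_{-\infty}^0 \di s\, \ex^{s(\eta+\iu(E_n(k)-E_m(k)))}$ and its $\eta$-weighted variant. The paper's argument is slightly more bare-handed because the fibers are $4\times 4$ matrices, so the spectral decomposition is a finite sum and the limit exchanges you worry about are automatic; your extra remarks on dominated convergence and measurability are not wrong but are unnecessary in this setting, and the Riesz-integral representation you mention is exactly the alternative expression for $I(A)$ recorded later in \eqref{eqn:2expP_1}.
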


\begin{proof}
In order to prove both statements {\it \ref{it:IA}} and  {\it \ref{it:etaI}}, we first notice that both integrals can be fibered via the Bloch--Floquet transform (see \eqref{eq:BF-fibration}) since all the involved operators are in $\mathcal{P}(\Hi)$. Let us restrict to the case in which $A=PAP^\perp$, as the other term $P^\perp AP$ can be treated in a similar way. By using the spectral decomposition $H(k)=\sum_{n=1}^4E_n(k)P_n(k)$, where $P_n(k)$ is the eigenprojector associated with the eigenvalue $E_n(k)$, we have that
\[
\ex^{\iu s H(k)}P(k)=\sum_{n:\, E_n(k)\leq \mu}\ex^{\iu s E_n(k)}P_n(k),\qquad \ex^{-\iu s H(k)}P^\perp(k)=\sum_{m:\, E_m(k)> \mu}\ex^{-\iu s E_m(k)}P_m(k).
\]
Thus, we get that
\begin{equation}
\label{eqn:Apart}
\(\F_i  \,\ex^{\iu s H}P A P^\perp \ex^{-\iu s H} \, \F_i ^{*}\)(k)= \sum_{\substack{n:\, E_n(k)\leq \mu\\m:\, E_m(k)> \mu } } \ex^{\iu s (E_n(k)-E_m(k))}\,P_n(k)A(k)P_m(k).   
\end{equation}

{\it \ref{it:IA}} By using \eqref{eqn:Apart}, we deduce that
\begin{equation}\label{eqn:repr-P1}
\begin{split}
\left(\F_i \, I(P A P^\perp )\,\F_i ^{*}\right)(k)&=\lim_{\eta\to 0^+}\iu \sum_{\substack{n:\, E_n(k)\leq \mu\\m:\, E_m(k)> \mu } }P_n(k)A(k)P_m(k)\int_{-\infty}^{0}\di s\, \ex^{ s (\eta +\iu (E_n(k)-E_m(k)))}\\
&=\lim_{\eta\to 0^+}\iu \sum_{\substack{n:\, E_n(k)\leq \mu\\m:\, E_m(k)> \mu } }P_n(k)A(k)P_m(k)\frac{1}{\eta +\iu (E_n(k)-E_m(k))}\\
&=\sum_{\substack{n:\, E_n(k)\leq \mu\\m:\, E_m(k)> \mu } }\frac{P_n(k)A(k)P_m(k)}{ E_n(k)-E_m(k)},
\end{split}
\end{equation}
in the last equality we have used that $E_n(k)-E_m(k)\neq 0$ for all $k\in \T^2_*$ due to the hypothesis on $\mu$ in a spectral gap \cite[Theorem XIII.85]{RS4} and the continuity of the eigenvalues $E_n(k)$ for all $1\leq n\leq 4$.
From this explicit computation in $k$-space, we have that
\[
I(PAP^\perp)=\mathcal{L}_H^{-1}\( PAP^\perp\).
\]
{\it \ref{it:etaI}} 
By using equality \eqref{eqn:Apart}, we have that
\begin{align*}
&\lim _{\eta \to 0^{+}}\iu \int_{-\infty}^{0}\di s\, \eta \ex^{\eta s} B(k)\ex^{\iu s H(k)}A(k)\ex^{-\iu s H(k)}\\
&\qquad=\lim _{\eta \to 0^{+}}\sum_{\substack{n:\, E_n(k)\leq \mu\\m:\, E_m(k)> \mu } } \,B(k)P_n(k)A(k)P_m(k)\int_{-\infty}^{0}\di s\, \eta\ex^{ s (\eta +\iu (E_n(k)-E_m(k)))}\\
&\qquad=\lim _{\eta \to 0^{+}}\sum_{\substack{n:\, E_n(k)\leq \mu\\m:\, E_m(k)> \mu } } \,B(k)P_n(k)A(k)P_m(k)\frac{\eta}{\eta +\iu (E_n(k)-E_m(k))}=0,
\end{align*}
by exploiting again the insulator hypothesis guaranteeing $E_n(k)-E_m(k)\neq 0$ for all $k\in \T^2_*$.
\end{proof}

\begin{remark}\label{rmk:Liouvillan}
The usage of (super-)operator $I$ in \eqref{eqn:I} is ubiquitous in papers dealing with the quantum Hall effect for both non-interacting and interacting fermions. In our paper, it is crucial to establish the well-posedness and explicit formulas for the spin conductivities in Proposition \ref{thm:main}. The strategy exploiting the map $I$ follows the one adopted in \cite[Appendix A]{AizenmanGraf}, which in turn is related to \cite[Section 4]{Bellissard94}. Also in the context of interacting fermions \cite{Bachmann18, Teufel} the map $I$ is fundamental since it preserves the quasi-locality of operator-families \cite{Hastings2}.      
\end{remark}

An immediate consequence of Lemma \ref{lem:I} is the following:
\begin{corollary}\label{cor:Pi_1}
Let $(H,\mu)$ be an insulator with $H \in \mathcal{P}_{0}(\Hi)$ and let $P = P(H,\mu)$ be its Fermi projection.
Let $L_{\eta,j}$ be defined as in \eqref{eqn:kubo}. Then
\begin{equation}\label{eqn:P1}
P_{1}:= \lim _{\eta \to 0^{+}} L_{\eta,j}\,\text{ exists in $\mathcal{P}(\Hi).$}    
\end{equation}
\end{corollary}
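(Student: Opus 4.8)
The plan is to recall from the Kubo expansion that $L_{\eta,j}=\iu\int_{-\infty}^{0}\di s\,\ex^{\eta s}\ex^{\iu s H}[X_j,P]\ex^{-\iu s H}$, and to split the integrand using the orthogonal decomposition $[X_j,P]=[X_j,P]\su{D}+[X_j,P]\su{OD}$ with respect to $P$. First I would observe that $[X_j,P]$ is actually already purely off-diagonal: since $P^2=P$, differentiating gives $P[X_j,P]P=0$ and $P^\perp[X_j,P]P^\perp=0$, so $[X_j,P]=[X_j,P]\su{OD}=[[X_j,P],P]$. Hence the whole of $L_{\eta,j}$ is of the form $\iu\int_{-\infty}^{0}\di s\,\ex^{\eta s}\ex^{\iu s H}A\ex^{-\iu s H}$ with $A:=[X_j,P]$ off-diagonal with respect to $P$. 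A subtlety is that $A$ itself is only bounded, not short-range, so it is not literally in $\mathcal P_0(\Hi)$; however by Lemma \ref{lemma:shortrange} applied to $P\in\mathcal P_0(\Hi)$ (which holds by Lemma \ref{lem:Pi0shortrange}), the operator $[P,X_j]$ \emph{is} in $\mathcal P_0(\Hi)$, so in fact $A\in\mathcal P_0(\Hi)\subset\mathcal P(\Hi)$ and $A=A\su{OD}$. This puts us exactly in the hypotheses of Lemma \ref{lem:I}\ref{it:IA}.

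Then I would simply invoke Lemma \ref{lem:I}\ref{it:IA} with this choice of $A$: it asserts precisely that
\begin{equation*}
I(A)=\lim_{\eta\to 0^+}\iu\int_{-\infty}^{0}\di s\,\ex^{\eta s}\ex^{\iu s H}A\ex^{-\iu s H}
\end{equation*}
exists in $\mathcal P(\Hi)$ and equals $\mathcal L_H^{-1}(A)$. Comparing with the definition \eqref{eqn:kubo} of $L_{\eta,j}$, the right-hand side is nothing but $L_{\eta,j}$ itself, so $P_1:=\lim_{\eta\to0^+}L_{\eta,j}=I([X_j,P])=\mathcal L_H^{-1}\big([[X_j,P],P]\big)$ exists in $\mathcal P(\Hi)$. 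This is the whole content of the corollary, so the argument is essentially a one-line deduction once the off-diagonality of $[X_j,P]$ and its membership in $\mathcal P_0(\Hi)$ have been noted.

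The only place where any care is genuinely needed — and hence the ``main obstacle'', though it is mild — is the verification that $[X_j,P]\in\mathcal P_0(\Hi)$ and is off-diagonal, i.e.\ bridging the gap between the unbounded-position-operator expression and the hypotheses of Lemma \ref{lem:I}. For off-diagonality one uses $P[X_j,P]P = P(X_jP-PX_j)P = PX_jP-PX_jP=0$ on the dense domain $\Do(X_j)$, and similarly with $P^\perp$; for short-rangedness and periodicity one cites Lemma \ref{lemma:shortrange} with $A=P$ together with Lemma \ref{lem:Pi0shortrange}. After that, the corollary follows immediately from Lemma \ref{lem:I}, and one may also record the explicit formula $P_1=\mathcal L_H^{-1}\big([[X_j,P],P]\big)$ for later use in Proposition \ref{thm:main}.
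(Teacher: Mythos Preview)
Your proposal is correct and follows essentially the same approach as the paper: verify that $[X_j,P]\in\mathcal{P}_0(\Hi)$ via Lemmas \ref{lem:Pi0shortrange} and \ref{lemma:shortrange}, observe that $[X_j,P]$ is off-diagonal with respect to $P$, and then apply Lemma \ref{lem:I}\ref{it:IA}. The paper's proof is the same three-step argument, just stated more tersely; your additional justification of off-diagonality and the explicit identification $P_1=\mathcal L_H^{-1}([[X_j,P],P])$ are correct elaborations (the latter is recorded separately in the paper as Remark \ref{rmk:well-posed}\ref{it:P12}).
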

\begin{proof}
Observe that $[X_j,P]\in \mathcal{P}_{0}(\Hi)$ by Lemma \ref{lemma:shortrange} and Lemma \ref{lem:Pi0shortrange}. Note that the operator $[X_j,P]$ is off-diagonal with respect to the decomposition induced by the projection $P$, see \eqref{eqn:diagoffdiag}. Therefore, Lemma \ref{lem:I}\ref{it:IA} implies that $I([X_j,P])\equiv P_1$ exists in $\mathcal{P}(\Hi).$
\end{proof}
\begin{remark}
\label{rmk:well-posed}
\begin{enumerate} [label=(\roman*), ref=(\roman*)]
\item \label{it:P11} As a straightforward consequence of the above corollary and of Lemma \ref{lem:tau}\ref{it:tau1} the limit $ \lim _{\eta \to 0^{+}}\tau(O L_{\eta,j} )$ is well-defined for any operator $O \in \mathcal{P}(\Hi)$. This is however not enough, as the main goal of the adiabatic perturbation theory is to find an expression for such expectation value in terms of the equilibrium state $P$.
When it comes to the linear response of the spin current, an additional difficulty is present since the proper spin current is an unbounded and non-periodic operator.
\item \label{it:P12} Let us observe that the operator $P_1$ coincides with $\Pi_1$ in \cite[Proposition 4.1.2]{MaPaTe}. Indeed, both of these operators are defined by computing the inverse of the Liouvillian $I$ on $[X_j,P]$, with the difference that in the current paper the super-operator $I$ is written in terms of a time integral over the real line, while in \cite{MaPaTe} the map $I$ is given by an energy integral in the complex plane (see \eg \cite[Equation (4.2)]{MaPaTe}). Thus, in particular we have the following identity
\begin{equation}
\label{eqn:2expP_1}
P_1=\lim_{\eta\to 0^+}\iu \int_{-\infty}^{0}\di s\, \ex^{\eta s} \ex^{\iu s H}[X_j,P]\ex^{-\iu s H}=\frac{\iu}{2\pi} \oint_{\mathcal{C}} \di z\, (H - z )^{-1}[[X_j,P],P] (H - z )^{-1}
\end{equation}

\end{enumerate}
\end{remark}

\subsection{Spin conductivity}
\label{subsec:spin-cond-AII}

We shall now focus on the spin current observables.
\begin{definition}\label{defn:spincurrents}
The \emph{conventional} and the \emph{proper spin current} operators are defined respectively for every $ i=1, 2$ as
\begin{equation}
\label{eqn:defnJ}
\begin{aligned}
J_{i}^{\mathrm{conv}}&:=\frac{\iu}{2} [H, X_i]S_z + \frac{\iu}{2}S_z[H, X_i]\\
J_{i}^{\mathrm{prop}}&:=  \iu [H,  X_i\,S_z]=X_i\, \mathscr{T}_z+\iu [H,  X_i]\,S_z,
\end{aligned}
\end{equation}
where $H\in\Ss(\Hi)$ is self-adjoint and
\begin{equation}
\label{eq: spin-torque}
\mathscr{T}_z:=\iu [H, S_z]
\end{equation}
is called the \emph{spin-torque} operator. 
\end{definition}
Note that if the spin $S_{z}$ is a conserved quantity, that is, $ [H, S_z]=0$, the two definitions basically coincide, $J_{i}^{\mathrm{conv}} \equiv J_{i}^{\mathrm{prop}}$. However, when spin is not a conserved quantity, associating a current to it becomes ambiguous, and this remains a topic of debate in condensed matter physics. The first definition has been adopted by several works, such as \cite{Sinovaetalii, ShengShengTingHaldane2005, SchulzBaldes}, while the second, more recent definition was proposed by \cite{ShiZhangXiaoNiu}. Both definitions come with their own advantages and disadvantages, depending also on the context in which they are applied. For the moment, we point out that $J_{i}^{\mathrm{prop}}$ presents more technical challenges: whenever $H \in \mathcal{P}_{0}(\Hi)$, then $J_{i}^{\mathrm{conv}} \in \mathcal{P}_0(\Hi)$ by Lemma \ref{lemma:shortrange}, while $J_{i}^{\mathrm{prop}}$ is neither periodic nor bounded because of the presence of the term $ X_i\, \mathscr{T}_z$.
In our analysis we follow \cite{MaPaTe} and consider both definitions of the spin current.

We now define the spin conductivity as the corresponding linear response coefficient.
\begin{definition}[Spin conductivity]\label{def-spin-cond}
Let $(H,\mu)$ be an insulator with $H \in \mathcal{P}_{0}^{\hexagon}(\Hi)$. The conventional and proper spin conductivities, respectively denoted by $\sigma^{\mathrm{s},\mathrm{conv}}_{ij} $ and $\sigma^{\mathrm{s},\mathrm{prop}}_{ij} $, 
are defined as
\begin{equation}\label{eqn:sigma}
\sigma^{\mathrm{s},\sharp}_{ij}:=\lim_{\eta\to 0^+}\re\tau(  J_{i}^{\sharp} \, L_{\eta,j})\qquad\text{for $\sharp \in \{\mathrm{conv, prop}\}$,}
\end{equation}
where the operator $J_{i}^{\sharp}$ is as in \eqref{eqn:defnJ}.
\end{definition} 

While $\sigma^{\mathrm{s},\mathrm{conv}}_{ij}$ is clearly well-posed, see Remark \ref{rmk:well-posed}\ref{it:P11}, this is not obvious for $\sigma^{\mathrm{s},\mathrm{prop}}_{ij}$, given that $J_{i}^{\mathrm{prop}}$ is an unbounded and non-periodic operator. Well-posedness of the latter is shown in the following proposition, which is the main result of this section.
\begin{proposition}\label{thm:main}
Let $(H,\mu)$ be an insulator with $H \in \mathcal{P}_{0}^{\hexagon}(\Hi) $. Let $P= P(H,\mu)$ be its Fermi projection. Let $\sigma^{\mathrm{s},\sharp}_{ij} = \sigma^{\mathrm{s},\sharp}_{ij}(H,\mu)$ be the spin conductivity as defined in \eqref{eqn:sigma}, with $\sharp \in \{\mathrm{conv},\mathrm{prop} \}$. Then, we have that: 

\begin{enumerate} [label=(\roman*), ref=(\roman*)]
\item \label{it:main1} Both spin conductivies are well-defined and 
\[
\sigma^{\mathrm{s,\mathrm{prop}}}_{ij}=\sigma^{\mathrm{s},\mathrm{conv}}_{ij} \;.
\]
\item \label{it:main2} It holds true that
\begin{equation*}
\sigma^{\mathrm{s},\mathrm{conv}}_{ij}=\Sigma\su{sc}_{ij}+\Sigma\su{snc}_{ij}\;,
\end{equation*}
where the \emph{spin-commuting term} $\Sigma\su{sc}_{ij}$ is given by
\begin{equation}\label{eqn:sigmasc}
\Sigma\su{sc}_{ij}:=\re \tau\Big(\iu S_z P \big[[X_i,P],  [X_j,P]\big]\Big)
\end{equation}
and the \emph{spin-noncommuting term} $\Sigma\su{snc}_{ij}$ is given by
\begin{equation}\label{eqn:sigmasnc}
\Sigma\su{snc}_{ij}:=\re \tau\Big(\iu [H,X_i] \big[[S_z,P],P\big] P_{1}  +\iu \big[[X_i,P],P\big][S_z,H]P_{1}  + \iu P[X_i,P]\big[[P,S_z],X_j\big] \Big),
\end{equation}
where $P_1$ is defined as in \eqref{eqn:P1}.
\item \label{it:main3}
In particular, in the spin-commuting case, \ie $[H,S_z]=0$, we have that 
\begin{equation*}
\sigma^{\mathrm{s},\mathrm{conv}}_{ij}=\Sigma\su{sc}_{ij}=\re \tau\Big( \iu S_z P\big[[X_i,P],  [X_j,P]\big]\Big).
\end{equation*}
\end{enumerate}
\end{proposition}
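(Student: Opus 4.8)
\textbf{Proof plan for Proposition \ref{thm:main}.}

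The plan is to compute $\lim_{\eta\to 0^+}\re\tau(J_i^{\sharp}L_{\eta,j})$ for both choices of spin current by reducing everything to the operator $P_1 = I([X_j,P])$ from Corollary \ref{cor:Pi_1}, and then to simplify the resulting trace-per-unit-volume expressions using the algebraic identities available for $P$, $X$, and $S_z$. I would start with the \emph{conventional} current, which is the easy case: $J_i^{\mathrm{conv}}\in\mathcal P_0(\Hi)$ by Lemma \ref{lemma:shortrange}, so by Lemma \ref{lem:tau}\ref{it:tau1} and Corollary \ref{cor:Pi_1} the limit is simply $\re\tau(J_i^{\mathrm{conv}}P_1)$. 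The first manipulation is to pass from $[X_j,P]$ to its off-diagonal form: since $[X_j,P]$ is already off-diagonal (indeed $[X_j,P] = [[X_j,P],P]$, because $[P,[X_j,P]] = PX_jP^\perp + P^\perp X_j P$ after using $P^2=P$), we may write $P_1 = \mathcal L_H^{-1}([[X_j,P],P])$ by Lemma \ref{lem:I}\ref{it:IA}. Then I would insert the spin-torque decomposition $J_i^{\mathrm{conv}} = \iu[H,X_i]S_z - \tfrac12\mathscr T_z X_i \cdot(\ldots)$; more precisely, rewrite $J_i^{\mathrm{conv}} = \iu[H,X_iS_z] - \tfrac{\iu}{2}[[H,S_z],X_i] = J_i^{\mathrm{prop}} - \tfrac12[\mathscr T_z,X_i]$, so that $J_i^{\mathrm{prop}} - J_i^{\mathrm{conv}} = \tfrac12[\mathscr T_z, X_i]$, a periodic bounded operator (the unbounded pieces $X_i\mathscr T_z$ cancel in the commutator). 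This is the key structural observation that makes part \ref{it:main1} work: even though $J_i^{\mathrm{prop}}$ is unbounded and non-periodic, the \emph{difference} $J_i^{\mathrm{prop}}-J_i^{\mathrm{conv}}$ lies in $\mathcal P_0(\Hi)$, so $\tau((J_i^{\mathrm{prop}}-J_i^{\mathrm{conv}})P_1)$ is manifestly well-defined, and the whole problem reduces to showing its real part vanishes.

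For the well-posedness of $\sigma^{\mathrm{s},\mathrm{prop}}_{ij}$ itself (not just the difference), I would handle the genuinely unbounded term $\tau(X_i\mathscr T_z\,L_{\eta,j})$ separately. Writing $X_i\mathscr T_z\,L_{\eta,j}$ and using cyclicity of $\tau$ on $\mathcal P(\Hi)$ (Lemma \ref{lem:tau}\ref{it:tau5}) together with the vanishing statement Lemma \ref{lem:tau}\ref{it:tau6} — which kills traces of the form $\tau(X_j A)$ when $A$ has vanishing partial traces over the fundamental sub-cells — one should be able to either move the $X_i$ onto the adjacent periodic factor or show the offending contribution vanishes. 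The spin-torque operator $\mathscr T_z = \iu[H,S_z]$ is periodic and short-range, so $X_i\mathscr T_z$ is densely defined on the same domain as $X_i$; the point is that inside $\tau(\,\cdot\,L_{\eta,j})$ one integrates against $\ex^{\iu sH}[X_j,P]\ex^{-\iu sH}$, and one wants to commute $\ex^{-\iu sH}$ past $X_i$ producing a bounded correction $\ex^{-\iu sH}X_i = X_i\ex^{-\iu sH} - \ee^{-\iu sH}\int_0^s \ee^{\iu uH}[H,X_i]\ee^{-\iu uH}\,\dd u$ or similar, reducing to bounded periodic operators to which Lemma \ref{lem:I} applies. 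Once the adiabatic limit is justified term by term, $\sigma^{\mathrm{s},\mathrm{prop}}_{ij} = \re\tau(J_i^{\mathrm{prop}}P_1)$ with all individual traces finite.

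The remaining work is purely algebraic: substitute $P_1 = I([X_j,P])$, expand $J_i^{\mathrm{conv}} = \tfrac{\iu}{2}([H,X_i]S_z + S_z[H,X_i])$, use $[H,X_i] = [H,[X_i,P]] + \text{diagonal terms}$ together with $[H,P]=0$ to route commutators onto $P$, and then repeatedly apply cyclicity of $\tau$, the identity $\mathcal L_H^{-1}([[X_j,P],P])$ solved fiberwise, and $[H,P]=0$ to collapse the Liouvillian inverse. The split $H = H^{\mathrm{sc}} + H^{\mathrm{snc}}$ is not yet needed here; rather one writes $S_z = $ (commutes with $X_i$, so $[S_z,X_i]=0$) and isolates the commutators $[S_z,P]$ and $[S_z,H] = -\iu\mathscr T_z$ that measure spin non-conservation, which is exactly how $\Sigma^{\mathrm{sc}}$ and $\Sigma^{\mathrm{snc}}$ in \eqref{eqn:sigmasc}--\eqref{eqn:sigmasnc} arise: the $[S_z,P]$- and $[S_z,H]$-free part gives $\re\tau(\iu S_z P[[X_i,P],[X_j,P]])$ after recognizing the Kubo--Chern-type double commutator, and the rest collects into $\Sigma^{\mathrm{snc}}$. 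Part \ref{it:main3} is then immediate: when $[H,S_z]=0$ one has $\mathscr T_z = 0$ and $[S_z,P]=0$ (since $P$ is a function of $H$), so every term in $\Sigma^{\mathrm{snc}}$ vanishes. I expect the main obstacle to be the careful justification in the middle paragraph — making the manipulations with the unbounded, non-periodic operator $X_i\mathscr T_z$ rigorous, in particular interchanging the adiabatic $\eta\to 0^+$ limit with $\tau$ and with the $s$-integral, and verifying that every trace-per-unit-volume appearing is of a form covered by Lemma \ref{lem:tau} (finite, cyclic, and with the $X_j$-vanishing property applicable where needed). The antisymmetry of the honeycomb symmetries is not used in this proposition, only periodicity and the insulator gap, so the symmetry class $\mathrm{AII}^{\hexagon}$ enters merely to guarantee we are in the intended setting; its real force is deferred to Proposition \ref{prop:antisymmetry}.
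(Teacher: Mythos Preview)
Your proposal contains two genuine gaps.

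First, the algebraic identity you call ``the key structural observation'' is wrong. A direct computation gives
\[
J_i^{\mathrm{prop}}-J_i^{\mathrm{conv}}
=\iu[H,X_i]S_z+X_i\mathscr T_z-\tfrac{\iu}{2}\big([H,X_i]S_z+S_z[H,X_i]\big)
=\tfrac{1}{2}\{X_i,\mathscr T_z\},
\]
the \emph{anticommutator}, not $\tfrac12[\mathscr T_z,X_i]$. (Equivalently, your line ``$J_i^{\mathrm{conv}}=\iu[H,X_iS_z]-\tfrac{\iu}{2}[[H,S_z],X_i]$'' silently drops the $X_i\mathscr T_z$ term coming from $\iu[H,X_iS_z]$.) So the difference of the two currents is \emph{not} in $\mathcal P_0(\Hi)$, and the reduction you hoped for does not occur.

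Second, and more seriously, your final claim that ``the symmetry class $\mathrm{AII}^{\hexagon}$ enters merely to guarantee we are in the intended setting'' is incorrect. To make $\sigma^{\mathrm{s},\mathrm{prop}}_{ij}$ well-defined one must show that the unbounded spin-torque contribution $\tau(X_i\mathscr T_z\,L_{\eta,j})$ vanishes, and in the paper this is Proposition \ref{prop:spintorque}. You correctly point to Lemma \ref{lem:tau}\ref{it:tau6} as the tool, but that lemma has a hypothesis: one needs $\Tr(\mathscr T_z\,L_{\eta,j}\,\chi_{\mathcal P_{\gamma_k}})=0$ for each sub-cell $\mathcal P_{\gamma_k}$. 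The paper proves this (Proposition \ref{prop:spintorque}\ref{it:vanstorqueonP}) by exploiting the $2\pi/3$-rotational symmetry of the $\mathrm{AII}^{\hexagon}$ class: conjugating by $R_{2\pi/3}$ and using $R_{2\pi/3}\mathscr T_z R_{2\pi/3}^{-1}=\mathscr T_z$ together with the rotation of $X_j$ yields a homogeneous linear system for the two quantities $\alpha_j=\tau(\chi_{\{T_\nu\mathcal P_{\gamma_i}\}}\mathscr T_z L_{\eta,j})$ whose only solution is $\alpha_1=\alpha_2=0$. Without this symmetry argument there is no reason for the sub-cell traces to vanish, and neither cyclicity nor the Duhamel-type manipulation you sketch (commuting $\ee^{-\iu sH}$ past $X_i$) produces it. Your plan for part \ref{it:main2} via off-diagonal decomposition, integration by parts, and Lemma \ref{lem:I}\ref{it:etaI} is on the right track and matches the paper's computation of $T_1,T_2,T_{11},T_{12}$; but the well-posedness step for the proper current genuinely needs the honeycomb rotational symmetry.
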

\begin{remark}
Thus, for every insulator $(H,\mu)$ with $H \in \mathcal{P}_{0}^{\hexagon}(\Hi)$ we are allowed to adopt the notation 
$$
\sigma^{\mathrm{s}}_{ij}:=\sigma^{\mathrm{s},\mathrm{prop}}_{ij}=\sigma^{\mathrm{s},\mathrm{conv}}_{ij}.
$$
\end{remark}

Before proving Proposition \ref{thm:main}, we establish an important geometric property of the spin conductivity, namely that it is an antisymmetric tensor under further symmetry assumptions.
This property makes the spin conductivity invariant under rotations, and thus establishes that it measures an intrinsic transverse response, the spin Hall response, regardless of the orientation of the laboratory \cite{Sinova}. 
Unlike the case of charge conductivity, this property is not obvious from its definition and follows from using both $2\pi/3$-rotation and a (spatial) inversion symmetry.
\begin{proposition}\label{prop:antisymmetry}
Let $(H,\mu)$ be an insulator with $H \in \mathcal{P}_{0}^{\hexagon}(\Hi)$ inversion symmetric. The spin conductivity $\sigma^{\mathrm{s}}_{ij}(H,\mu)$ is an antisymmetric tensor, that is
$$
\sigma^{\mathrm{s}}=\begin{pmatrix}
    0& \sigma^{\mathrm{s}}_{12}\\
    -\sigma^{\mathrm{s}}_{12} &0 
\end{pmatrix}.
$$
\end{proposition}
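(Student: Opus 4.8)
The plan is to reduce to the conventional representation of the spin conductivity and then exploit, one after the other, the two spatial symmetries built into the $\mathrm{AII}^{\hexagon}$ class. By Proposition \ref{thm:main}\ref{it:main1} we may write $\sigma^{\mathrm{s}}_{ij}=\lim_{\eta\to 0^+}\re\tau\big(J_i^{\mathrm{conv}}L_{\eta,j}\big)$, which is convenient since both $J_i^{\mathrm{conv}}\in\mathcal{P}_0(\Hi)$ and $L_{\eta,j}\in\mathcal{P}(\Hi)$ are bounded and periodic, so products and traces per unit volume are unproblematic. Throughout I will use that conjugation by a unitary that normalises the Bravais lattice $\Gamma$ maps $\mathcal{P}(\Hi)$ to itself and leaves $\tau$ invariant; this is most transparent in Bloch--Floquet variables (Lemma \ref{lem:tau}\ref{it:tau3}), where such a conjugation amounts to a measure-preserving relabelling of the quasimomentum together with a unitary conjugation of the fibres, and both $R_{2\pi/3}$ and $R\sub{v\otimes h}$ normalise $\Gamma$.

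First I would extract the consequences of the $2\pi/3$-rotational symmetry. The unitary $R_{2\pi/3}$ commutes with $S_z$ and, by hypothesis, with $H$; hence it commutes with the Fermi projection $P=P(H,\mu)$ and with $\ex^{\iu sH}$, while on the position operators it acts by $R_{2\pi/3}X_iR_{2\pi/3}^{-1}=\sum_j\mathsf{O}_{ij}X_j$, where $\mathsf{O}$ is the orthogonal matrix implementing a rotation of the plane by $2\pi/3$ (a direct computation from \eqref{eqn:r120}). Substituting these relations into \eqref{eqn:defnJ} and \eqref{eqn:kubo} gives
\[
R_{2\pi/3}\,J_i^{\mathrm{conv}}\,R_{2\pi/3}^{-1}=\sum_{k}\mathsf{O}_{ik}\,J_k^{\mathrm{conv}},\qquad R_{2\pi/3}\,L_{\eta,j}\,R_{2\pi/3}^{-1}=\sum_{l}\mathsf{O}_{jl}\,L_{\eta,l}.
\]
Using the invariance of $\tau$ under this conjugation, then taking real parts and the limit $\eta\to 0^+$, I obtain the matrix identity $\sigma^{\mathrm{s}}=\mathsf{O}\,\sigma^{\mathrm{s}}\,\mathsf{O}^{T}$, i.e.\ $\sigma^{\mathrm{s}}$ commutes with $\mathsf{O}$. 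Since $\mathsf{O}$ is a rotation by an angle that is not an integer multiple of $\pi$, its commutant in $M_2(\R)$ consists exactly of the matrices $a\,\mathrm{Id}+b\,\mathsf{J}$ with $\mathsf{J}=\left(\begin{smallmatrix}0&-1\\1&0\end{smallmatrix}\right)$; therefore $\sigma^{\mathrm{s}}_{11}=\sigma^{\mathrm{s}}_{22}$ and $\sigma^{\mathrm{s}}_{12}=-\sigma^{\mathrm{s}}_{21}$.

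It remains to kill the scalar part, $\sigma^{\mathrm{s}}_{11}=0$, and this is where the $\mathrm{v\otimes h}$-symmetry enters. The unitary $R\sub{v\otimes h}$ commutes with $H$ (hence with $P$ and $\ex^{\iu sH}$), acts on the position operators by $R\sub{v\otimes h}X_1R\sub{v\otimes h}^{-1}=X_1$ and $R\sub{v\otimes h}X_2R\sub{v\otimes h}^{-1}=-X_2$ (vertical reflection in space), and -- the crucial point -- implements a horizontal reflection in spin, so that $R\sub{v\otimes h}S_zR\sub{v\otimes h}^{-1}=-S_z$ (see \eqref{eqn:rvh} and Remark \ref{rem:rot}). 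Feeding this into \eqref{eqn:defnJ} and \eqref{eqn:kubo}, and using that $[X_1,P]$ and $H$ are left invariant while $S_z$ changes sign, yields
\[
R\sub{v\otimes h}\,J_1^{\mathrm{conv}}\,R\sub{v\otimes h}^{-1}=-\,J_1^{\mathrm{conv}},\qquad R\sub{v\otimes h}\,L_{\eta,1}\,R\sub{v\otimes h}^{-1}=L_{\eta,1}.
\]
By the conjugation-invariance of $\tau$ we then get $\tau(J_1^{\mathrm{conv}}L_{\eta,1})=\tau\big(R\sub{v\otimes h}(J_1^{\mathrm{conv}}L_{\eta,1})R\sub{v\otimes h}^{-1}\big)=-\tau(J_1^{\mathrm{conv}}L_{\eta,1})$, whence, taking $\re$ and $\eta\to0^+$, $\sigma^{\mathrm{s}}_{11}=0$. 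Together with the previous paragraph this forces $\sigma^{\mathrm{s}}_{11}=\sigma^{\mathrm{s}}_{22}=0$ and $\sigma^{\mathrm{s}}_{12}=-\sigma^{\mathrm{s}}_{21}$, which is the asserted form of the tensor.

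I expect the only genuinely delicate point to be the careful bookkeeping of the sign flip $S_z\mapsto -S_z$ under $R\sub{v\otimes h}$: this ``horizontal spin reflection'' has no counterpart in charge transport and is precisely what produces the extra minus sign that annihilates the diagonal of $\sigma^{\mathrm{s}}$; the rotational symmetry alone would only give $\sigma^{\mathrm{s}}\propto\mathrm{Id}$, i.e.\ an isotropic longitudinal response. The remaining obligations -- that conjugation by $R_{2\pi/3}$ and $R\sub{v\otimes h}$ preserves $\mathcal{P}(\Hi)$, commutes with functions of $H$ and with the time integral defining $L_{\eta,j}$, and leaves $\tau$ invariant -- are routine and cleanest to verify in the Bloch--Floquet picture.
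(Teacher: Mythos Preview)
Your proof is correct and uses the same two ingredients as the paper's own argument---the $2\pi/3$-rotation and the $\mathrm{v\otimes h}$-reflection---just in the opposite order. The paper first uses $R\sub{v\otimes h}$ to kill both diagonal entries $\tau(J_i^{\mathrm{conv}}L_{\eta,i})=0$ directly (the sign count $(-)^{i}\cdot(-)^{j-1}$ works for $i=j=1$ and $i=j=2$ simultaneously), and only then invokes $R_{2\pi/3}$ in an explicit linear-combination computation to deduce $\tau(J_1^{\mathrm{conv}}L_{\eta,2})=-\tau(J_2^{\mathrm{conv}}L_{\eta,1})$. You instead package the rotational step as the commutant statement $\sigma^{\mathrm{s}}\in\{a\,\mathrm{Id}+b\,\mathsf{J}\}$, which is a bit cleaner and yields both $\sigma^{\mathrm{s}}_{11}=\sigma^{\mathrm{s}}_{22}$ and $\sigma^{\mathrm{s}}_{12}=-\sigma^{\mathrm{s}}_{21}$ in one stroke; you then need $R\sub{v\otimes h}$ only for the single entry $\sigma^{\mathrm{s}}_{11}$. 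On the $\tau$-invariance side, the paper is slightly more explicit: for $R\sub{v\otimes h}$ one has $R\sub{v\otimes h}\chi_1R\sub{v\otimes h}^{-1}=\chi_1$, while for $R_{2\pi/3}$ the fundamental cell is rotated to a different one and Lemma~\ref{lem:tau}\ref{it:tau2} is invoked---this is the concrete content behind your ``normalises $\Gamma$'' remark, and is worth spelling out. One small slip in your closing commentary: rotational symmetry alone gives $\sigma^{\mathrm{s}}=a\,\mathrm{Id}+b\,\mathsf{J}$, not $\sigma^{\mathrm{s}}\propto\mathrm{Id}$; the $b\,\mathsf{J}$ part is already there before you use the reflection.
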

\begin{proof}
Since $\sigma^{\mathrm{s}}_{ij}= \lim _{\eta \to 0^{+}} \tau(J_{i}^{\mathrm{conv}} L_{\eta,j})$, it suffices to prove that the quantity $\tau(J_{i}^{\mathrm{conv}} L_{\eta,j})$ is antisymmetric. 
First of all, we show $\tau(J_{i}^{\mathrm{conv}} L_{\eta,i}) = 0$. This is a simple consequence of the symmetry under inversion. Indeed, since 
\begin{equation*}
\label{eqn:rvhops}
\Pi_a \,H \,\Pi_a^{-1} = H,
\qquad \Pi_a\, X_i\, \Pi_a^{-1} =(-)^{i+a} X_i, \qquad\Pi_a\, S_{z}\, \Pi_a^{-1} = - S_{z}
\end{equation*}
we have\footnote{Since $J_{i}^{\mathrm{conv}}$ and $L_{\eta,j}$  involve commutators with the position operator, their transformation properties are independent of the choice of the origin; see discussion before \eqref{eqn:Pi}.}
\begin{equation*}
\Pi_a \,J_{i}^{\mathrm{conv}} \,\Pi_a ^{-1} = (-)^{i+a+1}J_{i}^{\mathrm{conv}}\;, 
\qquad
\Pi_a\, L_{\eta,j}\, \Pi_a^{-1} = (-)^{j+a}L_{\eta,j} \;.
\end{equation*}
Therefore, since $J_{i}^{\mathrm{conv}} L_{\eta,j} \in \mathcal{P}(\Hi) $ by using Lemma \ref{lem:tau}\ref{it:tau4} and the identity $\Pi_a \,\chi_1\, \Pi_a^{-1}=\chi_1$, we conclude that
\begin{equation*}
\begin{split}
\tau(J_{i}^{\mathrm{conv}} L_{\eta,i}) =  \tau(\Pi_a\, J_{i}^{\mathrm{conv}} L_{\eta,i}\,\Pi_a^{-1}) = -\tau(J_{i}^{\mathrm{conv}} L_{\eta,i}) \;. 
\end{split}
\end{equation*}
Now, we proceed by proving that $\tau(J_{1}^{\mathrm{conv}} L_{\eta,2})=-\tau(J_{2}^{\mathrm{conv}} L_{\eta,1})$. Observe that
\begin{equation*}
R_{2\pi/3} H R_{2\pi/3}^{-1} = H,
 \qquad R_{2\pi/3} S_{z} R_{2\pi/3}^{-1} = S_{z},
\end{equation*}
\begin{equation*}
R_{2\pi/3} \begin{pmatrix}
X_1 \\ X_2
\end{pmatrix} R_{2\pi/3}^{-1}= 
\begin{pmatrix}
-\frac{1}{2} & +\frac{\sqrt{3}}{2} \\
-\frac{\sqrt{3}}{2} & -\frac{1}{2}
\end{pmatrix} \begin{pmatrix}
X_1 \\ X_2
\end{pmatrix}.
\end{equation*}
Therefore, we get that
\begin{align*}
\tau(J_{1}^{\mathrm{conv}} L_{\eta,2})&=\frac{\sqrt{3}}{4}\tau(J_{1}^{\mathrm{conv}} L_{\eta,1})+\frac{1}{4}\tau(J_{1}^{\mathrm{conv}} L_{\eta,2})-\frac{3}{4}\tau(J_{2}^{\mathrm{conv}} L_{\eta,1})-\frac{\sqrt{3}}{4}\tau(J_{2}^{\mathrm{conv}} L_{\eta,2})\\
&=\frac{1}{4}\tau(J_{1}^{\mathrm{conv}} L_{\eta,2})-\frac{3}{4}\tau(J_{2}^{\mathrm{conv}} L_{\eta,1}),
\end{align*}
in the first equality we have used that $R_{2\pi/3} \chi_1 R_{2\pi/3}^{-1}=\widetilde{\chi}_1$ being the characteristic function of $\widetilde{C}_1=\mathrm{R}_{2\pi/3} \, C_1$ with $\mathrm{R}_{2\pi/3}$ defined in \eqref{eqn:r120} (in other words, $\widetilde{C}_1$ is the fundamental cell with reference to the basis $\{a_2, a_3\}$) and Lemma \ref{lem:tau}\ref{it:tau2}, and in the second equality we have exploited that $\tau(J_{i}^{\mathrm{conv}} L_{\eta,i}) = 0$.
\end{proof}

Let us now deal with the proof of Proposition \ref{thm:main}. We will use the following intermediate result, which establishes that the spin torque term gives a vanishing contribution. This is a particular case of \cite[Proposition A.3.1]{MaPaTe}, but we spell out the details since this specific case has a more transparent proof. 
\begin{proposition}\label{prop:spintorque}
Let $(H,\mu)$ be an insulator with $H \in \mathcal{P}_{0}^{\hexagon}(\Hi) $. Let $P = P(H,\mu)$ be its Fermi projector. Let $\mathscr{T}_z$ and $L_{\eta,j}$ be the operators defined respectively in \eqref{eq: spin-torque} and \eqref{eqn:kubo}. Then, we have that:
\begin{enumerate}[label=(\roman*), ref=(\roman*)]
\item \label{it:vanstorqueonP} 
\begin{equation}\label{eqn:spintorqueonpoints}
    \Tr(\mathscr{T}_z\, L_{\eta,j} \,\chi_{\mathcal{P}_{\gamma_i}})=0\quad\text{for every $i,j=1,2$},
\end{equation}
where the subset $\mathcal{P}_{\gamma_i}$ is defined in \eqref{eqn:decfundcell}.
\item \label{it:vansrotoncell} 
\begin{equation*}%\label{eqn:vanXtorque}    
\tau (X_i\, \mathscr{T}_z \,L_{\eta,j} )=0 \quad\text{for every $ i,j =1,2$}.
\end{equation*}
\end{enumerate}
\end{proposition}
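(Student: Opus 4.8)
The plan is to prove part \ref{it:vanstorqueonP} first and then deduce part \ref{it:vansrotoncell} from it using Lemma \ref{lem:tau}\ref{it:tau6}. For part \ref{it:vanstorqueonP}, the key observation is that $\mathscr{T}_z = \iu[H,S_z]$ and $L_{\eta,j}$ are both periodic operators (by Corollary \ref{cor:Pi_1}, or rather the pre-limit version: $L_{\eta,j}$ is defined by a convergent time-integral of conjugates of $[X_j,P]\in\mathcal{P}_0(\Hi)$, and each integrand is periodic), so $\mathscr{T}_z L_{\eta,j}\in\mathcal{P}(\Hi)$. By Lemma \ref{lem:tau}\ref{it:tau1}, the trace over $\chi_{\mathcal{P}_{\gamma_i}}$ can be computed fiberwise, and $\abs{C_1}\tau(\mathscr{T}_z L_{\eta,j}) = \Tr(\mathscr{T}_z L_{\eta,j}\chi_{\mathcal{P}_{\gamma_1}}) + \Tr(\mathscr{T}_z L_{\eta,j}\chi_{\mathcal{P}_{\gamma_2}})$. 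I would show that each of the two summands vanishes separately, using the $\mathrm{v}\otimes\mathrm{h}$-reflection symmetry. Indeed, $R_{\mathrm{v}\otimes\mathrm{h}}$ maps each sublattice triangular lattice $\Gamma_\sharp$ to itself (a vertical spatial reflection fixes $\Gamma_A\equiv\Gamma$ and $\Gamma_B$), hence conjugation by $R_{\mathrm{v}\otimes\mathrm{h}}$ preserves $\chi_{\mathcal{P}_{\gamma_i}}$ up to the relabelling of $\mathcal{P}_{\gamma_i}$ that is already built into the decomposition \eqref{eqn:decfundcell}; meanwhile, as recorded in the proof of Proposition \ref{prop:antisymmetry}, $R_{\mathrm{v}\otimes\mathrm{h}} S_z R_{\mathrm{v}\otimes\mathrm{h}}^{-1} = -S_z$ and $R_{\mathrm{v}\otimes\mathrm{h}} H R_{\mathrm{v}\otimes\mathrm{h}}^{-1} = H$, so $R_{\mathrm{v}\otimes\mathrm{h}}\mathscr{T}_z R_{\mathrm{v}\otimes\mathrm{h}}^{-1} = -\mathscr{T}_z$, while $R_{\mathrm{v}\otimes\mathrm{h}} L_{\eta,j} R_{\mathrm{v}\otimes\mathrm{h}}^{-1} = (-)^{j-1}L_{\eta,j}$. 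Combining these, $\Tr(\mathscr{T}_z L_{\eta,j}\chi_{\mathcal{P}_{\gamma_i}})$ equals its own negative after inserting $R_{\mathrm{v}\otimes\mathrm{h}}^{-1}R_{\mathrm{v}\otimes\mathrm{h}}$ and using cyclicity of the finite-rank trace, hence vanishes.

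For this to work I need to be careful that $R_{\mathrm{v}\otimes\mathrm{h}}$ really does fix the pieces $\mathcal{P}_{\gamma_i}$ of the fundamental cell, or else permutes $\mathcal{P}_{\gamma_1}\leftrightarrow\mathcal{P}_{\gamma_2}$ (which would still give the vanishing of the \emph{sum}, i.e. of $\tau$, and then one would instead invoke Lemma \ref{lem:tau}\ref{it:tau6} directly). An alternative, perhaps cleaner route avoiding any discussion of which subsets of $\Cr$ get swapped: directly show $\tau(\mathscr{T}_z L_{\eta,j}) = 0$ from $R_{\mathrm{v}\otimes\mathrm{h}}\chi_1 R_{\mathrm{v}\otimes\mathrm{h}}^{-1}=\chi_1$ (already used in Proposition \ref{prop:antisymmetry}) and the sign computation above, exactly mirroring the argument there that $\tau(J_i^{\mathrm{conv}}L_{\eta,i})=0$. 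Then for the pointwise statement one uses that the trace per unit volume with respect to \emph{any} admissible exhaustion gives the same answer, together with the splitting $\Tr(A\chi_{\mathcal{P}_{\gamma_1}})=\Tr(A\chi_{\mathcal{P}_{\gamma_2}})$ forced by rotational symmetry: since $R_{2\pi/3}$ cyclically permutes the three choices of $d_i$ and hence the decompositions, one gets that both partial traces coincide, and being equal with vanishing sum they both vanish. I expect settling precisely which of these bookkeeping routes is simplest — i.e. tracking the geometry of $\mathcal{P}_{\gamma_1},\mathcal{P}_{\gamma_2}$ under the symmetries — to be the only real subtlety; the symmetry/sign algebra itself is routine.

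Part \ref{it:vansrotoncell} then follows immediately: set $A:=\mathscr{T}_z L_{\eta,j}$. By part \ref{it:vanstorqueonP} we have $\Tr(A\chi_{\mathcal{P}_{\gamma_1}})=\Tr(A\chi_{\mathcal{P}_{\gamma_2}})=0$, and $A\in\mathcal{P}(\Hi)$ with $X_i A$ densely defined (since $\mathscr{T}_z=\iu[H,S_z]$ is short-range and bounded and $L_{\eta,j}$ is bounded, $X_i A$ is densely defined on, say, compactly supported vectors). Hence Lemma \ref{lem:tau}\ref{it:tau6} applies verbatim and yields $\tau(X_i A) = \tau(X_i\mathscr{T}_z L_{\eta,j}) = 0$, which is exactly the claim. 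The main obstacle, to reiterate, is not any hard estimate but the combinatorial care needed in part \ref{it:vanstorqueonP} to justify that the symmetry operation respects the cell decomposition \eqref{eqn:decfundcell}; everything else is a direct application of Lemma \ref{lem:tau} and the symmetry relations already established in the proof of Proposition \ref{prop:antisymmetry}.
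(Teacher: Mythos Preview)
Your reduction of part \ref{it:vansrotoncell} to part \ref{it:vanstorqueonP} via Lemma \ref{lem:tau}\ref{it:tau6} is exactly what the paper does and is fine.

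The gap is in part \ref{it:vanstorqueonP}: your $R\sub{v\otimes h}$ argument does \emph{not} work for $j=2$. You correctly record $R\sub{v\otimes h}\mathscr{T}_z R\sub{v\otimes h}^{-1}=-\mathscr{T}_z$ and $R\sub{v\otimes h}L_{\eta,j}R\sub{v\otimes h}^{-1}=(-)^{j-1}L_{\eta,j}$, but the combined sign is $(-1)\cdot(-)^{j-1}=(-)^{j}$, which is $-1$ only for $j=1$. For $j=2$ the product $\mathscr{T}_z L_{\eta,2}$ is \emph{invariant} under $R\sub{v\otimes h}$, so the trace is merely equal to itself and you learn nothing. (Compare with the diagonal-vanishing in Proposition \ref{prop:antisymmetry}: there $J_i^{\mathrm{conv}}$ carries a factor $(-)^{i}$, so the product $J_i^{\mathrm{conv}}L_{\eta,i}$ always picks up $(-)^{i}(-)^{i-1}=-1$; here $\mathscr{T}_z$ is a scalar under spatial reflection and only the $L_{\eta,j}$ sign survives.) Your ``alternative route'' inherits the same problem, since it still needs $\tau(\mathscr{T}_z L_{\eta,2})=0$ as input.

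The paper instead uses the $2\pi/3$ rotation, under which $\mathscr{T}_z$ is invariant while $(L_{\eta,1},L_{\eta,2})$ transforms as a planar vector. Writing $\alpha_j:=\tau\big(\chi_{\{T_\nu\mathcal{P}_{\gamma_i}:\nu\in\Gamma\}}\mathscr{T}_z L_{\eta,j}\big)$ (the periodization of $\chi_{\mathcal{P}_{\gamma_i}}$ makes the operator periodic and lets one invoke cell-independence of $\tau$), rotational covariance gives the linear system $\alpha_1=\tfrac{\sqrt{3}}{2}\alpha_2-\tfrac12\alpha_1$, $\alpha_2=-\tfrac{\sqrt{3}}{2}\alpha_1-\tfrac12\alpha_2$, whose only solution is $\alpha_1=\alpha_2=0$. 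This handles both $j$ at once and avoids any discussion of how $\mathcal{P}_{\gamma_1},\mathcal{P}_{\gamma_2}$ behave under the symmetry (which you correctly flagged as delicate): one only needs that the sublattice indicator $\chi_{\{T_\nu\mathcal{P}_{\gamma_i}\}}$ is $R_{2\pi/3}$-invariant. Your $R\sub{v\otimes h}$ computation can be salvaged by combining it with one application of $R_{2\pi/3}$ to pass from $j=1$ to $j=2$, but at that point you are essentially reproducing the paper's argument.
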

\begin{proof}
{\it \ref{it:vanstorqueonP}} 
First of all, note that $\mathscr{T}_z,L_{\eta,j}\in\mathcal{P}(\Hi)$ by Lemmas \ref{lem:Pi0shortrange} and \ref{lemma:shortrange}, thus the operator $\mathscr{T}_z\, L_{\eta,j} \,\chi_{\mathcal{P}_{\gamma_i}}$ is trace class. Consider the operator $\chi_{\{T_\nu \mathcal{P}_{\gamma_i}:\, \nu\in\Gamma 
  \}}$ being in $\mathcal{P}(\Hi)$. By Lemma \ref{lem:tau}\ref{it:tau1} we have that 
\[
\abs{C_1}\tau(\chi_{\{T_\nu \mathcal{P}_{\gamma_i}:\, \nu\in\Gamma 
  \}}\mathscr{T}_z\, L_{\eta,j})=\Tr(\chi_1\,\chi_{\{T_\nu \mathcal{P}_{\gamma_i}:\, \nu\in\Gamma 
  \}}\,\mathscr{T}_z\, L_{\eta,j}\,\chi_1)=\Tr(\mathscr{T}_z\, L_{\eta,j}\,\chi_{\mathcal{P}_{\gamma_i}}).
\]  
So we conclude by showing that $\tau(\chi_{\{T_\nu \mathcal{P}_{\gamma_i}:\, \nu\in\Gamma 
  \}}\mathscr{T}_z\, L_{\eta,j})=0$. 
Introducing $\widetilde{\chi}_1$ as the characteristic function of $\widetilde{C}_1=\mathrm{R}_{2\pi/3} \, C_1$ with $\mathrm{R}_{2\pi/3}$ defined in \eqref{eqn:r120}, we observe that
\begin{equation*}
\begin{aligned}
\abs{C_1}\tau(\chi_{\{T_\nu \mathcal{P}_{\gamma_i}:\, \nu\in\Gamma 
  \}}\mathscr{T}_z\, L_{\eta,j})&=\Tr(\chi_1\,\chi_{\{T_\nu \mathcal{P}_{\gamma_i}:\, \nu\in\Gamma 
  \}}\,\mathscr{T}_z\, L_{\eta,j}\,\chi_1)\\
  &=\Tr(R_{2\pi/3}\,\chi_1\,\chi_{\{T_\nu \mathcal{P}_{\gamma_i}:\, \nu\in\Gamma 
  \}}\,\mathscr{T}_z\, L_{\eta,j}\,\chi_1\, R_{2\pi/3}^{-1})\\
  &=\Tr(\widetilde{\chi}_1\,\chi_{\{T_\nu \mathcal{P}_{\gamma_i}:\, \nu\in\Gamma 
  \}}\,\mathscr{T}_z\, R_{2\pi/3}\, L_{\eta,j}\,R_{2\pi/3}^{-1}\,\widetilde{\chi}_1)\\
  &=|\widetilde{C}_1|\tilde{\tau}(\chi_{\{T_\nu \mathcal{P}_{\gamma_i}:\, \nu\in\Gamma 
  \}}\mathscr{T}_z\, R_{2\pi/3}\, L_{\eta,j}\,R_{2\pi/3}^{-1})\\
  &=\abs{{C}_1}\tau(\chi_{\{T_\nu \mathcal{P}_{\gamma_i}:\, \nu\in\Gamma 
  \}}\mathscr{T}_z\, R_{2\pi/3}\, L_{\eta,j}\,R_{2\pi/3}^{-1}) \;,
\end{aligned}
\end{equation*}
where we used the invariance of the trace under unitary conjugation, the identities $R_{2\pi/3}\,\mathscr{T}_z\, R_{2\pi/3}^{-1}= \mathscr{T}_z$, $R_{2\pi/3}\,\chi_{\{T_\nu \mathcal{P}_{\gamma_i}:\, \nu\in\Gamma 
  \}}\, R_{2\pi/3}^{-1}=\chi_{\{T_\nu \mathcal{P}_{\gamma_i}:\, \nu\in\Gamma 
  \}}$, and Lemma \ref{lem:tau}\ref{it:tau2}. We proceed by noting that
\begin{align*}
 R_{2\pi/3}\, L_{\eta,1}\,R_{2\pi/3}^{-1}&=\iu\int_{-\infty}^{0}\di s\, \ex^{\eta s} \ex^{\iu s H}[R_{2\pi/3}\, X_1\,R_{2\pi/3}^{-1},P]\ex^{-\iu s H}\\
 &=\iu\int_{-\infty}^{0}\di s\, \ex^{\eta s} \ex^{\iu s H} \left[\frac{\sqrt{3}}{2}X_2-\frac{1}{2}X_1,P\right]\ex^{-\iu s H},
\end{align*}
and similarly
\[
R_{2\pi/3}\, L_{\eta,2}\,R_{2\pi/3}^{-1}=-\iu\int_{-\infty}^{0}\di s\, \ex^{\eta s} \ex^{\iu s H} \left[\frac{\sqrt{3}}{2}X_1+\frac{1}{2}X_2,P\right]\ex^{-\iu s H}.    
\]
Therefore, defining 
$
\alpha_j:=\tau(\chi_{\{T_\nu \mathcal{P}_{\gamma_i}:\, \nu\in\Gamma 
  \}}\mathscr{T}_z\, L_{\eta,j}),
$  
we obtain the following system in $\alpha_j$'s:
\begin{equation*}
\begin{cases}    
\alpha_1&=\frac{\sqrt{3}}{2}\alpha_2-\frac{1}{2}\alpha_1\\
\alpha_2&=-\frac{\sqrt{3}}{2}\alpha_1-\frac{1}{2}\alpha_2,
\end{cases}
\end{equation*}
from which it follows that $\alpha_1=0=\alpha_2$.\\
{\it \ref{it:vansrotoncell}} Lemma \ref{lem:tau}\ref{it:tau6} together with the previous point \ref{it:vanstorqueonP} implies the statement.
\end{proof}

\begin{remark}
\begin{enumerate}[label=(\roman*), ref=(\roman*)]
Let us comment on the vanishing of the above quantities.
\item Clearly, since $C_1=\mathcal{P}_{\gamma_1}\sqcup \mathcal{P}_{\gamma_2}$ by \eqref{eqn:decfundcell} we have that 
\begin{equation*}%\label{eqn:tauspintorque=0}    
\tau(\mathscr{T}_z\, L_{\eta,j})=0.
\end{equation*}
Thus, we can interpret this result as vanishing of the spin torque $\mathscr{T}_z$ in the linear response of the state $L_{\eta,j}$ at mesoscopic scale. This vanishing property is \emph{robust} since it does not depend on the choice of the fundamental cell, as a consequence of Lemma \ref{lem:tau}\ref{it:tau2} (since $\mathscr{T}_z\, L_{\eta,j}\in\mathcal{P}(\Hi)$). 
\item Observe that $\tau(X_i\, \mathscr{T}_z \,L_{\eta,j})=0$ is \emph{robust} since it does not depend on the choice of the exhaustion $C_L\nearrow \Cr$, see Remark \ref{rem:vanrob}.
\end{enumerate}
\end{remark}

\begin{proof}[Proof of Proposition \ref{thm:main}]
{\it \ref{it:main1} \& \ref{it:main2}}
Note that the operator $J_{i}^{\mathrm{conv}} \, L_{\eta,j}$ is periodic and bounded since both $J_{i}^{\mathrm{conv}}$ and $[P,X_j]$ are short-range by Lemmas \ref{lem:Pi0shortrange} and \ref{lemma:shortrange}. Thus, Lemma \ref{lem:tau}\ref{it:tau1} ensures that $\sigma_{ij}^{\mathrm{s},\mathrm{conv}}$ is well-defined. On the other hand, for the proper spin conductivity $\sigma_{ij}^{\mathrm{s},\mathrm{prop}}$ we have that $\tau(X_i \mathscr{T}_z\, L_{\eta,j} )=0$ in view of Proposition \ref{prop:spintorque}\ref{it:vansrotoncell}, thus only $\iu [H,X_i]S_z L_{\eta,j}\in\mathcal{P}(\Hi)$ contributes to its value. Specifically, we get that
\begin{equation}\label{eqn:sigma1}
\begin{aligned}
\re\tau(J_{i}^{\mathrm{prop}}\, L_{\eta,j})&=\re \tau(\iu [H,X_i]S_z \,L_{\eta,j})\\
&=\frac{1}{2}\(  \tau(\iu [H,X_i]S_z L_{\eta,j})+ \overline{\tau(\iu [H,X_i]S_z L_{\eta,j}) }   \)\\
&=\frac{1}{2}\(  \tau(\iu [H,X_i]S_z L_{\eta,j})+ \tau(S_z \iu [H,X_i]L_{\eta,j} )    \)\\
&=\tau(J_{i}^{\mathrm{conv}}\, L_{\eta,j})
\end{aligned}
\end{equation}
where we have used that $\overline{\Tr(A)}=\Tr(A^*)$ for every trace class operator $A$ and the cyclicity of the trace per unit volume, see Lemma \ref{lem:tau}\ref{it:tau5}. Therefore, we shall compute the adiabatic limit $\eta\to 0^+$ of $\tau(J_{i}^{\mathrm{conv}}\, L_{\eta,j})$.

To this end, it is convenient to rewrite the argument of the real part in \eqref{eqn:sigma1} by using that the operator $L_{\eta,j}$ is off-diagonal due to the off-diagonality of $[X_j,P]$. By exploiting the cyclicity of the trace per unit volume, we have that
\begin{equation*}%\label{eqn:sigma2}
\begin{aligned}
 \tau(\iu [H,X_i]S_z \,L_{\eta,j})&=\tau\(\iu {([H,X_i]S_z)}\su{OD} \,{(L_{\eta,j})}\su{OD}\)\\
&=\tau\(\iu [H,X_i\su{OD}]S_z\su{D} \,{(L_{\eta,j})}\su{OD}\)+\tau\(\iu [H,X_i\su{D}]S_z\su{OD} \,{(L_{\eta,j})}\su{OD}\)\\
&=:T_1(\eta)+T_2(\eta),
\end{aligned}
\end{equation*}
where we have defined the two terms $T_1(\eta)$ and $T_2(\eta)$ according to the order of appearance. We further split $T_1(\eta)$ as sum of two terms:
\begin{align*}
T_1(\eta)&=-\int_{-\infty}^{0}\di s\, \ex^{\eta s}\tau\( [H,X_i\su{OD}]S_z  \ex^{\iu s H} [X_j,P]\ex^{-\iu s H}\)\\
&=-\int_{-\infty}^{0}\di s\, \ex^{\eta s}\left\{\tau\(X_i\su{OD}S_z  \ex^{\iu s H} [X_j,P]\ex^{-\iu s H}H\)\right.\\
&\qquad\left.+\tau\( X_i\su{OD} [H,S_z]  \ex^{\iu s H}[X_j,P]\ex^{-\iu s H}\)+\tau\( X_i\su{OD} S_zH \ex^{\iu s H}[X_j,P]\ex^{-\iu s H}\)\right\}\\
&=\int_{-\infty}^{0}\di s\, \ex^{\eta s}\left\{\tau\( X_i\su{OD}S_z  [H,\ex^{\iu s H}[X_j,P]\ex^{-\iu s H}]\)+\tau\( X_i\su{OD} [H,S_z]  \ex^{\iu s H}[X_j,P]\ex^{-\iu s H}\)\right\}\\
&=:T_{11}(\eta)+T_{12}(\eta),
\end{align*}
by using that $X_i\su{OD} =[[X_i,P],P]\in \mathcal{P}(\Hi)$ in view of \eqref{eqn:diagoffdiag}, Lemma \ref{lem:Pi0shortrange} and \ref{lemma:shortrange}, and repeatedly the cyclicity of the trace per unit volume. For the term $T_{11}(\eta)$ we perform an integration by parts:
\begin{equation}
\label{eqn:eqn:sigma3}
\begin{aligned}
T_{11}(\eta) &=-\iu\int_{-\infty}^{0}\di s\, \ex^{\eta s}\tau\( X_i\su{OD}S_z   \frac{\di }{\di s}\ex^{\iu s H}[X_j,P]\ex^{-\iu s H}\)\\
&=-\iu \tau\( X_i\su{OD}S_z  [X_j,P]\)+\iu\int_{-\infty}^{0}\di s\, \eta\ex^{\eta s}\tau\( X_i\su{OD}S_z  \ex^{\iu s H}[X_j,P]\ex^{-\iu s H}\).
\end{aligned}
\end{equation}
 The second term on the right-hand side of \eqref{eqn:eqn:sigma3} does not contribute to the spin conductivity by Lemma \ref{lem:I}\ref{it:etaI}. On the other hand, in view of \eqref{eqn:diagoffdiag}, the first term on the right-hand side of \eqref{eqn:eqn:sigma3} can be rewritten as: 
\begin{equation}
\label{eqn:eqn:sigma4}
\begin{aligned}
-\iu \tau\( X_i\su{OD}S_z\su{D}   [X_j,P]\)&=\iu\tau\( P[X_i,P]S_z
   [X_j,P]\)-\iu \tau\( [X_i,P]P S_z  [X_j,P]\)\\
%&=\iu\tau\( P[X_i,P]S_z
%   [X_j,P]\)-\iu \tau\(   [X_j,P][X_i,P]P S_z\)\\
%&=\iu\tau\( P[X_i,P]S_z
%   [X_j,P]\)-\iu \tau\( P  [X_j,P][X_i,P]S_z\)\\
%&=\iu\tau\( P[[X_i,P]S_z,  %[X_j,P]]\)\\
%&=\iu\tau\( P[X_i,P][S_z,  %[X_j,P]]\)+\iu\tau\( %P[[X_i,P],  [X_j,P]]S_z\)\\
&=\iu\tau\( P[X_i,P][S_z,  [X_j,P]]\)+\iu\tau\( S_z P[[X_i,P],  [X_j,P]]\)\\
&=\iu\tau\( P[X_i,P][[P,S_z],X_j]\)+\iu\tau\( S_z P[[X_i,P],  [X_j,P]]\)
\end{aligned}
\end{equation}
where we have used repeatedly the cyclicity of the trace per unit volume, and in the last equality we have exploited that $[S_z,  [X_j,P]]=-[P,  [S_z,X_j]]-[X_j,[P,S_z]]=[[P,S_z],X_j]$.
By taking the real part of the second term on the last line of \eqref{eqn:eqn:sigma4}, we obtain the term in \eqref{eqn:sigmasc}; while in the last line of \eqref{eqn:eqn:sigma4} the real part of the first summand gives the third term in the expression for $\Sigma\su{snc}_{ij}$, see \eqref{eqn:sigmasnc}. Next, we collect the remaining terms contributing to the spin conductivity:
\begin{equation}
\label{eqn:firsttwoinsnc}
\lim_{\eta\to 0^+}\re \left\{T_2(\eta)+T_{12}(\eta)\right\}=\re \tau\(\iu [H,X_i] [[S_z,P],P] P_{1}\)+ \re \tau\( \iu [[X_i,P],P][S_z,H]P_{1}  \)
\end{equation}
where we have used that we can omit the diagonal part of $X_i$ due to the cyclicity of the trace per unit volume and definition of the operator $P_{1}$ in \eqref{eqn:P1}.
The right-hand side of \eqref{eqn:firsttwoinsnc} gives the remaining first two terms in the expression for $\Sigma\su{snc}_{ij}$.

{\it \ref{it:main3}} follows immediately from {\it \ref{it:main2}} by using that $\Sigma\su{snc}_{ij}=0$ due to $[H,S_z]=0=[P,S_z]$.
\end{proof}

\begin{remark}\label{rmk: NEASS}
Since Remark \ref{rmk:well-posed}\ref{it:P12} ensures that the linear response operator $P_1$ computed via Kubo's formula agrees with $\Pi_1$ given by the NEASS approach in \cite{MaPaTe}, it is clear that a priori these two methods yield the same formulas for the spin conductivies. Here, we have preferred to rewrite the spin-commuting term $\Sigma\su{sc}_{ij}$ in \eqref{eqn:sigmasc} by having the operator $S_z$ at the first place in the argument of the trace per unit volume, while in \cite[Theorem 5.6]{MaPaTe} $S_z$ appears nested in a commutator involving $[X_i,P]$ and $[X_j,P]$.
\end{remark}

\subsection{Spin-non-conserving contributions}
\label{sec: spin-non-conserving}

We shall now grasp a better understanding of the contribution to the spin conductivity $\sigma_{ij}^{\mathrm{s}}$ due to the spin-non-conserving terms in the Hamiltonian. This analysis will turn out to be useful in connection with the non-universality result presented in Section \ref{subsec: non-univers}.

Let us first recall what happens when the spin $S_{z}$ is conserved, that is, when $[H,S_z]=0$. For any orthogonal projection $P \in \mathcal{P}(\Hi)$, one defines its \emph{(first) Chern number} by 
\begin{equation*}
\mbox{Chern}(P)_{ij}:=\frac{\iu}{2\pi}\int_{\T^2_*}\di k\,\Tr_{\C^4}\big( P(k)[\partial_{k_j} P (k), \partial_{k_i} P (k)]  \big) \in \Z \;,
\end{equation*}
where $P(k)$ is the fiber operator of $P$, see \eqref{eq:BF-fibration}, for any choice of the dimerisation.

Denoting by $p^\uparrow$ and $p^\downarrow$ the $S_z$-eigenprojections, one also introduces the so-called \emph{spin Chern number} associated with $P$: \cite{ShengWengShengHaldane2006, Prodan2009}
\begin{equation}\label{eq:spin-Chern-number}
\mbox{ $S$-Chern$(P)_{ij}$} :=\frac{1}{2}\mbox{Chern}(p^\uparrow P p^\uparrow)_{ij} - \frac{1}{2} \mbox{Chern}(p^\downarrow P p^\downarrow)_{ij}  \in \frac{1}{2}\Z\;.
\end{equation}

In the spin commuting case, as is well-known the following holds true:
\begin{lemma}\label{lem:sigmasc}
Let $(H,\mu)$ be an insulator with $H \in \mathcal{P}_{0}^{\hexagon}(\Hi)$ such that $[H,S_{z}] = 0$. Then,
\begin{equation*}
\sigma_{ij}^{\mathrm{s}} = \frac{1}{2\pi}\mbox{ $S$-Chern$(P)_{ij}$} \;.
\end{equation*}
Furthermore, if $H$ is time-reversal symmetric, see \eqref{eq:time-rev-def} and below, then
\begin{equation}\label{eq:quantisation-spin-cons}
\sigma_{ij}^{\mathrm{s}} = \frac{1}{2\pi}\mbox{ $S$-Chern$(P)_{ij}$} \in \frac{1}{2\pi}\Z \;.
\end{equation}

\end{lemma}
In other words, whenever the spin $S_z$ is conserved and $H$ is time-reversal symmetric, the spin conductivity reduces to two copies of the charge conductivity, associated with the different spin sectors, and so its quantisation in fundamental units $\frac{e}{h}=\frac{1}{2\pi}$ is obviously established.
\begin{proof}
For brevity, we write $P^{\upuparrows}:= p^\uparrow P p^\uparrow $ and $P^{\downdownarrows}:= p^\downarrow P p^\downarrow $, so that $[H,S_{z}] = 0$ implies
$P= P^{\upuparrows}+ P^{\downdownarrows}$. 
Thus, by using Proposition \ref{thm:main}\ref{it:main3}, Lemma \ref{lem:tau}\ref{it:tau1} and \ref{it:tau3} one obtains that
\begin{equation*}\label{eqn:sigmaspinchern}
\begin{split}
\sigma_{ij}^{\mathrm{s}}&=\re \tau\big( \iu S_z P[[X_i,P],  [X_j,P]]\big)\\
&=\frac{1}{2}\tau\big( \iu P^{\upuparrows}[[X_i,P^{\upuparrows}],  [X_j,P^{\upuparrows}]]\big)-\frac{1}{2}\tau\big( \iu P^{\downdownarrows}[[X_i,P^{\downdownarrows}],  [X_j,P^{\downdownarrows}]]\big)\\
&=\frac{1}{2\pi}\left(\frac{1}{2} \mbox{Chern}(P^{\upuparrows})_{ij} -\frac{1}{2} \mbox{Chern}(P^{\downdownarrows})_{ij}\right)=\frac{1}{2\pi}\mbox{ $S$-Chern$(P)_{ij}$} \in \frac{1}{4\pi}\Z\;,
\end{split}
\end{equation*}
where we used that the fibre of $[X_i,P]$ is $\iu \partial_{k_{i}}P(k)$.
Moreover, if $H$ is time-reversal symmetric then we note the identity $\mbox{Chern}(P^{\downdownarrows})=-\mbox{Chern}(P^{\upuparrows})$  and conclude that $\sigma_{ij}^{\mathrm{s}}=\frac{1}{2\pi} \mbox{Chern}(P^{\upuparrows})_{ij}\in \frac{1}{2\pi}\Z$.
\end{proof}
In the general case in which the spin $S_{z}$ is not conserved, that is, when $[H,S_{z}] \neq 0$, as anticipated in the Introduction, we decompose the Hamiltonian into its \emph{spin-commuting} term $H\su{sc}$ and \emph{spin-non-commuting} term $H\su{snc}$, which we here recall:
\begin{equation}\label{eqn:decscsnc}
H\su{sc}:=H+2 [S_z,H]S_z,\qquad H\su{snc}:=2 [H,S_z]S_z \qquad \text{ such that } H=H\su{sc}+H\su{snc}\;,
\end{equation}
following \cite{SchulzBaldes,MaPaTe}.
To see that $H\su{sc}$ indeed commutes with $S_{z}$, 
we note the following simple identity, which will be exploited several times below: for any operator $A \in \Bs(\Hi)$, by recalling \eqref{eqn:spin} one has that
\begin{align*}
0=\frac{1}{4}[\Id_{\mathcal{H}}, A]=[S_z^2,A] = S_z[S_z,A]+[S_z,A]S_z,
\end{align*}
equivalently, since $S^{2}_{z}=\Id_{\mathcal{H}}/4 $,
\begin{equation}\label{eqn:[spin,O]}
[S_z,A] = -4 S_z[S_z,A]S_z.
\end{equation} 
Accordingly, we have that
\begin{equation*} 
[H\su{sc},S_z] =[H, S_z]+2\big([S_z,H]S_z^2- S_z[S_z,H]S_z\big) = 0 \;,
\end{equation*}
where we used \eqref{eqn:[spin,O]} and that $S^{2}_{z}=\Id_{\mathcal{H}}/4 $. 

It is convenient to introduce the following norm on $\mathcal{B}(\Hi)$
\begin{equation}
\label{eqn:verti}
\verti{A}:=\norm{A}+\sum_{j=1}^2\norm{[A,X_j]},
\end{equation}
whenever the commutator $[A,X_j]$ (is densely defined and) extends to a bounded operator.
\begin{definition}
\label{def:almspincon}
We say that an insulator $(H,\mu)$ \emph{almost conserves the spin} (in the $z$-direction) if and only if there exists a  constant $C_{s}<\frac{1}{\norm{{(H-\mu)}^{-1}}}$ such that $\verti{[H,S_z]}\leq C_s$, or equivalently $\verti{H\su{snc}}\leq C_s$. 
\end{definition}
\begin{remark}
\label{rem:PofHsc}
Here follow some observations which will be useful for our subsequent analysis.
\begin{enumerate}
[label=(\roman*), ref=(\roman*)]
\item \label{it:Cs} Note that requiring that $C_s$ is smaller than the size of the gap associated with the insulator $(H,\mu)$ implies that $(H\su{sc},\mu)$ is an insulator as well.
Indeed, let $\mathcal{C}$ be the  complex contour in \eqref{eqn:riesz}, fix $z\in\mathcal{C}$, consider 
\begin{align*}
H\su{sc}-z&=((H\su{sc}-H){(H-z)}^{-1}+\Id)(H-z)=(\Id-H\su{snc}{(H-z)}^{-1})(H-z).  
\end{align*}
Since $\norm{H\su{snc}}\norm{{(H-z)}^{-1}}\leq\frac{C_s}{\dist(z,\Sp(H))}\leq \frac{C_s}{\dist(\mu,\Sp(H))} <1$, the operator on the right-hand side is invertible. 
Thus, we can define the Fermi projector $P\su{sc}:=P(H\su{sc},\mu)$ associated with $(H\su{sc},\mu)$ by the Riesz formula:
\begin{equation*}
\label{eqn:rieszPsc}
P\su{sc}=\frac{\iu}{2\pi} \oint_{\mathcal{C}} \di z\, (H\su{sc} - z )^{-1},
\end{equation*}
where $\mathcal{C}$ is the same complex contour chosen in \eqref{eqn:riesz} to determine the Fermi projection $P$. 
\item \label{it:diffPPsc}
Observe that the operator norm of the difference between $P$ and $P\su{sc}$ is controlled, up to a constant, by $\norm{[H,S_z]}$. Indeed, we have that
\begin{align*}
P-P\su{sc}&=\frac{\iu}{2\pi} \oint_{\mathcal{C}} \di z \((H - z )^{-1}-(H\su{sc} - z )^{-1}\)=-\frac{\iu}{2\pi} \oint_{\mathcal{C}} \di z\,(H - z )^{-1}H\su{snc}  (H\su{sc} - z )^{-1},
\end{align*}
thus $\norm{P-P\su{sc}}\leq c \norm{[H,S_z]}\leq c\cdot C_s$, with $C_s$ in Definition \ref{def:almspincon}.
\end{enumerate}
\end{remark}

For insulators in $\mathcal{P}_{0}^{\hexagon}(\Hi)$ that almost conserve the spin, we derive approximation results for the spin conductivity $\sigma^{\mathrm{s}}_{ij}$ using an approximation procedure \cite{Prodan2009, SchulzBaldes, MaPaTe}.  
In the following theorem, which is the main result of this section, we will show that $\sigma^{\mathrm{s}}_{ij}$ equals, up to a remainder of order $\verti{H\su{snc}}^2$, both the spin-commuting term $\Sigma\su{sc}_{ij}$ in \eqref{eqn:sigmasc} and the $\frac{1}{2\pi}$-spin Chern number $\frac{1}{2\pi}\mbox{$S$-Chern}(P\su{sc})_{ij}$ associated with the spin-commuting Fermi projection, see \eqref{eq:spin-Chern-number} and Remark \ref{rem:PofHsc}. These results refine those obtained in \cite[Eq.~(8)]{SchulzBaldes} and \cite[Proposition 5.13]{MaPaTe}, improving them by one order in $\verti{[H,S_z]}$. 

\begin{theorem}\label{cor:main1}
Let $(H,\mu)$ be an insulator with $H \in \mathcal{P}_{0}^{\hexagon}(\Hi)$ almost conserving the spin in the sense of Definition \ref{def:almspincon}. Let $P= P(H,\mu)$ be its Fermi projection. Then, there exists a constant $C$ independent of $\verti{[H,S_z]}$ such that
\begin{equation}\label{eq: sigma-quasi-1}
\big|\sigma^{\mathrm{s}}_{ij} - \Sigma\su{sc}_{ij} \big| \leq  C \verti{[H,S_z]}^2
\end{equation}
and
\begin{equation}\label{eq: sigma-quasi-2}
\big|\sigma^{\mathrm{s}}_{ij} - \frac{1}{2\pi}\mbox{$S$-Chern}(P\su{sc})_{ij}\big| \leq C \verti{[H,S_z]}^2\;,
\end{equation}
where $P\su{sc}:=P(H\su{sc},\mu)$ is the projection associated with the insulator $(H\su{sc},\mu)$.
\end{theorem}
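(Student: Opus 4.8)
The plan is to exploit the $\sigma_z$-grading of $\mathcal{B}(\Hi)$. Call $A$ \emph{$\sigma_z$-even} if $[A,S_z]=0$ and \emph{$\sigma_z$-odd} if $\{A,S_z\}=0$, and write $A=A_++A_-$ with $A_\pm:=\tfrac12(A\pm 4S_zAS_z)$ (note $4S_z^2=\Id$, and $A_\pm$ stay in $\mathcal{P}(\Hi)$ since $S_z$ commutes with $T_\gamma$). Three facts will be used throughout: (a) the grading is multiplicative; (b) $\tau(A_-)=0$, because a $\sigma_z$-odd operator is block-off-diagonal in the $\sigma_z$-eigenbasis while $\chi_1$ acts trivially on spin; (c) $S_z$ and every $X_i$ are $\sigma_z$-even, so $[X_i,\cdot]$ preserves the grading and $S_z$ commutes with every $\sigma_z$-even operator. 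By \eqref{eqn:[spin,O]} one has $[S_z,H\su{sc}]=0$ and $\{S_z,H\su{snc}\}=0$, i.e. $H\su{sc}$ is $\sigma_z$-even and $H\su{snc}$ is $\sigma_z$-odd. Expanding $(H-z)^{-1}$ as a Neumann series in $H\su{snc}$ around $(H\su{sc}-z)^{-1}$ in the Riesz formulas \eqref{eqn:riesz}, \eqref{eqn:2expP_1} and using \eqref{eqn:rieszPsc} together with Lemmas \ref{lem:Pi0shortrange}, \ref{lemma:shortrange}, one obtains the quantitative splittings
\begin{equation*}
P=P_++P_-,\qquad P_1=P_{1,+}+P_{1,-},\qquad [H,X_i]=[H\su{sc},X_i]+[H\su{snc},X_i],
\end{equation*}
where the $\sigma_z$-even parts have $\verti{\,\cdot\,}$-norm $O(1)$, the $\sigma_z$-odd parts have $\verti{\,\cdot\,}$-norm $O(\verti{[H,S_z]})$, and moreover $\verti{P-P\su{sc}},\verti{P_1-P_1\su{sc}},\verti{[S_z,P]}\le C\verti{[H,S_z]}$ while $\verti{P_+-P\su{sc}}\le C\verti{[H,S_z]}^2$. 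Here and below $C$ depends only on $\verti{H}$ and $\dist(\mu,\sigma(H))$, not on $\verti{[H,S_z]}$, provided the constant $C_s$ of Definition \ref{def:almspincon} is small enough that $(H\su{sc},\mu)$ is an insulator (Remark \ref{rem:PofHsc}).

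First I would prove \eqref{eq: sigma-quasi-1}, which by Proposition~\ref{thm:main}\ref{it:main2} is exactly $|\Sigma\su{snc}_{ij}|\le C\verti{[H,S_z]}^2$, treating the three summands of \eqref{eqn:sigmasnc} one at a time. For the last summand, $[S_z,X_j]=0$ and Jacobi give $[[P,S_z],X_j]=[S_z,[X_j,P]]=-2[X_j,P_-]S_z$, which is $\sigma_z$-odd of size $O(\verti{[H,S_z]})$, whereas $P[X_i,P]$ has $\sigma_z$-even part of size $O(1)$ and $\sigma_z$-odd part of size $O(\verti{[H,S_z]})$; inserting both decompositions and expanding $\tau$ of the product, facts (a)--(b) leave only ($\sigma_z$-odd part of $P[X_i,P]$)$\cdot$($[S_z,[X_j,P]]$), which is $O(\verti{[H,S_z]}^2)$, while the naive $O(1)\cdot O(\verti{[H,S_z]})$ contribution is $\sigma_z$-odd and killed by the trace. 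The first summand $\re\tau(\iu[H,X_i][[S_z,P],P]P_1)$ is identical: $[[S_z,P],P]$ has $\sigma_z$-odd part $O(\verti{[H,S_z]})$ and $\sigma_z$-even part $O(\verti{[H,S_z]}^2)$, and sits between $[H,X_i]$ and $P_1$, each with $\sigma_z$-even part $O(1)$ and $\sigma_z$-odd part $O(\verti{[H,S_z]})$, so every product that is $\sigma_z$-even --- hence contributes to $\tau$ --- is $O(\verti{[H,S_z]}^2)$. The middle summand, with $[S_z,H]=-2H\su{snc}S_z$ (purely $\sigma_z$-odd, $O(\verti{[H,S_z]})$) between $[[X_i,P],P]$ and $P_1$, is the same. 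Discarding these first-order $\sigma_z$-odd pieces is the step that can be applied only once.

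For \eqref{eq: sigma-quasi-2} it is enough, by \eqref{eq: sigma-quasi-1} and the triangle inequality, to bound $\big|\re\tau(\iu S_zP[[X_i,P],[X_j,P]])-\tfrac{1}{2\pi}\mbox{$S$-Chern}(P\su{sc})_{ij}\big|$ by $C\verti{[H,S_z]}^2$, the first quantity being $\Sigma\su{sc}_{ij}$ of \eqref{eqn:sigmasc}. One checks first that $(H\su{sc},\mu)\in\mathcal{P}_{0}(\Hi)\cap\mathrm{AII}^{\hexagon}$: periodicity and short-rangeness are clear, $[H\su{sc},R_{2\pi/3}]=0$ since $R_{2\pi/3}$ commutes with $S_z$, $[H\su{sc},R\sub{v\otimes h}]=[H\su{sc},\Theta]=0$ since $R\sub{v\otimes h}$ and $\Theta$ both send $S_z\mapsto-S_z$ so the two sign flips in $2[S_z,H]S_z$ cancel, and $\{H\su{sc},\Xi\}\neq0$ for $C_s$ small by continuity; hence Lemma~\ref{lem:sigmasc} and Proposition~\ref{thm:main}\ref{it:main3} applied to $(H\su{sc},\mu)$ give $\tfrac{1}{2\pi}\mbox{$S$-Chern}(P\su{sc})_{ij}=\re\tau(\iu S_zP\su{sc}[[X_i,P\su{sc}],[X_j,P\su{sc}]])$. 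Then substitute $P=P\su{sc}+Q$ with $\verti{Q}\le C\verti{[H,S_z]}$ and expand $\Sigma\su{sc}_{ij}$ multilinearly in $Q$: the terms of order $\ge2$ in $Q$ are $O(\verti{Q}^2)=O(\verti{[H,S_z]}^2)$ by continuity of $\tau$; in each of the three terms linear in $Q$, the contribution of $Q_-$ makes the argument of $\tau$ $\sigma_z$-odd (it is built from the $\sigma_z$-even operators $S_z$, $P\su{sc}$, $[X_\bullet,P\su{sc}]$ and a single $\sigma_z$-odd $Q_-$ or $[X_\bullet,Q_-]$), hence vanishes by (b), while the contribution of $Q_+=P_+-P\su{sc}=O(\verti{[H,S_z]}^2)$ is of second order. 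Collecting these estimates gives \eqref{eq: sigma-quasi-2}.

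The main difficulty will be the bookkeeping in the first step: one must verify that the ``large'' operators $[H,X_i]$, $[[X_i,P],P]$, $P[X_i,P]$ and $P_1$ genuinely split into a $\sigma_z$-even part of order $1$ plus a $\sigma_z$-odd part of order $\verti{[H,S_z]}$ --- this is where ``extracting commutators between operators that almost commute with $S_z$'' enters, reducing to Jacobi identities, the vanishing $[H\su{sc},S_z]=0$, and quantitative resolvent estimates --- and then to check, summand by summand in \eqref{eqn:sigmasnc} and \eqref{eqn:sigmasc}, that after deleting the $\sigma_z$-odd first-order pieces every surviving product carries two factors of size $\verti{[H,S_z]}$ (or one of size $\verti{[H,S_z]}^2$). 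Theorem~\ref{thm:intro}\ref{it:iii} shows no further gain is possible, so this parity trick cannot be iterated.
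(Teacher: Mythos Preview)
Your proof is correct and rests on the same mechanism as the paper's: the identity $4S_z^2=\Id$ forces $[S_z,\,\cdot\,]$ to be $\sigma_z$-odd, and since $\tau$ annihilates $\sigma_z$-odd operators one gains an extra factor of $\verti{[H,S_z]}$ beyond the naive count. The paper packages this as an abstract lemma (Lemma~\ref{lem:lambdasquared}: $|\tau(A[S_z,F]B)|\le C\verti{[H,S_z]}^2$ for $A,B,F\in\cZ(S_z)$, where $\cZ(S_z)$ is precisely your class of operators with $\sigma_z$-odd part $O(\verti{[H,S_z]})$) together with a verification (Lemma~\ref{lem:Zspace}) that the relevant building blocks lie in $\cZ(S_z)$; you instead make the $\sigma_z$-grading explicit and read off the parity of each product. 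These are equivalent organizations of the same computation.

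For \eqref{eq: sigma-quasi-2} there is a mild difference in bookkeeping. The paper telescopes $P\to P\su{sc}$, inserts the contour formula $P-P\su{sc}=\frac{\iu}{\pi}\oint(H-z)^{-1}[S_z,H]S_z(H\su{sc}-z)^{-1}\dd z$, distributes $[X_i,\,\cdot\,]$ by Leibniz, and feeds each resulting piece back into Lemma~\ref{lem:lambdasquared}. Your route---observing from the Neumann series that $Q_-=O(\verti{[H,S_z]})$ is killed by parity while $Q_+=P_+-P\su{sc}=O(\verti{[H,S_z]}^2)$ is already second order---is more economical and avoids opening the contour integral. Your explicit check that $H\su{sc}\in\mathrm{AII}^{\hexagon}$ is also a point the paper leaves implicit when invoking Lemma~\ref{lem:sigmasc} for $(H\su{sc},\mu)$.
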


Under the hypotheses of Theorem \ref{cor:main1} and assuming that $H$ is time-reversal symmetric, then Theorem \ref{thm:intro}\ref{it:ii} follows.

Note that the deviation \eqref{eq: sigma-quasi-1} was studied numerically in \cite{MoUl} for the Kane–Mele and the Bernevig–Hughes–Zhang \cite{Bernevig} models, and later for a more general class of models \cite[Supplementary Notes]{LPGetal}. In all cases, it was found to be quadratic in $\verti{[H,S_z]}$, in agreement with our analytic result.

To prove Theorem \ref{cor:main1} we present two technical lemmas and introduce the following space of operators:
\begin{equation}
\label{eqn:Z(S_z)}
\cZ(S_z):=\set{A\in \mathcal{P}(\Hi):\;\exists\; C>0\text{ such that } \norm{[A,S_z]}\leq C\verti{[H,S_z]}  },
\end{equation}
bearing in mind that we are interested in models, like the extended Kane--Mele model in Subsection \ref{sec:extended Kane--Mele}, where $\verti{[H,S_z]}$ is a small quantity, in the sense that $H\su{snc}$ is a small perturbation of $H\su{sc}$, see \eqref{eqn:decscsnc}. In other words, $\cZ(S_z)$ consists of all periodic bounded operators which \virg{nearly commute} with the spin operator. 

\begin{remark}\label{rmk-products-Z}
By Leibniz's rule we have that if $A,B\in \cZ(S_z)$ then $AB\in \cZ(S_z)$.
\end{remark}

\begin{lemma}\label{lem:lambdasquared}
Let $(H,\mu)$ be an insulator that almost conserves the spin.
If $A,B,F \in \cZ(S_z)$, then there exists a constant $C$ independent of $\verti{[H,S_z]}$ such that
\begin{equation}\label{eqn:lambdasquared1}
\abs{\tau\left(A[S_z, F] B\right)}\leq C  \verti{[H,S_z]}^2 \;.
\end{equation}
\end{lemma}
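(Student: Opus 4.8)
The plan is to exploit the ``double commutator'' identity \eqref{eqn:[spin,O]}, which says $[S_z,F] = -4S_z[S_z,F]S_z$, to extract a \emph{second} factor of a near-commutator with $S_z$ out of the expression $\tau\bigl(A[S_z,F]B\bigr)$. Since $F\in\cZ(S_z)$ already contributes one factor $\norm{[F,S_z]}\le C\verti{[H,S_z]}$, the point is to produce another factor of the same size from the remaining structure $A\,S_z(\cdots)S_z\,B$. First I would rewrite, using \eqref{eqn:[spin,O]},
\begin{equation*}
\tau\bigl(A[S_z,F]B\bigr) = -4\,\tau\bigl(A\,S_z[S_z,F]S_z\,B\bigr),
\end{equation*}
and then, by cyclicity of the trace per unit volume (Lemma \ref{lem:tau}\ref{it:tau5}, valid since $A,B,F,S_z\in\mathcal{P}(\Hi)$ and products of such remain periodic), move one $S_z$ around to pair it with $B$ and one to pair it with $A$:
\begin{equation*}
\tau\bigl(A[S_z,F]B\bigr) = -4\,\tau\bigl((S_z A)\,[S_z,F]\,(B S_z)\bigr) = -4\,\tau\bigl(([S_z,A] + A S_z)\,[S_z,F]\,([S_z,B]\cdot\tfrac14\cdot 4 + \cdots)\bigr).
\end{equation*}
Concretely, I would write $S_z A = [S_z,A] + A S_z$ and $B S_z = [B,S_z] + S_z B$ and expand, obtaining a sum of four terms; the ``main'' term $\tau(A S_z[S_z,F]S_z B)$ reproduces the original expression (up to the factor $-4$), and this gives a linear equation one can solve, while the three remaining terms each carry an explicit factor $[S_z,A]$ or $[S_z,B]$, hence are bounded by $C\verti{[H,S_z]}\cdot\norm{[S_z,F]}\le C'\verti{[H,S_z]}^2$ using $A,B\in\cZ(S_z)$, $F\in\cZ(S_z)$, the crude bound $\abs{\tau(\cdot)}\le\norm{\cdot}$ from \eqref{eq: tau-state}, and submultiplicativity of the operator norm.

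More carefully, I expect the cleanest route is as follows. Start from $[S_z,F] = -4 S_z[S_z,F]S_z$ and insert on \emph{both} sides: since also $[S_z,F]$ is off-diagonal in the $S_z$-grading (it anticommutes with $S_z$ up to the relation above), one has $S_z[S_z,F] = -[S_z,F]S_z$. Using this, $\tau(A[S_z,F]B) = \tau(A(-4S_z[S_z,F]S_z)B)$, and now commuting $S_z$ past $A$ on the left and past $B$ on the right produces $-4\tau\bigl((AS_z - [A,S_z])\cdot\tfrac{-1}{4}[S_z,F]\cdot(S_z B) \bigr)$ — I would organize the bookkeeping so that exactly one copy of the original appears and the error terms all have a visible $[A,S_z]$ or $[B,S_z]$. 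Solving the resulting identity $\tau(A[S_z,F]B) = \pm\tau(A[S_z,F]B) + (\text{errors})$ then either directly yields the bound (if the sign is $-$) or, if the sign is $+$, one must be slightly more careful and use the relation twice or symmetrize; in the latter case the genuinely small quantity is still $\norm{[S_z,F]}$ times $\norm{[S_z,A]}$ or $\norm{[S_z,B]}$, so the estimate \eqref{eqn:lambdasquared1} survives.

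The key steps in order: (1) record $[S_z,F] = -4S_z[S_z,F]S_z$ and the companion relation $S_z[S_z,F] = -[S_z,F]S_z$; (2) substitute into $\tau(A[S_z,F]B)$ and push the two $S_z$ factors outward to meet $A$ and $B$, writing $S_zA = AS_z - [A,S_z]$ and $BS_z = S_zB - [S_z,B]$ (equivalently the commuted forms), using cyclicity of $\tau$ freely since all operators lie in $\mathcal{P}(\Hi)$; (3) collect the ``diagonal'' piece that reproduces $\tau(A[S_z,F]B)$ and solve the linear relation; (4) bound each error term by $C\verti{[H,S_z]}^2$ via $\abs{\tau(\cdot)}\le\norm{\cdot}$, submultiplicativity, $\norm{[S_z,F]}\le C\verti{[H,S_z]}$ (since $F\in\cZ(S_z)$), and $\norm{[S_z,A]},\norm{[S_z,B]}\le C\verti{[H,S_z]}$ (since $A,B\in\cZ(S_z)$). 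The main obstacle is purely algebraic: getting the signs and the combinatorics of the $S_z$-commutations right so that the ``reproduced'' copy of the original term has a coefficient different from $+1$ (so the linear equation is solvable) and so that every leftover term genuinely exhibits two small factors rather than one — this is the place where the identity \eqref{eqn:[spin,O]} must be used in a way that cannot simply be undone.
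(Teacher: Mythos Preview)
Your approach is essentially identical to the paper's: substitute $[S_z,F]=-4S_z[S_z,F]S_z$, write $AS_z=[A,S_z]+S_zA$ and $S_zB=[S_z,B]+BS_z$, expand into four terms, and use cyclicity together with $S_z^2=\tfrac14\Id$ on the term $\tau(S_zA[S_z,F]BS_z)$ to obtain $\tfrac14\tau(A[S_z,F]B)=-\tfrac14\tau(A[S_z,F]B)+(\text{three error terms})$. Your sign worry resolves favorably (the reproduced term carries a minus sign, giving the solvable relation $-\tfrac12\tau(A[S_z,F]B)=\sum\text{errors}$), and each error term visibly contains at least two commutators with $S_z$, so the bound follows exactly as you outline.
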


\begin{proof}
We note that $\tau (A[S_z, F] B )$ 
is well-defined by Lemma \ref{lem:tau}\ref{it:tau1}.
By using identity \eqref{eqn:[spin,O]} for the operator $F$, we obtain that
\begin{align*}
\frac{1}{4}\tau(A[S_z, F] B)&=-\tau(A S_z[S_z, F] S_z B)\\
&=-\tau([A, S_z][S_z, F] S_z B)-\tau( S_z A [S_z, F] S_z B)\\
&=-\tau([A, S_z][S_z, F] [S_z, B])-\tau([A, S_z][S_z, F] B S_z)\\
&\phantom{=}\quad-\tau( S_z A [S_z, F] [S_z, B])-\tau( S_z A [S_z, F] B S_z ).
\end{align*}
Therefore, by using the cyclicity of $\tau(\cdot)$ and the identity $S_z^2=\frac{1}{4}\Id_{\mathcal{H}}$, we obtain that
\begin{align}\label{eqn:terms-tauAB}
-\frac{1}{2}\tau(A[S_z, F] B)&=\tau([A, S_z][S_z, F] [S_z, B])+\tau([A, S_z][S_z, F] B S_z) \nonumber \\
&\phantom{=}\quad+\tau( S_z A [S_z, F] [S_z, B]) \;.
\end{align}
Thus, by using \eqref{eq: tau-state} and the hypothesis that $A,B,F\in \cZ(S_z)$, inequality \eqref{eqn:lambdasquared1} follows.
\end{proof}

For proving Theorem \ref{cor:main1}, in the next lemma we show that certain operators are in $\cZ(S_z)$, so that we can apply Lemma \ref{lem:lambdasquared} to the difference between the spin conductivity $\sigma_{ij}^{\mathrm{s}}$ and the spin-commuting term $\Sigma\su{sc}_{ij}$, or the  $\frac{1}{2\pi}\mbox{$S$-Chern}(P\su{sc})_{ij}$ associated with the spin-commuting Fermi projection $P\su{sc}$.

\begin{lemma}\label{lem:Zspace}
Let $(H,\mu)$ be an insulator which almost conserves the spin, with $H \in \mathcal{P}_{0}(\Hi)$. Let $P = P(H,\mu)$ be its Fermi projector. Then we have that
\begin{equation*}
H, \, {(H-z)}^{-1},\, P,\,  [H,X_j],\, [{(H-z)}^{-1},X_j],\,[P,X_j],\, P_1\in\mathcal{Z}(S_z),
\end{equation*}
for any $ j=1, 2$, and for every $z$ in the complex contour $\mathcal{C}$ in \eqref{eqn:riesz}.
\end{lemma}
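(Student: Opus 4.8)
The plan is to check membership in $\cZ(S_z)$ (see \eqref{eqn:Z(S_z)}) operator by operator, in an order that lets each item build on the previous ones. Three preliminary facts will do most of the work. First, $\cZ(S_z)$ is a subalgebra of $\mathcal{P}(\Hi)$: it is closed under linear combinations since $[\,\cdot\,,S_z]$ is linear, and closed under products by Remark \ref{rmk-products-Z}; moreover it is stable under norm-convergent operator-valued integrals along the contour $\mathcal{C}$ of \eqref{eqn:riesz}, because if $z\mapsto A(z)\in\cZ(S_z)$ with $\norm{[A(z),S_z]}\le C\,\verti{[H,S_z]}$ uniformly in $z\in\mathcal{C}$, then $[\,\oint_{\mathcal{C}}\di z\, A(z),S_z]=\oint_{\mathcal{C}}\di z\,[A(z),S_z]$, and the bound survives up to the length of $\mathcal{C}$. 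Second, $[X_j,S_z]=0$, because the position operator \eqref{eqn:X} acts trivially on the spin factor $\C^2$. Third, directly from \eqref{eqn:verti} one has $\norm{[H,S_z]}\le\verti{[H,S_z]}$ and $\norm{[[H,S_z],X_j]}\le\verti{[H,S_z]}$; and, $\mathcal{C}$ being compact and disjoint from $\sigma(H)$, $M:=\sup_{z\in\mathcal{C}}\norm{(H-z)^{-1}}<\infty$.

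With these in hand I would first dispose of $H$, $(H-z)^{-1}$, $P$ and $[H,X_j]$. For $H$ the claim is immediate since $H\in\mathcal{P}_{0}(\Hi)\subset\mathcal{P}(\Hi)$ and $\norm{[H,S_z]}\le\verti{[H,S_z]}$. For the resolvent, $(H-z)^{-1}\in\mathcal{P}(\Hi)$ (bounded and periodic for $z\in\mathcal{C}$), and the resolvent commutator identity $[(H-z)^{-1},S_z]=-(H-z)^{-1}[H,S_z](H-z)^{-1}$ gives $\norm{[(H-z)^{-1},S_z]}\le M^{2}\verti{[H,S_z]}$ uniformly in $z\in\mathcal{C}$. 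For $P$, Lemma \ref{lem:Pi0shortrange} gives $P\in\mathcal{P}_{0}(\Hi)$, and the Riesz formula \eqref{eqn:riesz} together with the previous bound yields $[P,S_z]=\frac{\iu}{2\pi}\oint_{\mathcal{C}}\di z\,[(H-z)^{-1},S_z]$, so $P\in\cZ(S_z)$ by the integral-stability fact. For $[H,X_j]$, Lemma \ref{lemma:shortrange} gives $[H,X_j]\in\mathcal{P}_{0}(\Hi)$, and the Jacobi identity together with $[X_j,S_z]=0$ gives $[[H,X_j],S_z]=[[H,S_z],X_j]$, hence $\norm{[[H,X_j],S_z]}\le\verti{[H,S_z]}$.

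Next I would handle $[(H-z)^{-1},X_j]$, $[P,X_j]$ and $P_1$ by reduction to the above. Using the resolvent identity, $[(H-z)^{-1},X_j]=-(H-z)^{-1}[H,X_j](H-z)^{-1}$, which is bounded, periodic, and a product of members of $\cZ(S_z)$ with constants uniform in $z\in\mathcal{C}$, hence in $\cZ(S_z)$. For $[P,X_j]$, which lies in $\mathcal{P}_{0}(\Hi)$ by Lemmas \ref{lemma:shortrange} and \ref{lem:Pi0shortrange}, differentiating the Riesz formula gives $[P,X_j]=\frac{\iu}{2\pi}\oint_{\mathcal{C}}\di z\,[(H-z)^{-1},X_j]$, and the integral-stability fact applies again. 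Finally, for $P_1$ I would invoke the contour representation \eqref{eqn:2expP_1}, namely $P_1=\frac{\iu}{2\pi}\oint_{\mathcal{C}}\di z\,(H-z)^{-1}[[X_j,P],P](H-z)^{-1}$, with $P_1\in\mathcal{P}(\Hi)$ by Corollary \ref{cor:Pi_1}; since $[X_j,P]=-[P,X_j]\in\cZ(S_z)$ and $P\in\cZ(S_z)$, the operator $[[X_j,P],P]=[X_j,P]P-P[X_j,P]$ is in $\cZ(S_z)$, conjugating it by resolvents keeps it in $\cZ(S_z)$ with a $z$-uniform constant, and integrating along $\mathcal{C}$ gives $P_1\in\cZ(S_z)$.

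I do not expect a real obstacle: the argument is bookkeeping built on the resolvent commutator identity, the Riesz formula, and the Jacobi/Leibniz rules. The only points needing attention are the uniformity of all constants over the compact contour $\mathcal{C}$ (automatic once noted) and the domain subtleties attached to the unbounded operator $X_j$ — harmless, since every $X_j$-commutator entering the proof ($[H,X_j]$, $[(H-z)^{-1},X_j]$, $[P,X_j]$, and $[[H,S_z],X_j]$) is in fact a bounded operator. The conceptual content is simply that the norm $\verti{\,\cdot\,}$ of \eqref{eqn:verti} is tailored so that $\norm{[[H,S_z],X_j]}\le\verti{[H,S_z]}$, which is exactly what lets the $X_j$-commutators be pushed into $\cZ(S_z)$.
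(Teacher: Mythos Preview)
Your proposal is correct and follows essentially the same approach as the paper: resolvent commutator identity for $(H-z)^{-1}$, the Riesz formula \eqref{eqn:riesz} for $P$ and $[P,X_j]$, the Jacobi identity with $[X_j,S_z]=0$ for $[H,X_j]$, and the contour representation \eqref{eqn:2expP_1} for $P_1$. Your version is marginally more streamlined in that you invoke the subalgebra property (Remark~\ref{rmk-products-Z}) upfront and then treat $[(H-z)^{-1},X_j]$ directly as a product of $\cZ(S_z)$-elements, whereas the paper expands $[[(H-z)^{-1},X_j],S_z]$ by hand via Jacobi and Leibniz; this is a cosmetic difference, not a different route.
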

\begin{proof}
First of all, we note that $P, {(H-z)}^{-1} ,[H,X_j], [P,X_j], P_1\in \mathcal{P}(\Hi)$ by Lemmas \ref{lem:Pi0shortrange}, \ref{lemma:shortrange}, and Corollary \ref{cor:Pi_1}. Notice that 
\begin{equation}
\label{eqn:commresxj}    
[{(H-z)}^{-1},X_j]={(H-z)}^{-1}[X_j,H]{(H-z)}^{-1}
\end{equation}
which is in $\cZ(S_z)$ as well.
Observe that obviously $H\in \cZ(S_z)$ by the very definition of the space $\cZ(S_z)$. Note that 
\begin{equation}
\label{eqn:resolventinZ}
[(H - z )^{-1},S_z]=(H - z )^{-1}[S_z,H] (H - z )^{-1}, 
\end{equation}
thus by the boundedness of the resolvent operators we get that $(H - z )^{-1}\in \cZ(S_z)$ uniformly in $z$ which varies in the compact set $\mathcal{C}$.
Therefore, by using \eqref{eqn:riesz}, we have that
\begin{equation*}
\label{eqn:[P,Sz]}
[P,S_z]=\frac{\iu}{2\pi} \oint_{\mathcal{C}} \di z\, [(H - z )^{-1},S_z] \in \cZ(S_z).   
\end{equation*}
By Jacobi's identity we note that
\[
[[H,X_j],S_z]=[[H,S_z],X_j],
\]
so we obtain that $[H,X_j]\in\cZ(S_z)$ as well. 
By Jacobi's identity, \eqref{eqn:resolventinZ} and \eqref{eqn:commresxj}, we have that
\begin{align*}
[[(H - z )^{-1},X_j],S_z]&=[[(H - z )^{-1},S_z],X_j]= [(H - z )^{-1}[S_z,H] (H - z )^{-1},X_j]\\
&=[(H - z )^{-1},X_j][S_z,H] (H - z )^{-1} +(H - z )^{-1}[[S_z,H],X_j] (H - z )^{-1}\\
&\qquad+(H - z )^{-1}[S_z,H] [(H - z )^{-1},X_j]\\
&= (H - z )^{-1} [X_j,H](H - z )^{-1}[S_z,H] (H - z )^{-1}\\
&\qquad+(H - z )^{-1}[[S_z,H],X_j] (H - z )^{-1}\\
&\qquad+(H - z )^{-1}[S_z,H](H - z )^{-1} [X_j,H](H - z )^{-1},
\end{align*}
thus one estimates the right-hand side terms by $\verti{[H,S_z]}$ up to a constant, being uniform in $z\in\mathcal{C}$. Thus, again in view of \eqref{eqn:riesz}, we get that 
\[
[[P,X_j],S_z]=\frac{\iu}{2\pi} \oint_{\mathcal{C}} \di z\, [[(H - z )^{-1},X_j],S_z]\in\cZ(S_z).
\]
We are left to show that $P_1\in \cZ(s_z)$. By employing identity \eqref{eqn:2expP_1}, we get that
\[
[P_1,S_z]=\frac{\iu}{2\pi} \oint_{\mathcal{C}} \di z\, [(H - z )^{-1}[[X_j,P],P] (H - z )^{-1},S_z], 
\]
thus Leibniz's rule and ${(H-z)}^{-1},P,[P,X_j]\in \cZ(S_z)$ imply that $P_1\in \cZ(s_z)$.
\end{proof}

Now, we are ready to show the main result of this Section.

\begin{proof}[Proof of Theorem \ref{cor:main1}]
Let us first prove \eqref{eq: sigma-quasi-1}.
By Proposition \ref{thm:main}, we have
\begin{equation}\label{eqn:spuriousterms}
\sigma^{\mathrm{s}}_{ij}-\Sigma\su{sc}_{ij}=\re \tau\(\iu [H,X_i] [[S_z,P],P] P_{1}  +\iu [[X_i,P],P][S_z,H]P_{1}  + \iu P[X_i,P][[P,S_z],X_j] \).
\end{equation}
The conclusion follows by applying Lemma \ref{lem:lambdasquared} on each of the terms on the r.h.s.~ of \eqref{eqn:spuriousterms}. Indeed, considering the first term, we write
\begin{equation*}
\re\tau\left(\iu [H,X_i] [[S_z,P],P] P_{1}\right) =\re \tau\left(\iu [H,X_i] [S_z,P]PP_{1}\right)-\re \tau\left(\iu [H,X_i] P[S_z,P] P_{1}\right) \;.
\end{equation*}
By Lemma \ref{lem:Zspace} and Remark \ref{rmk-products-Z} $[H,X_i], P, P P_1, [H,X_i]P, P_1  \in \mathcal{Z}(S_z)$, so Lemma \ref{lem:lambdasquared} implies that
\begin{equation*}
\big| \tau\left(\iu [H,X_i] [[S_z,P],P] P_{1}\right) \big| \leq C \verti{[H,S_z]}^{2} \;.
\end{equation*}
The bound for the second term in \eqref{eqn:spuriousterms} follows similarly.
For the third term we notice that
\begin{align*}
\tau\( P[X_i,P][[P,S_z],X_j]\)&= \tau\( P[X_i,P][S_z,[X_j,P]]P\),    
\end{align*}
where we have used the cyclicity of the trace per unit volume and the Jacobi's identity; thus, by noticing that $P[X_i,P], P, [X_j,P]\in \cZ(S_z)$ thanks to Lemma \ref{lem:Zspace} and Remark \ref{rmk-products-Z}, and employing again Lemma \ref{lem:lambdasquared} estimate \eqref{eq: sigma-quasi-1} follows. 

We are left with proving \eqref{eq: sigma-quasi-2}. By Proposition \ref{thm:main} and by Lemma \ref{lem:sigmasc}, we have that
\begin{equation}
\label{eqn:diffsigmaschern}
\begin{aligned}
\Sigma\su{sc}_{ij} -\frac{1}{2\pi}\mbox{Chern}(P\su{sc})_{ij} 
& = \re \tau\left( \iu S_z P[[X_i,P],  [X_j,P]]\right) - \re \tau\left( \iu S_z P\su{sc}[[X_i,P\su{sc}],  [X_j,P\su{sc}]]\right)
\\    
& = \re \tau\left( \iu S_z (P-P\su{sc})[[X_i,P],  [X_j,P]]\right)\\
&\phantom{=}+\re \tau\left( \iu S_z P\su{sc}[[X_i,(P-P\su{sc})],  [X_j,P]]\right) \\
&\phantom{=}+\re \tau\left( \iu S_z P\su{sc}[[X_i,P\su{sc}],  [X_j,(P-P\su{sc})]]\right).
\end{aligned}
\end{equation}
By Remark \ref{rem:PofHsc}\ref{it:diffPPsc} and \eqref{eqn:decscsnc} it holds true that
\begin{equation}
\label{eqn:P-Psc}
P-P\su{sc}=\frac{\iu}{\pi} \oint_{\mathcal{C}} \di z\,(H - z )^{-1}[S_z,H]S_z  (H\su{sc} - z )^{-1}.
\end{equation}
Therefore, to estimate the first summand on the r.h.s.~of \eqref{eqn:diffsigmaschern} we notice that
\begin{align*}
 \tau\left(  S_z (P-P\su{sc})[[X_i,P],  [X_j,P]]\right)&= \frac{\iu}{\pi}\oint_{\mathcal{C}} \di z\,\tau\left(  S_z  (H - z )^{-1}[S_z,H]S_z  (H\su{sc} - z )^{-1}[[X_i,P],  [X_j,P]]\right).  
\end{align*}
Observe that the operator $S_z  (H\su{sc} - z )^{-1}[[X_i,P],  [X_j,P]]$ is in $\cZ(S_z)$ in view of Lemma \ref{lem:Zspace} and Remark \ref{rmk-products-Z}. Also the operators $H, S_z  (H - z )^{-1}\in \cZ(S_z)$ again by Lemma \ref{lem:Zspace}, thus Lemma \ref{lem:lambdasquared} implies that the first summand on the r.h.s.~of \eqref{eqn:diffsigmaschern} is bounded, up to a constant, by ${\verti{[H,S_z]}}^2$.
For the analysis of the second term on the r.h.s.~of \eqref{eqn:diffsigmaschern}, we note that by Leibniz's rule and \eqref{eqn:P-Psc}, we get that
\begin{align*}
[X_i,P-P\su{sc}]&= \frac{\iu}{\pi} \oint_{\mathcal{C}} \di z\Bigl([X_i,(H - z )^{-1}][S_z,H]S_z  (H\su{sc} - z )^{-1} \Bigr.\\
&\phantom{=}\Bigl.+(H - z )^{-1}][[H,X_i],S_z]S_z  (H\su{sc} - z )^{-1} +(H - z )^{-1}][S_z,H]S_z[X_i,  (H\su{sc} - z )^{-1}]\Bigr)\\
&=:T_1+T_2+T_3,  
\end{align*}
where the definitions of the three terms $T_1,T_2,T_3$ is understood with respect to the order of the appearance. Let us proceed by analyzing the first contribution for the second term on the r.h.s.~of \eqref{eqn:diffsigmaschern}:
\begin{equation}
\label{eqn:T1}
\begin{aligned}
\tau\left( \iu S_z P\su{sc}[T_1,  [X_j,P]]\right)&=\frac{\iu}{\pi} \oint_{\mathcal{C}} \di z\,\Bigl(\tau\left( \iu S_z P\su{sc} [X_i,(H - z )^{-1}][S_z,H]S_z  (H\su{sc} - z )^{-1} [X_j,P]\right)\Bigr.\\
&\phantom{=}\Bigl.-\tau\left( \iu S_z P\su{sc}[X_j,P] [X_i,(H - z )^{-1}][S_z,H]S_z  (H\su{sc} - z )^{-1}  \right)\Bigr).
\end{aligned}
\end{equation}
Since on the r.h.s~of the last equality the operators $S_z P\su{sc} [X_i,(H - z )^{-1}],\,H,\,S_z  (H\su{sc} - z )^{-1} [X_j,P]\in\cZ(S_z)$ and also the operators $S_z P\su{sc}[X_j,P] [X_i,(H - z )^{-1}],\,S_z  (H\su{sc} - z )^{-1}\in\cZ(S_z)$ by using again  Lemma \ref{lem:Zspace} and Remark \ref{rmk-products-Z}, then Lemma \ref{lem:lambdasquared} implies that the l.h.s.~of \eqref{eqn:T1} is bounded up to a constant by $\verti{[H,S_z]}^2$. The analysis of the contributions for the second term on the r.h.s.~of \eqref{eqn:diffsigmaschern} coming from $T_2$ and $T_3$ is analogous. The last summand on the r.h.s.~of \eqref{eqn:diffsigmaschern} can be estimated similarly.
\end{proof}

\section{Phase transition and lack of quantisation}
\label{sec:phase-transition}

\subsection{The extended Kane--Mele model}
\label{sec:extended Kane--Mele}

We consider a simple generalisation of the Kane–Mele model obtained by adding a next-to-nearest-neighbour Rashba interaction compatible with its symmetries, denoted by $H\sub{R2}$. The resulting \textit{extended Kane–Mele model} is characterised by the Hamiltonian $H\sub{KM} = H\sub{KM}(t,\lambda\sub{SO},w,\lambda\sub{R},r)$
\begin{equation}\label{eqn:hKM}
H\sub{KM} := t H\sub{NN}  + \lambda\sub{SO} H\sub{SO}+w H_{W}+\lambda\sub{R}(H\sub{R1} +r H\sub{R2}) \;,
\end{equation}
where $t>0,\lambda_\mathrm{SO} \in \mathbb{R} \setminus \{ 0\}$, $w,\lambda\sub{R},r \in \mathbb{R}$ are strength parameters\footnote{Note that both $t$ and $\lambda\sub{SO}$ are assumed to be non-zero otherwise the physics of the model is completely different.  Moreover, observe that our choice of parameters reflects the fact that $H\sub{R2}$ is  a second-order effect with respect to $H\sub{R1}$, so that if $H\sub{R1}$ is absent, so is $H\sub{R2}$.}, and where $H\sub{\sharp}$ for $\sharp \in \{ \mathrm{NN}, \mathrm{SO},\mathrm{W}, \mathrm{R1}, \mathrm{R2} \}$
are specified below, as operators acting on $\ell^{2}(\mathcal{C}) \otimes \mathbb{C}^{2}$:

\begin{itemize}
\item The operator $H_\mathrm{NN}$ is the standard nearest-neighbour hopping term representing the kinetic energy of the particles
$$
H_\mathrm{NN} := \sum_{j=1}^3 \left( T_{d_j} + T_{- d_j} \right) \otimes \Id_{\mathbb C^2}.
$$

\item The operator $H\sub{SO}$ describes a spin-orbit interaction, corresponding to an effective spin-dependent magnetic field due to an in-plane electric field, see \cite{Haldane88}. It is a next-to-nearest-neighbour hopping term of the form
$$
H\sub{SO} := - \iu \frac{\chi_{A} - \chi_{B}}{3\sqrt{3}}  \sum_{j=1}^3 \left( T_{a_j} - T_{-a_j} \right) \otimes \s_{z} \;,
$$
where $\chi_\sharp$, by abuse of notation, is the characteristic function on the lattice $\Gamma_\sharp$, $\sharp\in\{A,B \}$.

\item The operator $H\sub{W}$ is a staggered sub-lattice potential that distinguishes sites $A$ and $B$
$$
H_W := (\chi_{A} - \chi_{B})\otimes \Id_{\mathbb C^2} \;.
$$

\item Finally, the last contributions to the Hamiltonian are \emph{nearest-neighbour} and \emph{next-to-nearest-neighbour Rashba} terms, which describe a spin-orbit interaction due to an electric field orthogonal to the two-dimensional crystal (for example in a heterostructure): 
\begin{equation*}
\begin{split}
 H\sub{R1} & := \frac{\iu}{3} \sum_{j=1}^3\big( T_{ d_j} -  T_{-d_j} \big) \otimes (d_{j} \wedge \s)_{z} \;,
\\
H\sub{R2} & := \frac{\iu}{3} \sum_{j=1}^3\big( T_{ a_j} -  T_{-a_j} \big) \otimes (a_{j} \wedge \s )_{z} \;.
\end{split}
\end{equation*}
\end{itemize} 
We collect the Rashba terms into the \emph{Rashba Hamiltonian}
\begin{equation*}
H\sub{R}:=  H\sub{R1} + r H\sub{R2} \;.
\end{equation*}
\begin{remark}
The numerical prefactors $\frac{1}{3\sqrt{3}}$ and $\frac{1}{3}$ that appear in the definition of $H\sub{\sharp}$, for $\sharp \in \{\mathrm{SO}, \mathrm{R1}, \mathrm{R2} \}$ are there only for convenience of the subsequent analysis.
\end{remark}
\begin{remark}\label{rmk:H-non-spin-conserving}
Note that $H\sub{KM}$ is \emph{not spin-conserving}, that is $[H\sub{KM},S_{z}] \neq 0$ because of the Rashba Hamiltonian, indeed:
\begin{equation*}
[H-H\sub{R},S_z]=0\quad\text{and}\quad[H\sub{R},S_z]\neq 0 \; .
\end{equation*}
\end{remark}

\begin{lemma}\label{lem:KM} 
For any $t,\lambda\sub{SO},w,\lambda\sub{R},r \in \R$, the Hamiltonian $H\sub{KM}$ belongs to $\mathcal{P}_{0}^{\hexagon}(\Hi)$ and is inversion and time-reversal symmetric, compare with \eqref{eqn:Pi} and \eqref{eq:time-rev-def}.
\end{lemma}
\begin{proof}
Note that the operators $\chi_{A}$, $\chi_{B}$, and $T_{v}$ for any fixed $v \in \R^{2}$, are short-range, so is $H\sub{KM}$. 
Furthermore, $H\sub{KM}$ is periodic since $[T_{ v},T_{u}] =0$ for any vectors $v,u\in \R^2$ and $[\chi_\sharp, T_\gamma]=0$ for every $\sharp\in\{A,B\}$.

To verify that $H\sub{KM}$ is $2\pi/3$-rotationally symmetric, one notes that 
\begin{equation}\label{eqn:actionexp1}
\big[S_{j},S_{k}\big] = \ii \, \sum_{\ell} \eps_{j k \ell} S_{\ell} \;,
\end{equation}
$\eps_{j k \ell}$ being the totally antisymmetric tensor, so that
\begin{equation*}
\ex^{-\iu \frac{2\pi}{3}S_z}S_x =\(\frac{\sqrt{3}}{2}S_y-\frac{1}{2}S_x\)\ex^{-\iu \frac{2\pi}{3}S_z} \;,
\qquad
\ex^{-\iu \frac{2\pi}{3}S_z}S_y= -\(\frac{\sqrt{3}}{2}S_x+\frac{1}{2}S_y \)\ex^{-\iu \frac{2\pi}{3}S_z} \;.
\end{equation*}
We also observe that a translation operator $T_v$ and a rotation $R_{2 \pi /3}$ are intertwined by the following relation:
\begin{equation}\label{eqn:transrot}
R_{2 \pi /3}\,T_v=T_{\mathrm{R}_{2 \pi /3} v}\,R_{2 \pi /3}.
\end{equation}
Accordingly, setting for brevity $\sigma_{\pm} := (\sqrt{3} \sigma_{x} \pm \sigma_{y})/2$ we have that 
\begin{align*}
3R_{2\pi/3}H\sub{R1} &= \iu R_{2\pi/3}\Big[- \big( T_{ d_1} -  T_{-d_1} \big) \otimes \sigma_{+} +\big( T_{ d_2} -  T_{-d_2} \big) \otimes \sigma_{-} + \big( T_{ d_3} -  T_{-d_3} \big) \otimes \sigma_y \Big]\\
&=\iu \Big[\big( T_{ d_2} -  T_{-d_2} \big) \otimes \sigma_{-} + \big( T_{ d_3} -  T_{-d_3} \big) \otimes \sigma_y - \big( T_{ d_1} -  T_{-d_1} \big) \otimes \sigma_{+} \Big] R_{2\pi/3}\\
&=3H\sub{R1}R_{2\pi/3}.
\end{align*}
Analogously, one checks that $R_{2\pi/3}H\sub{R2}=H\sub{R2}R_{2\pi/3}$. Using again \eqref{eqn:transrot}, we obtain $[R_{2\pi/3},H\sub{NN}]=[R_{2\pi/3},H\sub{SO}]=0$, and obviously $[R_{2\pi/3}, H_v]=0$ as well.  
Hence, $H\sub{KM}$ is $2\pi/3$-rotation symmetric. 

Let us proceed to check that $H\sub{KM}$ is inversion symmetric. The relevant operator in this case is
\begin{equation*}
(\Pi_{1}\psi)(x):= \ee^{-\ii \pi S_{y}}\psi(\mathrm{R}_{v}^{-1} x) \;, \qquad \forall \psi \in \mathcal{H} \;,
\end{equation*}
where $\mathrm{R}_{v}$ is the restriction to $\mathcal{C}$ of the vertical reflection on $\mathbb{R}^{2}$, namely $\mathrm{R}_{v}(x_{1},x_{x})= (x_{1},-x_{2})$. It is straightforward to check that $\Pi_{1}$ satisfies \eqref{eqn:Pi} with $a=1$.
By \eqref{eqn:actionexp1}, we have
\begin{equation}\label{eqn:acthref}
\ex^{-\iu \pi S_y}S_x=-S_x\ex^{-\iu \pi S_y},\quad\ex^{-\iu \pi S_y}S_y=S_y\ex^{-\iu \pi S_y},\quad \ex^{-\iu \pi S_y}S_z=-S_z\ex^{-\iu \pi S_y}.
\end{equation}
Furthermore, we observe that the $\Pi_{1}$ and translation operators are intertwined by the following identity:
\begin{equation}\label{eqn:transref}
\Pi_{1}\,T_v=T_{\mathrm{R}\sub{v} v}\,\Pi_{1}.
\end{equation}
By identity \eqref{eqn:transref}, one easily verifies that $[H\sub{NN}, \Pi_{1}]=[H_v,\Pi_{1}]=0$. 
By using the third equality in \eqref{eqn:acthref} and \eqref{eqn:transref}, one has that $[H\sub{SO},\Pi_{1}]=0$.
By employing the first two equalities in \eqref{eqn:acthref} and again \eqref{eqn:transref}, one also has that 
\begin{align*}
3\Pi_{1}H\sub{R1} &= \iu \Pi_{1} \Big[ -\big( T_{ d_1} -  T_{-d_1} \big) \otimes \sigma_{+} + \big( T_{ d_2} -  T_{-d_2} \big) \otimes \sigma_{-} +\big( T_{ d_3} -  T_{-d_3} \big) \otimes \sigma_y \Big]\\
&=\iu \Big [ \big( T_{ d_2} -  T_{-d_2} \big) \otimes \sigma_{-} +\big( T_{ d_1} -  T_{-d_1} \big) \otimes\(-\sigma_{+}\) + \big( T_{ d_3} -  T_{-d_3} \big) \otimes \sigma_y  \Big] \Pi_{1}\\
&=3H\sub{R1}\Pi_{1} . 
\end{align*}
Similarly, one checks that $[H\sub{R2},\Pi_{1}]=0$.
Finally, analogous computations show that $[H\sub{KM},\Theta]=0$.
\end{proof}

\subsection{Phase diagram of the extended Kane--Mele model}
\label{sec:spectral_prop}

We begin by establishing some spectral properties that allow us to identify its insulator phases for suitable values of the parameters. 

Because of the $\Gamma$-periodicity, see Lemma \ref{lem:KM}, we switch to its Bloch--Floquet fibration
\begin{equation*}
 \F_i  \, H\sub{KM}\, \F_i ^{*}  = \int_{\T^2_*} ^{\bigoplus} \dd k \,H\sub{KM}(k) ,
\end{equation*}
where $H\sub{KM}(k)\equiv\left(\F_i  \, H\sub{KM} \, \F_i ^{*}\right)(k)  \in \mathbb{C}^{4 \times 4}$ for any $k \in \T^2_*$ with the Bloch--Floquet transform specified in Definition \ref{def:BF} and where, by abuse of notation, we fixed once for all the dimerisation in direction $i=3$, see \eqref{eq:Fourier-transf}. An explicit expression of $H\sub{KM}(k)$ is provided in the following lemma.

\begin{lemma}\label{lemma:Hk}
Let $H\sub{KM}$ be defined as in \eqref{eqn:hKM}. Then, for any $k \in \T^2_*$ we have 
\begin{equation}\label{eq:bloch-fibre-ham}
H\sub{KM}(k)= \left(
\begin{array}{cccc}
w +\alpha\sub{SO}(k) & \Omega(k) &\alpha\sub{R}(k) & \Omega_R(k) \\
\overline{\Omega(k)} & -w -\alpha\sub{SO}(k) & -\Omega_R(-k) & \alpha\sub{R}(k) \\
 \overline{\alpha\sub{R}(k)} & -\overline{\Omega_{R}(-k)} & w -\alpha\sub{SO}(k) & \Omega(k) \\
 \overline{\Omega_R(k)} & \overline{\alpha\sub{R}(k)} & \overline{\Omega(k)} & -w + \alpha\sub{SO}(k)
\end{array}
\right)\;,
\end{equation}
having set:
\begin{align*}\label{eq: functions_bloch_ham_1}
 \Omega(k) &:= t \sum_{j=1}^{3}\ee^{\ii k\cdot (d_{j} - d_{3})}\equiv t(\ee^{-\ii k\cdot a_2}+\ee^{\ii k\cdot a_1}+1)\;, \;\;   &
\Omega_R(k) & := \frac{\lambda_{\mathrm{R}}}{3} \sum_{j=1}^{3}\ee^{\ii\frac{2\pi}{3}(3-j)} \ee^{\ii k\cdot (d_{j}-d_{3})} \;,
\\
 \alpha\sub{SO}(k) &:= -
 \frac{2\lambda\sub{SO} }{3\sqrt{3}}\sum_{j=1}^{3}\,\sin (k\cdot a_{j})\;,
 &
\alpha\sub{R}(k) &:=  \frac{2 \lambda\sub{R} r}{\sqrt{3}} \sum_{j=1}^{3}\,\ee^{\ii\frac{2\pi}{3}(3-j)}\sin (k\cdot a_{j}) \;.
\end{align*}
\end{lemma}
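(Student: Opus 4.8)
The plan is to compute the Bloch--Floquet fiber of each of the five summands of $H\sub{KM}$ separately and then reassemble them. Throughout I fix the dimerisation $i=3$, so that the $A$-sublattice is $\Gamma$ and the $B$-sublattice is $\Gamma+d_3$, and I use the ordered basis $(A\uparrow,\,B\uparrow,\,A\downarrow,\,B\downarrow)$ of $\C^4$ dictated by \eqref{eq:Fourier-transf}; accordingly I identify $\C^4\cong\C^2_{\mathrm{spin}}\otimes\C^2_{\mathrm{sub}}$ and write $\tau_z=\diag(1,-1)$, $\tau_{+}=\left(\begin{smallmatrix}0&1\\0&0\end{smallmatrix}\right)$, $\tau_{-}=\left(\begin{smallmatrix}0&0\\1&0\end{smallmatrix}\right)$ for the matrices acting on the sublattice factor. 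First I would record the fibers of the elementary operators: since $\chi_A$ and $\chi_B$ are the orthogonal projections onto the $A$- and $B$-components, the operator $\chi_A-\chi_B$ is fibered by the constant $\Id_{\mathrm{spin}}\otimes\tau_z$; since $\gamma\mapsto\gamma-a_j$ preserves both $\Gamma$ and $\Gamma+d_3$, a direct computation from \eqref{eq:Fourier-transf} gives $(\F_3 T_{a_j}\psi)(k)=\ee^{-\iu k\cdot a_j}(\F_3\psi)(k)$, hence $T_{a_j}-T_{-a_j}$ is fibered by $-2\iu\sin(k\cdot a_j)\,\Id_{\C^2}$ on the sublattice factor; and because $d_j-d_3\in\Gamma$ while $d_j\notin\Gamma\cup(\Gamma+d_3)$, the translation $T_{d_j}$ maps the $A$-sublattice into the $B$-sublattice and annihilates the $B$-sublattice, so the substitution $\gamma'=\gamma+d_3-d_j$ yields fiber $\ee^{-\iu k\cdot(d_j-d_3)}\tau_{-}$ for $T_{d_j}$ and $\ee^{\iu k\cdot(d_j-d_3)}\tau_{+}$ for $T_{-d_j}$ on the sublattice factor.

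Combining these, $t\,H\sub{NN}$ is fibered by the identity on the spin factor tensored with the sublattice matrix whose off-diagonal entries are $\Omega(k)=t\sum_{j}\ee^{\iu k\cdot(d_j-d_3)}$ and its conjugate, the equality with the triangular form $t(\ee^{-\iu k\cdot a_2}+\ee^{\iu k\cdot a_1}+1)$ following from $d_1-d_3=-a_2$, $d_2-d_3=a_1$, $d_3-d_3=0$; $w\,H_W$ has constant diagonal fiber $w\,\Id_{\mathrm{spin}}\otimes\tau_z$; and $\lambda\sub{SO}H\sub{SO}$ has diagonal fiber $\alpha\sub{SO}(k)\,\sigma_z\otimes\tau_z$, the two factors of $\iu$ combining with the prefactor $-1/(3\sqrt3)$ to reproduce exactly $\alpha\sub{SO}(k)$; together these account for the diagonal and the $\Omega(k)$-entries of \eqref{eq:bloch-fibre-ham}. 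For the Rashba terms, the structural point is that $(v\wedge\sigma)_z$ is a real combination of $\sigma_x$ and $\sigma_y$, hence purely spin-off-diagonal: therefore $\lambda\sub{R}H\sub{R1}$, being sublattice-off-diagonal as well, contributes only to the entries that are simultaneously spin- and sublattice-off-diagonal (the positions carrying $\pm\Omega_R$), while $\lambda\sub{R}r\,H\sub{R2}$, being sublattice-diagonal, contributes only to the spin-off-diagonal but sublattice-diagonal entries (the positions carrying $\alpha\sub{R}$). The remaining computation is to check that the $(\uparrow,\downarrow)$ matrix elements of $(d_j\wedge\sigma)_z$ and of $(a_j\wedge\sigma)_z$ equal, up to the fixed normalisations, the cube roots of unity $\ee^{\iu\frac{2\pi}{3}(3-j)}$; this is nothing but the infinitesimal form of the $2\pi/3$-rotational covariance already exploited in Lemma \ref{lem:KM}, since $R_{2\pi/3}$ multiplies $\sigma_x\pm\iu\sigma_y$ by $\ee^{\pm\iu 2\pi/3}$ and permutes the $d_j$'s and $a_j$'s cyclically. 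Feeding these phases into the $j$-sums reproduces $\Omega_R(k)$, $\Omega_R(-k)$ and $\alpha\sub{R}(k)$ in the asserted positions, and collecting all contributions (together with the Hermiticity of the resulting $4\times 4$ matrix) finishes the proof.

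I expect the only genuine difficulty to be bookkeeping: keeping the sign and ordering conventions consistent across the sublattice grading (in particular the fact that $T_{d_j}$ is strictly lower-triangular in $\C^2_{\mathrm{sub}}$ with the chosen basis), correctly extracting the $(\uparrow,\downarrow)$ spin matrix elements of the Rashba vectors, and verifying the identity between the ``$d_j$-phase'' and the ``triangular'' presentations of $\Omega$ (and analogously for $\Omega_R$). None of these steps is conceptually hard, but they must all be carried out with care to land precisely on \eqref{eq:bloch-fibre-ham}.
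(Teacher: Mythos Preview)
Your proposal is correct and follows essentially the same route as the paper: compute the dimerised form of each of the five summands of $H\sub{KM}$, pass to the Bloch--Floquet fibers via the identification $T_{v}\mapsto\ee^{-\iu k\cdot v}$, and reassemble. The only cosmetic difference is that the paper verifies the identity $\iu(d_{j}\wedge\sigma)_{z}=\left(\begin{smallmatrix}0&\ee^{\iu\frac{2\pi}{3}(3-j)}\\-\ee^{-\iu\frac{2\pi}{3}(3-j)}&0\end{smallmatrix}\right)$ (and the analogous one for $a_{j}$) by direct evaluation, whereas you invoke the $2\pi/3$-rotational covariance; both are equally valid and amount to the same three-line check.
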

\begin{proof}
First of all, we provide the dimerisation $[H\sub{KM}]_{{\mathcal{D}}_i}$. With abuse of notation, we let $T_{v}$ denote the translation operator on $\ell^{2}(\Gamma)$, whose definition is as in \eqref{eq:transl} with $\Gamma$ in place of $\mathcal{C}$. 
By straightforward computations, we obtain:
\begin{equation*}
\begin{split}
[H\sub{NN}]_{\mathcal{D}_i} & = \sum_{j=1}^{3} \begin{pmatrix}
 0 & T_{d_{i}-d_{j}}\\ T_{d_{j}-d_{i}} 
 & 0\end{pmatrix} \otimes \Id_{\mathbb{C}^{2}} \;,
 \\
[H\sub{SO}]_{\mathcal{D}_i}& = - \frac{\iu}{3 \sqrt{3}} \sum_{j=1}^{3} \begin{pmatrix}
T_{a_{j}} -T_{-a_{j}} & 0 \\
0 & T_{-a_{j}} -T_{a_{j}} 
\end{pmatrix} \otimes \s_{z} \;,
\\
[H\sub{W}]_{\mathcal{D}_i} & = \begin{pmatrix}
\Id_{\ell^{2}(\mathcal{C})} & 0\\
0 & -\Id_{\ell^{2}(\mathcal{C})}
\end{pmatrix} \otimes \Id_{\mathbb{C}^{2}} \;,
\end{split}
\end{equation*}
together with
\begin{equation*}
\begin{split}
[H\sub{R1}]_{\mathcal{D}_i} & =
\frac{\iu}{3} \sum_{j=1}^{3} \begin{pmatrix}
 0& -T_{d_{i}-d_{j}} \\ T_{d_{j}-d_{i}} 
 & 0\end{pmatrix} \otimes ( d_{j} \wedge \s)_{z} \;,
\\
[H\sub{R2}]_{\mathcal{D}_i} & =
 \frac{\iu}{3} \sum_{j=1}^{3} \begin{pmatrix}
T_{a_{j}} -T_{-a_{j}} & 0 \\
0 & T_{a_{j}} - T_{-a_{j}}  
\end{pmatrix} \otimes ( a_{j} \wedge \s)_{z} \;.
\end{split}
\end{equation*}
$\sigma= (\sigma_{x},\sigma_{y},\sigma_{z})$ being the vector of Pauli matrices, and where $\{ d_{j}\}_{j}$ and $\{ a_{j}\}_{j}$ are the vectors introduced in \eqref{eq:d-vectors} and \eqref{eq:a-vectors}.
Let $H\sub{\sharp}(k)$ denote the fibration of $H\sub{\sharp}$ via the Bloch--Floquet transform at the momentum $k\in\T^2_*$, with $\sharp \in \{\text{NN}, \text{SO},\text{W},\text{R1}, \text{R2} \}$, having fixed the dimerisation in with $i=3$. Because $\big(\F \,T_{v}\psi\big)(k) = \ee^{-\ii k \cdot v}\, \F\psi(k) $ for any $v\in\Gamma$, it follows that
\begin{equation*}
tH\sub{NN}(k) = \begin{pmatrix}
0 & \Omega(k) \\
\overline{\Omega(k)} & 0
\end{pmatrix} \otimes \mathbb{1}_{\C^{2}} \;,
\quad
\lambda\sub{SO}H\sub{SO}(k) = \begin{pmatrix}
\alpha\sub{SO}(k) & 0 \\
  0 & - \alpha\sub{SO}(k)
\end{pmatrix} \otimes \sigma_{z} \;.
\end{equation*}
The term $H\sub{W}(k)$ is trivial, so that we are left with determining $H\sub{R2}(k)$ and $H\sub{R1}(k)$. By simple computations, we obtain the identity
\begin{equation*}
\ii (\sigma \wedge d_{j})_{z} =  \begin{pmatrix}
0 & \ee^{\ii \frac{2 \pi}{3}(3-j)} \\
-\ee^{-\ii \frac{2 \pi}{3}(3-j)} & 0
\end{pmatrix} \qquad \quad j =1,2,3 \;,
\end{equation*}
which allows us to check the correctness of the formula for $\Omega\sub{R}(k)$. Similarly, we obtain
\begin{equation*}
{(\sigma \wedge a_{j})}_z = -\sqrt{3}\begin{pmatrix}
0 & \ee^{\ii \frac{2 \pi}{3}(3-j)} \\
\ee^{-\ii \frac{2 \pi}{3}(3-j)} & 0
\end{pmatrix} \qquad \quad j =1,2,3 \;,
\end{equation*}
and the formula for $\alpha\sub{R}$ ensues by simple manipulations.
\end{proof}
Before moving on, we note that after taking Bloch--Floquet transform, time-reversal symmetry acts as the following anti-unitary
\begin{equation}\label{eq:time-reversal-Floquet}
\big(\mathcal{F}_{i}\Theta \psi\big) (k) =\begin{pmatrix}
\mathbb{0}_2 & \mathbb{1}_{2}
\\
-\mathbb{1}_{2} & \mathbb{0}_2
\end{pmatrix} K \big(\mathcal{F}_{i} \psi\big) (-k) \;, \qquad k \in \mathbb{T}^{2}_{*} \;,
\end{equation}
where $K$ is complex conjugation. An important and straightforward consequence of time-reversal symmetry is the following lemma.
\begin{lemma}\label{lemma:time-reversal-symmetry}
We have $\mathrm{Spectrum}\big( H\sub{KM}(k)\big) = \mathrm{Spectrum}\big( H\sub{KM}( -k)\big) $.
\end{lemma}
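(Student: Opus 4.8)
The plan is to deduce this directly from the time-reversal symmetry $[H\sub{KM},\Theta]=0$ established in Lemma~\ref{lem:KM}, combined with the Bloch--Floquet form \eqref{eq:time-reversal-Floquet} of $\Theta$. Write $\vartheta:=\begin{pmatrix}\mathbb{0}_2 & \mathbb{1}_{2}\\ -\mathbb{1}_{2} & \mathbb{0}_2\end{pmatrix}$, so that $(\F_i\Theta\psi)(k)=\vartheta\,\overline{\psi(-k)}$; note that $\vartheta$ is unitary, with $\vartheta^{-1}=\vartheta^{*}=-\vartheta$ and $\vartheta^{2}=-\mathbb{1}_{4}$.

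First I would fiber the operator identity $\F_i H\sub{KM}\Theta\,\F_i^{*}=\F_i\Theta H\sub{KM}\,\F_i^{*}$. Evaluating both sides on an arbitrary $\psi\in L^{2}(\T^{2}_{*},\C^{4})$ at a point $k$, and using that $H\sub{KM}$ is decomposable with fibre $H\sub{KM}(k)$, the left-hand side equals $H\sub{KM}(k)\,\vartheta\,\overline{\psi(-k)}$, while the right-hand side equals $\vartheta\,\overline{H\sub{KM}(-k)\psi(-k)}=\vartheta\,\overline{H\sub{KM}(-k)}\;\overline{\psi(-k)}$. Since $\overline{\psi(-k)}$ ranges over all of $\C^{4}$ as $\psi$ varies, this yields the pointwise matrix identity
\[
H\sub{KM}(k)=\vartheta\,\overline{H\sub{KM}(-k)}\,\vartheta^{-1}\qquad\text{for every }k\in\T^{2}_{*}.
\]

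To conclude, I would observe that unitary conjugation preserves the spectrum, so $\mathrm{Spectrum}\big(H\sub{KM}(k)\big)=\mathrm{Spectrum}\big(\overline{H\sub{KM}(-k)}\big)$; moreover $\overline{H\sub{KM}(-k)}=H\sub{KM}(-k)^{T}$, whose eigenvalues are the complex conjugates of those of $H\sub{KM}(-k)$, and the latter are real because $H\sub{KM}(-k)$ is Hermitian. Hence $\mathrm{Spectrum}\big(\overline{H\sub{KM}(-k)}\big)=\mathrm{Spectrum}\big(H\sub{KM}(-k)\big)$, which gives the claim. There is no substantial obstacle here; the only point requiring care is the simultaneous bookkeeping of the $k\mapsto-k$ flip and the complex conjugation introduced by the anti-unitary $\Theta$ when passing to fibres. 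Alternatively, one could verify the displayed matrix identity by hand from \eqref{eq:bloch-fibre-ham} using $\alpha\sub{SO}(-k)=-\alpha\sub{SO}(k)$, $\alpha\sub{R}(-k)=-\alpha\sub{R}(k)$, $\Omega(-k)=\overline{\Omega(k)}$ and the analogous relation for $\Omega_R$, but the symmetry argument above is cleaner and independent of the chosen dimerisation.
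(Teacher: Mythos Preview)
Your proof is correct and follows essentially the same route as the paper: derive the fibred identity $H\sub{KM}(k)=\vartheta\,\overline{H\sub{KM}(-k)}\,\vartheta^{-1}$ from time-reversal symmetry via \eqref{eq:time-reversal-Floquet}, then use unitarity of $\vartheta$ together with Hermiticity of the fibres to conclude. The only cosmetic remark is that once you have $\overline{H\sub{KM}(-k)}=H\sub{KM}(-k)^{T}$, the transpose already has the same spectrum as $H\sub{KM}(-k)$ without invoking reality of the eigenvalues, so that last step can be shortened.
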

\begin{proof}
By time-reversal symmetry, we have
\begin{equation*}
H\sub{KM}(k) = \begin{pmatrix}
\mathbb{0}_2 &\mathbb{1}_{2}
\\
-\mathbb{1}_{2} & \mathbb{0}_2
\end{pmatrix}  \overline{H\sub{KM}(-k)} \begin{pmatrix}
\mathbb{0}_2 & \mathbb{1}_{2}
\\
-\mathbb{1}_{2} & \mathbb{0}_2
\end{pmatrix}^{-1}
 \;.
\end{equation*}
Since $H\sub{KM}(k) = H\sub{KM}(k)^{*}$ and since $\begin{pmatrix}
\mathbb{0}_2 &\mathbb{1}_{2}
\\
-\mathbb{1}_{2} & \mathbb{0}_2
\end{pmatrix}$ is unitary, $H\sub{KM}(-k)^{\mathrm{T}}$ is unitarily equivalent to $H\sub{KM}(k)$, hence the claim.
\end{proof}

The analytical expression of the eigenvalues of the diagonalized Bloch--Floquet Hamiltonian $H\sub{KM}$ is cumbersome, and provides limited insight. However, $H\sub{KM}$ is block diagonal when $\lambda\sub{R}=0$, allowing for a simple analysis. Our strategy is to first study the model in detail at $\lambda\sub{R} = 0$ and 
then to show that the spectral properties remain stable as the Rashba coupling is introduced but remains sufficiently small.

When $\lambda\sub{R} = 0$, the model consists of two copies of the Haldane model \cite{Haldane88} with magnetic flux $\phi = \pm \frac{\pi}{2}$, related by the time-reversal transformation. In fact, introducing 
\begin{equation*}
H_{\sigma}(k) := \begin{pmatrix}
w+\sigma \alpha\sub{SO}(k) & \Omega(k) \\ 
\overline{\Omega(k)} & - w- \sigma \alpha\sub{SO}(k)
\end{pmatrix} \qquad \qquad \sigma = \pm \;,
\end{equation*}
for $\lambda\sub{R}=0$ we can write the fibered Hamiltonian $H\sub{KM}(k)$ in block-diagonal form
\begin{equation}\label{eq: spin-commuting-HKM}
H\sub{KM}(k) = \begin{pmatrix}
H_{+}(k) & \mathbb{0}_2
\\
\mathbb{0}_2 & H_{-}(k)
\end{pmatrix} \;.
\end{equation}
 We denote the energy bands of \eqref{eq:bloch-fibre-ham} at $\lambda\sub{R} = 0$ by $(\mathcal{E}^{w}_{\sigma,l})_{\sigma,l  = \pm}$, with $\mathcal{E}^{w}_{\sigma,l} : \mathbb{T}^2_{*} \to \R$ having the simple expression
\begin{equation*}%\label{eq:energy_bands_no_lambda}
\mathcal{E}^{w}_{\sigma,l}(k)= l \sqrt{(w+\sigma\alpha\sub{SO}(k))^2 +|\Omega(k)|^2},
\end{equation*}
where $\sigma = \pm$ refers to the spin degrees of freedom, while the label $l =\pm$ refers to the sublattice ones, and where we made the dependence on $w \in \R$ explicit. The bands of each spin component have a degeneracy at zero energy only when both $\Omega(k)$ and $(w+\sigma \alpha\sub{SO}(k))$ are identically null. This happens at the so-called Dirac points 
\begin{equation}
\label{eqn:kF}
k_{F}^{\varepsilon}:= \frac{2 \pi}{3} \Big(1,  \frac{\varepsilon}{\sqrt{3}}\Big) \qquad \varepsilon  =\pm \;,
\end{equation}
and when $w = -\sigma \varepsilon \lambda\sub{SO}$. Note that on the Brillouin torus $\mathbb{T}^{2}_{*}$, we have $-k_{F}^{+}=k_{F}^{-}$.
\begin{figure}[t]
\includegraphics[scale=.56]{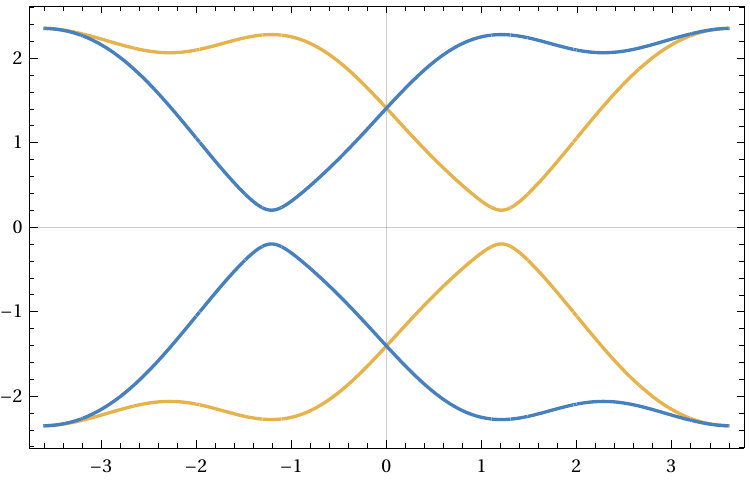}
$\quad$
\includegraphics[scale=.56]{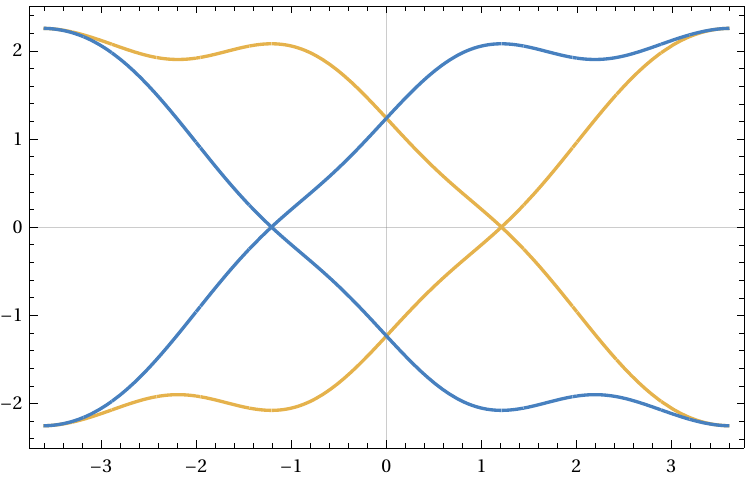}
\caption{Plot of the energy bands $\mathcal{E}^{w}_{l,\sigma}(k_1,k_2)$ along the line $(\frac{2 \pi}{3}, k_2)$ at $t= 2/3$, $\lambda\sub{SO} = 3\sqrt{3}/5 $, $\lambda\sub{R}  =  0$ and for $w>\lambda\sub{SO}$ and $w = \lambda\sub{SO}$, respectively on the left and on the right.
The blue and the yellow bands correspond to $\sigma = +$ and $\sigma= -$ respectively.
\label{fig:haldane}}
\end{figure}
Around the Dirac points, the spectrum is conical, in particular for $q\in\R^2$ with $\abs{q}$ sufficiently small
\begin{equation}\label{eq:conical-spectrum}
\mathcal{E}^{\mp\sigma \lambda\sub{SO} }_{\sigma,l}(k^{\pm}_{F}+q) = l \frac{3}{2}t |q| + O(|q|^{2}) \;,\qquad \sigma, l  =\pm 
\end{equation}
with the so-called Fermi velocity $\frac{3}{2}t$. To check this, observe that
\begin{equation}\label{eq:prod-kf-vectors}
k_{F}^{\pm} \cdot (d_{j} - d_{3}) = \pm \frac{2 \pi}{3} j \, ( \mathrm{mod}\; 2\pi)\;, \qquad k_{F}^{\pm} \cdot a_{j} = \mp \frac{2 \pi}{3}  \, ( \mathrm{mod}\; 2\pi) \;,
\end{equation}
from which one obtains
\begin{equation}\label{eq:expansion-1}
\alpha\sub{SO}(k_{F}^{\pm}+q) = \pm \lambda_{\mathrm{SO}} + O(|q|^{2}) \;, \qquad
\Omega(k_{F}^{\pm}+q) = \frac{3}{2}t (-\ii q_1 \pm q_2)+O(|q|^{2}) \;.
\end{equation}
In passing, as a consequence of \eqref{eq:prod-kf-vectors}, note the following expansions,
\begin{equation}\label{expansion-2}
\begin{split}
\alpha\sub{R}(k_{F}^{+}+q) & =  \frac{3}{2}  \lambda_{\mathrm{R}} r (\ii q_1 + q_2) + O(|q|^{2}) \;, \\
\Omega\sub{R}(k_{F}^{+}+q) &= \lambda_{\mathrm{R}} ( 1+\ii q_1 -\frac{3}{4}q_1^2- \frac{1}{4}q_2^2) + O(|q|^{3}) \;,\\
\Omega\sub{R}(k_{F}^{-}+q) & = \frac{1}{2}\lambda_{\mathrm{R}}(-\ii q_1+ q_2)+O(|q|^{2}) \;,
\end{split}
\end{equation}
which will be used in the following.
\begin{remark}\label{rmk:invert-no-rashba}
In particular, we stress that $ H\sub{KM}(k)$ is always invertible at $\lambda\sub{R} =0$ unless $k = k_{F}^{\pm}$. Furthermore, as a  straightforward consequence of \eqref{eq:conical-spectrum}, we know that for $q$ sufficiently small with $|q|\geq \rho>0$ it holds true that
\begin{equation*}
\sup _{q : |q|\geq \rho}
\Big\| \big(H\sub{KM}(k_{F}^{\pm}+q)\big|_{\lambda\sub{R} = 0} \big)^{-1}\Big\| \lesssim \rho^{-1} \;,
\end{equation*}
where the symbol $\lesssim$ means that the inequality holds up to a universal constant.
\end{remark}

Let us now consider the spectral properties of the Bloch--Floquet Hamiltonian at the Dirac points when $\lambda\sub{R} \neq 0$, without requiring it to be small.
\begin{lemma}\label{lem:gap}
Let $\mathcal{E}_1(k)\leq \cdots \leq \mathcal{E}_4(k)$ be the ordered eigenvalues of $H\sub{KM}(k)$ as in \eqref{eq:bloch-fibre-ham} and define the local internal spectral gap as
\begin{equation*}
\Delta(k):= \mathrm{dist}(\mathcal{E}_2(k),\mathcal{E}_3(k)) \;.
\end{equation*}
Then,
\begin{equation*}
\Delta (k_{F}^{+}) =
\begin{cases}
\abs{2|\lambda_{\mathrm{SO}}|- |w| -\sqrt{w^2 + \lambda_{\mathrm{R}}^{2}}} \qquad \quad  & \text{if}\quad|\lambda_{\mathrm{R}}|\leq 2\sqrt{\lambda_{\mathrm{SO}}^{2} + |w \lambda\sub{SO}|}, \\
2 |w| \qquad \quad  & \text{if}\quad |\lambda_{\mathrm{R}}|\geq 2\sqrt{\lambda_{\mathrm{SO}}^{2} + |w \lambda\sub{SO}|}.
\end{cases}
\end{equation*}
The condition $ \Delta(k_{F}^{+}) =0$ is satisfied only on the locus $w= w_{c}^{\pm}(\lambda\sub{SO},\lambda\sub{R})$, where the critical curves $w_{c}^{\pm}$ are defined by
\begin{equation}
\label{eqn:wc}
w_{c}^{\pm}(\lambda\sub{SO},\lambda\sub{R}) := \pm \bigg[ |\lambda\sub{SO}| - \frac{\lambda\sub{R}^{2}}{4|\lambda\sub{SO}|}\bigg]_{+}
\end{equation}
with $[\, \cdot \, ]_{+}:= \max\{\,\cdot\, ,0 \}$. Finally, defining the critical energy
\begin{equation*}
\mu_{c}:= \mathcal{E}_{2}(k_{F}^{+})|_{\Delta(k_{F}^{+})=0}  \;,
\end{equation*}
we have
\begin{equation}\label{eq:critical-energy}
\mu_{c} = 
\begin{cases}
- \frac{\lambda_{\mathrm{R}}^{2}}{4 \lambda_{\mathrm{SO}}} \qquad \quad  & \text{if}\quad|\lambda_{\mathrm{R}}|\leq 2|\lambda_{\mathrm{SO}}|, \\
- \lambda_{\mathrm{SO}} \qquad \quad  & \text{if}\quad|\lambda_{\mathrm{R}}|\geq 2|\lambda_{\mathrm{SO}}| \;.
\end{cases}
\end{equation}
\end{lemma}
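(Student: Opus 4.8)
The plan is to evaluate $H\sub{KM}(k_{F}^{+})$ in closed form, diagonalise the resulting $4\times4$ matrix, and then track how its four eigenvalues are ordered as the parameters vary. Concretely, reading off \eqref{eq:prod-kf-vectors} (equivalently, the $q=0$ terms in \eqref{eq:expansion-1}--\eqref{expansion-2}) and recalling that $-k_{F}^{+}=k_{F}^{-}$, one has $\Omega(k_{F}^{+})=0$, $\alpha\sub{R}(k_{F}^{+})=0$, $\Omega\sub{R}(-k_{F}^{+})=\Omega\sub{R}(k_{F}^{-})=0$, $\Omega\sub{R}(k_{F}^{+})=\lambda\sub{R}$ and $\alpha\sub{SO}(k_{F}^{+})=\lambda\sub{SO}$. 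Substituting into \eqref{eq:bloch-fibre-ham}, the matrix $H\sub{KM}(k_{F}^{+})$ splits: the second and third basis vectors are eigenvectors with eigenvalues $-(w+\lambda\sub{SO})$ and $w-\lambda\sub{SO}$, while the first and fourth span the block $\bigl(\begin{smallmatrix} w+\lambda\sub{SO} & \lambda\sub{R}\\ \lambda\sub{R} & \lambda\sub{SO}-w\end{smallmatrix}\bigr)$, whose eigenvalues are $\lambda\sub{SO}\pm R$ with $R:=\sqrt{w^{2}+\lambda\sub{R}^{2}}$. Hence $\mathrm{Spectrum}\big(H\sub{KM}(k_{F}^{+})\big)=\{-(w+\lambda\sub{SO}),\ w-\lambda\sub{SO},\ \lambda\sub{SO}-R,\ \lambda\sub{SO}+R\}$.

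\emph{Gap formula.} I would first reduce to $\lambda\sub{SO}>0$, $w\ge 0$, $\lambda\sub{R}\ge 0$: the displayed set is invariant under $w\mapsto-w$ and under $\lambda\sub{R}\mapsto-\lambda\sub{R}$, and it is multiplied by $-1$ under $\lambda\sub{SO}\mapsto-\lambda\sub{SO}$, so $\Delta(k_{F}^{+})$ is unchanged by all three sign flips (while $\mathcal{E}_{2}|_{\Delta=0}$ is even in $w,\lambda\sub{R}$ and odd in $\lambda\sub{SO}$), which matches the symmetries of the claims \eqref{eqn:wc}--\eqref{eq:critical-energy}. Writing $a=\lambda\sub{SO}>0$, $b=w\ge0$, $c=\lambda\sub{R}\ge0$, the bounds $R\ge b$ and $a>0$ give $a+R\ge\max\{-(a+b),\,b-a,\,a-R\}$, so $\mathcal{E}_{4}=a+R$; moreover $-(a+b)\le b-a$ since $b\ge0$. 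The remaining sorting depends on whether $-(a+b)\le a-R$ or not; since both $R$ and $2a+b$ are nonnegative, this is equivalent to $c^{2}\le 4a(a+b)$, i.e.\ $|\lambda\sub{R}|\le 2\sqrt{\lambda\sub{SO}^{2}+|w\lambda\sub{SO}|}$. If it holds, then $\mathcal{E}_{1}=-(a+b)$ and $\{\mathcal{E}_{2},\mathcal{E}_{3}\}=\{b-a,\,a-R\}$, whence $\Delta(k_{F}^{+})=\bigl|(b-a)-(a-R)\bigr|=\bigl|2\lambda\sub{SO}-w-R\bigr|$; otherwise $\mathcal{E}_{1}=a-R$ and $\{\mathcal{E}_{2},\mathcal{E}_{3}\}=\{-(a+b),\,b-a\}$, whence $\Delta(k_{F}^{+})=2b=2|w|$. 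The two expressions agree on the common boundary, where $R=2a+b$ and both equal $2b$.

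\emph{Zero locus and critical energy.} In the second regime $\Delta(k_{F}^{+})=0$ forces $w=0$, which lies in that regime only for $|\lambda\sub{R}|\ge 2|\lambda\sub{SO}|$, and there $\mathcal{E}_{2}=-(w+\lambda\sub{SO})=-\lambda\sub{SO}$. In the first regime $\Delta(k_{F}^{+})=0$ reads $\sqrt{w^{2}+\lambda\sub{R}^{2}}=2|\lambda\sub{SO}|-|w|$; this needs $|w|\le 2|\lambda\sub{SO}|$ and, upon squaring, gives $|w|=|\lambda\sub{SO}|-\lambda\sub{R}^{2}/(4|\lambda\sub{SO}|)$, which is $\ge 0$ precisely when $|\lambda\sub{R}|\le 2|\lambda\sub{SO}|$; there $\mathcal{E}_{2}=\mathcal{E}_{3}=w-\lambda\sub{SO}=-\lambda\sub{R}^{2}/(4|\lambda\sub{SO}|)$. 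Combining the two sub-cases, the zero set is exactly $w=w_{c}^{\pm}$ with $w_{c}^{\pm}$ as in \eqref{eqn:wc} (in particular $w_{c}^{\pm}=0$ when $|\lambda\sub{R}|\ge 2|\lambda\sub{SO}|$), and undoing the sign reduction (where $\lambda\sub{SO}\mapsto-\lambda\sub{SO}$ negates the spectrum, hence replaces $\mathcal{E}_{2}|_{\Delta=0}$ by its negative) turns $-\lambda\sub{R}^{2}/(4|\lambda\sub{SO}|)$ and $-|\lambda\sub{SO}|$ into $-\lambda\sub{R}^{2}/(4\lambda\sub{SO})$ and $-\lambda\sub{SO}$, which is \eqref{eq:critical-energy}.

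The computation is entirely explicit, so the only point that requires care is the bookkeeping in the middle step: one must verify that the ordered pair $(\mathcal{E}_{2},\mathcal{E}_{3})$ is $\{b-a,\,a-R\}$ (resp.\ $\{-(a+b),\,b-a\}$) \emph{throughout} the relevant regime, and not just at a point — this is exactly what the two one-line inequalities $-(a+b)\le b-a$ and $a+R\ge\max\{b-a,\,a-R\}$ guarantee, together with the regime-defining condition that fixes $\mathcal{E}_{1}$. No deeper obstacle is expected.
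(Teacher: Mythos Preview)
Your argument is correct and follows essentially the same approach as the paper: evaluate $H\sub{KM}(k_F^+)$ explicitly, diagonalise to obtain the four eigenvalues $\{-(w+\lambda\sub{SO}),\,w-\lambda\sub{SO},\,\lambda\sub{SO}\pm\sqrt{w^2+\lambda\sub{R}^2}\}$, reduce by sign symmetries, and then determine the ordering by comparing $a-R$ with $-(a+b)$. The only cosmetic differences are that the paper writes the matrix in $2\times2$ Pauli-block form and reduces only by the sign of $\lambda\sub{SO}$, whereas you read off the $(1,4)$-block and the diagonal eigenvectors directly and reduce by all three signs; the sorting analysis and the extraction of $w_c^\pm$ and $\mu_c$ are then identical.
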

\begin{figure}[t]
\includegraphics[scale=.6]{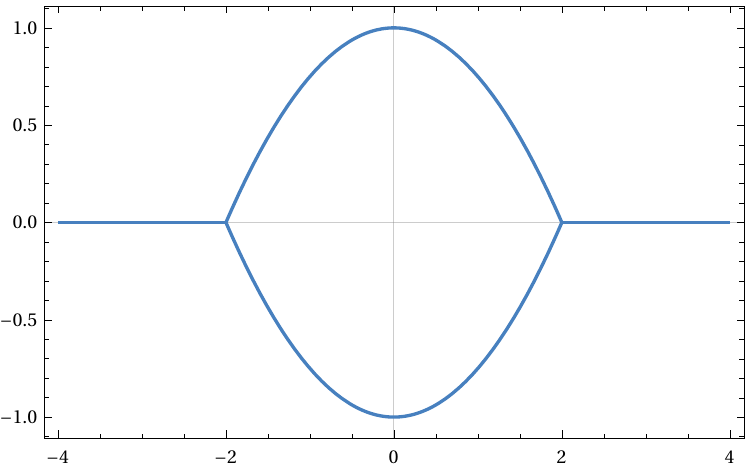}
\caption{Plot of $w_c^{\pm}/|\lambda\sub{SO}| $ as a function of $\lambda\sub{R}/|\lambda\sub{SO}|$.\label{fig:critical-curve}}
\end{figure}
\begin{remark}
Note that by Lemma \ref{lemma:time-reversal-symmetry} we have $\Delta(k_{F}^{+}) =\Delta(k_{F}^{-})$ and $\mu_{c} =  \mathcal{E}_{2}(k_{F}^{-})|_{\Delta(k_{F}^{-})=0} $.
Moreover, note that the spectral properties described in Lemma \ref{lem:gap} do not depend on $t$ and $r$.
\end{remark}
\begin{proof}
In view of Lemma \ref{lemma:Hk}, by using \eqref{eq:expansion-1} and \eqref{expansion-2} we obtain
\begin{equation*}
H\sub{KM}(k_F^+)= 
\begin{pmatrix}
 (w+ \lambda\sub{SO})\,\sigma_{z} &   \lambda\sub{R} \,\sigma_+ \\
 \, \lambda\sub{R} \,\sigma_- & (w - \lambda\sub{SO})\,\sigma_{z} 
\end{pmatrix}
\end{equation*}
where $ \sigma_\pm = (\sigma_{x} \pm \ii  \sigma_{y})/2$. By computing the characteristic polynomial, we obtain the following expressions for the energy bands, denoted by $\big(\mathcal{E}_{i,l}\big)_{i=1,2, l = \pm}$, at $k_{F}^{+}$:
\begin{equation*}
\begin{split}
\mathcal{E}_{1,\pm}(k_{F}^{+})&= \lambda\sub{SO}\pm \sqrt{w^2+\lambda\sub{R}^{2}}  \;, \qquad
\mathcal{E}_{2,\pm}(k_{F}^{+})= -\lambda\sub{SO} \pm |w|\;.
\end{split}
\end{equation*}
Note that the spin is not conserved and thus is not a good label for the energy bands. To compute the local internal gap $\Delta(k_{F}^{+})$, it is actually easier to first determine which bands are the external ones. Without loss of generality, we assume $\lambda\sub{SO} >0$: otherwise, we could flip the sign of $w$ and $\lambda\sub{R}$ to flip the sign of $H\sub{KM}(k)$, the latter operation leaving the set of internal/external bands invariant.
Note that $\mathcal{E}_{1,+}(k_{F}^{+})$ is always the positive external band. When $\mathcal{E}_{1,-}(k_{F}^{+})\geq\mathcal{E}_{2,-}(k_{F}^{+})$, the negative external band is $\mathcal{E}_{2,-}(k_{F}^{+})$, so that the local internal spectral gap is
\begin{equation*}
\Delta(k_{F}^{+})= |\mathcal{E}_{1,-}(k_{F}^{+})-\mathcal{E}_{2,+}(k_{F}^{+}) |  \;.
\end{equation*}
The condition $\mathcal{E}_{1,-}(k_{F}^{+})\geq\mathcal{E}_{2,-}(k_{F}^{+})$ is equivalent to $|\lambda_{\mathrm{R}}|\geq 2\sqrt{\lambda_{\mathrm{SO}}^{2} + |w |\lambda\sub{SO}}$.
In contrast, if $\mathcal{E}_{1,-}(k_{F}^{+})<\mathcal{E}_{2,-}(k_{F}^{+})$, $\mathcal{E}_{2,\pm}(k_{F}^{+})$ become the internal bands, and therefore $\Delta(k_{F}^{+}) =2|w|$. The expression for $w_{c}$ and $\mu _{c}$ is a straightforward computation from $\Delta(k_{F}^{+})$, noticing that $\mathcal{E}_{2,+}(k_{F}^{+})$ is always an internal band.
\end{proof}

The plot in Fig.~\ref{fig:critical-curve} does not generally say whether $(H\sub{KM},\mu _{c})$ is an insulator, but only when the gap between the internal bands closes at the Dirac points. However, if $|\lambda\sub{R}|$ is small enough, $(H\sub{KM}, \mu _{c})$ is indeed an insulator unless $w = w^{\pm}_{c}$, as shown in the following proposition.
\begin{proposition}\label{prop:robust-dirac-points}
Let $H\sub{KM} = H\sub{KM}(t,\lambda\sub{SO},w,\lambda\sub{R},r)$ be as in \eqref{eqn:hKM} and $\mu_c$, $w_c^{\pm}$ as in Lemma \ref{lem:gap}.
Let $t>0, \lambda\sub{SO} \neq 0, w, r \in \mathbb{R}$. If $|\lambda\sub{R}|$ is sufficiently small, depending on $t,\lambda\sub{SO},r$, then $(H\sub{KM}, \mu _{c})$ is an insulator unless $w = w_{c}^{\pm}$, in which case the spectral gap closes at the Dirac points only.
\end{proposition}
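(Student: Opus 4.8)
The plan is to pass to the Bloch--Floquet fibre $H\sub{KM}(k)$ of Lemma \ref{lemma:Hk}: by definition, $(H\sub{KM},\mu_c)$ is an insulator if and only if, for every $k\in\T^2_*$, the matrix $H\sub{KM}(k)$ has exactly two eigenvalues below $\mu_c$ (so that, in particular, $\mu_c\notin\sigma(H\sub{KM}(k))$ and $\mu_c$ lies in the bounded gap between the lower and the upper pair of bands). Since $\T^2_*$ is compact and $k\mapsto H\sub{KM}(k)$ is continuous, it suffices to verify this pointwise. I would fix a radius $\rho>0$, to be chosen later depending on $t,\lambda\sub{SO},r$, and split $\T^2_*$ into the region $\mathcal{R}_{\rho}:=\{k:\mathrm{dist}(k,\{k_F^+,k_F^-\})\ge\rho\}$ and the two small discs around the Dirac points $k_F^\pm$.

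On $\mathcal{R}_{\rho}$ the argument is soft. By Remark \ref{rmk:invert-no-rashba} the operator $H_0(k):=H\sub{KM}(k)\big|_{\lambda\sub{R}=0}$ is invertible on the compact set $\mathcal{R}_{\rho}$, so $M(\rho):=\sup_{k\in\mathcal{R}_{\rho}}\|H_0(k)^{-1}\|<\infty$, and its two negative and two positive eigenvalues there are bounded away from $0$. Writing $H\sub{KM}(k)-\mu_c\mathbb{1}_{\C^4}=H_0(k)\big(\mathbb{1}_{\C^4}+H_0(k)^{-1}(H\sub{KM}(k)-H_0(k)-\mu_c\mathbb{1}_{\C^4})\big)$ and using $\|H\sub{KM}(k)-H_0(k)\|\le C_1|\lambda\sub{R}|$ together with $|\mu_c|\le C_2\lambda\sub{R}^2$ (Lemma \ref{lem:gap}), a Neumann series shows that $H\sub{KM}(k)-\mu_c\mathbb{1}_{\C^4}$ is invertible on $\mathcal{R}_{\rho}$ as soon as $|\lambda\sub{R}|$ is small relative to $1/M(\rho)$, and that it has exactly two eigenvalues below $\mu_c$ there, as $H_0(k)$ has two below $0$. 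This step uses neither $w\neq w_c^\pm$ nor any smallness of $w$.

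The heart of the matter is the analysis on the discs, where $H_0(k)^{-1}$ blows up. Using the explicit matrix $H\sub{KM}(k_F^\pm)$ from the proof of Lemma \ref{lem:gap} and the expansions \eqref{eq:expansion-1}--\eqref{expansion-2}, for $k=k_F^\pm+q$ with $|q|<\rho$ I would write
\begin{equation*}
H\sub{KM}(k_F^\pm+q)-\mu_c\mathbb{1}_{\C^4}=\big(H\sub{KM}(k_F^\pm)-\mu_c\mathbb{1}_{\C^4}\big)+M_1^\pm(q)+R^\pm(q),
\end{equation*}
with $M_1^\pm(q)$ linear in $q$ and $\|R^\pm(q)\|\le C\big(|q|^2+|\lambda\sub{R}|\,|q|\big)$. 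At $q=0$ the matrix has two eigenvalues bounded away from $\mu_c$ (uniformly, by $\gtrsim|\lambda\sub{SO}|$ for $|\lambda\sub{R}|$ small), one on each side, and two ``internal'' eigenvalues whose difference is $\Delta(k_F^\pm)$; a direct computation with the closed-form eigenvalues at $k_F^\pm$ gives $\mathrm{dist}\big(\mu_c,\{\mathcal{E}_2(k_F^\pm),\mathcal{E}_3(k_F^\pm)\}\big)\ge c_1\Delta(k_F^\pm)$ for a fixed $c_1>0$ (in fact $\mu_c$ is almost the midpoint of the internal pair). The structural input is that the two internal eigenvectors at $q=0$ live in complementary sublattice sectors and are coupled, at linear order in $q$, by a matrix element of size $\gtrsim t|q|$ coming from the hopping entry $\Omega(k_F^\pm+q)=\frac{3}{2}t(-\iu q_1\pm q_2)+O(|q|^2)$, whereas all purely Rashba-induced couplings are only $O(|\lambda\sub{R}|\,|q|)$. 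An effective $2\times2$ (degenerate) perturbation theory for the internal pair — equivalently a Schur-complement reduction — then yields that their splitting at $k_F^\pm+q$ is at least $2\sqrt{c_2\,\Delta(k_F^\pm)^2+c_3\,t^2|q|^2}$ up to an error $O(|q|^2+|\lambda\sub{R}|\,|q|)$, with the midpoint displaced by at most the same error. Choosing first $\rho$ small against $t$ and then $|\lambda\sub{R}|$ small against $t,|r|,|\lambda\sub{SO}|$ and $1/M(\rho)$, one obtains $\mathcal{E}_2(k_F^\pm+q)<\mu_c<\mathcal{E}_3(k_F^\pm+q)$ for all $0<|q|<\rho$, together with $\mathcal{E}_1<\mu_c<\mathcal{E}_4$ there, hence again exactly two eigenvalues below $\mu_c$; and the same at $q=0$ precisely when $\Delta(k_F^\pm)>0$, i.e.\ when $w\neq w_c^\pm$. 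When $w=w_c^\pm$, Lemma \ref{lem:gap} gives $\mathcal{E}_2(k_F^\pm)=\mathcal{E}_3(k_F^\pm)=\mu_c$, so the gap closes at $k_F^\pm$ and, by the above, nowhere else; combining the two regions proves the proposition.

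The main obstacle is exactly this last step. Near the critical curves the internal gap at the Dirac points is only of order $\lambda\sub{R}^2$ — the displacement of $w_c^\pm$ away from $\pm|\lambda\sub{SO}|$ is quadratic in $\lambda\sub{R}$ — hence smaller than the $O(|\lambda\sub{R}|)$ size of the Rashba perturbation of the $\lambda\sub{R}=0$ model, so a naive continuity or Neumann-series argument from $\lambda\sub{R}=0$ fails precisely for $w\approx w_c^\pm$. One must use the algebra of $H\sub{KM}(k)$: the level repulsion between the two internal bands near the Dirac cone is driven by the nonzero hopping $t$ and is merely shifted, not destroyed, by the small Rashba terms. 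The one genuinely computational ingredient needed along the way is the uniform bound $\mathrm{dist}\big(\mu_c,\{\mathcal{E}_2(k_F^\pm),\mathcal{E}_3(k_F^\pm)\}\big)\ge c_1\Delta(k_F^\pm)$, which is extracted from the explicit eigenvalues at $k_F^\pm$ already displayed in the proof of Lemma \ref{lem:gap}.
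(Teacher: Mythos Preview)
Your overall architecture matches the paper's: split $\T^2_*$ into a region away from the Dirac points, handled by a Neumann-series perturbation of the $\lambda\sub{R}=0$ model, and small discs around $k_F^{\pm}$, where a finer analysis is needed. The far-field step is essentially identical to the paper's.

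The near-field analysis is where the two arguments diverge. The paper does not use Schur complements or track eigenvectors; instead it computes the determinant directly, obtaining
\[
\det\big(H\sub{KM}(k_F^{+}+q)-\mu_c\big)=\big(w^2-(w_c^{+})^2\big)^2+2|q|^2\big(w^2+\lambda\sub{SO}^2+O_{w,\lambda\sub{SO},r}(\lambda\sub{R}^2)\big)+o(|q|^2),
\]
which is manifestly positive for $q\neq 0$ once $\rho$ and then $|\lambda\sub{R}|$ are small. Your effective-$2\times 2$ route is more structural (it explains the result as level repulsion driven by the hopping $t$) and should go through, but it is more delicate: the internal eigenvectors at $q=0$ depend on $w$ and $\lambda\sub{R}$, and the constants $c_1,c_2,c_3$ in your splitting bound, as well as the $O(|q|^2+|\lambda\sub{R}||q|)$ error, inherit this dependence. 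The determinant computation sidesteps all of that bookkeeping.

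There is one genuine omission relative to the proposition as stated. The smallness of $|\lambda\sub{R}|$ must be \emph{uniform in $w\in\R$} (it is allowed to depend only on $t,\lambda\sub{SO},r$). The paper secures this by first observing that for $|w|$ large the term $wH\sub{W}$ dominates and $H\sub{KM}(k)-\mu_c$ is invertible for all $k$, so $w$ may be restricted to a compact set before choosing $\rho$ and $\lambda\sub{R}$. Your near-field estimates implicitly rely on $w$-dependent quantities (the eigenvector for $\mathcal{E}_{1,-}$ has an $e_1$-component of order $\lambda\sub{R}/w$, your ``almost-midpoint'' claim for $\mu_c$ holds only near $w_c^{\pm}$, etc.), so without this compactification step you have not established that a single $\lambda\sub{R}$-threshold works for all $w$. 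Adding the large-$|w|$ reduction, or otherwise arguing that your constants $c_1,c_2,c_3$ and error bounds are uniform over $w\in\R$, closes the gap.
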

\begin{proof}
Equivalently, we prove that
the operator $H\sub{KM}(k) - \mu _{c} \Id_{4}$ is invertible for any $k\neq k_{F}^{\pm}$, and for $k = k_{F}^{\pm} $ but $w \neq w_{c}^{\pm}$.
Because of the time-reversal invariance, namely Lemma \ref{lemma:time-reversal-symmetry}, we can restrict ourselves to the upper-half Brillouin torus, $\mathbb{T}^{2}_{+}:=\{(k_{1},k_{2}) \in \T^2_*\,  | \,  k_{2}\geq 0\}$. We let $B_{\rho}(k):=\{q \in \mathbb{R}^{2} \,| \,|q-k|<\rho\}$ and $\widetilde{B}_{\rho}(k):=B_{\rho}(k) \setminus \{ k\}$.  By using the explicit expression for the energy bands at $\lambda\sub{R} =0$, see Remark~\ref{rmk:invert-no-rashba}, we know that for $\rho$ small
\begin{equation*}
\sup _{q \in \mathbb{T}^{2}_{+} \setminus B_{\rho}(k_{F}^{+})}
\Big\| \big(H\sub{KM}(k)\big|_{\lambda\sub{R} = 0} - \mu _{c} \Id_{4}\big)^{-1}\Big\| \lesssim \rho^{-1} \;.
\end{equation*}
We use the identity 
\begin{equation}
\label{eqn:vonneumann}
{(A+B)}^{-1}=A^{-1}{(\Id+BA^{-1})}^{-1}    
\end{equation}
with $A = H\sub{KM}(k)\big|_{\lambda\sub{R} = 0} - \mu _{c} \Id_{4}$, $B = \lambda\sub{R} H\sub{R}$ and the smallness of $BA^{-1}$, to obtain
\begin{equation*}
\begin{split}
\sup _{k \in \mathbb{T}^{2}_{+} \setminus B_{\rho}(k_{F}^{+})}
\Big\| \big(H\sub{KM}(k) - \mu _{c} \Id_{4}\big)^{-1}\Big\| 
&
\leq  \sup _{k \in \mathbb{T}^{2}_{+}\setminus B_{\rho}(k_{F}^{+})}\frac{\Big\| \big(H\sub{KM}(k)\big|_{\lambda\sub{R} = 0} - \mu _{c} \Id_{4}\big)^{-1}\Big\|}{1 -|\lambda\sub{R}|\| H\sub{R}\| \Big\| \big(H(k)\big|_{\lambda\sub{R} = 0} - \mu _{c} \Id_{4}\big)^{-1} \Big\|}
\\
&\lesssim \rho^{-1}
\end{split}
\end{equation*}
provided that $|\lambda\sub{R}|$ is small enough depending on $\rho$, where we used that $H_{R}$ is a bounded operator. Now, we proceed by investigating the invertibility of $H\sub{KM}(k)-\mu_c\Id_{4}$ inside the pierced ball $\widetilde{B}_{\rho}(k_F^+)$. 
We note that if $|w|$ is large enough the Hamiltonian $H\sub{KM}(k) - \mu _{c}\Id_{4} $ is invertible for any $k \in \T^2_*$. This follows by using equality \eqref{eqn:vonneumann} again, where $A=w H\sub{W}$ and $B=H\sub{KM}-w H\sub{W}$.
Therefore, the claim of the proposition follows if we can prove that $H\sub{KM}(k_{F}^{+}+q) - \mu _{c}\Id_{4} $ is invertible for $q \in \widetilde{B}_{\rho}(0)$ with $\rho>0$ small enough, for $w$ in a compact set and for $|\lambda\sub{R}|$ small enough, depending only on $t$, $\lambda\sub{SO}$ and $r$. To this end, we compute the determinant and obtain:
\begin{equation}
\label{eqn:detHKM}
\begin{split}
&\det\Big( H\sub{KM}(k_{F}^{+}+q) - \mu _{c} \Big) 
\\
&\,= \Big(w^2 - \Big(\lambda\sub{SO} - \frac{\lambda\sub{R}^{2}}{4\lambda\sub{SO}} \Big)^{2}\Big)^{2}+2 |q|^{2}\Big(w^2 + \lambda\sub{SO}^{2}+O_{w,\lambda\sub{SO},r}(\lambda\sub{R}^{2})\Big) + o_{w,\lambda\sub{SO},\lambda\sub{R},r}(|q|^2),
\end{split}
\end{equation}
where, without loss of generality we set $t = 2/3$ (compare with \eqref{eq:conical-spectrum}).
This expression was obtained by expanding the matrix elements of $H\sub{KM}$ see \eqref{eq:bloch-fibre-ham} as in \eqref{eq:expansion-1} and \eqref{expansion-2}.
If $w\neq w_c^{\pm}$ then the determinant is obviously non-zero since the first term in \eqref{eqn:detHKM} does not vanish. Otherwise, we need to analyze the second-order term in $\abs{q}$. By inspection of $H\sub{KM}(k^{+}_{F}+q)$, we have that $O_{w,\lambda\sub{SO},r}(\,\cdot \,)$ and $o_{w,\lambda\sub{SO},\lambda\sub{R},r}(\,\cdot \,)$ depend continuously on the parameters provided that $\lambda\sub{SO} \neq 0$. Therefore, since $\lambda\sub{SO}$ is fixed and $w$ is in a compact set, we can choose $|\lambda\sub{R}|$ small enough so that the term proportional to $|q|^{2}$ is non-zero and, if $\rho$ is small enough (recall $q \in \widetilde{B}_{\rho}(0)$), dominates the term $o_{w,\lambda\sub{SO},\lambda\sub{R},r}(|q|^2)$. This implies that $\det\Big( H\sub{KM}(k_{F}^{+}+q) - \mu _{c}\Id_{4} \Big) \neq 0$ unless $q=0$ and the claim is proven.
\end{proof}

We have thus shown that for $|\lambda\sub{R}|$ small, the plot in Fig.~\ref{fig:critical-curve} describes the quantum phase diagram of the model, consisting of three insulating phases separated by a semi-metallic one along the critical curves $w_{c}^{\pm}$. This phase diagram was already obtained for the Kane--Mele model ($r=0$) in the seminal work \cite[Fig. 1, inset]{KaneMele2005}. In particular, the central phase is a topological insulator with spin Chern number equal to $1$, whereas the outer regions have spin Chern number equal to $0$. This can be directly inferred by continuity of the spin Chern number for insulators with respect to $\lambda\sub{R}$, and thus by considering the case $\lambda\sub{R} = 0$, see \eqref{eq: spin-commuting-HKM}, for which the spin Chern number is equal to the Chern number of $H_{+}$ in \eqref{eq: spin-commuting-HKM}, which was computed in \cite[Fig. 2 at $\phi=\pi/2$]{Haldane88}.

\subsection{Lack of quantisation}
\label{subsec: non-univers}

Unlike the case of charge transport, the spin conductivity as introduced in Definition \ref{def-spin-cond} is not universally quantised for time-reversal symmetric insulators. 
\begin{theorem}\label{thm:main2}
There exist insulators $(H,\mu)$ with $H$ belonging to $\mathcal{P}_{0}^{\hexagon}(\Hi)$, time-reversal symmetric and almost conserving the spin (in the sense of Definition \ref{def:almspincon}) such that $\sigma^{\mathrm{s}}_{12}(H,\mu) \notin \frac{1}{2 \pi} \mathbb{Z}$. In particular, for such insulators it holds true that
\begin{equation}\label{eq:quadratic-corrections}
c \verti{[H,S_z]}^{2} \leq 
\Big| \sigma^{\mathrm{s}}_{12} - \frac{1}{2\pi}\mbox{$S$-Chern}(P\su{sc})_{12}\Big| \leq C \verti{[H,S_z]}^{2} \;,
\end{equation}
for some constants $0<c\leq C$ independent of $\verti{[H,S_z]}$, where the norm $\verti{\,\cdot\,}$ is introduced in \eqref{eqn:verti}.
\end{theorem}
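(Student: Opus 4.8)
The plan is to use the extended Kane--Mele model $H\sub{KM}(t,\lambda\sub{SO},w,\lambda\sub{R},r)$ introduced in \eqref{eqn:hKM} with $r \neq 0$ and $|\lambda\sub{R}|$ small, exactly at (or near) a critical curve $w = w_c^{\pm}$, and to extract the non-vanishing of the quadratic correction from the \emph{discontinuity} of $\sigma^{\mathrm{s}}_{12}$ across the semi-metallic phase transition. The upper bound in \eqref{eq:quadratic-corrections} is already Theorem \ref{cor:main1}, so the entire content is the lower bound: showing $\sigma^{\mathrm{s}}_{12} \notin \frac{1}{2\pi}\Z$ for suitable parameters. First I would invoke Proposition \ref{prop:robust-dirac-points} to fix $t, \lambda\sub{SO}, r$ (with $r\neq 0$) and $|\lambda\sub{R}|$ small so that, as $w$ crosses $w_c^{+}(\lambda\sub{SO},\lambda\sub{R})$, the pair $(H\sub{KM},\mu_c)$ is an insulator on either side with the gap closing only at the Dirac point $k_F^{+}$ (and its time-reversal partner $k_F^{-}$), with a conical touching. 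Since $S$-Chern$(P\su{sc})_{12}$ jumps by an integer across $w_c^{+}$ (the phase diagram of Fig.~\ref{fig:critical-curve}), if $\sigma^{\mathrm{s}}_{12}$ were always in $\frac{1}{2\pi}\Z$ it would also have to jump by a multiple of $\frac{1}{2\pi}$; so it suffices to compute the actual discontinuity $\lim_{w\to w_c^{+,-}}\sigma^{\mathrm{s}}_{12} - \lim_{w\to w_c^{+,+}}\sigma^{\mathrm{s}}_{12}$ and show it is \emph{not} a multiple of $\frac{1}{2\pi}$.

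\textbf{Key steps.} (1) Establish the imaginary-time (Matsubara) representation of $\sigma^{\mathrm{s}}_{12}$ via Wick rotation, as announced in Theorem \ref{prop:imaginary-time-repr}; this rewrites the Kubo formula as a contour/frequency integral that is analytic in the parameters away from gap closings and is amenable to a localized computation near $k_F^{\pm}$. (2) Since the internal gap closes only at $k_F^{\pm}$, split the Brillouin torus into a small ball $B_\rho(k_F^{+})$, its reflection $B_\rho(k_F^{-})$, and the complement; on the complement the integrand is uniformly bounded and continuous in $w$ through $w_c^{+}$, contributing nothing to the discontinuity. (3) On $B_\rho(k_F^{+})$ insert the expansions \eqref{eq:expansion-1}, \eqref{expansion-2} of $\Omega, \alpha\sub{SO}, \alpha\sub{R}, \Omega\sub{R}$ around $k_F^{+}$ and reduce, after rescaling $q$, to a two-band Dirac-type integral; the discontinuity in $w$ is then a standard massive-Dirac jump integral whose value I would compute explicitly. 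The crucial point is that the coefficient of the spin operator $S_z$ in the effective low-energy Hamiltonian is \emph{not} $\pm\frac12$ but is dressed by the Rashba-induced mixing, so the jump comes out proportional to $\frac{1}{2\pi}$ times a \emph{non-integer} (in fact $\lambda\sub{R}$-dependent) factor. (4) Add the contribution from $k_F^{-}$ using Lemma \ref{lemma:time-reversal-symmetry}; by time-reversal the two Dirac points contribute with the same sign to the spin response (opposite charge Chern contributions, but $S_z\mapsto -S_z$ under $\Theta$, so they add), doubling rather than cancelling the anomalous piece. (5) Conclude: the total discontinuity $\Delta\sigma^{\mathrm{s}}_{12}$ at $w = w_c^{+}$ equals $\frac{1}{2\pi}$ times an explicit function of $\lambda\sub{R}/\lambda\sub{SO}$ which is not an integer for small $\lambda\sub{R}\neq 0$. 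Since $S$-Chern$(P\su{sc})_{12}$ jumps by an integer there, and since at $\lambda\sub{R} = 0$ (spin-conserving case, Lemma \ref{lem:sigmasc}) we have exact equality $\sigma^{\mathrm{s}}_{12} = \frac{1}{2\pi}S\text{-Chern}(P\su{sc})_{12}$, a continuity argument in $\lambda\sub{R}$ on one side of the critical curve forces $\sigma^{\mathrm{s}}_{12} \neq \frac{1}{2\pi}S\text{-Chern}(P\su{sc})_{12}$ on the \emph{other} side. Combining with the quadratic upper bound \eqref{eq: sigma-quasi-2} gives the two-sided bound \eqref{eq:quadratic-corrections}, with $\verti{[H,S_z]}^2 \sim \lambda\sub{R}^2$ (note $H\su{snc}$ is precisely the Rashba part, of size $O(\lambda\sub{R})$, and $\verti{\,\cdot\,}$ only adds position-commutators of the bounded short-range Rashba terms, still $O(\lambda\sub{R})$).

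\textbf{Main obstacle.} The hard part is step (2)--(3): one is forced to work exactly where perturbation theory in $H\su{snc}$ fails, because the spectral gap $\Delta(k)$ near the critical curve is $O(|w-w_c^{\pm}|) + O(|q|)$ and can be made arbitrarily small, in particular smaller than $\|H\su{snc}\| = O(\lambda\sub{R})$, so $\|H\su{snc}\|\,\|(H\su{sc}-\mu)^{-1}\|$ is \emph{not} small and one cannot expand $P$ around $P\su{sc}$ near $k_F^{\pm}$. The way around this is that the non-perturbative region is confined to the two balls $B_\rho(k_F^{\pm})$ of small measure, where the Hamiltonian is \emph{explicitly} a $4\times 4$ matrix with a known conical structure; there one computes the Matsubara integral in closed form (the relevant $\int \mathrm{d}q$ integral of a rational function of the Dirac data), rather than perturbatively. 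Controlling the error terms $o_{w,\lambda\sub{SO},\lambda\sub{R},r}(|q|^2)$ in \eqref{eqn:detHKM} and the corresponding corrections to the numerator uniformly as $w\to w_c^{\pm}$, so that the limit $w\to w_c^{\pm}$ and the $q$-integral can be interchanged, is the technically delicate point; this is where Proposition \ref{prop-discontinuity} does the real work, and where the imaginary-time representation is essential because it makes the integrand manifestly integrable and its $w$-dependence manifestly continuous away from $q = 0$.
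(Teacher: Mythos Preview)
Your strategy is the paper's: use the extended Kane--Mele family with $r\neq 0$, compute the discontinuity of $\sigma^{\mathrm{s}}_{12}$ across $w=w_c^{+}$ via the imaginary-time representation (Theorem~\ref{prop:imaginary-time-repr}) and a localisation near $k_F^{\pm}$, and read off the non-quantisation from the non-integer jump. The identification of the obstacle (the gap is smaller than $\|H\su{snc}\|$ near criticality, so no expansion in $H\su{snc}$) and its resolution (explicit local computation) are also exactly right.

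Two points deserve tightening. First, in step~(3) the paper does \emph{not} reduce to a two-band Dirac model: it extracts the full singular part $S(m,\mathbf{q})$ of the $4\times 4$ Green function (which turns out to be a $3\times 3$ block, not rank~$2$) and computes the trace $\Tr_{\C^4}(\widetilde A^{\mathrm{s}}_1\widetilde A_0\widetilde A_2)$ directly. Your proposed two-band reduction is conceptually reasonable, but since the spin-current vertex involves $S_z$, which does not act within the low-energy two-dimensional subspace, you would need to track the induced matrix elements of $S_z$ carefully; the paper's direct $4\times 4$ computation sidesteps this entirely and yields the closed form $\delta\sigma^{\mathrm{s}}_{12}=-\tfrac{1}{2\pi}\big(1+\tfrac{\lambda\sub{R}^2 r}{2t\lambda\sub{SO}-\lambda\sub{R}^2 r}\big)$.

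Second, and more importantly, your step~(5) has a gap. From the jump being $\tfrac{1}{2\pi}(\text{non-integer})$ you correctly get $\sigma^{\mathrm{s}}_{12}\notin\tfrac{1}{2\pi}\Z$ on at least one side; and the triangle inequality gives $|f(0^+,\lambda\sub{R})|+|f(0^-,\lambda\sub{R})|\gtrsim\lambda\sub{R}^2$ where $f:=\sigma^{\mathrm{s}}_{12}-\tfrac{1}{2\pi}\,S\text{-Chern}(P\su{sc})_{12}$. But to get the lower bound in \eqref{eq:quadratic-corrections} you need a \emph{fixed} $m\neq 0$ with $|f(m,\lambda\sub{R})|\geq c\lambda\sub{R}^2$ for all small $\lambda\sub{R}$, and ``continuity in $\lambda\sub{R}$ on one side'' plus the upper bound does not deliver this: the $m$ for which $f(m,\lambda\sub{R})$ is close to $f(0^\pm,\lambda\sub{R})$ might shrink with $\lambda\sub{R}$. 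The paper closes this with a contradiction argument: assuming $f(m,\lambda\sub{R})=o(\lambda\sub{R}^2)$ for \emph{every} $m\neq 0$, one shows $g(m,\lambda\sub{R}):=f(m,\lambda\sub{R})-f(-m,\lambda\sub{R})$ equals $-\tfrac{1}{2\pi}\tfrac{\lambda\sub{R}^2 r}{2t\lambda\sub{SO}-\lambda\sub{R}^2 r}+o(\lambda\sub{R}^2)$, using the hypothesis a second time (at an auxiliary $m'=m'(\lambda\sub{R})$) to control $\sigma^{\mathrm{s}}_{12}(\pm m,\lambda\sub{R})-\sigma^{\mathrm{s}}_{12}(0^\pm,\lambda\sub{R})$. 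This double use of the contradiction hypothesis is what decouples $m$ from $\lambda\sub{R}$ and is the missing ingredient in your sketch.
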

\begin{remark}\label{rmk:thm2}
Consider time-reversal symmetric insulators in $\mathcal{P}_{0}^{\hexagon}(\Hi)$ class. Recall from Section \ref{sec: spin-non-conserving} that for spin-conserving insulators, one has $\sigma^{\mathrm{s}}_{12} = \frac{1}{2\pi}\mbox{$S$-Chern}(P\su{sc})_{12} \in \frac{1}{2 \pi} \mathbb{Z}$, see \eqref{eq:quantisation-spin-cons}. Theorem \ref{cor:main1} affirms that, for insulators almost conserving the spin, the spin conductivity $\sigma^{\mathrm{s}}_{12}$ may deviate from the quantised value $\frac{1}{2 \pi} \mathbb{Z}$
by correction of order $\verti{[H,S_z]}^{2}$. Theorem \ref{thm:main2} states that there are time-revarsal symmetric insulators in $\mathcal{P}_{0}^{\hexagon}(\Hi)$, for which no cancellation takes place, so that such corrections are exactly of order $\verti{[H,S_z]}^{2}$. Since $\mbox{$S$-Chern}(P\su{sc})_{ij} \in \mathbb{Z}$ and $\verti{[H,S_z]} $ is small, Theorem \ref{thm:intro}\ref{it:iii} then follows.
\end{remark}
To prove Theorem \ref{thm:main2}, we consider insulators of the form $(H\sub{KM},\mu_{c})$, where $H\sub{KM}$ is the Hamiltonian of the extended Kane--Mele model defined in \eqref{eqn:hKM}, and where $\mu _{c} = -\frac{\lambda\sub{R}^{2}}{4\lambda\sub{SO}}$ is the critical energy  for $|\lambda\sub{R}|$ small enough, see \eqref{eq:critical-energy}, and show that the spin conductivity has a non-universal jump discontinuity across the critical line $w_{\mathrm{c}}^{+}$, see Lemma \ref{lem:gap}, provided that $|\lambda\sub{R}|$ is small enough. We quantify how far $w$ is from the critical line $w_{\mathrm{c}}^{+}$ by introducing $m=m(w,\lambda\sub{SO},\lambda\sub{R})$
\begin{equation}\label{eq: def-mass}
m:= w - w_{\mathrm{c}}^{+}(\lambda\sub{SO},\lambda\sub{R}) \;.
\end{equation}
With slight abuse of notation, we will write
\begin{equation}\label{eq: sigma-parameters}
\sigma^{\mathrm{s}}_{12}(t,\lambda\sub{SO},m,\lambda\sub{R},r) 
\equiv 
\sigma^{\mathrm{s}}_{12}\big( H\sub{KM}(t,\lambda\sub{SO},w_{\mathrm{c}}^{+}(\lambda\sub{SO},\lambda\sub{R}) + m,\lambda\sub{R},r),\mu _{c}\big) \;,
\end{equation}
and define the spin conductivity jump across the upper critical curve $w_{c}^{+}$ as follows
\begin{equation}\label{def-critical-jump}
\delta \sigma_{12}^{\mathrm{s}} (t,\lambda\sub{SO},\lambda\sub{R},r):= \Big( \lim _{m \to 0^{+}} - \lim _{m \to 0^{-}}\Big) \sigma^{\mathrm{s}}_{12}(t,\lambda\sub{SO},m,\lambda\sub{R},r)\;,
\end{equation}
provided that the limit exists. Note that by Proposition \ref{prop:antisymmetry} and Lemma \ref{lem:KM} , it suffices to consider only $\sigma^{s}_{12}$, as $\sigma^{s}$ is an antisymmetric tensor. Note that we could equivalently study the jump across the lower critical line $w_c^-$, compare with \eqref{eqn:wc}.

\begin{remark}
The quantity \eqref{def-critical-jump} simply measures the jump discontinuity of the spin conductivity across the critical curve and possibly identifies a first-order quantum phase transition, in the statistical mechanics sense, between the trivial and the topological insulator, compare with discussion below the proof of Proposition \ref{prop:robust-dirac-points}.
\end{remark}

Quite surprisingly, the discontinuity can be computed exactly, compare with \cite{Porta16}, where a similar computation is carried out for the charge conductivity of interacting fermionic systems.
\begin{proposition}\label{prop-discontinuity}
Let $t>0$, $\lambda\sub{SO} \neq 0$ and $r \in \R$. Then, if $|\lambda\sub{R}|$ is small enough depending on $t$, $\lambda\sub{SO}$ and $r$, the limits 
\begin{equation*}
\lim _{m \to 0^{\pm}} \sigma^{\mathrm{s}}_{12}(t,\lambda\sub{SO},m,\lambda\sub{R},r)
\end{equation*}
exist. Consequently, the spin conductivity jump $\delta \sigma_{12}^{\mathrm{s}} = \delta \sigma_{12}^{\mathrm{s}} (t,\lambda\sub{SO},\lambda\sub{R},r)$, as defined in \eqref{def-critical-jump}, also exists and is given by 
\begin{equation}\label{eq:spin-discontinuity}
\delta \sigma_{12}^{\mathrm{s}} = -\frac{1}{2 \pi} \left( 1 +\frac{ \lambda\sub{R}^{2} r}{2t\lambda\sub{SO} - \lambda\sub{R}^{2}r}  \right)\;.
\end{equation}
In particular, for the standard Kane--Mele model ($r=0$) it holds that $\delta \sigma_{12}^{\mathrm{s}} = -\frac{1}{2 \pi}$.
\end{proposition}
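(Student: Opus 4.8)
The plan is to reduce the computation of the jump $\delta\sigma_{12}^{\mathrm{s}}$ to a local computation at the Dirac points $k_F^{\pm}$, exploiting that the only source of discontinuity as $m\to 0^{\pm}$ is the closing and reopening of the spectral gap there (Proposition \ref{prop:robust-dirac-points} guarantees that for $|\lambda\sub{R}|$ small the gap stays open everywhere else, uniformly in $m$ near $0$). First I would pass to the imaginary-time (Matsubara) representation of $\sigma_{12}^{\mathrm{s}}$ established in Theorem \ref{prop:imaginary-time-repr}, writing the spin conductivity as a momentum integral $\int_{\T^2_*}\di k\, F_m(k)$ of a smooth integrand built from $H\sub{KM}(k)$, its resolvents and the velocity operators $\partial_{k_i}H\sub{KM}(k)$ together with $S_z$. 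Since $F_m(k)$ converges uniformly on compact subsets of $\T^2_*\setminus\{k_F^{+},k_F^{-}\}$ as $m\to 0^{\pm}$, and by time-reversal symmetry (Lemma \ref{lemma:time-reversal-symmetry}) the contributions of $k_F^{+}$ and $k_F^{-}$ coincide, the jump equals twice the difference of the limits of $\int_{B_\rho(k_F^{+})}F_m$, which in turn is captured by the rescaled (\virg{zoomed-in}) integral near $k_F^{+}$.

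The key computational step is then to insert the local expansions \eqref{eq:expansion-1} and \eqref{expansion-2} into $H\sub{KM}(k_F^{+}+q)-\mu_c$, rescale $q\mapsto q$ on the scale set by $m$ and the Fermi velocity $\tfrac32 t$, and identify the leading effective $4\times 4$ Bloch matrix. Concretely, at $k_F^{+}$ the Hamiltonian is of the massive-Dirac form
\begin{equation*}
H\sub{KM}(k_F^{+}+q)-\mu_c \approx \begin{pmatrix} M_+(q) & \lambda\sub{R}\sigma_+ + O(q)\\ \lambda\sub{R}\sigma_- + O(q) & M_-(q)\end{pmatrix},
\end{equation*}
with $M_\pm(q)$ built from the mass $m$, $\lambda\sub{SO}$ and $\Omega(k_F^{+}+q)=\tfrac32 t(-\iu q_1+q_2)+O(|q|^2)$; one diagonalises (to leading order) the two internal bands, writes the spin-conductivity integrand as a Berry-curvature-like expression weighted by the expectation of $S_z$ in the lower internal band, and integrates the resulting $\mathbb{R}^2$ Dirac integral. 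The standard massive-Dirac integral $\int_{\mathbb{R}^2}\di q\, \tfrac{m}{(|q|^2+m^2)^{3/2}} \sim 2\pi\,\mathrm{sgn}(m)$ produces the jump $\pm\tfrac1{2\pi}$; the correction $\tfrac{\lambda\sub{R}^2 r}{2t\lambda\sub{SO}-\lambda\sub{R}^2 r}$ comes from the $S_z$-weight not being $\pm\tfrac12$ in the hybridised bands — one must track the $O(\lambda\sub{R})$ mixing between spin sectors and the $r$-dependent next-to-nearest-neighbour Rashba term $\alpha\sub{R}(k_F^{+}+q)=\tfrac32\lambda\sub{R}r(\iu q_1+q_2)+O(|q|^2)$, which modifies the effective Fermi velocity in one of the sectors, hence the ratio appearing in \eqref{eq:spin-discontinuity}. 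Setting $r=0$ collapses the bracket to $1$, recovering $\delta\sigma_{12}^{\mathrm{s}}=-\tfrac1{2\pi}$.

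The main obstacle is controlling the error terms uniformly in $m$ near $0$: the spectral gap $\Delta(k_F^{+}+q)$ is of size $\sim\max\{|m|,|q|\}$, so $\|(H\sub{KM}(k_F^{+}+q)-\mu_c)^{-1}\|$ blows up precisely on the scale over which we are integrating, and naive perturbation theory in $\lambda\sub{R}$ is unavailable there (as stressed in the introduction, $\|H\su{snc}\|\,\|(H\su{sc}-\mu)^{-1}\|$ is not small). The resolution is to do the $\lambda\sub{R}$-expansion \emph{before} the $q$-rescaling: one isolates the exact $2\times 2$ block structure, treats the off-diagonal $\lambda\sub{R}$-coupling exactly in the $4\times 4$ effective Dirac matrix (which is still explicitly diagonalisable), and only expands in $|q|$ and in $\lambda\sub{R}$ relative to $\lambda\sub{SO}$ and $t$ — never in $\lambda\sub{R}$ relative to the vanishing gap. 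One then checks, via dominated convergence with a $|q|^{-1}$-type dominating function that is integrable in two dimensions away from the origin and whose singular part is exactly the Dirac integral, that the remainder contributes nothing to the $m\to 0^{\pm}$ limits. A final bookkeeping point is to confirm that the contour/orientation conventions in the imaginary-time formula are consistent with the sign of $\mathrm{sgn}(m)$ and with the identification $\mu_c = -\lambda\sub{R}^2/(4\lambda\sub{SO})$ lying between the two internal bands throughout the relevant parameter range.
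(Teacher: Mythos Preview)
Your overall strategy matches the paper's: use the imaginary-time representation of Theorem~\ref{prop:imaginary-time-repr}, localise the singular contribution near the Dirac points $k_F^{\pm}$, invoke time-reversal to reduce to a single point, show the regular part contributes equally to both $m\to 0^{\pm}$ limits by dominated convergence, and compute the remaining explicit integral. The existence of the one-sided limits and the reduction to a local calculation are handled just as you outline.

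The execution, however, is genuinely different. You propose to diagonalise the internal bands near $k_F^{+}$, interpret the integrand as a Berry-curvature-like density weighted by the expectation of $S_z$ in the lower internal band, and then evaluate a two-dimensional massive Dirac integral. The paper does none of this: it works in the full three-dimensional space $\mathbf{q}=(q_0,q)\in\R^3$ (keeping the Matsubara variable $k_0$), extracts the explicit $4\times 4$ singular part $S(m,\mathbf{q})$ of the Green function $\widehat{G}(\mu_c;\mathbf{k}_F^{+}+\mathbf{q})$ (one row and column of which vanish identically), plugs it into $\Tr_{\C^4}\big(\widetilde{A}^{\mathrm{s}}_1\widetilde{A}_0\widetilde{A}_2\big)$, and evaluates that trace by direct computation (assisted by computer algebra). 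The result is a scalar $\propto m/\chi(m,\mathbf{q})^2$ plus an odd term that integrates to zero, and the three-dimensional integral is done explicitly. No band diagonalisation, no identification of an $S_z$-weighted Berry curvature.

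One caveat on your route: the paper explicitly warns (Remark~\ref{rmk-pontryagin-class}) that the presence of $S_z$ in $A^{\mathrm{s}}_\nu$ strips the integrand of any known geometric interpretation, so your \virg{Berry-curvature-like expression weighted by $S_z$} is heuristically appealing but not something the paper can (or tries to) make precise. Your approach could in principle work, but it would require carefully projecting onto the two internal bands and tracking the $S_z$ matrix element through the hybridisation — and the effective denominator is not simply $|q|^2+m^2$ but the quartic-in-parameters $\chi(m,\mathbf{q})=\tfrac{\alpha_-}{4\lambda\sub{SO}^2}\big(m^2\alpha_- - 2\ii q_0 m\lambda\sub{R}^2 + q_0^2\alpha_+ + |q|^2\tilde\alpha^2\big)$, so the $r$-correction enters through $\tilde\alpha = 3t\lambda\sub{SO} - \tfrac32\lambda\sub{R}^2 r$ rather than through an $S_z$-expectation alone. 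The paper's direct computation avoids having to justify this geometric picture.
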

\begin{remark}
Even though $\delta \sigma^{\mathrm{s}}_{12} = - \frac{1}{2\pi}$ at $r=0$, we do not expect the spin conductivity to be quantised in the Kane--Mele model, and there is numerical evidence \cite{MoUl} supporting this thesis.
Additionally, note that when $ \lambda\sub{SO} \to 0$ there is no jump across the critical curve $w_{c}$; in fact, the qualitative properties of the model change drastically, as could be seen by diagonalising the fiber Hamiltonian \eqref{eq:bloch-fibre-ham}.
\end{remark}
It is straightforward to see that Theorem \ref{thm:main2} is a simple consequence of the above proposition.
\begin{proof}[Proof of Theorem \ref{thm:main2}]
Assume the contrary, that is, $\sigma^{\mathrm{s}}_{12}(H,\mu) \in \frac{1}{2 \pi} \mathbb{Z} $ for any insulator $(H,\mu)$ with $H$ time-reversal symmetric and belonging to $\mathcal{P}_{0}^{\hexagon}(\Hi)$. Then, $\delta \sigma_{12}^{\mathrm{s}}(t,\lambda\sub{SO},\lambda\sub{R},r) \in \frac{1}{2 \pi} \mathbb{Z}$ for any value of $t,\lambda\sub{SO},\lambda\sub{R},r$, which is in contradiction with \eqref{eq:spin-discontinuity}. 

We are left with proving the lower bound in \eqref{eq:quadratic-corrections}, the upper bound being stated in Theorem \ref{cor:main1}. Once more, we do this by considering the pair $(H\sub{KM},\mu_{c})$ at fixed $t,\lambda\sub{SO},r$ non-zero, at $\lambda\sub{R} > 0$ sufficiently small and at $w$ sufficiently close to $w_{c}^{+}(\lambda\sub{SO},\lambda\sub{R})$, that is, for $m \neq 0$, $m$ sufficiently small. Since all parameters but $m$ and $\lambda\sub{R}$ are fixed, let us abridge \eqref{eq: sigma-parameters} to $\sigma_{12}^{s}(m,\lambda\sub{R})$, and likewise $P\su{sc}= P\su{sc}(m)$, $w^{+}_{c} = w^{+}_{c}(\lambda\sub{R})$, and $\delta\sigma_{12}^{s}=\delta\sigma_{12}^{s}(\lambda\sub{R})$ (notice that $H\su{sc}=H\sub{KM}|_{\lambda\sub{R}=0}$).

We introduce the function
\begin{equation*}
f(m,\lambda\sub{R}):= \sigma_{12}^{s}(m,\lambda\sub{R}) -
\frac{1}{2\pi}\mbox{$S$-Chern}\big(P\su{sc}(m)\big)_{12} \;
\end{equation*}
for $m\neq 0$ and $\lambda_R>0$, both sufficiently small.
We observe that $\verti{H\su{snc}} = O(\lambda\sub{R})$ (see Remark \ref{rmk:H-non-spin-conserving} and \eqref{eqn:decscsnc}), so that the claim we need to prove is that there exist a $m\neq 0$, a constant $C>0$ and an open interval $J\ni 0$, such that $|f(m,\lambda\sub{R})| \geq C \lambda\sub{R}^{2}$ holds true for all $\lambda\sub{R} \in J $. 

We now assume the negation of the claim and show that this leads to a contradiction. We first of all note that the negation of the claim implies that
\begin{equation*}
f(m,\lambda\sub{R}) = o(\lambda\sub{R}^{2}) .
\end{equation*}
In fact, the negation of the claim is that for all $m\neq 0$, constants $C>0$, and open intervals $J \ni 0$, there exists at least one point $\lambda_0 \in J \setminus \{ 0\}$, such that $|f(m,\lambda_0)|< C \lambda_0^{2} $. Since this holds for any open interval $J \ni 0$, we can iterate the reasoning and find a sequence $(\lambda_n)_n\in \mathbb{N}$ such that $\lambda_n \to 0$ and $|f(m,\lambda_n)|< C \lambda_n^{2}$. By Remark \ref{rmk:cont}, the limit $\lim_{\lambda\sub{R}\to 0}f(m,\lambda\sub{R})$ exists for any $m\neq0$. As the constant $C>0$ can be chosen arbitrarily small,
\begin{equation*}
\lim_{\lambda\sub{R} \to 0}f(m,\lambda\sub{R})/\lambda\sub{R}^2 = 0,
\end{equation*}
that is, $f(m,\lambda\sub{R}) = o(\lambda\sub{R}^{2})$.

It remains to show that the condition $f(m,\lambda\sub{R}) = o(\lambda\sub{R}^{2})$ leads to a contradiction. To this end, we define
\begin{equation*}
g(m,\lambda\sub{R}):=f(m,\lambda\sub{R}) - f(-m,\lambda\sub{R}) \;, \qquad m>0 \;.
\end{equation*}
Since $f=o(\lambda\sub{R}^{2})$ for all $m\neq 0$ and $\lambda\sub{R}>0$ small enough, we also have $g = o(\lambda\sub{R}^{2})$. However, we can compute $g$ more precisely by means of Proposition \ref{prop-discontinuity} and prove a contradiction. First of all, we compute the spin Chern number of $P\su{sc}$. The model at $\lambda\sub{R} = 0$ consists of two copies of the Haldane model, see \eqref{eq: spin-commuting-HKM} and text around it. Then, as noted at the end of Section \ref{sec:spectral_prop}  it is straightforward to see that at $\lambda\sub{R} = 0$, for $m$ small
\begin{equation}\label{eq:free-spin-chern}
\mbox{$S$-Chern}(P\su{sc}(m))_{12} = 
\begin{cases}
0 & \qquad  m >0 \\
1  & \qquad  m< 0 \;.
\end{cases}
\end{equation}
Accordingly, by using this and Proposition \ref{prop-discontinuity} we have
\begin{equation*}
\begin{split}
g(m,\lambda\sub{R}) & = \frac{1}{2\pi} + \sigma_{12}^{s}(m,\lambda\sub{R}) -\sigma_{12}^{s}(-m,\lambda\sub{R})
\\
&= \frac{1}{2\pi} +  \delta\sigma_{12}^{s}(\lambda\sub{R}) + \Big(\sigma_{12}^{s}(m,\lambda\sub{R}) - \sigma_{12}^{s}(0^{+},\lambda\sub{R}) \Big) - \Big(\sigma_{12}^{s}(-m,\lambda\sub{R}) - \sigma_{12}^{s}(0^{-},\lambda\sub{R}) \Big)
\\
& = -\frac{1}{2\pi}\frac{ \lambda\sub{R}^{2} r}{2t\lambda\sub{SO} - \lambda\sub{R}^{2}r} + \Big(\sigma_{12}^{s}(m,\lambda\sub{R}) - \sigma_{12}^{s}(0^{+},\lambda\sub{R}) \Big) - \Big(\sigma_{12}^{s}(-m,\lambda\sub{R}) - \sigma_{12}^{s}(0^{-},\lambda\sub{R}) \Big) \;.
\end{split}
\end{equation*}
We now show that  $\sigma_{12}^{s}( \pm m,\lambda\sub{R}) - \sigma_{12}^{s}(0^{\pm},\lambda\sub{R}) =o(\lambda\sub{R}^{2})$. To see this, for any $m'=m'(\lambda\sub{R}) >0$ small enough, we write, using \eqref{eq:free-spin-chern} and the definition of $f$
\begin{equation*}
\begin{split}
\sigma_{12}^{s}(  m,\lambda\sub{R}) - \sigma_{12}^{s}(0^{+},\lambda\sub{R}) & = \sigma_{12}^{s}(  m,\lambda\sub{R}) - \sigma_{12}^{s}(  m',\lambda\sub{R}) +\sigma_{12}^{s}(  m',\lambda\sub{R}) - \sigma_{12}^{s}(0^{+},\lambda\sub{R})
\\
& = f(m,\lambda\sub{R}) - f(m',\lambda\sub{R}) + \sigma_{12}^{s}(  m',\lambda\sub{R}) - \sigma_{12}^{s}(0^{+},\lambda\sub{R})
\\
& = o(\lambda\sub{R}^{2}) \;,
\end{split}
\end{equation*}
where we used that $\sigma_{12}^{s}(  m',\lambda\sub{R}) - \sigma_{12}^{s}(0^{+},\lambda\sub{R})$ can be made $o(\lambda\sub{R}^2)$ for $m'$ small enough depending on $\lambda\sub{R}$. Then, by inspection, for all $m>0$ we get that $|g(m,\lambda\sub{R})| \geq C \lambda\sub{R}^{2}$, for some constant $C>0$ and $\lambda\sub{R}$ small enough, which is a contradiction.
\end{proof}

To prove Proposition \ref{prop-discontinuity}, we use the following imaginary-time representation of the spin conductivity. 
\begin{theorem}\label{prop:imaginary-time-repr}
Let $(H,\mu)$ be an insulator with $H \in \mathcal{P}_{0}^{\hexagon}(\Hi)$. Then, for any $i,j=1,2 $ the spin conductivity  $\sigma^{\mathrm{s}}_{ij}=\sigma^{\mathrm{s}}_{ij}(H,\mu)$ can be written as
\begin{equation}\label{almost-pontryagin}
\begin{split}
\sigma^{\mathrm{s}}_{ij}
= - \int_{\R \times \mathbb{T}_{*}^{2}} \frac{\dd \mathbf{k}}{(2 \pi)^{3}} \Tr _{\C^{4}} 
\Big(A^{\mathrm{s}}_{i}(\mu;\mathbf{k})A_{0}(\mu;\mathbf{k})A_{j}(\mu;\mathbf{k})\Big)
\end{split}
\end{equation}
where $\mathbf{k}=(k_0,k) \in \R \times \mathbb{T}_{*}^{2}$ and where for $\nu = 0,1,2$ we have introduced the quantities
\begin{align*}%\label{eq:connections}
A_{\nu}(\mu;\mathbf{k})&:=\left[ \partial _{k_{\nu}}\big(H(k) - (\mu -\ii k_{0} ) \mathbb{1} \big)\right]\widehat{G}(\mu;\mathbf{k}) \;,
\\
A^{\mathrm{s}}_{\nu}(\mu;\mathbf{k})&:=\frac{1}{2}\left[ \partial _{k_{\nu}}\big(H(k) - (\mu -\ii k_{0} ) \mathbb{1} \big) S_{z}+S_{z}\partial _{k_{\nu}}\big(H(k) - (\mu -\ii k_{0} ) \mathbb{1} \big)\right]\widehat{G}(\mu;\mathbf{k}) \;,
\end{align*}
with $\widehat{G}(\mu;\mathbf{k})$ defined by
\begin{equation*}
\widehat{G}(\mu;\mathbf{k}):= \big(H(k) - (\mu -\ii k_{0} ) \mathbb{1} \big)^{-1}  \;.
\end{equation*}
\end{theorem}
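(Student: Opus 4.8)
The plan is to derive \eqref{almost-pontryagin} by performing a Wick rotation on the time-integral representation of $\sigma^{\mathrm{s}}_{ij}$ established earlier in the paper. Recall from Proposition \ref{thm:main} and \eqref{eqn:2expP_1} that $\sigma^{\mathrm{s}}_{ij} = \lim_{\eta\to0^+}\re\tau(J_i^{\mathrm{conv}}L_{\eta,j})$ with $P_1 = \frac{\iu}{2\pi}\oint_{\mathcal{C}}\di z\,(H-z)^{-1}[[X_j,P],P](H-z)^{-1}$. The first step is to rewrite everything in Bloch--Floquet fibers using Lemma \ref{lem:tau}\ref{it:tau3}, so that $\tau$ becomes $\frac{1}{|\T^2_*|}\int_{\T^2_*}\di k\,\Tr_{\C^4}(\cdot)$, and to express $[X_j,P]$ fiberwise as $\iu\partial_{k_j}P(k)$ (as already used in the proof of Lemma \ref{lem:sigmasc}). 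Using the Riesz formula $P(k)=\frac{\iu}{2\pi}\oint_{\mathcal{C}}\di z\,(H(k)-z)^{-1}$ and $\partial_{k_j}(H(k)-z)^{-1} = -(H(k)-z)^{-1}(\partial_{k_j}H(k))(H(k)-z)^{-1}$, one turns the whole expression into contour integrals over $z$ of products of resolvents $(H(k)-z)^{-1}$ and derivatives $\partial_{k_\nu}H(k)$, together with the factor $S_z$ coming from $J_i^{\mathrm{conv}}$.

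The key analytic step is the Wick rotation itself: I would deform the energy contour $\mathcal{C}$ and parametrize it (or rather, the relevant combination of resolvents) so that the $z$-integral and the frequency integral combine into an integral over $k_0\in\R$, with $z = \mu - \ii k_0$ playing the role of the rotated frequency. Concretely, since $(H,\mu)$ is an insulator, $H(k)-(\mu-\ii k_0)\mathbb{1}$ is invertible for all $(k_0,k)\in\R\times\T^2_*$, and $\widehat G(\mu;\mathbf k)$ is integrable and decays like $|k_0|^{-1}$ at infinity, so the contour integrals can be rewritten via residue-type arguments or directly via the identity $\oint_{\mathcal{C}}\di z\,f(z) = \int_{\R}\di k_0\,(-\ii)f(\mu-\ii k_0) + (\text{contribution from below the spectrum})$, the latter vanishing by analyticity. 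This mirrors the strategy of \cite{Porta16, Porta20}. After the rotation, the three resolvents sitting between the three "vertex" factors $\partial_{k_i}H$ (symmetrized with $S_z$), $\partial_{k_0}H = \ii\mathbb{1}$ essentially — actually $\partial_{k_0}(H(k)-(\mu-\ii k_0)\mathbb{1}) = \ii\mathbb{1}$, giving the $A_0$ factor — and $\partial_{k_j}H$ reproduce exactly the trace of $A^{\mathrm{s}}_i A_0 A_j$ against $\dd\mathbf k/(2\pi)^3$. One must be careful to track how the $S_z$ in $J_i^{\mathrm{conv}} = \frac{\iu}{2}[H,X_i]S_z + \frac{\iu}{2}S_z[H,X_i]$ produces the symmetrized vertex $A^{\mathrm{s}}_i$ rather than $A_i$, and to check that the $\re$ operation and the antisymmetry (Proposition \ref{prop:antisymmetry}) are compatible with dropping the real part once the representation is in this manifestly real (or appropriately structured) form.

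The main obstacle I anticipate is controlling the exchange of limits and integrals rigorously: the adiabatic limit $\eta\to0^+$, the Matsubara/frequency integral over $k_0\in\R$ which converges only conditionally (the integrand decays like $|k_0|^{-2}$ for the three-resolvent term, which is fine, but intermediate manipulations may involve $|k_0|^{-1}$ pieces), and the contour deformation all need to be justified by dominated-convergence-type arguments using the uniform spectral gap from the insulator hypothesis and the smoothness and periodicity of $k\mapsto H(k)$. In particular one has to verify that the "boundary at infinity" and "below-the-spectrum" contributions to the Wick rotation genuinely vanish, which relies on the decay $\|\widehat G(\mu;\mathbf k)\|\lesssim(1+|k_0|)^{-1}$ and on the fact that $\partial_{k_\nu}H(k)$ is bounded uniformly in $k$ (true since $H\in\mathcal{P}_0(\Hi)$ is short-range). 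A secondary subtlety is bookkeeping: making sure the combinatorial rearrangement of the two-term $P_1$ contour integral plus the $[H,X_i]S_z$ factor into a single cyclically-ordered trace of three $A$'s comes out with the correct sign and the correct $(2\pi)^{-3}$ normalization, which I would pin down by checking the spin-conserving case against \eqref{eq:quantisation-spin-cons} and the known Pontryagin-index representation of the Chern number. I would relegate the full Wick-rotation estimates to Appendix \ref{app:imaginary} as the paper indicates.
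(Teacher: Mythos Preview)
Your approach is genuinely different from the paper's, and while a direct one-body resolvent derivation is possible in principle, the sketch as written has a real gap.

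The paper does \emph{not} deform the Riesz contour $\mathcal{C}$ in the energy plane. Instead it first passes to second quantisation and the infinite-volume grand canonical state at inverse temperature $\beta$, rewrites the conductivity as a current--current Kubo formula with a Schwinger counterterm (Proposition~\ref{kubo-formula-spin}),
\[
\sigma^{\mathrm{s}}_{ij}=\lim_{\eta\to0^+}\frac{\ii}{\eta}\Big(\int_{-\infty}^{0}\ee^{\eta s}\,\tau\big([J_i^{\mathrm{conv}},J_j(s)]P\big)\,\dd s-\tau\big([J_i^{\mathrm{conv}},X_j]P\big)\Big),
\]
and then performs the Wick rotation in the \emph{time} variable $s$: the KMS identity $\omega_\beta^{H,\mu}\big([A,B(t)]\big)=K(t)-K(t+\ii\beta)$ together with analyticity of $K$ and Cauchy's theorem rotate the real-$t$ integral onto the imaginary segment $[0,\beta]$ (Proposition~\ref{prop:imaginary-kubo-appendix}). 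Only at the very end, after expressing $\widehat{\mathcal{K}}_{ij}(\eta)$ in Matsubara--Bloch form, does the $A_0$ factor appear, and it appears precisely as the $\eta$-derivative of $\widehat G(\omega-\eta;k)$ at $\eta=0$ coming from the incremental quotient $\eta^{-1}\big(\widehat{\mathcal{K}}_{ij}(\eta)-\widehat{\mathcal{K}}_{ij}(0)\big)$.

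Your outline has no mechanism for this last step: once you pass to $P_1$ you have already taken $\eta\to0$, there is nothing left to differentiate, and the contour representation \eqref{eqn:2expP_1} carries only two resolvents in $z$, not three. The assertion that the closed contour $\mathcal{C}$ can be ``deformed'' to the open vertical line $\{\mu-\ii k_0:k_0\in\R\}$ with the remainder ``vanishing by analyticity'' is also not correct as stated: $\mathcal{C}$ encircles the occupied spectrum, the vertical line separates occupied from unoccupied bands, and converting one into the other is a residue computation rather than a homotopy; it only works once you have produced an integrand with $|z|^{-3}$ decay and identified which poles sit on which side. A purely one-body route does exist---in fibers $\tau(J_i^{\mathrm{conv}}P_1)$ reduces to $\sum_{E_n\le\mu<E_m}(E_m-E_n)^{-2}\Tr_{\C^4}\big(J_i^{\mathrm{conv}}P_nJ_jP_m\big)$ plus the conjugate block, and each $(E_m-E_n)^{-2}$ can then be \emph{recognised} as a $k_0$-residue of $\widehat G^{\,2}\widehat G$---but that is a different argument from the contour-deformation picture you describe, and in any case it is not the route the paper takes.
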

\begin{remark}\label{rmk-pontryagin-class}
Note that $A_{\nu}$ is the connection associated with $\widehat{G}$. If there were $A_{\nu}$ in place of  $A^{\mathrm{s}}_{\nu}$, the quantity on the r.h.s.~of \eqref{almost-pontryagin} would be proportional to the Pontryagin index associated with the connection $A_{\nu}$ \cite{Coleman,Jackiw}. In general, the presence of the spin operator in $A^{\mathrm{s}}_{\nu}$ strips formula \eqref{almost-pontryagin} of any geometrical interpretation, to the best of our knowledge.
Finally, note that $\widehat{G}(\mu;\mathbf{k})$ is the Fourier transform of the so-called imaginary-time-ordered Green function, see Appendix \ref{app:imaginary}.
\end{remark}
The proof of this representation is presented in Appendix \ref{app:imaginary}. 
The idea is to switch to the grand canonical formulation of the Kubo formula in \eqref{Kubo-formula}, and to use the Kubo--Martin--Schwinger (KMS) property to perform the Wick rotation. We learned this strategy in \cite{Porta17,Porta20} for the charge conductivity, see also \cite{MaPo} for applications to quantum spin chains.

With the aid of formula \eqref{almost-pontryagin} we shall now prove Proposition \ref{prop-discontinuity}.

\begin{proof}[Proof of Proposition \ref{prop-discontinuity}]
For the sake of brevity, we drop the explicit dependence on the fixed parameters $t,\lambda\sub{SO}$, $\lambda\sub{R}$, $r$. Because of time-reversal symmetry (see Lemma \ref{lemma:time-reversal-symmetry}), we restrict to the upper-half Brillouin torus, $\mathbb{T}^{2}_{+}:=\{(k_{1},k_{2}) \in \T^2_*\,  | \,  k_{2}\geq 0\}$, the contribution coming from the negative one being the same, that is, we have by Theorem \ref{prop:imaginary-time-repr}
\begin{equation}
\label{eqn:AAA-int}
\begin{split}
\sigma^{\mathrm{s}}_{12}
= -2 \int_{\R \times \mathbb{T}_{+}^{2}} \frac{\dd \mathbf{k}}{(2 \pi)^{3}} \Tr _{\C^{4}} 
\Big(A^{\mathrm{s}}_{1}(\mu_{c};\mathbf{k})A_{0}(\mu_{c};\mathbf{k})A_{2}(\mu_{c};\mathbf{k})\Big) \;.
\end{split}
\end{equation}
Note that, if the integrand in \eqref{eqn:AAA-int} were well-behaved, by the dominated convergence theorem we could take the limit inside the integral, thus proving that the limits $m\to 0^{\pm}$ exist and that $\delta \sigma_{12}^{\mathrm{s}}= 0$. However, the integrand has an non-integrable divergence, which makes the existence of the limit non-trivial and $\delta \sigma_{12}^{\mathrm{s}}\neq 0$. To capture the singularity of the integrand, we introduce additional notation. By Proposition \ref{prop:robust-dirac-points} we know that $(H\sub{KM}(k)-\mu)^{-1}$ can be singular only at the Fermi points $k_F^\pm$, defined in \eqref{eqn:kF}; thus the only singular point in $\mathbb{R} \times \mathbb{T}^{2}_{+}$ for $\widehat{G}(\mu;\mathbf{k})$ is $\mathbf{k}_{F}^{+} := (0,k_{F}^{+}) \in \mathbb{R} \times \mathbb{T}^{2}_{+}$. 
We extract the most singular contribution from the Green function around $\mathbf{k}^{+}_{F}$. We use the notation $\mathbf{q}=(q_{0},q) \in \mathbb{R} \times \mathbb{R}^{2}$, and denote $|\, \cdot \, |$ the Euclidean norm. Recalling that $m = w - w_{c}^{+}$, see \eqref{eq: def-mass}, and using the notation $O_{m,\mathbf{q}}(1)$ to denote terms that remain bounded as $|m|, |\mathbf{q}| \to 0$, we can write
\begin{equation*}
\widehat{G}(\mu _{c};\mathbf{k}_{F}^{+} +\mathbf{q}) = S(m,\mathbf{q}) + O_{m,\mathbf{q} }(1) \;, \qquad \big |S(m,\mathbf{q}) \big| \lesssim \big(|\mathbf{q}|+ |m|\big)^{-1} \;,
\end{equation*}
which, by direct computations with \eqref{eq:bloch-fibre-ham}, \eqref{eq:expansion-1} and \eqref{expansion-2}, is given by 
\begin{equation*}
S(m,\mathbf{q}) = \chi(m,\mathbf{q})^{-1} \begin{pmatrix}
z_{1}(\ii q_0+m)  & 0 & v_{1}(\ii q_1 + q_2)  &  z_{2} (m+\ii q_0)  \\ 
0 & 0 & 0 & 0\\
v_{1} (-\ii q_1 + q_2) & 0 & \ii z_{3} q_0 + z_{4} m & v_{2}(\ii q_1 + q_2) \\
z_{2} (m+\ii q_0) & 0 & v_{2}(-\ii q_1 + q_2) & z_{5} (\ii  q_{0}+m) 
\end{pmatrix} ,
\end{equation*}
where:
\begin{equation*}
\alpha_{\pm}:= 4 \lambda\sub{SO}^{2} \pm \lambda\sub{R}^{2} \;, \qquad \widetilde{\alpha}:= 3 t \lambda\sub{SO} -\frac{3}{2} \lambda\sub{R}^{2}r \;,
\end{equation*}
\begin{align*}
z_{1} &= - \frac{\lambda\sub{R}^{2}}{4 \lambda\sub{SO}^{2} }\alpha_{-} \;,& 
z_{2} & := \frac{\lambda\sub{R}}{2 \lambda\sub{SO}} \alpha_{-} \;,&
z_{3}&:= -\frac{1}{4 \lambda\sub{SO}^{2}} \alpha_{+} \alpha_{-} \;,&
z_{4}&:= \frac{1}{4 \lambda\sub{SO}^{2}} \alpha_{-}^{2} \;,\\
z_{5}&:= \alpha_{-} \;, & 
v_{1}&:= - \frac{\lambda\sub{R}}{4 \lambda\sub{SO}^{2}}  \widetilde{\alpha}\alpha_{-} \;, &
v_{2}& :=  \frac{1}{2\lambda\sub{SO}}  \widetilde{\alpha}\alpha_{-} ,
\end{align*}
and
\begin{equation*}
\chi(m,\mathbf{q}) =\frac{\alpha_{-}}{4 \lambda\sub{SO}^{2}} \big(m^{2}\alpha_{-} -2\ii q_{0} m \lambda\sub{R}^{2}+  q_{0}^{2}\alpha_{+} +|q|^{2} \widetilde{\alpha}^{2}\big).
\end{equation*}
Then, we extract the most singular contribution from the connection associated with $\widehat{G}$
\begin{equation*}
A_{\nu}(\mu _{c};\mathbf{k}_{F}^{+} + \mathbf{q}) = \widetilde{A}_{\nu}(m,\mathbf{q})+ O_{m,\mathbf{q}}(1) \;, \qquad \widetilde{A}_{\nu}(m,\mathbf{q}):=\partial_{k_{\nu}} \big[ \widehat{G}(\mu _{c}; \mathbf{k}_{F}^{+})^{-1} \big] S(m,\mathbf{q}) \;
\end{equation*}
and similarly for $A^{\mathrm{s}}_{\nu}$. We have
\begin{equation*}
 \partial_{k_{\nu}}\big[ \widehat{G}(\mu _{c}; \mathbf{k}_{F}^{+})^{-1}\big] =  \partial_{k_{\nu}}H(k_{F}^{+}) + \ii  \delta_{\nu,0} \mathbb{1}_{\mathbb{C}^{4}}
\end{equation*}
By inspection, $H\sub{KM}(k)$ and $S(m,\mathbf{q})$ are rational functions in $m$; therefore, $\widehat{G}(\mu_c;\mathbf{k})$ and the terms $O_{m,\mathbf{q}}(1)$ are also rational functions in $m$. Since the latter terms are, in particular, bounded, they are also continuous in $m$. Since the singular part of $A_{\nu}$ and $A^{s}_{\nu}$ diverges as $(|\mathbf{q}| + |m|)^{-1}$, all the sub-leading contributions to $\Tr _{\C^{4}} 
\Big(A^{\mathrm{s}}_{i}(\mu_{c};\mathbf{k})A_{0}(\mu_{c};\mathbf{k})A_{j}(\mu_{c};\mathbf{k})\Big)$, namely those with at least one $O_{\mathbf{q},m}(1)$, are uniformly integrable over $\mathbb{T}^{2}_{+}$ as $m \to 0$. Therefore, by the dominated convergence theorem, these terms provide the same finite contribution to the limits $m \to 0^{\pm}$; in particular, they do not contribute to  $\delta \sigma_{12}^{\mathrm{s}}$.
Accordingly, we can write
\begin{equation}\label{diff-singular-contr}
\delta \sigma_{12}^{\mathrm{s}}= \Big( \lim _{m \to 0^{+}} - \lim _{m \to 0^{-}}\Big)\widetilde{\sigma}^{\mathrm{s}}_{12}(m) \;,
\end{equation}
where, denoting $\tilde{\mathbb{T}}_{+}^{2}:=\{q \in \R^{2} \, | \, k_{F}^{+}+q \in \mathbb{T}_{+}^{2}\}$,
\begin{equation}\label{singular-Pontryagin-formula}
\begin{split}
\widetilde{\sigma}^{\mathrm{s}}_{12}(m)
= -2 \int_{\R \times \tilde{\mathbb{T}}_{+}^{2}} \frac{\dd \mathbf{q}}{(2 \pi)^{3}} \Tr_{\C^{4}} 
\Big(\widetilde{A}^{\mathrm{s}}_{1}(m;\mathbf{q})\widetilde{A}_{0}(m;\mathbf{q})\widetilde{A}_{2}(m;\mathbf{q})\Big) \;.
\end{split}
\end{equation}
We now show that both limits in \eqref{diff-singular-contr} exist (and, as a matter of fact, also proving that the limits $\lim _{m \to 0^{\pm}} \sigma^{\mathrm{s}}_{12}(t,\lambda\sub{SO},m,\lambda\sub{R},r)$ exist) and compute their exact values. With the aid of a computer software \cite{Wolfram}, we compute the trace in \eqref{singular-Pontryagin-formula}
and obtain
\begin{equation*}
\Tr_{\C^{4}} 
\Big(\widetilde{A}^{\mathrm{s}}_{1}(m;\mathbf{q})\widetilde{A}_{0}(m;\mathbf{q})\widetilde{A}_{2}(m;\mathbf{q})\Big) = -\frac{ 3 m t \widetilde{\alpha}\alpha_{-}^{2}}{4 \lambda\sub{SO} \chi(m,\mathbf{q})^{2}}
- \frac{3 t  \widetilde{\alpha}^{3} \alpha_{-}^{3}}{4 \lambda_{\mathrm{SO}}^{5} \chi(m,\mathbf{q})^{3}}\big(\alpha_{+}q_{0} -\ii m \lambda_{\mathrm{R}}^{2}\big) q_{1}q_{2} \;.
\end{equation*}
The second term does not contribute to \eqref{singular-Pontryagin-formula}, because it is odd in $q_{1}$, whereas the integration domain $\tilde{\mathbb{T}}_{+}^{2}$ is even. Therefore,
\begin{equation*}
\begin{split}
\delta \sigma_{12}^{\mathrm{s}} 
& =  \Big( \lim _{m \to 0^{+}} - \lim _{m \to 0^{-}}\Big) 
\frac{3 m t \widetilde{\alpha} \alpha_{-}^{2}}{16 \lambda\sub{SO} \pi^{3}}
\int_{\R \times \tilde{\mathbb{T}}_{+}^{2}} \frac{\dd \mathbf{q}}{\chi(m,\mathbf{q})^{2}}
\\
& =  \Big( \lim _{m \to 0^{+}} - \lim _{m \to 0^{-}}\Big) 
\frac{3 m t \widetilde{\alpha}\lambda\sub{SO}^{3}}{4 \pi^{3}}
\int_{\R^{3} } \dd \mathbf{q}
\big( q_{0}^{2}\alpha_{+} -2\ii q_{0} m \lambda\sub{R}^{2} +m^{2}\alpha_{-}+|q|^{2} \widetilde{\alpha}^{2}\big)^{-2} \;.
\end{split}
\end{equation*}
where in the second line we performed some algebraic manipulations and extended the integral to the whole $\R^{2}$ by dominated convergence, using that the integrand is uniformly integrable on $\R^{2} \setminus \tilde{\mathbb{T}}_{+}^{2}$ and continuous in $m$.

We compute the integral in $\dd q_{0}$ using that, for any $a,b,c\in \R$ such that $4ac + b^{2} >0$, we have $\int_{\R} \dd z \big(a z^{2} + \ii b z + c\big)^{-2} =  -4 \pi  a  (4 a c +b^{2})^{-\frac{3}{2}}$.
Then, we note that for $a',b'>0$ we have $\int_{\R^{2}} \dd x (a' |x|^{2} +b')^{-\frac{3}{2}} = 2 \pi \big(a' b'^{1/2}\big)^{-1}$, hence we have 
\begin{equation*}
\begin{split}
& \int_{\mathbb{R}^{2}} \dd q \int_{\mathbb{R}} \dd q_{0}
\big( q_{0}^{2}\alpha_{+} -2\ii q_{0} m \lambda\sub{R}^{2} +m^{2}\alpha_{-}+|q|^{2} \widetilde{\alpha}^{2}\big)^{-2} 
\\
& \qquad  = - 4 \pi \alpha_{+} \int_{\mathbb{R}^{2}} \dd q \big(  4 |q|^{2} \alpha_{+} \widetilde{\alpha}^{2} +  (2 m \lambda\sub{SO}^{2})^{2} \big)^{-\frac{3}{2}}
 = - \frac{\pi^{2}}{\lambda\sub{SO}^{2} \widetilde{\alpha}^{2}|m|} \;,
\end{split}
\end{equation*}
implying the existence of the limits and the sought result \eqref{eq:spin-discontinuity}.
\end{proof}

\begin{remark}\label{rmk:cont}
Note that for $m\neq 0$ the integrand in \eqref{eqn:AAA-int} is continuous in $\lambda\sub{R}$, in a neighbourhood of $0$, and uniformly bounded by some integrable function (decaying in $k_0$ and constant in $k_1$ and $k_2$). Thus, by the dominated convergence theorem $\sigma^{s}_{12}$ is also continuous in $\lambda\sub{R} $, in a neighbourhood of $0$.
\end{remark}

\appendix 

\section{Proof of Theorem \ref{prop:imaginary-time-repr}} 
\label{app:imaginary}

To prove Theorem \ref{prop:imaginary-time-repr}, we switch to the grand canonical formalism in second quantisation, where the Kubo--Martin--Schwinger (KMS) property can be used to perform the Wick rotation; see, e.g., \cite{Porta17,MaPo}.
Unlike \cite{Porta17,MaPo} we work in infinite volume directly, which is possible because we either localise observables or use the trace per unit volume functional.

\subsection{Grand canonical formalism}

Let us introduce the antisymmetric (or fermionic) Fock space associated with $\mathcal{H}$, see, e.g., \cite{Bratteli}. For $n\in \mathbb{N}$, let $\Hi^{\otimes n}$ denote the Hilbert space $n$-fold tensor product of $\Hi$. Moreover, for $n \in \N$ let $P_{n}$ be the antisymmetriser on $\mathcal{H}^{\otimes n}$, that is, 
\begin{equation*}
P_{n} f_{1} \otimes \cdots \otimes f_{n}:= (n!)^{-1}\sum_{\pi \in S_{n}} \mathrm{sgn}(\pi) f_{\pi(1)} \otimes \cdots \otimes f_{\pi(n)} \;,
\end{equation*}
$S_{n}$ denoting the group of permutations of $n$ elements and $\mathrm{sgn(\cdot)}$ the sign of the permutation, and set $\mathcal{H}^{\wedge n} := P_{n} \mathcal{H}^{\otimes n}$. The antisymmetric (or fermionic) Fock space $\F(\Hi)$ is the Hilbert space direct sum
\begin{equation*}
\mathcal{F}(\mathcal{H}):= \C \Omega \oplus \bigoplus _{n\geq 1} \mathcal{H}^{\wedge n} \;,
\end{equation*}
where $\Omega$ denotes a distinguished vector called vacuum vector.  For any $f \in \mathcal{H}$ one introduces the creation operator $a^{*}(f)$ and annihilation operator $a(f)$ on $\mathcal{F}(\mathcal{H})$:
\begin{equation*}
\begin{split}
a^{*}(f)\Omega = f \;, \qquad \quad a^{*}(f) f_1 \wedge \cdots \wedge f_n & := f \wedge f_1 \wedge \cdots \wedge f_n \;,
\\
a(f) \Omega = 0 \;, \qquad \quad
a(f) f_1 \wedge \cdots \wedge f_n & := \sum_{j=1}^{n}(-)^{j-1} \langle f,f_{j} \rangle f_1 \wedge \cdots \wedge \slashed{f_{j}} \wedge \cdots \wedge f_n\;.
\end{split}
\end{equation*}
where we used the notation $f_{1} \wedge \cdots \wedge f_{n} := (n!)^{1/2}P_{n} f_{1} \otimes \cdots \otimes f_{n}$.
These operators are the adjoint of each other, satisfy the canonical anticommutation relations
\begin{equation*}
\{a(f), a(g) \} = \{a^{*}(f), a^{*}(g) \} = 0 \;,\qquad \{a(f), a^{*}(g) \} = \langle f,g \rangle_{\mathcal{H}}\;,\qquad \forall f,g \in \mathcal{H} \;,
\end{equation*}
and are bounded, namely
\begin{equation*}
\|a(f) \|_{\mathcal{F}(\mathcal{H}) \to \mathcal{F}(\mathcal{H})} \leq \| f\|_{\mathcal{H}} \;.
\end{equation*}
We denote by $\mathrm{CAR}(\mathcal{H})$ the unital $C^{*}$-algebra generated by $\{a(f) \, | \, f \in \mathcal{H} \}$, see \cite{Bratteli} for more details. A functional $\omega: \mathrm{CAR}(\mathcal{H}) \to \mathbb{C}$ is a state if it is positive, normalised and linear. Non-interacting systems are described in terms of quasi-free states on $\mathrm{CAR}(\mathcal{H})$. In particular, we are interested in gauge-invariant quasi-free states.
\begin{definition}[Gauge-invariant Quasi-free States]\label{def-quasi-free}
A state $\omega:\mathrm{CAR}(\mathcal{H})\mapsto \mathbb{C}$ is gauge-invariant and quasi-free iff for any $n,m \in \mathbb{N}$ and $f_{1},\dots,f_{n} \in \mathcal{H}$ and $g_{1},\dots,g_{m} \in \mathcal{H}$
\begin{equation*}
\omega(a^{*}(f_{1}) \cdots a^{*}(f_{n}) a(g_{m}) \cdots a(g_{1})) = \delta_{n,m} \det \big( \omega(a^{*}(f_{i}) a(g_{j})) \big)_{1\leq i,j \leq n} \;.
\end{equation*}
\end{definition}
\begin{remark}
Note that a gauge-invariant quasi-free state is characterized by the two-point function $\omega(a^{*}(f) a(g))$, whereas $\omega(a^{*}(f) a^{*}(g))$ and $\omega(a(f) a(g))$ are identically zero.
\end{remark}
A special type of gauge-invariant quasi-free state is the grand canonical state associated with a one-particle Hamiltonian \cite[Example 5.3.2]{Bratteli}.
\begin{definition}[Grand canonical state]\label{def:gran-state}
Let $H \in \Bs(\Hi)$ be self-adjoint and $\mu \in \R$. The (infinite-volume) grand canonical state $\omega^{H,\mu}_{\beta}$ on $\mathrm{CAR}(\mathcal{H})$ at inverse temperature $\beta >0$ is the unique gauge-invariant quasi-free state with two-point function
\begin{equation}\label{eq:Fermi-Dirac-2point}
\omega^{H,\mu}_{\beta}(a^{*}(f) a(g)) = \langle g, (\mathbb{1}+\ee^{\beta(H - \mu \mathbb{1})})^{-1} f\rangle_{\mathcal{H}} \;. 
\end{equation}
\end{definition}
\begin{remark}
Note that we do not use the Gibbs prescription to define the grand canonical state because the operator $\exp(-\beta (H - \mu \mathbb{1}))$ is not trace class, as we are working in infinite volume.
\end{remark}
The operator appearing on the r.h.s.~of \eqref{eq:Fermi-Dirac-2point}, which we henceforth denote by
\begin{equation}\label{eq: Fermi-Dirac-dist}
\gamma_{\beta}^{H,\mu} := (\mathbb{1}+\ee^{\beta(H - \mu \mathbb{1})})^{-1} \;,
\end{equation}
is the one-body density matrix associated with the grand canonical state; in particular, it represents the Fermi--Dirac distribution associated with $H$ at inverse temperature $\beta$.

The grand canonical state satisfies the Kubo--Martin--Schwinger (KMS) property  with respect to the one-parameter group of Bogoliubov $^{*}$-automorphisms $(\alpha_{t})_{t \in \R}$ such that
\begin{equation}\label{eq:modular-automorphism}
\alpha_{t}\big( a(f) \big) := a(\ee^{\ii t(H-\mu \mathbb{1})}f) \;, \qquad \forall f \in \Hi \;,
\end{equation}
see \cite[Section 5.2.4]{Bratteli}. More precisely, one introduces the set of states that satisfy the KMS property with respect to $(\alpha_{t})_{t \in \mathbb{R}}$, see \cite[Definition 5.3.1]{Bratteli}.
\begin{definition}[KMS States]\label{def: KMS property}
Let $(\alpha_{t})_{t \in \R}$ be the one-parameter group of Bogoliubov $^{*}$- automorphisms as in \eqref{eq:modular-automorphism} and let $\beta >0$. A state on $\mathrm{CAR}(\mathcal{H})$ is a $(\beta,\alpha)$-KMS state if
\begin{equation}\label{eq:KMS-property}
\omega(A\alpha_{\ii \beta}(B)) = \omega(B A)
\end{equation}
for all $A,B$ in the polynomial $^{*}$-algebra generated by $\{a(f) \, | \, f \in \Hi \}$.
\end{definition}

\begin{remark}
Note that in \cite[Definition 5.3.1]{Bratteli} the condition \eqref{eq:KMS-property} is required for $A,B$ belonging to a norm-dense, $(\alpha_{t})_{t\in \mathbb{R}}$-invariant $^{*}$-subalgebra of the set of entire analytic elements for $\alpha$. This set comprises those operators $ A\in \mathrm{CAR}(\mathcal{H})$, such that the map $t \mapsto \alpha_{t}(A)$ has an extension to an entire function (in the strong or equivalently weak sense). In our setting,  since $H$ is bounded, the polynomial $^{*}$-algebra generated by $\{a(f) \, | \, f \in \Hi \}$ is precisely a norm-dense, $(\alpha_{t})_{t\in \mathbb{R}}$-invariant $^{*}$-subalgebra of entire analytic elements for $(\alpha_{t})_{t\in \mathbb{R}}$.
\end{remark}
Then, we have the following, see \cite[Example 5.3.2]{Bratteli}:
\begin{lemma}\label{prop-KMS-state}
The grand canonical state $\omega^{H,\mu}_{\beta}$ is the unique $(\beta,\alpha)$-KMS state.
\end{lemma}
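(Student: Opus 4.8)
The plan is to establish the lemma in two stages: first verify directly that the grand canonical state $\omega^{H,\mu}_{\beta}$ satisfies the KMS condition \eqref{eq:KMS-property}, and then show that any $(\beta,\alpha)$-KMS state is forced to coincide with it.

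For existence, I would begin by noting that, since $H$ is bounded, the map $z \mapsto \ee^{\ii z(H - \mu \Id)}$ is entire with values in $\Bs(\Hi)$, so that $z \mapsto \alpha_{z}(a(f)) = a(\ee^{\ii z(H-\mu\Id)}f)$ and all finite products thereof extend to entire $\mathrm{CAR}(\Hi)$-valued functions; in particular the whole polynomial $^{*}$-algebra consists of entire analytic elements and $\alpha_{\ii\beta}(a(f)) = a(\ee^{-\beta(H-\mu\Id)}f)$, $\alpha_{\ii\beta}(a^{*}(f)) = a^{*}(\ee^{-\beta(H-\mu\Id)}f)$. Next, because $\omega^{H,\mu}_{\beta}$ is gauge-invariant and quasi-free (Definition \ref{def-quasi-free}), both sides of \eqref{eq:KMS-property} are, after bringing all annihilation operators to the right via the canonical anticommutation relations, sums of products of two-point functions; a routine induction on the total number of creation/annihilation operators then reduces the identity to the generators $A = a(g), B = a^{*}(f)$ and $A = a(g), B = a(f)$ (the remaining generator pairs follow by taking adjoints, using $\alpha_{\ii\beta}(B)^{*} = \alpha_{-\ii\beta}(B^{*})$). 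The first case collapses, after using the anticommutation relations and the two-point function \eqref{eq:Fermi-Dirac-2point}, to the elementary operator identity $\gamma_{\beta}^{H,\mu}\big(\Id + \ee^{-\beta(H-\mu\Id)}\big) = \ee^{-\beta(H-\mu\Id)}$, which is immediate from \eqref{eq: Fermi-Dirac-dist}; the second collapses to $\omega^{H,\mu}_{\beta}(a(f)a(g)) = 0$, which holds by gauge invariance.

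For uniqueness, let $\omega$ be an arbitrary $(\beta,\alpha)$-KMS state. Positivity and norm-boundedness of $\omega$ furnish a bounded operator $T$, $0\le T\le \Id$, with $\omega(a^{*}(f)a(g)) = \langle g, Tf\rangle_{\Hi}$ for all $f,g \in \Hi$. Applying \eqref{eq:KMS-property} to $A = a(g), B = a^{*}(f)$ and running the same algebra as above yields $T\big(\Id + \ee^{-\beta(H-\mu\Id)}\big) = \ee^{-\beta(H-\mu\Id)}$, hence $T = \gamma_{\beta}^{H,\mu}$; applying it to $A = a(g), B = a(f)$ and using the antisymmetry of $(f,g)\mapsto \omega(a(f)a(g))$ together with the invertibility of $\Id + \ee^{-\beta(H-\mu\Id)}$ forces $\omega(a(f)a(g)) = 0$, and likewise $\omega(a^{*}(f)a^{*}(g)) = 0$. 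Thus the two-point function of $\omega$ is exactly that of $\omega^{H,\mu}_{\beta}$. The remaining point is that any $(\beta,\alpha)$-KMS state for this (quasi-free, Bogoliubov) dynamics must itself be quasi-free — that is, its higher correlation functions are the determinants of its two-point function — whereupon $\omega = \omega^{H,\mu}_{\beta}$. This is the classical structural fact of \cite[Example 5.3.2]{Bratteli}, which I would invoke rather than reprove.

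The routine parts (the analyticity making sense of $\alpha_{\ii\beta}$ on the polynomial algebra, the anticommutator bookkeeping, the inductive reduction to generators) are where the boundedness of $H$ and of the $a(f)$ enters and present no real difficulty. The genuine obstacle — and the reason the lemma is attributed to \cite{Bratteli} — is the final structural input of the uniqueness half: forcing the two-point function from the KMS relation on generators is elementary, but establishing that a general KMS state for a quasi-free dynamics is rigidly quasi-free, so that no further freedom remains in the higher moments, is not, and for this I would rely on the existing literature.
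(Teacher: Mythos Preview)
Your proposal is correct and is precisely the standard argument underlying \cite[Example~5.3.2]{Bratteli}; the paper itself offers no proof of Lemma~\ref{prop-KMS-state} beyond that citation, so your sketch is an expansion of, rather than an alternative to, the paper's approach. In particular, your identification of the one nontrivial ingredient --- that an arbitrary $(\beta,\alpha)$-KMS state for a quasi-free Bogoliubov dynamics is itself quasi-free --- and your decision to defer it to the same reference is exactly what the paper does.
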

\medskip

We shall now establish the connection between the grand canonical state $\omega^{H,\mu}_{\beta}$ and the state $\tau( \,\cdot \, P)$, where $\tau$ is the trace per unit volume and $P$ the Fermi projector. To this end, we shall first introduce the second quantisation: for any bounded operator $A \in \Bs(\Hi)$, we let $\dd\Gamma(A)$ denote its second quantization, that is, the (closure of the) operator such that $\dd \Gamma(A) \Omega = \Omega$ and for any $n\in \mathbb{N}$ 
\begin{equation*}
\dd \Gamma(A) \mathcal{H}^{\wedge n} = \sum_{i=1}^{n} \underbrace{\mathbb{1}_{\mathcal{H}} \otimes \cdots  \otimes A \otimes \cdots \otimes  \mathbb{1}_{\mathcal{H}}}_{\text{i-th position}} \mathcal{H}^{\wedge n} \;.
\end{equation*}
If $(f_j)_{j\in\N}$ is an orthonormal basis in $\mathcal{H}$ then
\begin{equation}\label{eq:op-second-quantization}
\dd \Gamma(A) = \sum_{i,j\in\N} \langle f_i, A f_j  \rangle_{\mathcal{H}} a^{*}(f_{i}) a(f_{j}) \;.
\end{equation}
Note that for $A,B \in \Bs(\Hi)$ one has
\begin{equation} \label{second-quantization-commutator}
\dd \Gamma\big([A,B] \big) = \big[ \dd \Gamma(A), \dd \Gamma(B) \big] \;.
\end{equation}
We also introduce the following notation: for any operator $A$ on $\Hi$, we denote by
\begin{equation*}
A_{L} := \chi_{L} A \chi_{L}
\end{equation*}
its restriction to the cell $C_{L}$.

The following holds true.
\begin{lemma}\label{lemma:grand-canonical}
Let $(H,\mu)$ be an insulator, see Section \ref{sec:linear-response}, with $H \in \mathcal{P}_{0}(\Hi)$ and let $P = P(H,\mu)$ be the associated Fermi projector. Then, for any $A \in \mathcal{P}(\Hi)$
\begin{equation}\label{eq:grand-canonical}
\begin{split}
\tau ( A P )  & =
\lim _{L \to \infty} \lim _{\beta \to \infty} |C_{L}|^{-1}\omega_{\beta}^{H,\mu} \big(\dd \Gamma(A_{L}) \big)  \;.
\end{split}
\end{equation}
Assume furthermore that $A \in \mathcal{P}(\Hi)$, let $B \in \mathcal{P}_{0}(\Hi)$ and let $X$ be the position operator. Then,
\begin{equation}\label{eq:tau-omega-comm}
\begin{split}
\tau ( [A,B] P ) &  =
\lim _{L \to \infty} \lim _{\beta \to \infty} |C_{L}|^{-1} \omega_{\beta}^{H,\mu} \big([\dd \Gamma(A_{L}),\dd \Gamma(B)] \big) \;,
\\
\tau([B,X]P) & = \lim _{L \to \infty} \lim_{\beta \to \infty} |C_{L}|^{-1} \omega^{H,\mu}_{\beta}([\dd \Gamma(B_{L}),\dd \Gamma(X)]) \;.
\end{split}
\end{equation}
\end{lemma}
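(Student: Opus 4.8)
\textbf{Proof plan for Lemma \ref{lemma:grand-canonical}.}

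The plan is to prove \eqref{eq:grand-canonical} first, and then derive \eqref{eq:tau-omega-comm} from it by using the commutator identity \eqref{second-quantization-commutator} together with an approximation argument to deal with the unbounded position operator $X$. For \eqref{eq:grand-canonical}, fix $L$ finite. Then $A_L = \chi_L A \chi_L$ is a finite-rank operator and $\dd\Gamma(A_L)$ is a bounded second-quantised operator, so $\omega_\beta^{H,\mu}(\dd\Gamma(A_L))$ is well-defined. Using the expansion \eqref{eq:op-second-quantization} in an orthonormal basis adapted to the decomposition $\Hi = \ran\chi_L \oplus \ran\chi_L^\perp$ and the fact that $\omega_\beta^{H,\mu}$ is gauge-invariant quasi-free with two-point function \eqref{eq:Fermi-Dirac-2point}, one gets the finite sum
\begin{equation*}
\omega_\beta^{H,\mu}(\dd\Gamma(A_L)) = \sum_{i,j} \langle f_i, A_L f_j\rangle\, \omega_\beta^{H,\mu}(a^*(f_i) a(f_j)) = \Tr\big( A_L\, \gamma_\beta^{H,\mu}\big) = \Tr\big( \chi_L A \chi_L\, \gamma_\beta^{H,\mu}\big),
\end{equation*}
where $\gamma_\beta^{H,\mu} = (\mathbb{1}+\ee^{\beta(H-\mu\mathbb{1})})^{-1}$ is as in \eqref{eq: Fermi-Dirac-dist}. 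Now I would take $\beta\to\infty$ inside the (finite-rank) trace: since $(H,\mu)$ is an insulator, $\mu$ lies in a spectral gap, so $\gamma_\beta^{H,\mu} \to \chi_{(-\infty,\mu]}(H) = P$ in operator norm (the Fermi-Dirac function converges uniformly to the step function away from the gap), hence $\Tr(\chi_L A \chi_L \gamma_\beta^{H,\mu}) \to \Tr(\chi_L A \chi_L P)$. Dividing by $|C_L|$ and letting $L\to\infty$ gives exactly $\tau(AP)$ by Definition \ref{def:trace-volume} and the cyclicity/well-posedness from Lemma \ref{lem:tau}\ref{it:tau1} (using $\chi_L A \chi_L P$ vs.\ $\chi_L A P \chi_L$, which agree up to a boundary term that is negligible after dividing by $|C_L|$; alternatively one inserts $\chi_L$'s using $[P,\cdot]$ being short-range).

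For the first identity in \eqref{eq:tau-omega-comm}, with $A\in\mathcal{P}(\Hi)$ and $B\in\mathcal{P}_0(\Hi)$, I would use \eqref{second-quantization-commutator} to write $[\dd\Gamma(A_L),\dd\Gamma(B)] = \dd\Gamma([A_L,B])$, but the cleaner route is to observe that, by the quasi-free structure, $\omega_\beta^{H,\mu}([\dd\Gamma(A_L),\dd\Gamma(B)]) = \Tr([A_L,B]\gamma_\beta^{H,\mu}) = \Tr((A_L B - B A_L)\gamma_\beta^{H,\mu})$, which is a finite trace because $A_L$ is finite-rank and $B$ is bounded. Letting $\beta\to\infty$ as before gives $\Tr([A_L,B]P)$, and then $L\to\infty$ with $|C_L|^{-1}$ recovers $\tau([A,B]P)$; here one must check that $\chi_L[A,B]\chi_L$ and $[\chi_L A\chi_L, B]$ differ by a term of lower order in $L$, which follows from $B$ being short-range so that the commutator $[\chi_L, B]$ is supported near $\partial C_L$, contributing $O(L)$ while $|C_L| = O(L^2)$. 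For the second identity in \eqref{eq:tau-omega-comm}, the same scheme applies with $B_L = \chi_L B \chi_L$ finite-rank and $X$ the (unbounded) position operator: the key point is that $[\dd\Gamma(B_L),\dd\Gamma(X)] = \dd\Gamma([B_L,X])$ and $[B_L,X] = \chi_L[B,X]\chi_L + (\text{terms from }[\chi_L,X])$ is again finite-rank and bounded — indeed $[B,X]$ is short-range by Lemma \ref{lemma:shortrange} — so $\omega_\beta^{H,\mu}([\dd\Gamma(B_L),\dd\Gamma(X)]) = \Tr([B_L,X]\gamma_\beta^{H,\mu})$ makes sense, and one proceeds as above, using Lemma \ref{lem:tau}\ref{it:tau1} on $[B,X]P\in\mathcal{P}(\Hi)$.

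The main obstacle I anticipate is the careful bookkeeping of the cutoff operators $\chi_L$: one must justify that the order of the limits $\beta\to\infty$ then $L\to\infty$ is legitimate and that all the "boundary" discrepancies between $\chi_L A \chi_L$ localisations and genuinely localised restrictions of periodic operators vanish after normalising by $|C_L|$. This is where short-rangeness of $B$ (and of $[B,X]$, $[A,B]$) enters crucially: it guarantees that commutators with $\chi_L$ are effectively supported in a neighbourhood of $\partial C_L$ of fixed width, contributing a trace of size $O(L)$ against the bounded $\gamma_\beta^{H,\mu}$ or $P$, which is killed by $|C_L|^{-1} = O(L^{-2})$. For the unbounded $X$ there is the additional subtlety that $\dd\Gamma(X)$ is unbounded, but this is harmless because we only ever evaluate it through the commutator $[\dd\Gamma(B_L),\dd\Gamma(X)] = \dd\Gamma(\chi_L[B,X]\chi_L + \dots)$, which is a bounded (indeed finite-rank, after the $\chi_L$'s) second-quantised operator, so the KMS/quasi-free machinery applies verbatim. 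Everything else is a direct consequence of the quasi-free two-point function \eqref{eq:Fermi-Dirac-2point}, the uniform convergence $\gamma_\beta^{H,\mu}\to P$ on the spectrum of $H$ (using the spectral gap), and the properties of $\tau$ collected in Lemma \ref{lem:tau}.
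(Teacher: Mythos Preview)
Your proposal is correct and follows essentially the same route as the paper: compute $\omega_\beta^{H,\mu}(\dd\Gamma(A_L))=\Tr(A_L\gamma_\beta^{H,\mu})$ via the quasi-free two-point function, send $\beta\to\infty$ using the spectral gap, and then control the discrepancy between $\Tr(A_L P)$ and $\Tr(\chi_L A P\chi_L)$ by a short-range boundary estimate. Two small remarks: the paper isolates your boundary argument as a separate lemma (Lemma~\ref{lemma-move-local} and its corollary), which cleanly gives $\tau(AB)=\lim_L|C_L|^{-1}\Tr(A_L B)$ and the commutator analogues you need; and your parenthetical about ``terms from $[\chi_L,X]$'' is superfluous, since $\chi_L$ and $X$ are both multiplication operators, so $[\chi_L,X]=0$ and in fact $[B_L,X]=\chi_L[B,X]\chi_L=[B_L,X_L]$ exactly, which is how the paper handles the unbounded $X$.
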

\begin{remark}
Note that the grand canonical state appearing above is in infinite volume and thus extensive observables have to be localised over the cell $C_{L}$. This is one of the main difference with respect to, e.g., \cite{Porta17,Porta20,MaPo}, where the state and the observable are defined on the torus and the infinite-volume limit is performed for both at the same time.
\end{remark}
\begin{figure}
\begin{tikzpicture}
  \def\outer{4}
  \def\inner{3}

  % Light dashed fill pattern for the frame region
  \fill[pattern=dots, pattern color=gray!40] 
    (-\outer/2, -\outer/2) rectangle (\outer/2, \outer/2);
  \fill[white]
    (-\inner/2, -\inner/2) rectangle (\inner/2, \inner/2);

  % Draw thick outlines of the squares
  \draw[thick] (-\outer/2, -\outer/2) rectangle (\outer/2, \outer/2);
  \draw[thick] (-\inner/2, -\inner/2) rectangle (\inner/2, \inner/2);

  % Curved black arrow pointing to the frame region
    \draw[->, thick, bend right=25] 
    (2.8, 0.5) to (1.8, 1.3);

  % Nodes and decorations
  \node at (3, 0.1) {$\widetilde{C}_{L}$};
  \node at (0, 0) {$C_L \setminus \widetilde{C}_{L}$};

%\draw[decorate, decoration={brace, amplitude=10pt}, thick]
%    (-2.4, -2) -- (-2.4, 2);
  \node[left=6pt] at (-2, 2) {$C_L$};

\end{tikzpicture}
\caption{Pictorial representation of the annulus $\widetilde{C}_{L}$ inside of $C_{L}$.
\label{fig:annulus}}
\end{figure}
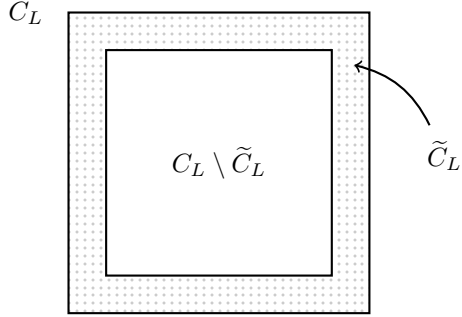
To prove this lemma, we need a couple of technical results. First of all, we note that, although the trace per unit volume localizes the whole observable, when considering products one has the following result.
\begin{lemma}\label{lemma-move-local}
Let $A \in \mathcal{P}(\Hi)$ and $B \in \mathcal{P}_{0}(\Hi)$, see \eqref{eq:gamma-per-operators}. Then, 
\begin{equation*}
\tau (A B) = \lim _{L \to \infty} |C_{L}|^{-1} \Tr_{\mathcal{H}} \big(  A_{L}  B \big) 
\end{equation*}
\end{lemma}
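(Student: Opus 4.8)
\textbf{Proof strategy for Lemma \ref{lemma-move-local}.}

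The plan is to show that the difference between $\tau(AB)$ (which localises the product $AB$ entirely within $C_L$) and $|C_L|^{-1}\Tr_{\mathcal{H}}(A_L B)$ (which localises only $A$) is a boundary term that vanishes in the limit $L \to \infty$. The key point is that $B \in \mathcal{P}_0(\Hi)$ is short-range, hence has exponentially decaying matrix elements by \eqref{eq: exp-decay}, so the operator $B$ cannot move mass very far; the portions of $A_L B$ and $A_L B A_L$ that differ are supported near the boundary $\partial C_L$, which has codimension one relative to the bulk.

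First I would write, using Lemma \ref{lem:tau}\ref{it:tau1} and the cyclicity of the trace (Lemma \ref{lem:tau}\ref{it:tau5}) together with $AB \in \mathcal{P}(\Hi)$,
\begin{equation*}
\tau(AB) = \lim_{L \to \infty} |C_L|^{-1} \Tr_{\mathcal{H}}\big( \chi_L A B \chi_L \big) = \lim_{L \to \infty} |C_L|^{-1} \Tr_{\mathcal{H}}\big( \chi_L A \chi_L B \chi_L \big) + \big( \text{boundary correction} \big),
\end{equation*}
where the boundary correction is $|C_L|^{-1}\Tr_{\mathcal{H}}\big( \chi_L A (\Id - \chi_L) B \chi_L\big)$. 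Then I would compare $|C_L|^{-1}\Tr(\chi_L A \chi_L B \chi_L)$ with $|C_L|^{-1}\Tr(\chi_L A \chi_L B) = |C_L|^{-1}\Tr(A_L B)$, the difference being $-|C_L|^{-1}\Tr\big(\chi_L A \chi_L B (\Id - \chi_L)\big)$. So it remains to bound two terms of the schematic form $|C_L|^{-1}\Tr\big(\chi_L A \chi_L B (\Id-\chi_L)\big)$ and $|C_L|^{-1}\Tr\big(\chi_L A (\Id-\chi_L) B \chi_L\big)$. For the second one, introduce a collar neighbourhood (an annulus) $\widetilde{C}_L \subset C_L$ of width $\sim \sqrt{L}$ (see Figure \ref{fig:annulus}): split $\chi_L = \chi_{\widetilde{C}_L} + \chi_{C_L \setminus \widetilde{C}_L}$. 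On the inner part $C_L \setminus \widetilde{C}_L$, the operator $(\Id - \chi_L) B \chi_{C_L \setminus \widetilde{C}_L}$ connects sites at distance $\geq \sqrt{L}/2$ (roughly), so by the exponential decay \eqref{eq: exp-decay} its trace norm is bounded by $|C_L| \cdot C e^{-\sqrt{L}/(2\xi)} = o(|C_L|)$; on the collar $\widetilde{C}_L$ itself, one uses $|\Tr(\chi_{\widetilde{C}_L} A (\Id-\chi_L) B \chi_L)| \leq \|A\|\,\|B\|\,\mathrm{rank}(\chi_{\widetilde{C}_L}) \lesssim \|A\|\|B\| \cdot L^{3/2}$, which is again $o(L^2) = o(|C_L|)$. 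The first term is handled identically by cyclicity.

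The main obstacle — and the only genuinely delicate point — is the interchange of trace-cyclicity with the cutoffs: $\Tr(\chi_L A B \chi_L) = \Tr(\chi_L^2 A B) \neq \Tr(A \chi_L^2 B)$ in general, so one must be careful to only use cyclicity on genuinely trace-class operators and to always keep at least one finite-rank cutoff present. The clean way is to observe that $\chi_L A B \chi_L$, $\chi_L A \chi_L B \chi_L$, $\chi_L A \chi_L B$ are all trace class (each contains a finite-rank factor, since $\chi_L$ is finite rank and $A, B$ are bounded), write every difference as a single trace of a trace-class operator, and estimate its trace norm directly via the Schur--Holmgren-type bound \eqref{eqn:SH-bound} combined with the exponential decay of $B$. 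With the collar of width $\sqrt{L}$, all error terms are $o(L^2)$, and since $|C_L| = |C_1| L^2$ the corrections vanish after division by $|C_L|$, yielding the claim.
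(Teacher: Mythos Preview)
Your proposal is correct and follows essentially the same route as the paper: write $\Tr(\chi_L A B \chi_L)=\Tr(\chi_L A\chi_L B\chi_L)+\Tr(\chi_L A\chi_L^c B\chi_L)$ and kill the second piece by an annulus argument exploiting the exponential decay of $B$. The only redundancy is your second comparison step: since $\chi_L$ is a projection, $\Tr(\chi_L A\chi_L B\chi_L)=\Tr(A_L B)$ directly by cyclicity, so your term $\Tr(\chi_L A\chi_L B(\Id-\chi_L))$ is identically zero and need not be bounded; the paper uses a $\log L$ collar instead of $\sqrt{L}$, but either choice works.
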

\begin{proof}
Letting $C^{c}_{L}:=\Cr \setminus C_{L}$ and $\chi_{L}^{c} :=\Id_{ C_{L}^{c}}$, we have the identity
\begin{equation*}
\begin{split}
 \Tr_{\mathcal{H}} \big( \chi_{L} A  B \chi_{L} \big) = \Tr_{\mathcal{H}} \big( \chi_{L} A \chi_{L} B \chi_{L} \big)  + \Tr_{\mathcal{H}} \big( \chi_{L} A \chi^{c}_{L} B \chi_{L} \big) \;.
\end{split}
\end{equation*}
We need to prove that the second piece gives a vanishing contribution when divided by $|C_{L}|$ as $L \to \infty$. To this end, we let $\widetilde{C}_{L}:= \{ x \in C_{L}  \, | \, \mathrm{dist}(x, C_{L}^{c}) \leq \log L \}$, where $\mathrm{dist}$ is in the sense of $|\, \cdot \, |_1$, be the internal annulus of thickness $L \log L$. Then,
\begin{equation*}
\begin{split}
\big|\Tr_{\mathcal{H}} \big( \chi_{L} A \chi^{c}_{L} B \chi_{L} \big)  \big| 
& \leq \| A\| \Big(  \sum_{y \in C_{L}^{c}, x \in \widetilde{C}_{L}} |B_{y,x} | + \sum_{y \in C_{L}^{c}, x \in C_{L} \setminus \widetilde{C}_{L}} |B_{y,x} |\Big) \;.
\end{split}
\end{equation*}
In the first term, we are summing over $x$ belonging to a region of area $L \log L$, thus
\begin{equation*}
\sum_{y \in C_{L}^{c}, x \in \widetilde{C}_{L}} |B_{y,x} | \lesssim (L \log L) \sup_{x \in \mathbb{Z}^{2}} \sum_{y \in \mathbb{Z}^{2}}|B_{y,x}| \lesssim L \log L \;,
\end{equation*}
where in the last step we used that $B$ is short-range, see \eqref{eq: exp-decay}.
In the other region, we note that $B_{y,x}$ is small because the points $y$ and $x$ are at distance at least $\log L$, thus for $\xi '>\xi$, with $\xi$ being the constant in \eqref{eq: exp-decay} for the operator $B$, we have 
\begin{equation*}
\begin{split}
\sum_{y \in C_{L}^{c}, x \in C_{L} \setminus \widetilde{C}_{L}} |B_{y,x} | 
&
\leq \left( \sup _{y \in C_{L}^{c}, x \in C_{L} \setminus \widetilde{C}_{L}} \ee^{- |x-y|_{1}/\xi'} \right)
\sum_{y \in C_{L}^{c}, x \in C_{L} \setminus \widetilde{C}_{L}} \ee^{|x-y|_{1}/\xi'} |B_{y,x} | 
\\
& \leq \left( \sup _{y \in C_{L}^{c}, x \in C_{L} \setminus \widetilde{C}_{L}} \ee^{- |x-y|_{1}/\xi'} \right)
L^{2} \sup_{x\in\mathbb{Z}^{2}}\sum_{y \in\mathbb{Z}^{2}} \ee^{|x-y|_{1}/\xi'} |B_{y,x} | 
\\
& \lesssim L^{2- 1/\xi'} \;,
\end{split}
\end{equation*}
where we used that $\sup _{y \in C_{L}^{c}, x \in C_{L} \setminus \widetilde{C}_{L}} \ee^{- |x-y|_{1}/\xi'} \leq L^{-1/\xi'}$. Since $|C_{L}| \sim L^{2}$, this concludes the proof.
\end{proof}
A similar result involving commutators holds true.
\begin{corollary}
Let $A \in \mathcal{P}(\Hi)$ and $B,C \in \mathcal{P}_{0}(\Hi)$. Then,
\begin{equation}\label{eq:commutator-trace}
\tau ([A,B]C) = \lim _{L \to \infty} |C_{L}|^{-1} \Tr_{\mathcal{H}} ([A_{L},B] C) \;.
\end{equation}
Furthermore, if $X$ is the position operator, we have
\begin{equation}\label{eq:commutator-X-trace}
\tau ([B,X]C) = \lim _{L \to \infty} |C_{L}|^{-1} \Tr_{\mathcal{H}} ([B_{L},X] C) \;.
\end{equation}
\end{corollary}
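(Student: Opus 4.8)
The plan is to reduce both identities to Lemma~\ref{lemma-move-local}, which gives $\tau(AB)=\lim_{L\to\infty}|C_L|^{-1}\Tr_{\mathcal{H}}(A_L B)$ for $A\in\mathcal{P}(\Hi)$ and $B\in\mathcal{P}_0(\Hi)$, together with the cyclicity of $\tau$ on $\mathcal{P}(\Hi)$ (Lemma~\ref{lem:tau}\ref{it:tau5}) and the fact that a product of short-range $\Gamma$-periodic operators is again short-range $\Gamma$-periodic.

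For \eqref{eq:commutator-trace} I would first observe that $A_L=\chi_L A\chi_L$ is finite rank (as $\chi_L$ has finite rank), so $[A_L,B]C$ is trace class and the ordinary trace is cyclic on the relevant products; expanding the commutator then gives $\Tr_{\mathcal{H}}\big([A_L,B]C\big)=\Tr_{\mathcal{H}}\big(A_L\,BC\big)-\Tr_{\mathcal{H}}\big(A_L\,CB\big)$. Since $BC,CB\in\mathcal{P}_0(\Hi)$, Lemma~\ref{lemma-move-local} applies to $\tau(A\,BC)$ and to $\tau(A\,CB)$ separately, so both rescaled traces converge and $\lim_{L\to\infty}|C_L|^{-1}\Tr_{\mathcal{H}}\big([A_L,B]C\big)=\tau(ABC)-\tau(ACB)$. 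It remains to recognise the right-hand side as $\tau([A,B]C)$: applying the cyclicity of $\tau$ to $AC\in\mathcal{P}(\Hi)$ and $B\in\mathcal{P}_0(\Hi)\subset\mathcal{P}(\Hi)$ yields $\tau(ACB)=\tau(BAC)$, and linearity of $\tau$ on $\mathcal{P}(\Hi)$ turns $\tau(ABC)-\tau(BAC)$ into $\tau\big((AB-BA)C\big)=\tau([A,B]C)$. Note that $[A,B]C\in\mathcal{P}(\Hi)$, so the left-hand side is well-defined throughout.

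For \eqref{eq:commutator-X-trace} the key point is that $\chi_L$ and the position operator $X$ are both multiplication operators, hence commute, so that on compactly supported vectors
\[
[B_L,X]=\chi_L B\chi_L X-X\chi_L B\chi_L=\chi_L\big(BX-XB\big)\chi_L=\chi_L[B,X]\chi_L=\big([B,X]\big)_L ,
\]
all intermediate expressions being bounded (indeed finite rank) because $[B,X]$ extends to an operator in $\mathcal{P}_0(\Hi)$ by Lemma~\ref{lemma:shortrange}. Since $[B,X]\in\mathcal{P}_0(\Hi)\subset\mathcal{P}(\Hi)$ and $C\in\mathcal{P}_0(\Hi)$, Lemma~\ref{lemma-move-local} applied to $[B,X]$ and $C$ gives at once
\[
\tau([B,X]C)=\lim_{L\to\infty}|C_L|^{-1}\Tr_{\mathcal{H}}\big(\big([B,X]\big)_L\,C\big)=\lim_{L\to\infty}|C_L|^{-1}\Tr_{\mathcal{H}}\big([B_L,X]\,C\big),
\]
which is the claim.

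Beyond invoking Lemma~\ref{lemma-move-local}, the argument is pure bookkeeping; the only step that deserves a little care is the treatment of the unbounded operator $X$ in \eqref{eq:commutator-X-trace} --- namely checking that $[B_L,X]$ is a genuine (finite-rank) bounded operator and that the identity $[B_L,X]=\chi_L[B,X]\chi_L$ is legitimate --- which is exactly where one uses that $\chi_L$ commutes with $X$ and that $[B,X]$ is bounded. I do not anticipate any real obstacle.
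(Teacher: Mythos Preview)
Your proof is correct. For \eqref{eq:commutator-X-trace} it coincides with the paper's argument. For \eqref{eq:commutator-trace} both proofs reduce to Lemma~\ref{lemma-move-local}, but the execution differs: the paper handles the $BAC$ term by writing $\Tr_{\mathcal{H}}(\chi_L BA\chi_L C)=\Tr_{\mathcal{H}}(BA_L C)+\Tr_{\mathcal{H}}(\chi_L B\chi_L^c A\chi_L C)-\Tr_{\mathcal{H}}(\chi_L^c B\chi_L A\chi_L C)$ and bounding the last two boundary pieces by $O(L^{2-1/\xi})$, i.e., it re-runs the annulus estimate from the proof of Lemma~\ref{lemma-move-local}. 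You instead use cyclicity of the ordinary trace on finite-rank operators at the outset to rewrite $\Tr_{\mathcal{H}}(BA_L C)=\Tr_{\mathcal{H}}(A_L CB)$, apply Lemma~\ref{lemma-move-local} separately to $\tau(A\cdot BC)$ and $\tau(A\cdot CB)$, and close with cyclicity of $\tau$. This is a slightly more streamlined variant: all the analytic work is delegated to a black-box use of Lemma~\ref{lemma-move-local}, and what remains is pure algebra, at the modest price of invoking Lemma~\ref{lem:tau}\ref{it:tau5}, which the paper's proof does not need here.
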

\begin{proof}
To prove \eqref{eq:commutator-trace}, we can open the commutator $[A,B]$ and consider the terms $ABC$ and $BAC$ separately. Since $BC \in \mathcal{P}_{0}(\Hi)$, by Lemma \ref{lemma-move-local} we have
\begin{equation*}
\tau (ABC) = \lim_{L\to \infty} |C_{L}|^{-1} \Tr_{\mathcal{H}}(A_{L} BC) \;.
\end{equation*}
For the other term, we obtain instead
\begin{equation*}
\tau (BAC) = \lim_{L\to \infty} |C_{L}|^{-1} \Tr_{\mathcal{H}}(\chi_{L}BA \chi_{L} C) \;.
\end{equation*}
We write
\begin{equation*}
\Tr_{\mathcal{H}}(\chi_{L}BA \chi_{L} C) = \Tr_{\mathcal{H}}(B \chi_{L}A \chi_{L} C) +\Tr_{\mathcal{H}}(\chi_{L}B \chi^{c}_{L}A \chi_{L} C) -\Tr_{\mathcal{H}}(\chi_{L}^{c}B \chi_{L}A \chi_{L} C) \;.
\end{equation*}
Following the proof of Lemma \ref{lemma-move-local}, one can see that the second and third terms are $O(L^{2-1/\xi})$, $\xi$ being the constant associated with $B$ according to \eqref{eq: exp-decay}, so that the first claim is proven. Concerning \eqref{eq:commutator-X-trace}, since by Lemma \ref{lemma:shortrange} $[B,X] \in \mathcal{P}(\Hi)$ we can apply Lemma \ref{lemma-move-local} and use that $\chi_{L}$ commutes with $X$.
\end{proof}
\begin{proof}[Proof of Lemma \ref{lemma:grand-canonical}]
By \eqref{eq:op-second-quantization} we have
\begin{equation}\label{eq:one-body-expectation}
\begin{split}
\omega_{\beta}^{H,\mu} \big( \dd \Gamma(A_{L}) \big)  & =\omega_{\beta}^{H,\mu} \left( 
\sum_{i,j} \langle f_{i}, A_{L} f_{j} \rangle a^{*}(f_{i}) a(f_{j}) \right)
\\
& = 
\sum_{i,j} \langle f_{i}, A_{L} f_{j} \rangle  \omega_{\beta}^{H,\mu} \left( a^{*}(f_{i}) a(f_{j}) \right)
\\
& = \Tr_{\mathcal{H}} \big( A_{L} \gamma_{\beta}^{H,\mu}  \big)
\end{split}
\end{equation}
where $(f_{i})_{i}$ is any basis of localised functions, so that the sums above are all finite, and where $\gamma_{\beta}^{H,\mu} $ was introduced in \eqref{eq: Fermi-Dirac-dist}. Since $(H,\mu)$ is an insulator, we have $\gamma^{H,\mu}_{\beta} \to P$ in the operator topology as $\beta \to \infty$, and that $P \in \mathcal{P}_{0}(\Hi)$ (see Lemma \ref{lem:Pi0shortrange}). Therefore, since $A_{L}$ is trace-class we have
\begin{equation}\label{eq:lim-beta}
\lim _{\beta \to \infty} \omega_{\beta}^{H,\mu} \big( \dd \Gamma(A_{L}) \big)  = \Tr_{\mathcal{H}} \big(  A_{L}  P\big)
\end{equation}
and we obtain \eqref{eq:grand-canonical} by Lemma \ref{lemma-move-local}. 

Regarding the commutator, one uses \eqref{second-quantization-commutator} and note that since $B \in \mathcal{P}_{0}(\Hi)$ and since $a(f_{i})$ are uniformly bounded in $i$
\begin{equation*}
[\dd \Gamma(A_{L}), \dd \Gamma (B)] = \dd \Gamma([A_{L}, B]) = \lim _{R \to \infty} \dd \Gamma([A_{L}, B_{R}])
\end{equation*}
where the limit is in the operator topology. Therefore, proceeding as in \eqref{eq:one-body-expectation}, we get
\begin{equation*}
\begin{split}
\omega^{H,\mu}_{\beta} ([\dd \Gamma(A_{L}), \dd \Gamma (B)]) & = \lim_{R \to \infty} \omega^{H,\mu}_{\beta} (\dd \Gamma([A_{L}, B_{R}])
\\
& = \lim _{R \to \infty}\Tr_{\mathcal{H}} \big([A_{L},B_{R}] \gamma^{H,\mu}_{\beta} \big) .
\end{split}
\end{equation*}
To take the limit in the last line, we note that both $A_{L}B_{R}$ and $B_{R}A_{L}$ converge in the trace-class topology to $A_{L}B$ and $BA_{L}$ respectively. To see this, note that for $R$ large enough $(B- B_{R})A_{L} = \chi_{R}^{c}B \chi_{L} A_{L}$, so that by H\"older's inequality for Schatten norms $\|(B-B_{R})A_{L} \|_{\mathrm{tr}} \leq \|\chi_{R}^{c}B \chi_{L} \|  \|A_{L} \|_{\mathrm{tr}}$ with $\|\chi_{R}^{c}B \chi_{L} \| \to 0$ as $R \to \infty$ since $B \in \mathcal{P}_{0}(\Hi)$, where $\|\, \cdot \, \|_{\mathrm{tr}}$ denotes the trace norm. Therefore,
\begin{equation*}
\lim _{R \to \infty}\Tr_{\mathcal{H}} \big([A_{L},B_{R}] \gamma^{H,\mu}_{\beta} \big)  = \Tr_{\mathcal{H}} \big([A_{L},B] \gamma^{H,\mu}_{\beta} \big) 
\end{equation*}
since $\gamma^{H,\mu}_{\beta}$ is bounded. Then, we can take the $\beta \to \infty$ as in \eqref{eq:lim-beta} since $[A_{L},B]$ is trace class, and conclude by \eqref{eq:commutator-trace}.

Finally, to prove the last claim we use that $\chi_{L}$ is a projection and that it commutes with $X$, so that $[B_{L}, X] = [B_{L}, X_{L}]$. But $X_{L}$ is bounded hence by \eqref{second-quantization-commutator}
\begin{equation*}
\dd \Gamma([B_{L},X]) = [\dd \Gamma(B_{L}), \dd \Gamma(X_{L})] = [\dd \Gamma(B_{L}), \dd \Gamma(X)] \;,
\end{equation*}
where we also used that $X = X_{L} + \chi^{c}_{L} X$ and that $[\dd \Gamma(B_{L}), \dd \Gamma (\chi^{c}_{L} X)]= \dd\Gamma\big( [B_{L},\chi^{c}_{L} X] \big) = 0$. The conclusion follows by arguing as before, and by using \eqref {eq:commutator-X-trace}.
\end{proof}
\begin{remark}
Note that because we work directly in infinite volume, the Fermi--Dirac operator $\gamma^{H,\mu}_{\beta}$ is not trace class, but this difficulty is overcome by the fact that we localise observables, so that the latter are trace class.
\end{remark}

\subsection{Wick's rotation of the Kubo formula}

To begin, we provide a Kubo formula for the spin conductivity in $ \mathcal{P}_{0}^{\hexagon}(\Hi)$. We henceforth denote $J:= \ii [H,X]$ the charge current operator associated with the Hamiltonian $H$.
\begin{proposition}[Kubo formula] \label{kubo-formula-spin}
Let $(H,\mu)$ be an insulator with $H \in \mathcal{P}_{0}^{\hexagon}(\Hi)$ and let $P=P(H,\mu)$ be the associated Fermi projector. Let $J_{i}^{\mathrm{conv}}$ be the conventional spin current and let $J(s):= \ee^{\ii s H}J \ee^{-\ii s H}$ be the time-evolved charge current. Then,
\begin{equation}\label{Kubo-formula}
\sigma^{\mathrm{s}}_{ij}=\lim _{\eta \to 0^{+}} \frac{\ii}{\eta} \bigg( \int_{-\infty}^{0}\ee^{\eta s} \tau \big( [J_{i}^{\mathrm{conv}}, J_{j}(s)]  P \big)\dd s - \tau\big([J_{i}^{\mathrm{conv}},X_{j} ]P \big) \bigg)\;.
\end{equation}
\end{proposition}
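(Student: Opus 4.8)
The plan is to derive \eqref{Kubo-formula} from the defining formula $\sigma^{\mathrm{s}}_{ij}=\lim_{\eta\to0^{+}}\re\,\tau\big(J_i^{\mathrm{conv}}L_{\eta,j}\big)$ of Definition \ref{def-spin-cond} (recalling from Proposition \ref{thm:main} that in $\mathrm{AII}^{\hexagon}$ the proper and the conventional spin conductivities agree, so that this common value is $\sigma^{\mathrm{s}}_{ij}$), by rewriting the linear response operator $L_{\eta,j}$ of \eqref{eqn:kubo} as a time integral of the time-evolved charge current. Set $J_j:=\iu[H,X_j]$, $J_j(s):=\ee^{\iu sH}J_j\ee^{-\iu sH}$ and $X_j(s):=\ee^{\iu sH}X_j\ee^{-\iu sH}$. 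Since $H$ is short-range, $J_j\in\Ss(\Hi)$ and, by Lemma \ref{lemma:shortrange}, Lemma \ref{lem:Pi0shortrange} and Corollary \ref{cor:Pi_1}, the operators $[X_j,P]$, $J_i^{\mathrm{conv}}$, $[J_i^{\mathrm{conv}},X_j]$, $J_j(s)$ and $L_{\eta,j}$ are bounded and $\Gamma$-periodic, so $\tau$ of their products is well-defined. The first step is the \emph{current identity}
\begin{equation*}
\ee^{\iu sH}[X_j,P]\ee^{-\iu sH}=[X_j,P]+\int_{0}^{s}\ee^{\iu uH}[J_j,P]\ee^{-\iu uH}\,\dd u,\qquad s\le0 ,
\end{equation*}
which follows from $\tfrac{\dd}{\dd u}X_j(u)=J_j(u)$, from $[H,P]=0$, and from the Jacobi identity $\iu[H,[X_j,P]]=[\iu[H,X_j],P]=[J_j,P]$. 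To make the differentiation-under-conjugation (and the Fubini step below) rigorous despite the unbounded, non-periodic $X_j$, I would pass to the Bloch--Floquet fibres \eqref{eq:BF-fibration}: there $H(k)$ is a smooth $\C^{4\times4}$-valued function, $[X_j,P]$ fibres by \eqref{eqn:dimpos} to $\iu\partial_{k_j}P(k)$ plus a bounded $k$-independent term, and the identity reduces to elementary matrix calculus, fibre by fibre.

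Substituting the current identity into $L_{\eta,j}=\iu\int_{-\infty}^{0}\ee^{\eta s}\,\ee^{\iu sH}[X_j,P]\ee^{-\iu sH}\,\dd s$, computing $\iu\int_{-\infty}^{0}\ee^{\eta s}\,\dd s=\iu/\eta$, and exchanging the order of integration in the remaining double integral (immediate on fibres, the integrand being uniformly bounded), I obtain the key operator identity
\begin{equation*}
L_{\eta,j}=\frac{\iu}{\eta}\Big([X_j,P]-\int_{-\infty}^{0}\ee^{\eta s}\,[J_j(s),P]\,\dd s\Big),\qquad [J_j(s),P]=\ee^{\iu sH}[J_j,P]\ee^{-\iu sH} .
\end{equation*}
Pairing with $J_i^{\mathrm{conv}}$ and using cyclicity of the trace per unit volume (Lemma \ref{lem:tau}\ref{it:tau5}) together with the identity $\tau\big(A[B,P]\big)=\tau\big([A,B]P\big)$ — valid also for $B=X_j$ since $J_i^{\mathrm{conv}}[X_j,P]-[J_i^{\mathrm{conv}},X_j]P=[X_j,J_i^{\mathrm{conv}}P]$ with $J_i^{\mathrm{conv}}P\in\mathcal{P}_{0}(\Hi)$ and $\tau([X_j,J_i^{\mathrm{conv}}P])=0$ by the argument underlying Lemma \ref{lem:tau}\ref{it:tau6} (the diagonal entries of a commutator have vanishing $\Tr_{\C^4}$) — turns $\tau(J_i^{\mathrm{conv}}L_{\eta,j})$ into $\tfrac{\iu}{\eta}$ times the bracket in \eqref{Kubo-formula} (up to an overall sign fixed by the conventions). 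Since $[X_j,P]$ is anti-self-adjoint while $J_i^{\mathrm{conv}}$ and $L_{\eta,j}$ are self-adjoint, $\tau(J_i^{\mathrm{conv}}L_{\eta,j})\in\R$, so the $\re$ in Definition \ref{def-spin-cond} is redundant here. Finally, taking $\eta\to0^{+}$ is legitimate because $P_1=\lim_{\eta\to0^{+}}L_{\eta,j}$ exists in $\mathcal{P}(\Hi)$ by Corollary \ref{cor:Pi_1} and $\tau$ is continuous (Lemma \ref{lem:tau}\ref{it:tau1}); this yields \eqref{Kubo-formula}.

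The main obstacle is the unbounded, non-$\Gamma$-periodic position operator $X_j$: one has to justify rigorously the fundamental-theorem-of-calculus identity, the Fubini interchange and the cyclic rearrangement $\tau(A[X_j,P])=\tau([A,X_j]P)$ for objects built from $X_j$. Each step is formally immediate, but a clean treatment needs either Combes--Thomas-type estimates (to see that $\ee^{\iu sH}$ maps a core of $\Do(X_j)$ into itself and that $[X_j,P]$, $[J_j(s),P]$ extend to bounded operators) or the reduction to the Bloch--Floquet fibres sketched above, where only smooth bounded matrix-valued functions occur; in the fibre picture one must carry along the sublattice correction of \eqref{eqn:dimpos}, which cancels at the end because it enters only through commutators controlled by the Jacobi identity. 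A secondary subtlety is the $\eta\to0^{+}$ limit: the two terms in the bracket of \eqref{Kubo-formula} each behave like $1/\eta$ and only their difference converges; this cancellation is exactly what Corollary \ref{cor:Pi_1} provides — equivalently, by Lemma \ref{lem:I}, since $[J_j,P]=\iu\mathcal{L}_H([X_j,P])$ is off-diagonal one has $\int_{-\infty}^{0}\ee^{\eta s}[J_j(s),P]\,\dd s\to\mathcal{L}_H^{-1}([J_j,P])=\iu[X_j,P]$ as $\eta\to0^{+}$.
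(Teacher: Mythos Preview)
Your approach is essentially the paper's: both rely on the fundamental-theorem-of-calculus identity $X_j(s)=X_j+\int_0^s J_j(u)\,\dd u$ (equivalently your ``current identity'' for $[X_j(s),P]$), a Fubini swap on the resulting double integral, and the key vanishing $\tau\big([X_j,\,J_i^{\mathrm{conv}}P]\big)=0$; you merely reorder the steps, deriving an operator identity for $L_{\eta,j}$ first and pairing with $J_i^{\mathrm{conv}}$ afterwards, whereas the paper pairs first and manipulates at the level of $\tau$. One small point: the vanishing $\tau\big([X_j,\,J_i^{\mathrm{conv}}P]\big)=0$ is not really ``the argument underlying Lemma~\ref{lem:tau}\ref{it:tau6}'' about $\Tr_{\C^4}$ of diagonal entries---it follows more directly (and this is exactly what the paper does) from the fact that $\chi_L$ commutes with $X_j$, so that $\Tr\big(\chi_L[X_j,A]\chi_L\big)=\Tr\big([\chi_L X_j\chi_L,\,\chi_L A\chi_L]\big)=0$ for every $L$, the bracket being a commutator of a bounded and a trace-class operator.
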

\begin{proof}
We start from Definition \ref{def-spin-cond} with $\sharp = \mathrm{conv}$. 
We write $X(t):= \ee^{\ii t H} X \ee^{-\ii t H}$. We note that $J_{i}^{\mathrm{conv}} [X_{j}(s),P] =J_{i}^{\mathrm{conv}} \ee^{\ii s H} [X_{j},P] \ee^{-\ii s H} \in \mathcal{P}(\Hi)$ by Lemma \ref{lem:Pi0shortrange} and \ref{lemma:shortrange}, uniformly in $s$, thus by Lemma \ref{lem:tau}\ref{it:tau4}
\begin{equation*}
\begin{split}
\sigma^{\mathrm{s}}_{ij}  =\re\lim_{\eta\to 0^+}\tau(  J_{i}^{\mathrm{conv}} \, L_{\eta,j})
 = \ii \lim _{\eta \to 0^{+}} \int_{-\infty}^{0}  \ee^{\eta s} \tau (J_{i}^{\mathrm{conv}}[X_{j}(s),P]) \dd s \;.
\end{split}
\end{equation*}
By the Leibniz rule, we have
\begin{equation*}
\begin{split}
\tau \big(J_{i}^{\mathrm{conv}} [X_{j}(s),P ]  \big) & + \tau \big( [J_{i}^{\mathrm{conv}},X_{j}(s)]P  \big)  
= \tau([X_{j}(s),J_{i}^{\mathrm{conv}} P]) \;.
\end{split}
\end{equation*}
We now prove that the r.h.s.~of the latter equation is identically null. Since $J(s)= \ii [H,X(s)] = \frac{\dd}{\dd t} X(s)$, we can write
\begin{equation}\label{eq:integration-X-null-term}
\tau([X_{j}(s),J_{i}^{\mathrm{conv}} P]) = \tau \big( [X_{j},J_{i}^{\mathrm{conv}} P] \big)  -  \tau \Big( \int_{s}^{0} [ J_{j}(s_{1}), J_{i}^{\mathrm{conv}} P ]\dd s_{1} \Big) \;.
\end{equation}
Because $\chi_{L}$ commutes with $X$ and $\chi_{L} X_{j} \chi_{L}$ is bounded we have
\begin{equation*}
 \Tr ( \chi_{L} [X,J_{i}^{\mathrm{conv}}P]\chi_{L}) = \Tr ( \chi_{L}X\chi_{L}J_{i}^{\mathrm{conv}}P\chi_{L} - \chi_{L}J_{i}^{\mathrm{conv}}P\chi_{L} X \chi_{L}) = 0 \;.
\end{equation*}
Concerning the second term in \eqref{eq:integration-X-null-term}, note that $J_{j}(s_{1}), J_{i}^{\mathrm{conv}}, P \in \mathcal{P}(\Hi)$ and that $J_{j}(s_{1})$ is uniformly bounded in $s_{1}$. Thus, we exchange $\tau(\cdot)$ with integration by Lemma \ref{lem:tau}\ref{it:tau4} and since by Lemma \ref{lem:tau}\ref{it:tau5} $\tau([ J_{j}(s_{1}), J_{i}^{\mathrm{conv}} P ]) = 0 $ we conclude that \eqref{eq:integration-X-null-term} is in fact identically null.
Therefore,
\begin{equation*}
\sigma^{\mathrm{s}}_{ij} = -\ii \lim _{\eta \to 0^{+}} \int_{-\infty}^{0}  \ee^{\eta s} \tau \big( [J_{i}^{\mathrm{conv}},X_{j}(s)]P \big) \dd s \;.
\end{equation*}
To obtain the two terms in \eqref{Kubo-formula}, we write $X_{j}(s)$ in terms of the charge current once more. The term with $X_{j}(0)$ can be integrated in $\dd s$ directly whereas we write $\int_{-\infty}^{0} \dd s \int_{s}^{0} \dd s_{1} = \int_{-\infty}^{0} \dd s_{1} \int_{-\infty}^{s_{1}} \dd s$ to integrate the other term.
\end{proof}

By exploiting the KMS property \eqref{eq:KMS-property}, we shall now perform a Wick rotation of formula \ref{Kubo-formula} in the grand canonical formalism. This will be presented in Proposition \ref{prop:imaginary-kubo-appendix}.
Let us begin with introducing the conventional spin current and the charge current in the second quantisation formalism. We write, for $t \in \mathbb{R}$,
\begin{equation*}
\begin{split}
\mathcal{J}_{i}(t)  :=\alpha_{t}\Big( \dd \Gamma(J_{i}) \Big) \;,
\qquad
\mathcal{J}^{\mathrm{spin}}_{L;i}  :=\dd \Gamma \big((J_{i}^{\mathrm{conv}})_{L} \big) \;;
\end{split}
\end{equation*}
note that we prefer to avoid localizing the dynamics and, as it turns out, it suffices to localize the spin current. Then, we introduce the spin-current-charge-current correlation function of the grand canonical state $K^{\beta,L}_{ij}:\mathbb{R} \to \mathbb{C}$,
\begin{equation*}
K^{\beta,L}_{ij}(t):=\omega^{H,\mu}_{\beta} \big(\mathcal{J}^{\mathrm{spin}}_{L;i} \mathcal{J}_{j}(t)  \big)\;,
\end{equation*}
where for brevity we dropped its dependence on $(H,\mu)$.
\begin{lemma}\label{lemma-KMS-J}
$K^{\beta, L}_{ij}$ has analytic extension to $\mathbb{C}$ and
\begin{equation}\label{eq:KMS}
\omega^{H,\mu}_{\beta}\big( [\mathcal{J}^{\mathrm{spin}}_{L;i}, \mathcal{J}_{j}(t)] \big) = 
K^{\beta,L}_{ij}(t) - K^{\beta,L}_{ij}(t+\ii\beta) \;,  \qquad \forall t \in \mathbb{R} \;.
\end{equation}
\end{lemma}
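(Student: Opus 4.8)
The plan is to prove Lemma \ref{lemma-KMS-J} by reducing the identity \eqref{eq:KMS} to the abstract KMS property \eqref{eq:KMS-property} of the grand canonical state $\omega^{H,\mu}_{\beta}$, which is available by Lemma \ref{prop-KMS-state}. First I would verify that the correlation function $K^{\beta,L}_{ij}$ extends to an entire function of the complex variable $t$. The point is that $\mathcal{J}^{\mathrm{spin}}_{L;i}=\dd\Gamma\big((J^{\mathrm{conv}}_{i})_L\big)$ and $\mathcal{J}_{j}(t)=\alpha_t\big(\dd\Gamma(J_j)\big)$ are both polynomials in the $a(f),a^*(f)$ (indeed, since $H$ is bounded, $\alpha_t$ acts on the one-particle operator $J_j$ as $\ee^{\ii t(H-\mu\Id)}J_j\ee^{-\ii t(H-\mu\Id)}$, which is entire in $t$ as a $\mathcal{B}(\Hi)$-valued function by the usual power-series expansion of the bounded-operator exponential). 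Hence $t\mapsto\alpha_t\big(\dd\Gamma(J_j)\big)$ is an entire $\mathrm{CAR}(\Hi)$-valued map, $\mathcal{J}_j(t)$ lies in the algebra of entire analytic elements for $(\alpha_t)_{t\in\mathbb{R}}$, and since $\omega^{H,\mu}_\beta$ is a (norm-continuous) state, $K^{\beta,L}_{ij}(t)=\omega^{H,\mu}_\beta\big(\mathcal{J}^{\mathrm{spin}}_{L;i}\mathcal{J}_j(t)\big)$ is entire in $t$, with the required analytic continuation $K^{\beta,L}_{ij}(t+\ii\beta)=\omega^{H,\mu}_\beta\big(\mathcal{J}^{\mathrm{spin}}_{L;i}\,\alpha_{t+\ii\beta}(\dd\Gamma(J_j))\big)$.

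Next I would expand the left-hand side of \eqref{eq:KMS}: by linearity,
\begin{equation*}
\omega^{H,\mu}_{\beta}\big([\mathcal{J}^{\mathrm{spin}}_{L;i},\mathcal{J}_j(t)]\big)
=\omega^{H,\mu}_{\beta}\big(\mathcal{J}^{\mathrm{spin}}_{L;i}\mathcal{J}_j(t)\big)
-\omega^{H,\mu}_{\beta}\big(\mathcal{J}_j(t)\,\mathcal{J}^{\mathrm{spin}}_{L;i}\big)
=K^{\beta,L}_{ij}(t)-\omega^{H,\mu}_{\beta}\big(\mathcal{J}_j(t)\,\mathcal{J}^{\mathrm{spin}}_{L;i}\big).
\end{equation*}
It therefore remains to identify the second term with $K^{\beta,L}_{ij}(t+\ii\beta)$. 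Applying the KMS relation \eqref{eq:KMS-property} with the choices $A=\mathcal{J}^{\mathrm{spin}}_{L;i}$ and $B=\mathcal{J}_j(t)$ — both of which are in the polynomial $^*$-algebra generated by $\{a(f)\}$, hence entire analytic, so that \eqref{eq:KMS-property} applies — gives
\begin{equation*}
\omega^{H,\mu}_{\beta}\big(\mathcal{J}_j(t)\,\mathcal{J}^{\mathrm{spin}}_{L;i}\big)
=\omega^{H,\mu}_{\beta}\big(\mathcal{J}^{\mathrm{spin}}_{L;i}\,\alpha_{\ii\beta}(\mathcal{J}_j(t))\big)
=\omega^{H,\mu}_{\beta}\big(\mathcal{J}^{\mathrm{spin}}_{L;i}\,\alpha_{t+\ii\beta}(\dd\Gamma(J_j))\big)
=K^{\beta,L}_{ij}(t+\ii\beta),
\end{equation*}
where in the middle step I used the group law $\alpha_{\ii\beta}\circ\alpha_t=\alpha_{t+\ii\beta}$ together with analyticity to make sense of $\alpha_{\ii\beta}$ on the entire analytic element $\mathcal{J}_j(t)$, and in the last step the definition of the analytic continuation of $K^{\beta,L}_{ij}$ established above. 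Combining the two displays yields \eqref{eq:KMS}.

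The only genuinely delicate point — and thus the step I would treat most carefully — is the bookkeeping of \emph{which} $^*$-subalgebra of entire analytic elements is being used and the compatibility of the two occurrences of $\alpha$ (one inside the definition $\mathcal{J}_j(t)=\alpha_t(\dd\Gamma(J_j))$, one coming from the KMS property $\alpha_{\ii\beta}$). As the Remark after Definition \ref{def: KMS property} explains, because $H$ is bounded the polynomial $^*$-algebra generated by $\{a(f)\mid f\in\Hi\}$ is a norm-dense, $(\alpha_t)$-invariant $^*$-subalgebra consisting entirely of entire analytic elements; both $\mathcal{J}^{\mathrm{spin}}_{L;i}$ and $\mathcal{J}_j(t)$ lie in it, so \eqref{eq:KMS-property} is directly applicable, and no approximation or closure argument is needed. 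One should also note that localizing the spin current to $C_L$ (so that $(J^{\mathrm{conv}}_i)_L$ is trace class) is what guarantees $\mathcal{J}^{\mathrm{spin}}_{L;i}=\dd\Gamma\big((J^{\mathrm{conv}}_i)_L\big)$ is a well-defined element of $\mathrm{CAR}(\Hi)$ — this is precisely why the localization is inserted — whereas the dynamics $\mathcal{J}_j(t)$ need not be localized since it enters only through $\alpha_t$, which is an automorphism of the whole algebra. Once this is in place, the identity \eqref{eq:KMS} is a one-line consequence of \eqref{eq:KMS-property}.
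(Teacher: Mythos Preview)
Your argument has a genuine gap. You assert that $\mathcal{J}_j(t)=\alpha_t\big(\dd\Gamma(J_j)\big)$ lies in the polynomial $^*$-algebra generated by $\{a(f)\}$, and hence that the KMS relation \eqref{eq:KMS-property} applies directly with $B=\mathcal{J}_j(t)$. This is not correct: the charge current $J_j=\ii[H,X_j]\in\mathcal{P}_0(\Hi)$ is bounded and short-range but \emph{not} finite rank (it is periodic), so $\dd\Gamma(J_j)=\sum_{k,l}\langle f_k,J_j f_l\rangle\,a^*(f_k)a(f_l)$ is an infinite series defining an unbounded operator on Fock space. It is neither a polynomial in the $a,a^*$ nor an element of $\mathrm{CAR}(\Hi)$, and the KMS condition in Definition~\ref{def: KMS property} does not apply to it as written. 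Your closing remark --- that only the spin current needs localising because the charge current ``enters only through $\alpha_t$'' --- misidentifies the issue: the obstruction is not the dynamics but the fact that $\dd\Gamma(J_j)$ itself is unbounded before any time evolution is applied.

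The paper avoids this by exploiting that $\omega^{H,\mu}_\beta$ is quasi-free (Definition~\ref{def-quasi-free}): by the Wick rule, $K^{\beta,L}_{ij}(t)$ is expressed as a finite sum of products of two-point functions (finiteness coming from $(J_i^{\mathrm{conv}})_L$ being finite rank), each of which is manifestly entire in $t$ since $H$ is bounded, and for which the KMS identity holds at the level of the two-point function \eqref{eq:Fermi-Dirac-2point}. This sidesteps any need to place $\mathcal{J}_j(t)$ in $\mathrm{CAR}(\Hi)$. Your argument can be repaired along the same lines the paper uses elsewhere (cf.\ the proof of Lemma~\ref{lemma:grand-canonical}): replace $J_j$ by $(J_j)_R$, apply \eqref{eq:KMS-property} to the genuinely polynomial element $\dd\Gamma((J_j)_R)$, and then pass to the limit $R\to\infty$ using the short-range decay of $J_j$ and the trace-class property of $(J_i^{\mathrm{conv}})_L$.
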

\begin{proof}
Since $\omega^{H,\mu}_{\beta}$ is quasi-free $K^{\beta,L}_{ij}$ can be expressed as a sum of products of two-point functions by the Wick rule, compare with Definition \ref{def-quasi-free}. Because $H$ is bounded so is $\exp (\ii t H)$ for any $t \in \mathbb{C}$ and thus the two-point function in \eqref{eq:Fermi-Dirac-2point} has analytic extension. Thus, also $K^{\beta,L}_{ij}$ has an analytic extension and \eqref{eq:KMS} follows by the KMS property \eqref{eq:KMS-property}.
\end{proof}
By performing the Wick rotation, the Kubo formula can be re-written in terms of the imaginary-time-ordered spin-current-charge-current correlation function. 
More precisely, we let
\begin{equation*}%\label{eq:imaginary-time-K}
\mathcal{K}^{\beta,L}_{ij}(t) := \omega^{H,\mu}_{\beta} \big( \mathbf{T} \mathcal{J}^{\mathrm{spin}}_{L;i} \mathcal{J}_{j}(\ii t)  \big)\;, \qquad t \in (0,\beta] \;,
\end{equation*}
where $\mathbf{T}$ denotes the fermionic imaginary-time-ordering, acting on products of the generators of the $\mathrm{CAR}(\mathcal{H})$ evolved by $\alpha_{\ii t}$ as follows
\begin{equation*}
\mathbf{T} \alpha_{\ii t_{1}}\big( a^{\sharp_{1}}(f_{1}) \big) \cdots \alpha_{\ii t_{n}}\big( a^{\sharp_{n}}(f_{n}) \big) = \mathrm{sign}(\pi) \alpha_{\ii t_{\pi(1)}}\big( a^{\sharp_{\pi(1)}}(f_{\pi(1)}) \big) \cdots \alpha_{\ii t_{\pi(n)}}\big( a^{\sharp_{\pi(n)}}(f_{\pi(n)}) \big) \;,
\end{equation*}
where $a^{\sharp}$ is either $a$ or $a^{*}$ and where $\pi$ is the permutation such that $t_{\pi(1)} \leq \dots \leq t_{\pi(n)}$ and such that the operators are ordered normally for equal $t$'s; $\mathbf{T}$ is extended by linearity to polynomials in the $a$'s and $a^{*}$'s. 

For any $\beta >0$ and $\eta \geq 0$ we let
\begin{equation*}%\label{eq:def-eta-beta}
\eta_{\beta} := \min\; \Big\{\frac{2\pi}{\beta} \mathbb{Z} \cap [\eta,\infty) \Big\} \;,
\end{equation*}
and set
\begin{equation*}
\widehat{\mathcal{K}}_{ij}(\eta) :=   \lim _{L \to \infty} \lim _{\beta \to \infty} |C_{L}|^{-1} \int_{0}^{\beta} \ee^{\ii \eta_{\beta} t} \mathcal{K}^{\beta,L}_{ij}(t) \dd t \;.
\end{equation*}
The Wick rotation results in the following representation.
\begin{proposition}[Wick's rotation]\label{prop:imaginary-kubo-appendix}
Under the same assumptions of Proposition \ref{kubo-formula-spin}, we have
\begin{equation*}
\sigma_{ij}^{\mathrm{s}} = - \lim _{\eta \to 0^{+}}  \frac{\widehat{\mathcal{K}}_{ij}(\eta) - \widehat{\mathcal{K}}_{ij}(0)}{\eta} \;.
\end{equation*}
\end{proposition}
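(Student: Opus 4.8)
The plan is to derive Proposition~\ref{prop:imaginary-kubo-appendix} from the real-time Kubo formula of Proposition~\ref{kubo-formula-spin} by passing to the grand canonical formalism and performing a Wick rotation, the essential inputs being Lemmas~\ref{lemma:grand-canonical}, \ref{lemma-KMS-J} and \ref{prop-KMS-state}. Throughout, the spin current is kept localised on $C_{L}$ and $H$ is bounded, so all operators appearing are in $\mathcal{P}(\Hi)$ or $\mathcal{P}_{0}(\Hi)$, and all grand canonical correlators are finite by quasi-freeness.

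First I would rewrite the dynamical term $\int_{-\infty}^{0}\ee^{\eta s}\tau([J_{i}^{\mathrm{conv}},J_{j}(s)]P)\,\dd s$ appearing in \eqref{Kubo-formula}. By the commutator identity \eqref{eq:tau-omega-comm} with $A=J_{i}^{\mathrm{conv}}$ and $B=J_{j}(s)\in\mathcal{P}_{0}(\Hi)$, together with $\dd\Gamma(J_{j}(s))=\alpha_{s}(\dd\Gamma(J_{j}))=\mathcal{J}_{j}(s)$, one has $\tau([J_{i}^{\mathrm{conv}},J_{j}(s)]P)=\lim_{L}\lim_{\beta}|C_{L}|^{-1}\omega^{H,\mu}_{\beta}([\mathcal{J}^{\mathrm{spin}}_{L;i},\mathcal{J}_{j}(s)])$, and Lemma~\ref{lemma-KMS-J} then identifies the correlator with $K^{\beta,L}_{ij}(s)-K^{\beta,L}_{ij}(s+\ii\beta)$. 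One next wants to move the limits $L,\beta\to\infty$ through the $s$-integral; this is where the insulator hypothesis enters quantitatively, through Lieb--Robinson/Combes--Thomas decay estimates that dominate $|C_{L}|^{-1}\big|\omega^{H,\mu}_{\beta}([\mathcal{J}^{\mathrm{spin}}_{L;i},\mathcal{J}_{j}(s)])\big|$ by a fixed $L^{1}(\R_{s})$ function, uniformly in $L$ and $\beta$. After that one is reduced to analysing $\int_{-\infty}^{0}\ee^{\eta s}\big(K^{\beta,L}_{ij}(s)-K^{\beta,L}_{ij}(s+\ii\beta)\big)\,\dd s$ at fixed $L,\beta$.

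The core step is the contour deformation. Since $K^{\beta,L}_{ij}$ is analytic on the strip $\{0\le\operatorname{Im}z\le\beta\}$ by Lemma~\ref{lemma-KMS-J}, I would first replace $\eta$ by the Matsubara frequency $\eta_{\beta}\in\tfrac{2\pi}{\beta}\Z$ (the two integrands differ by $O(|s|/\beta)$ pointwise, again controlled by the uniform decay, so this is harmless as $\beta\to\infty$); the point of this replacement is that $\ee^{-\ii\eta_{\beta}\beta}=1$. Applying Cauchy's theorem to $\ee^{\eta_{\beta}z}K^{\beta,L}_{ij}(z)$ on the rectangle with corners $-R,0,\ii\beta,-R+\ii\beta$ and letting $R\to\infty$ — the far-left vertical side vanishing thanks to $\ee^{\eta_{\beta}\operatorname{Re}z}\to0$ when $\eta_{\beta}>0$, and thanks to the clustering of the gapped grand canonical state when $\eta_{\beta}=0$ — the real-time difference collapses onto the imaginary-time segment $z=\ii t$, $t\in[0,\beta]$, and since $\mathcal{K}^{\beta,L}_{ij}(t)=\omega^{H,\mu}_{\beta}(\mathcal{J}^{\mathrm{spin}}_{L;i}\mathcal{J}_{j}(\ii t))=K^{\beta,L}_{ij}(\ii t)$ for $t\in(0,\beta)$ one obtains
\begin{equation*}
\int_{-\infty}^{0}\ee^{\eta_{\beta}s}\big(K^{\beta,L}_{ij}(s)-K^{\beta,L}_{ij}(s+\ii\beta)\big)\,\dd s=-\ii\int_{0}^{\beta}\ee^{\ii\eta_{\beta}t}\,\mathcal{K}^{\beta,L}_{ij}(t)\,\dd t .
\end{equation*}
Dividing by $|C_{L}|$ and taking the limits yields $\int_{-\infty}^{0}\ee^{\eta s}\tau([J_{i}^{\mathrm{conv}},J_{j}(s)]P)\,\dd s=-\ii\,\widehat{\mathcal{K}}_{ij}(\eta)$ for every $\eta\ge0$.

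It remains to handle the static term $\tau([J_{i}^{\mathrm{conv}},X_{j}]P)$, which I would treat either as the $\eta_{\beta}=0$ instance above (using that $\int_{-\infty}^{0}J_{j}(s)\,\dd s$ reconstructs $X_{j}$ inside the correlator, the endpoint contribution at $-\infty$ vanishing by clustering), or directly: writing $\mathcal{J}_{j}(\ii t)=-\ii\,\tfrac{\dd}{\dd t}\dd\Gamma(\widetilde X_{j}(t))$ with $\widetilde X_{j}(t):=\ee^{-tH}X_{j}\ee^{tH}$, integrating by parts in $t$ over $[0,\beta]$ in $\int_{0}^{\beta}\mathcal{K}^{\beta,L}_{ij}(t)\,\dd t$, and using the KMS property \eqref{eq:KMS-property} at the endpoint $\dd\Gamma(\widetilde X_{j}(\beta))=\alpha_{\ii\beta}(\dd\Gamma(X_{j}))$ to trade $\omega^{H,\mu}_{\beta}(\mathcal{J}^{\mathrm{spin}}_{L;i}\dd\Gamma(\widetilde X_{j}(\beta)))$ for $\omega^{H,\mu}_{\beta}(\dd\Gamma(X_{j})\mathcal{J}^{\mathrm{spin}}_{L;i})$; this gives $\int_{0}^{\beta}\mathcal{K}^{\beta,L}_{ij}(t)\,\dd t=\ii\,\omega^{H,\mu}_{\beta}([\mathcal{J}^{\mathrm{spin}}_{L;i},\dd\Gamma(X_{j})])$, hence $\widehat{\mathcal{K}}_{ij}(0)=\ii\,\tau([J_{i}^{\mathrm{conv}},X_{j}]P)$ by Lemma~\ref{lemma:grand-canonical}. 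Substituting the two identities into the Kubo formula and simplifying the powers of $\ii$ (keeping careful track of the sign conventions for the time ordering $\mathbf{T}$ and for the KMS relation) produces $\sigma^{\mathrm{s}}_{ij}=-\lim_{\eta\to0^{+}}\tfrac{\widehat{\mathcal{K}}_{ij}(\eta)-\widehat{\mathcal{K}}_{ij}(0)}{\eta}$. I expect the main obstacle to be analytic rather than algebraic: establishing the uniform-in-$(L,\beta)$ integrable decay of the real-time correlator, justifying the replacement of $\eta$ by $\eta_{\beta}$ and the vanishing of the far-left contour piece — especially in the delicate case $\eta_{\beta}=0$, where only the clustering of the gapped state is available and there is no exponential to help — and carrying the unbounded operators $X_{j}$ and $\dd\Gamma(X_{j})$ through every step, which is possible precisely because $\mathcal{J}^{\mathrm{spin}}_{L;i}$ is localised on $C_{L}$ and $J_{j}$ is short-range.
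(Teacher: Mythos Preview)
Your approach is essentially the paper's own, and the algebraic skeleton (pass to grand canonical via Lemma~\ref{lemma:grand-canonical}, apply KMS through Lemma~\ref{lemma-KMS-J}, deform the rectangular contour, handle the static term by integrating $\mathcal{J}_{j}(\ii t)=-\ii\,\tfrac{\dd}{\dd t}\,\alpha_{\ii t}(\dd\Gamma(X_j))$ and using KMS at the endpoint) is exactly right. Two of your anticipated obstacles, however, are phantoms, and seeing this makes the argument considerably shorter than you suggest.

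First, no decay of the real-time correlator is needed, and in particular no Lieb--Robinson or Combes--Thomas input. Since $[\mathcal{J}^{\mathrm{spin}}_{L;i},\mathcal{J}_{j}(s)]$ is \emph{uniformly bounded} in $s$ (it equals $\dd\Gamma$ of a bounded one-body operator), the weight $\ee^{\eta s}$ alone makes everything Bochner integrable on $(-\infty,0]$ at fixed $\eta>0$; the limits $L,\beta\to\infty$ commute with the $s$-integral by dominated convergence against $\|[\,\cdot\,,\,\cdot\,]\|\,\ee^{\eta s}$. The replacement $\eta\rightsquigarrow\eta_\beta$ is likewise elementary: $\int_{-\infty}^{0}(\ee^{\eta s}-\ee^{\eta_\beta s})\,\dd s=\eta^{-1}-\eta_\beta^{-1}=O((\eta^{2}\beta)^{-1})$, which vanishes as $\beta\to\infty$ for fixed $\eta>0$.

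Second, the ``delicate case $\eta_\beta=0$'' never occurs. By definition $\eta_\beta=\min\big(\tfrac{2\pi}{\beta}\Z\cap[\eta,\infty)\big)$, so $\eta>0$ forces $\eta_\beta\ge\eta>0$; the far-left side of the rectangle then dies from $\ee^{-\eta_\beta T}$ against the uniform boundedness of $K^{\beta,L}_{ij}(-T+\ii t)$, with no clustering needed. The quantity $\widehat{\mathcal{K}}_{ij}(0)$ is \emph{not} produced by setting $\eta_\beta=0$ in the contour argument; it is produced separately from the static term $\tau([J_i^{\mathrm{conv}},X_j]P)$ via your ``direct'' route, which is exactly what the paper does (after first localising $X_j$ to $X_{R,j}$ so that the KMS manipulation is on bounded operators, then removing the cutoff). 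With these simplifications your proposal matches the paper's proof; the only remaining care is the bookkeeping of the factor $\ii$ from $\dd z=\ii\,\dd t$ on the vertical sides, which you already flag.
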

\begin{proof}
Consider the first term in the Kubo formula \eqref{Kubo-formula}. By Lemma \ref{lem:Pi0shortrange} and Lemma \ref{lemma:shortrange}, we note that $J_{i}^{\mathrm{conv}}$, $P$ and $J_{j}(t)$ are in $\mathcal{P}_{0}(\Hi)$ for any $t\in \mathbb{R}$. It is then easy to check that $[\mathcal{J}^{\mathrm{spin}}_{L;i} ,\mathcal{J}_{j}(t)]$ is uniformly (w.r.t.\ $t\in\R$) bounded as an operator on $\mathcal{F}(\mathcal{H})$, as long as $L$ is finite. Thus, by using the first identity in \eqref{eq:tau-omega-comm} we have
\begin{equation}\label{eq:Kubo-1-I}
\begin{split}
 \int_{-\infty}^{0}\ee^{\eta t}\tau \big( [J_{i}^{\mathrm{conv}} ,J_{j}(t)] P \big) \dd t  & =  \int_{-\infty}^{0}\ee^{\eta t}\lim_{L \to \infty} \lim _{\beta \to \infty} |C_{L}|^{-1} \omega^{H,\mu}_{\beta}\big( [\mathcal{J}^{\mathrm{spin}}_{L;i} ,\mathcal{J}_{j}(t)] \big) \dd t
 \\
 & =  \lim_{L \to \infty} \lim _{\beta \to \infty} |C_{L}|^{-1}\int_{-\infty}^{0}\ee^{\eta t} \omega^{H,\mu}_{\beta}\big( [\mathcal{J}^{\mathrm{spin}}_{L;i} ,\mathcal{J}_{j}(t)] \big) \dd t
 \;,
\end{split}
\end{equation}
where we swapped the limit with the integral by the dominated convergence theorem.
Using once more that $[\mathcal{J}^{\mathrm{spin}}_{L;i} ,\mathcal{J}_{j}(t)]$ is uniformly bounded in $t$,  we have 
\begin{equation}\label{eq:Kubo-1-II}
\begin{split}
\Big| \int_{-\infty}^{0} &\big(\ee^{\eta t} - \ee^{\eta_{\beta} t} \big) \omega^{H,\mu}_{\beta}\big( [\mathcal{J}^{\mathrm{spin}}_{L;i} ,\mathcal{J}_{j}(t)] \big) \dd t \Big| 
\\
&
\lesssim \int_{-\infty}^{0} \big(\ee^{\eta t} -\ee^{\eta_{\beta} t}\big)  \dd t
=\eta^{-1} - \eta_{\beta}^{-1}  \leq \frac{2\pi}{\eta^{2} \beta} \;,
\end{split}
\end{equation}
which is vanishing as $\beta \to \infty$, having fixed $\eta >0$. On the other hand, by Lemma \ref{lemma-KMS-J},  noting that $\ee^{\eta_{\beta} t} = \ee^{\eta_{\beta}( t + \ii \beta)}$ we write
\begin{equation}\label{eq:contour-deformation}
\begin{split}
\int_{-\infty}^{0}\ee^{\eta_{\beta} t}& \omega^{H,\mu}_{\beta}\Big( [\mathcal{J}^{\mathrm{spin}}_{L;i} ,\mathcal{J}_{j}(t)] \Big) \dd t
\\
& =  \int_{-\infty}^{0}\ee^{\eta_{\beta} t} 
\Big( K^{\beta,L}_{ij}(t) - K^{\beta,L}_{ij}(t+\ii\beta)\Big) \dd t
\\
& = \lim _{T \to \infty} \int_{-T}^{0} \Big(\ee^{\eta_{\beta} t} K^{\beta,L}_{ij}(t) 
-\ee^{\eta_{\beta}(t+\ii \beta)} K^{\beta,L}_{ij}(t+\ii\beta)
\Big)\dd t
\\
& = \ii \int_{0}^{\beta} \ee^{ \ii \eta _{\beta} t} K_{ij}^{\beta,L}(\ii t) \dd t - \ii \lim _{T \to \infty} \ee^{-\eta_{\beta} T} \int_{0}^{\beta} \ee^{\ii \eta _{\beta} t} K_{ij}^{\beta,L}(-T+\ii t) \dd t
\end{split}
\end{equation}
where we used the Cauchy integral formula for $K^{\beta,L}_{ij}$ being analytic. Because $[\mathcal{J}^{\mathrm{spin}}_{L;i} ,\mathcal{J}_{j}(-T+\ii t)]$ is uniformly bounded in $T \in \mathbb{R}$ and locally bounded in $t$, so is $K_{ij}^{\beta,L}$. Therefore, since $\eta_{\beta}>0$ by definition, the last term in \eqref{eq:contour-deformation} is identically zero. We also note that $\mathcal{J}^{\mathrm{spin}}_{L;i}\mathcal{J}_{j}(\ii t) = \mathbf{T}\mathcal{J}^{\mathrm{spin}}_{L;i}\mathcal{J}_{j}(\ii t)$ for $t>0$ and thus
\begin{equation}\label{eq:T-ordering}
K^{\beta,L}_{ij}(\ii t) = \mathcal{K}^{\beta,L}_{ij}(t) \;, \qquad t \in (0,\beta] \;.
\end{equation}
Putting together \eqref{eq:Kubo-1-I}, \eqref{eq:Kubo-1-II}, \eqref{eq:contour-deformation} and \eqref{eq:T-ordering} we obtain
\begin{equation}\label{eq:first-term-final}
\int_{-\infty}^{0}\ee^{\eta t}\tau \big( [J_{i}^{\mathrm{conv}} ,J_{j}(t)]P \big) \dd t  = \ii \lim_{L \to \infty} \lim _{\beta \to \infty} \frac{1}{|C_{L}|}  \int_{0}^{\beta} \ee^{ \ii \eta _{\beta} t} \mathcal{K}_{ij}^{\beta,L}(t)\dd t \;.
\end{equation}
Let us now analyse the second term in the Kubo formula \eqref{Kubo-formula}. Let $\mathcal{X}_{j}:= \dd \Gamma(X_{j})$. 
 By employing the second identity in \eqref{eq:tau-omega-comm}, we have
\begin{equation*}
\tau\big( [J_{i}^{\mathrm{conv}},X_{j} ] P \big)
 = \lim _{L \to \infty} \lim_{\beta \to \infty} 
|C_{L}|^{-1} \omega^{H,\mu}_{\beta} \big( \big[ \mathcal{J}^{\mathrm{spin}}_{L;i}, \mathcal{X}_{j}  \big]\big).
\end{equation*}
We want to open the commutator on the right-hand side term in the last equality and this can be done provided that we localize the operator $\mathcal{X}_{j}$ so that it is bounded. By the argument at the end of the proof of Lemma \ref{lemma:grand-canonical} we note that for any $R \geq L$ we have $ \omega^{H,\mu}_{\beta} \big( \big[ \mathcal{J}^{\mathrm{spin}}_{L;i}, \mathcal{X}_{j}  \big]\big) =  \omega^{H,\mu}_{\beta} \big( \big[ \mathcal{J}^{\mathrm{spin}}_{L;i}, \mathcal{X}_{R;j}  \big]\big)$, where $\mathcal{X}_{R;j}:= \dd \Gamma( \chi_{R} X_{j})$ is bounded when $R$ is finite. Then, by the KMS property we can write
\begin{equation*}%\label{eq:KMS-diamagnetic}
\begin{split}
\omega^{H,\mu}_{\beta} \big( \big[ \mathcal{J}^{\mathrm{spin}}_{L;i}, \mathcal{X}_{j}  \big]\big)
 = \omega^{H,\mu}_{\beta} \Big(  \mathcal{J}^{\mathrm{spin}}_{L;i} \big( \mathcal{X}_{R;j} -   \mathcal{X}_{R;j} (\ii \beta) \big) \Big) \;,
\end{split}
\end{equation*}
and, by slight abuse of notation, letting $ \mathcal{J}_{R;j}(\ii t) := - \ii \frac{\dd}{\dd t}\mathcal{X}_{R;j}(\ii t) = \alpha_{\ii t}\big( \dd \Gamma(\ii [H,\chi_{R}X_{j}])\big)$, we can write
\begin{equation*}
\omega^{H,\mu}_{\beta} \big( \big[ \mathcal{J}^{\mathrm{spin}}_{L;i}, \mathcal{X}_{j}  \big]\big) =
 -\ii \, \omega^{H,\mu}_{\beta} \Big( \int_{0} ^{\beta}\mathcal{J}^{\mathrm{spin}}_{L;i}  \mathcal{J}_{R;j}(\ii t)  \dd t\Big) = 
 -\ii \,\int_{0} ^{\beta} \omega^{H,\mu}_{\beta} \big(\mathcal{J}^{\mathrm{spin}}_{L;i}  \mathcal{J}_{R;j}(\ii t)  \big) \dd t \;,
\end{equation*}
where in the last step we swapped integration with expectation over the grand canonical state because the operator $\mathcal{J}^{\mathrm{spin}}_{L;i} \mathcal{J}_{R;j}( \ii t) $ is bounded, locally in $t$, and thus Bochner integrable over $[0,\beta]$. We shall now remove the localisation $R$ by taking the limit $R \to \infty$. This can be done by noting that $\mathcal{J}_{L;i} \mathcal{J}_{R;j} ( \ii t)$ converges in the operator topology to $\mathcal{J}_{L;i} \mathcal{J}_{j}( \ii t)$, which is Bochner integrable over $[0,\beta]$. Thus, we can take the limit inside the integral by the Lebesgue dominated convergence theorem  as well as inside $\omega^{H,\mu}_{\beta}$. Therefore, using \eqref{eq:T-ordering} we can write
\begin{equation*}
\tau\big( [J_{i}^{\mathrm{conv}},X_{j} ] P \big)
 = - \ii \lim _{L \to \infty} \lim_{\beta \to \infty} 
|C_{L}|^{-1} \int_{0}^{\beta} \mathcal{K}^{\beta,L}_{ij}(t) \dd t \;,
\end{equation*}
which, together with \eqref{eq:first-term-final}, implies the claim.
\end{proof}
To conclude, we obtain an explicit expression for $\widehat{\mathcal{K}}_{ij}$ and thus we can prove Theorem \ref{prop:imaginary-time-repr}.
\begin{proof}[Proof of Theorem \ref{prop:imaginary-time-repr}]
We compute $\mathcal{K}^{\beta,L}_{ij}(t) = \omega^{H,\mu}_{\beta} \big( \mathbf{T} \mathcal{J}^{\mathrm{spin}}_{L;i} \mathcal{J}_{j}(\ii t)  \big) = \omega^{H,\mu}_{\beta} \big(  \mathcal{J}^{\mathrm{spin}}_{L;i} \mathcal{J}_{j}(\ii t)  \big)$,  $t \in (0,\beta]$, by using that $\omega^{H,\mu}_{\beta}$ is quasi-free. First of all, we note that
\begin{equation}\label{eq:truncation-K-function}
\omega^{H,\mu}_{\beta}(\mathcal{J}_{j}(\ii t)) = 0  \;,\qquad j=1,2 \;.
\end{equation}
This follows by observing that $H$ is invariant under $2\pi/3$-rotations, whereas $J_{i}$ transform like vectors.
By the Wick rule and simple algebraic manipulations\footnote{One can indeed check that the Wick rule holds for the expectations with respect to $\omega^{H,\mu}_{\beta}$ of time-ordered products in the creation and annihilation operators.}, for $t \in (0,\beta]$ we find
\begin{equation*}
\begin{split}
&\omega^{H,\mu}_{\beta} \big(\mathbf{T}\mathcal{J}^{\mathrm{spin}}_{L;i}  \mathcal{J}_{j}(\ii t)  \big) 
= \Tr_{\mathcal{H}} \Big( \big(J_{i}^{\mathrm{conv}} \big)_{L} G^{H,\mu}_{\beta}(t) J_{j}G^{H,\mu}_{\beta}(-t)\Big)
 \;.
\end{split}
\end{equation*}
where the operator $G^{H,\mu}_{\beta}(t)$ is the finite-temperature imaginary-time-ordered Green function, that is, for any $f,g \in \mathcal{H}$
\begin{equation}\label{eq:T-ordered-GF}
\langle f, G^{H,\mu}_{\beta}(t) g \rangle _{\mathcal{H}} :=\omega^{H,\mu}_{\beta} \big(\mathbf{T} a(f) \alpha_{\ii t}(a^{*}(g)) \big)  \;.
\end{equation}
By \eqref{eq:Fermi-Dirac-2point} and by noting that $\alpha_{\ii t}(a^{*}(f)) = a^{*}(\ee^{- t(H - \mu \mathbb{1})}f)$, we find that $G^{H,\mu}_{\beta}(t) $ is given by
\begin{equation*}%\label{def-green-function}
G^{H,\mu}_{\beta}(t) : = \frac{ \ee^{-t (H-\mu \mathbb{1})}}{\mathbb{1} + \ee^{-\beta (H - \mu \mathbb{1})}} ,\qquad t\in [0,\beta)  \;,
\end{equation*}
while in the rest of the interval $[-\beta,\beta]$ it is determined by the anti-periodicity condition $G^{H,\mu}_{\beta}(t-\beta) = - G^{H,\mu}_{\beta}(t)$. Note that, by Lemma \ref{lemma:shortrange}, $G^{H,\mu}_{\beta}(t) \in \mathcal{P}_{0}(\Hi)$ for any $t \in \mathbb{R}$ and any $\beta >0$.  We also introduce the Fourier components of $G^{H,\mu}_{\beta} (t)$; because $G^{H,\mu}_{\beta}(t-\beta) = - G^{H,\mu}_{\beta}(t)$, only odd frequencies are non-zero; thus, for $\omega \in \frac{2 \pi}{\beta} (\mathbb{Z} + \frac{1}{2})$ we set\footnote{Note that our convention for the Fourier transform is opposite with respect to \cite{Porta17,Porta20,MaPo}.}
\begin{equation*}%\label{eq: fourier-Green-function}
\begin{split}
\widehat{G}^{H,\mu}(\omega) &: = \int_{0}^{\beta} \ee^{-\ii \omega t} G^{H,\mu}_{\beta} (t)\dd t
 = \big(H - (\mu-\ii \omega) \mathbb{1}\big)^{-1}  \;.
\end{split}
\end{equation*}
In particular, $\widehat{G}^{H,\mu}(\omega) \in \mathcal{P}_{0}(\Hi)$, and its matrix elements decay as $|\omega|^{-1}$. Also, note that $\widehat{G}^{H,\mu}$ does not depend explicitly on $\beta$, which on the other hand only enters the frequency domain $\frac{2 \pi}{\beta} (\mathbb{Z} + \frac{1}{2})$.

Since $J_{i}^{\mathrm{conv}}$ is localised and since $G^{H,\mu}_{\beta},J_{j} \in \mathcal{P}_{0}(\Hi)$, the operator inside the trace is trace-class, uniformly in $t$ we exchange the trace with integration and obtain
\begin{equation*}
\begin{split}
\int_{0}^{\beta} \ee^{\ii \eta _{\beta} t} & \Tr_{\mathcal{H}} \Big( \big(J_{i}^{\mathrm{conv}} \big)_{L} G^{H,\mu}_{\beta}(t) J_{j}G^{H,\mu}_{\beta}(-t) \Big) \dd t 
\\
& = \frac{1}{\beta} \sum_{\omega \in (2 \pi / \beta) \mathbb{Z}} \Tr_{\mathcal{H}} \Big( \big(J_{i}^{\mathrm{conv}} \big)_{L} \widehat{G}^{H,\mu}(\omega -\eta_{\beta} ) J_{j}\widehat{G}^{H,\mu}(\omega) \Big) \;,
\end{split}
\end{equation*}
where we also swapped trace and summation over $\omega$ because the matrix elements of $\widehat{G}^{H,\mu}$ decay as $|\omega|^{-1}$. Taking the $\beta \to \infty$ limit gives
\begin{equation*}
\widehat{\mathcal{K}}_{ij}(\eta) = (2 \pi)^{-1} \lim _{L \to \infty} |C_{L}|^{-1} \int_{\mathbb{R}} \dd \omega\Tr_{\mathcal{H}} \Big( \big(J_{i}^{\mathrm{conv}} \big)_{L} \widehat{G}^{H,\mu}(\omega - \eta ) J_{j} \widehat{G}^{H,\mu}(\omega) \Big) \;.
\end{equation*}
By similar arguments we take the limit $L \to \infty$ inside the integral and, because $\widehat{G}^{H,\mu}$, $J_{i}^{\mathrm{conv}}$ and $J_{j}$ are short-range, by Lemma \ref{lemma-move-local} and Lemma \ref{lem:tau}\ref{it:tau3} we obtain
\begin{equation*}
\widehat{\mathcal{K}}_{ij}(\eta) = \int_{\mathbb{R} \times \mathbb{T}^{*}_{2}} \frac{\dd \mathbf{k}}{(2 \pi)^{3}} \Tr _{\mathbb{C}^{4}} \Big(  \widehat{J}^{\mathrm{conv}}_{i}(k) \widehat{G}^{H,\mu}(\omega - \eta;k) \widehat{J}_{j}(k) \widehat{G}^{H,\mu}(\omega;k) \Big) \;,
\end{equation*}
where $\mathbf{k} \equiv (\omega, k)$ and where $\widehat{G}^{H,\mu}(\omega;k):= \big(H(k) - (\mu - \ii \omega) \mathbb{1} \big)^{-1}$. To conclude, since 
$\partial_{k_{i}} H(k) = \partial_{k_{i}}\( \widehat{G}^{H,\mu}(\omega;k) ^{-1}\)$,
it is easy to check that 
\begin{equation*}
\begin{split}
\widehat{J}^{\mathrm{conv}}_{i}(k) = \frac{1}{2} \big\{  \partial_{k_{i}} \widehat{G}^{H,\mu}(\omega;k) ^{-1}, S_{z} \big\} \;, \qquad
\widehat{J}_{j}(k) = \partial_{k_{j}} \widehat{G}^{H,\mu}(\omega;k) ^{-1} \;,
\\
-\partial_{\eta} \widehat{G}^{H,\mu}(\omega - \eta;k)\big|_{\eta =0} = - \widehat{G}^{H,\mu}(\omega ;k) \big[ \partial_{\omega} \widehat{G}^{H,\mu}(\omega;k) ^{-1} \big] \widehat{G}^{H,\mu}(\omega ;k) \;.
\end{split}
\end{equation*}
The claim then follows because $(H,\mu)$ is an insulator so that we can take the limit $\eta \to 0^{+}$ inside the integral by the Lebesgue dominated convergence theorem.
\end{proof}

\goodbreak
%%%%%  BIBLIOGRAPHY  %%%%%%%%%%%%%%%%%%%%%%%%%%%%%

\let\oldaddcontentsline\addcontentsline% Store \addcontentsline
\renewcommand{\addcontentsline}[3]{}% Make \addcontentsline a no-op

\let\addcontentsline\oldaddcontentsline% Restore \addcontentsline

%%%%%%% END MATTER

\bigskip \bigskip

{\footnotesize

\begin{tabular}{ll}
(L.~Fresta) 
		        	&  \textsc{Mathematics and Physics Department, University Roma Tre} \\ 
        	&   Largo San Leonardo Murialdo 1, 00146 Roma, Italy \\        
        	\\
        		&  {E-mail address}: \href{mailto:luca.fresta@uniroma3.it}{\texttt{luca.fresta@uniroma3.it}}\\
\\
(G.~Marcelli)   
		&  \textsc{Mathematics and Physics Department, University Roma Tre} \\ 
        	&   Largo San Leonardo Murialdo 1, 00146 Roma, Italy \\        
        	\\	&  {E-mail address}: \href{mailto:giovanna.marcelli@uniroma3.it}{\texttt{giovanna.marcelli@uniroma3.it}}\\
\end{tabular}

} 


\begin{thebibliography}{00}
\bibitem[AG]{AizenmanGraf}
\textsc{Aizenman, M.; Graf G.M.} : Localization bounds for an electron gas, {\it J. Phys. A: Math. Gen.} {\bf 31}, 6783 (1998).

\bibitem[AW]{AizenmanWarzel}
\textsc{Aizenman, M.; Warzel, S.}. {\it Random Operators}. Graduate Studies in Mathematics 168, American Mathematical Society, 2015.

%\bibitem[AZ]{AltlandZirnbauer}
%\textsc{Altland A.; Zirnbauer M.~R.}. Novel Symmetry Classes in Mesoscopic Normal-Superconducting Hybrid Structures. \textit{Physical Review B} \textbf{55} (2) 1142 (1997).

\bibitem[AMP]{Porta18}
\textsc{Antinucci G.; Mastropietro V.; Porta M.} :
Universal Edge Transport in Interacting Hall Systems.
\textit{Commun. Math. Phys.} \textbf{362}, 295--359 (2018).


\bibitem[ASV]{Avila}
\textsc{Avila, J.C., Schulz-Baldes, H., Villegas-Blas, C.} : Topological invariants of edge states for periodic two-dimensional models. \textit{Math. Phys., Anal. Geom.} \textbf{16} (2), 127--170 (2013).


\bibitem[AS]{AvronSeiler}
\textsc{Avron, J. E.; Seiler, R.} : Quantization of the Hall conductance for general, multiparticle Schr\"odinger Hamiltonians. {\it Phys. Rev. Lett.} {\bf 54}, 259--262 (1985).

\bibitem[AS$^2_1$]{Avron83}
\textsc{Avron J. E.; Seiler R.; Simon B.} : Homotopy and quantization in condensed matter physics. \textit{Phys. Rev. Lett.} \textbf{51}, 51 (1983).

\bibitem[AS$^2_2$]{Avron94}
\textsc{Avron J. E.; Seiler R.; Simon B.} : Charge deficiency, charge transport and comparison of dimensions. \textit{Commun.Math. Phys.} \textbf{159}, 399--422 (1994).

\bibitem[BdRF1]{Bachmann18}
\textsc{Bachmann S.; De Roeck W.; Fraas M.} : The adiabatic theorem and linear response theory for extended quantum systems. \textit{Communications in Mathematical Physics} \textbf{361}, 997--1027 (2018).

%\bibitem[BdRF2]{Bachmann20I}
%\textsc{Bachmann S.; De Roeck W.; Fraas M.} :
%The adiabatic theorem in a quantum many-body setting.
%\textit{Analytic Trends in Mathematical Physics} \textbf{741}, 43--58 (2020).

\bibitem[BBdRF3]{Bachmann20II}
\textsc{Bachmann S.; Bols A.; De Roeck W.; Fraas M.} :
A many-body index for quantum charge transport.
\textit{Communications in Mathematical Physics} \textbf{375}, 1249--1272 (2020).

\bibitem[BBR]{BBR}
\textsc{Bachmann S.; Bols A.; Rahnama M.} :
Many-body Fu--Kane--Mele index. Preprint  arXiv:2406.19463.

\bibitem[Be]{Bel}
\textsc{Bellissard, J.} : K-theory of $C^*$-algebras in solid state physics. In: Dorlas, T.C., Hugenholtz, N.M., Winnink, M. (eds.) Statistical Mechanics and Field Theory: Mathematical Aspects. Vol. 257 in Lecture Notes in Physics, pp. 99–156. Springer, Berlin (1986)

\bibitem[BES]{Bellissard94}
\textsc{Bellissard, J.; van Elst, A.; Schulz-Baldes, H.} :
The non-commutative geometry of the quantum Hall effect.
{\it J. Math. Phys.} {\bf 35}, 5373 (1994). 

\bibitem[BHZ$_1$]{Bernevig}
\textsc{Bernevig, B. A.; Hughes, T. L.; Zhang, S.-C.}: Quantum spin Hall effect and topological phase transition
in HgTe quantum wells. \textit{Science} \textbf{314}, 1757--1761 (2006).

\bibitem[BHZ$_2$]{Bernevig2}
\textsc{Bernevig, B. A.; Hughes, T. L.; Zhang, S.-C.}:
Quantum Spin Hall Effect.
\textit{Phys. Rev. Lett.} \textbf{96}, 106802 (2006).

\bibitem[BGKS]{Bouclet}
\textsc{Bouclet, J. M.; Germinet, F.; Klein, A.; Schenker, J. H.} : Linear response theory for magnetic Schr\"odinger operators in disordered media. {\it J. Funct. Anal.} {\bf 226}, 301--372 (2005).

\bibitem[BR]{Bratteli}
\textsc{Bratteli O.; Robinson  D. W.} : \textit{Opertor
algebras and quantum statistical mechanics}, Volume 1-2, Springer (2002).

\bibitem[Mol$_{3}$]{Mol3}
\textsc{Br\"une, C., Roth, A., Buhmann, H., Hankiewicz, E.M., Molenkamp, L.W., Maciejko, J., Qi, X.-L., Zhang, S.-C.} : 
Spin polarization of the quantum spin Hall edge states. \textit{Nature Physics} \textbf{8}, 485--490 (2012).

\bibitem[Col]{Coleman}
\textsc{Coleman, S.} : \textit{Aspects of Symmetry: Selected Erice Lectures}. Cambridge University Press, 1988. 
%

\bibitem[FK]{FrohlichKerl}
\textsc{Fr\"ohlich J. ; Kerler, T.} :
Universality in quantum Hall systems. \textit{Nucl.Phys.} \textbf{B} 354, 369--417 (1991).

\bibitem[FS]{Frohlich}
\textsc{Fr\"ohlich J. ; Studer U.~M.} :
Gauge invariance and current algebra in nonrelativistic many-body theory.
\textit{Rev. Mod. Phys.} \textbf{65}, 733 (1993).

\bibitem[FST]{FrohlichST}
\textsc{Fr\"ohlich J. ; Studer U.~M.; Thiran E.} :
Quantum Theory of Large Systems of Non-relativistic Matter. \textit{Les Houches Lectures} 1994, Elsevier, New York (1995). arXiv:cond-mat/9508062

\bibitem[FW]{FW}
\textsc{Fr\"ohlich J. ; Werner P.} :
Gauge theory of topological phases of matter. \textit{EPL} \textbf{101} 47007 (2013).

\bibitem[FZ]{FrohlichZee}
\textsc{Fr\"ohlich J. ; Zee A.} : Large scale physics of the quantum Hall fluid. \textit{Nucl.Phys.} \textbf{B} 364, 517--540 (1991).

%\bibitem[Fu]{Fu}
%\textsc{Fu L.} :
%Topological Crystalline Insulators.
%\textit{Phys. Rev. Lett.} \textbf{106}, 106802 (2011).

\bibitem[FKM]{FuKaneMele}
\textsc{Fu, L.; Kane, C.L.; Mele, E.J.} : Topological insulators in three dimensions, {\it Phys. Rev. Lett.} {\bf 98}, 106803 (2007).

\bibitem[GJMP]{Porta16}
\textsc{Giuliani A.; Jauslin, I., Mastropietro V.; Porta M.} : Topological phase transitions and universality in the Haldane--Hubbard model. \textit{Physical Review B} \textbf{94} (20), 205139 (2016).
%
\bibitem[GMP$_{1}$]{Porta17}
\textsc{Giuliani A.; Mastropietro V.; Porta M.} : Universality of the Hall conductivity in interacting electron systems.
\textit{Commun. Math. Phys.} \textbf{349} (3), 1107--1161 (2017).

\bibitem[GMP$_{2}$]{Porta20}
\textsc{Giuliani A.; Mastropietro V.; Porta M.} : Quantization of the interacting Hall conductivity in the critical regime.
\textit{Journal of Statistical Physics} \textbf{180} (1), 332--365 (2020).

\bibitem[Gr]{Graf review}
\textsc{Graf, G.M.} : Aspects of the Integer Quantum Hall Effect, \emph{Proceedings of Symposia in Pure Mathematics} \textbf{76}, 429--442 (2007).

\bibitem[GP]{GrafPorta}
\textsc{Graf, G.M.; Porta, M.} : Bulk-edge correspondence for two-dimensional topological insulators, {\it Commun. Math. Phys.} {\bf 324}, 851--895 (2013).

\bibitem[Hal]{Haldane88}
\textsc{Haldane, F.D.M.}: Model for a Quantum Hall Effect without Landau levels: condensed-matter realization of the ``parity anomaly'', {\it Phys. Rev. Lett.} {\bf 61}, 2017 (1988).

\bibitem[HK]{HasanKane}
\textsc{M.Z. Hasan and C.L. Kane}. Colloquium: topological insulators. {\it Reviews of modern physics} {\bf 82}, 3045 (2010).

\bibitem[Has]{Hasanexp}
\textsc{Hsieh, D.; Qian, D.; Wray, L.; Xia, Y.; Hor, Y.S.; Cava, R.J.; Hasan, M.Z.}: A topological Dirac insulator
in a quantum spin Hall phase. \textit{Nature} \textbf{452}, 970 (2008).

\bibitem[HaMi]{Hastings}
\textsc{Hastings M.B.; Michalakis S.} : Quantization of Hall conductance for interacting electrons on a torus. \textit{Commun. Math. Phys.}, \textbf{334} 433--471, (2015).

\bibitem[HaWe]{Hastings2}
\textsc{Hastings M.B.; Wen X.G.}: Quasiadiabatic continuation of quantum states: The
stability of topological ground-state degeneracy and emergent gauge invariance.
\textit{Phys. Rev. B}, \textbf{72} 045141, (2005).

\bibitem[Jack]{Jackiw}
\textsc{Jackiw R.W.} :
Nonperturbative and Topological Aspects of Gauge Theory. In: Encyclopedia of Mathematical Physics, 
Elsevier, 2006.

\bibitem[KM$_1$]{KaneMele2005}
\textsc{Kane C.L.; Mele E.J.} : $\Z_2$ Topological Order and the Quantum Spin Hall Effect. {\it Phys. Rev. Lett.} {\bf 95}, 146802 (2005).

\bibitem[KM$_2$]{KaneMele_graphene}
\textsc{Kane C.L.; Mele E.J.} : Quantum Spin Hall Effect in graphene. {\it Phys. Rev. Lett.} {\bf 95}, 226801 (2005).

\bibitem[Kit]{Kitaev}
\textsc{Kitaev, A.} : Periodic table for topological insulators and superconductors, {\it AIP Conf. Proc.} {\bf 1134}, 22 (2009).

\bibitem[LPGetal]{LPGetal}
\textsc{Lin, KS., Palumbo, G., Guo, Z. et al.}: Spin-resolved topology and partial axion angles in three-dimensional insulators, {\it Nat. Commun.} {\bf 15}, 550 (2024).

\bibitem[Mol$_1$]{Molenkamp}
\textsc{König, M.; Wiedmann, S.; Brüne, C.; Roth, A.; Buhmann, H.; Molenkamp, L.W.; Qi, X.-L.; Zhang, S.-C.}:
Quantum spin Hall insulator state in HgTe quantum wells. \textit{Science} \textbf{318}, 766 (2007).

\bibitem[Mol$_2$]{Mol2}
\textsc{König, M.; Buhmann, H.; Molenkamp, L.W.; Hughes, T.; Liu,C.-X.; Qi,X. L.; Zhang S.-C.} :
The Quantum Spin Hall Effect: Theory and Experiment
\textit{Journ. Phys. Soc. Jap.}, \textbf{77}, 031007 (2008).

\bibitem[Ku]{Kubo}
\textsc{Kubo, R.} : Statistical-mechanical theory of irreversible processes I: General theory and simple applications to magnetic and conduction problems. {\it J. Phys. Soc. Jpn.} {\bf 12}, 570--586 (1957).

\bibitem[Kuc]{Kuchment}
\textsc{Kuchment, P.} : An overview of periodic elliptic operators. {\it Bull. AMS} {\bf 53}, 343--414 (2016).

\bibitem[MaMo]{MaMo}
\textsc{Marcelli, M; Monaco, D.} :
From charge to spin: Analogies and differences in quantum transport coefficients.
J. Math. Phys. \textbf{63}, 072102 (2022)

\bibitem[MaPaTa]{MaPaTa}
\textsc{Marcelli, M; Panati, G.; Tauber, C.} : Spin conductance and spin conductivity in topological insulators: Analysis of Kubo-like terms. {\it Ann. Henri Poincar\'{e}} {\bf 20}, 2071--2099 (2019).

\bibitem[MaPaTe]{MaPaTe}
\textsc{Marcelli, M; Panati, G.; Teufel, S.} : A New Approach to Transport Coefficients in the Quantum Spin Hall Effect. {\it Ann.~Henri Poincar\'{e}} {\bf 22}, 1069--1111 (2021).

\bibitem[MaPo$_1$]{MPS}
\textsc{Mastropietro, V.; Porta, M.} :
Spin Hall insulators beyond the helical Luttinger model.
\textit{Phys. Rev. B} \textbf{96}, 245135  (2017).

\bibitem[MaPo$_2$]{MaPo}
\textsc{Mastropietro, V.; Porta, M.} :
Canonical Drude Weight for Non-integrable Quantum Spin Chains. \textit{J. Stat. Phys.} \textbf{172} 379--397 (2018).

\bibitem[MoUl]{MoUl}
\textsc{Monaco, D.; Ulcakar, L.}
Spin Hall conductivity in insulators with nonconserved spin.
\textit{Phys. Rev. B} \textbf{102} (2020).

\bibitem[Pr]{Prodan2009}
\textsc{Prodan, E.} : Robustness of the spin-Chern number. {\it Phys. Rev. B} {\bf 80}, 125327 (2009).

\bibitem[RS4]{RS4}
\textsc{Reed, M.; Simon, B.}
Methods of modern mathematical physics, Vol. IV, Analysis of operators. Academic Press, 1978.

%\bibitem[RSFL]{RyuSchnyder2010}
%\textsc{Ryu, S.;  Schnyder, A.P.;  Furusaki, A.; Ludwig, A.W.W.} :  
%Topological insulators and superconductors: Tenfold way and dimensional hierarchy, {\it New J. Phys.}\ {\bf 12}, 065010 (2010).

\bibitem[SSGR]{Ryu-inv}
\textsc{Shiozaki, K.; Shapourian, H.; Gomi, K.; Ryu, S.} :  Many-body topological invariants for fermionic short-range entangled topological phases protected by antiunitary symmetries. \textit{Phys. Rev. B} \textbf{98}, 035151
(2018).

\bibitem[Sch]{SchulzBaldes}
\textsc{Schulz-Baldes, H.} : Persistence of spin edge currents in disordered Quantum Spin Hall systems, {\it Commun. Math. Phys.} {\bf 324}, 589--600 (2013).

\bibitem[SSTH]{ShengShengTingHaldane2005}
\textsc{L.~Sheng, D.N.~Sheng, C.S.~Ting, F.D.M.~Haldane}:
Nondissipative spin Hall effect via quantized edge transport.
{\it Phys. Rev. Lett.} {\bf 95}, 136602 (2005).  

\bibitem[SWSH]{ShengWengShengHaldane2006}
\textsc{D.N.~Sheng, D. N.; Weng,  Z.Y.; Sheng, L.; Haldane, F.D.M.} : 
Quantum spin-Hall effect and topologically invariant Chern numbers. 
{\it Phys. Rev. Lett.} {\bf 97}, 036808 (2006).  

\bibitem[SZXN]{ShiZhangXiaoNiu}
\textsc{Shi, J.; Zhang, P.; Xiao, D.; Niu, Q.} : Proper definition of spin current in spin-orbit coupled systems. {\it Phys. Rev. Lett.} {\bf 96}, 076604 (2006).

%\bibitem[ShiSaGo]{Gomi}
%\textsc{Shiozaki, K.; Sato, M.; Gomi, K.} : Atiyah-Hirzebruch spectral sequence in band topology: General formalism and topological invariants for 230 space groups.
%\textit{Phys. Rev. B} \textbf{106}, 165103 (2022).

\bibitem[SCNSJM]{Sinovaetalii}
\textsc{J.~Sinova, D.~Culcer, Q. Niu, N.A.~Sinitsyn, T.~Jungwirth, A.H.~MacDonald}:
Universal intrinsic spin Hall effect. {\it Phys. Rev. Lett.} {\bf 92}, 126603 (2004).

\bibitem[SVWBJ]{Sinova}
J.~Sinova, S.~O.~Valenzuela, J.~Wunderlich, C.~H.~Back, and T.~Jungwirth.
Spin Hall effects. 
\textit{Rev. Mod. Phys.} \textbf{87} (2015).

\bibitem[Teu1]{Teufel1}
\textsc{Teufel, S.} : Adiabatic Perturbation Theory in Quantum Dynamics. No. 1821 in Lecture Notes in Mathematics. Springer, Berlin (2003).

\bibitem[Teu2]{Teufel}
\textsc{Teufel, S.} : Non-equilibrium almost-stationary states and linear response for gapped
quantum systems. \textit{Comm. Math. Phys.} \textbf{373} 621--653 (2020).

\bibitem[TKNN]{TKNN}
\textsc{Thouless, D.J.; Kohmoto, M.; Nightingale, M.P.; de Nijs, M.} : Quantized Hall conductance in a two-dimensional periodic potential, {\it Phys. Rev. Lett.} {\bf 49}, 405--408 (1982).

\bibitem[vKDP]{vonKlitzing}
\textsc{von Klitzing, K.; Dorda, G.; Pepper, M.} : New method for high-accuracy determination of the fine-structure
constant based on quantized Hall resistance. \textit{Phys. Rev. Lett.} \textbf{45}, 494 (1980).

\bibitem[WaQiZh]{WangQiZhang}
\textsc{Whang, Z.; Qi, X.-L.; Zhang, S.-C.} : Topological Order Parameters for Interacting Topological Insulators. \textit{Phys. Rev. Lett.} \textbf{105}, 256803 (2010).

\bibitem[Wolf]{Wolfram}
Wolfram Research, Inc., Mathematica, Version 13.1, Champaign, IL (2022).

\bibitem[ZWSXN]{Zhang}
\textsc{Zhang, P.;  Wang, Z.;  Shi, J.; Xiao, D.; Niu, Q.} : 
Theory of conserved spin current and its application to a two-dimensional hole gas. 
{\it Phys. Rev. B} {\bf 77}, 075304 (2008).


%\bibitem[AZ]{Altland Zirnbauer}
%\textsc{Altland, A.; Zirnbauer, M.} : Non-standard symmetry classes in mesoscopic normal-superconducting hybrid structures, {\it Phys. Rev. B} {\bf 55} (1997), 1142--1161.
%

%\bibitem[ALLL]{AnLiuLinLiu}
%\textsc{An, Z.; Liu, F. Q.; Lin, Y.;  Liu, C.} :  The universal definition of spin current. {\it Scientific reports} {\bf 2}, 388 (2012).
%
%\bibitem[An]{Ando} 
%\textsc{Ando, Y.} : Topological insulator materials, {\it J. Phys. Soc. Jpn.} {\bf 82} (2013), 102001.
%

%	
%%% {AizenmanGraf} {ElgartGrafSchenker} {AizenmanWarzel} {BoucletGerminet2005} 
%	
%\bibitem[AW]{AizenmanWarzel}
%\textsc{Aizenman, M.; Warzel, S.} : {\it Random Operators}. Graduate Studies in Mathematics 168, American Mathematical Society, 2015.
%
%\bibitem[ASV]{AvilaSchulz-Baldes12}
%\textsc{Avila, J.C.; Schulz-Baldes, H.; Villegas-Blas, C.} : Topological invariants of edge states for periodic two-dimensional models, {\it Mathematical Physics, Analysis and Geometry} {\bf 16}, 136--170 (2013).
%	
%\bibitem[AS]{AvronSeiler}
%\textsc{Avron, J. E.; Seiler, R.} : Quantization of the Hall conductance for general, multiparticle Schr\"odinger Hamiltonians. {\it Phys. Rev. Lett.} {\bf 54}, 259--262 (1985).
%
%\bibitem[AS$^2$]{AvronSeilerSimon} 
%\textsc{Avron, J.; Seiler, R.; Simon, B.} : Charge Deficiency, Charge Transport and Comparison of Dimensions, {\it Commun. Math. Phys.} {\bf 159}, 399--422 (1994).
%	
%\bibitem[BES]{Bellissard94}
%\textsc{Bellissard, J.; van Elst, A.; Schulz-Baldes, H.} :
%The non-commutative geometry of the quantum Hall effect.
%{\it J. Math. Phys.} {\bf 35}, 5373 (1994). 
%
%\bibitem[BGKS]{BoucletGerminet2005}
%\textsc{Bouclet, J.M.; Germinet, F.; Klein, A.; Schenker, J.H.} :
%Linear response theory for magnetic Schr\"odinger operators in disordered media,
%{\it J. Funct. Analysis} {\bf 226}, 301--372 (2005).
%%https://doi.org/10.1016/j.jfa.2005.02.002
%	
%%%% {AizenmanGraf, AvronSeilerSimon, BoucletGerminet2005, ElgartGrafSchenker, ElgartSchlein	
%
%\bibitem[BN]{BrayNussinov} \textsc{Bray-Ali, N.; Nussinov, Z.} : Conservation and persistence of spin currents and their relation to the Lieb-Schulz-Mattis twist operators. {\it Phys. Rev. B} {\bf 80}, 012401 (2009).
%	
%\bibitem[CDFG]{Lyon}
%\textsc{Carpentier, D.; Delplace, P.; Fruchart, M.; Gawedzki, K.} : Topological index for periodically driven time-reversal invariant 2D systems, {\it Phys. Rev. Lett.} {\bf 114}, 106806 (2015).
%
%\bibitem[CDFGT]{Lyon15}
%\textsc{Carpentier, D.; Delplace, P.; Fruchart, M.; Gawedzki, K.; Tauber, C.} : Construction and properties of a topological index for periodically driven time-reversal invariant 2D crystals,
%{\it Nuclear Physics B} {\bf 896}, 779--834 (2015)
%
%%\bibitem[Ch]{Experiment}
%%\textsc{Chang, C.-Z.} \textsl{et al.} : Experimental Observation of the Quantum Anomalous Hall Effect in a Magnetic Topological Insulator, {\it Science} {\bf 340}, 167--170 (2013).
%
%\bibitem[CMT]{CorneanMonacoTeufel}
%\textsc{Cornean, H. D.; Monaco, D.; Teufel, S.} : Wannier functions and $\Z_2$ invariants in time-reversal symmetric topological insulators. { \it Rev. Math. Phys.} {\bf 29}, 1730001 (2017).
%
%\bibitem[DG]{DeNittisGomi}
%\textsc{De Nittis, G.;  Gomi, K.} : Classification of \virg{Quaternionic} Bloch-Bundles. {\it Commun. Math. Phys.} {\bf 339}, 
%1--55 (2015).
%
%\bibitem[EGS]{ElgartGrafSchenker}
%\textsc{Elgart, A.; Graf, G.M.; Schenker, J.H.} : Equality of the bulk and edge Hall conductances in a mobility gap, {\it Commun. Math. Phys.} {\bf 259}, 185--221 (2005).
%
%\bibitem[ES]{ElgartSchlein}
%\textsc{Elgart, A.; Schlein, B.} : Adiabatic Charge Transport and the Kubo Formula for Landau-Type Hamiltonians. 
%{\it Comm. Pure Appl. Math.} {\bf 57}, 590--615 (2004).
%	
%\bibitem[FMP]{FiMoPa}
%\textsc{Fiorenza, D.; Monaco, D.; Panati, G.} :  {$\mathbb {Z}_{2}$} invariants of topological insulators as geometric obstructions. {\it Commun. Math. Phys.} {\bf 343}, 1115--1157 (2016).
%
%\bibitem[FW]{Frohlich}
%\textsc{Fr\"{o}hlich, J. ; Werner, Ph.} : Gauge theory of topological phases of matter, {\it EPL}\ {\bf 101}, 47007 (2013).
%
%%\bibitem[FC]{FruchartCarpentier2013}
%%\textsc{Fruchart, M. ; Carpentier, D.} : An introduction to topological insulators, {\it Comptes Rendus Phys.}\ {\bf 14}, 779--815
%%(2013).
%
%\bibitem[FK]{FuKa}
%\textsc{Fu, L.; Kane, C.L.} : Time reversal polarization and a $\Z_2$ adiabatic spin pump, {\it Phys. Rev. B} {\bf 74}, 195312 (2006).

%\bibitem[Ga]{Gawedzki}
%\textsc{Gawedzki, K.} :  Square root of gerbe holonomy and invariants of time-reversal-symmetric topological insulators,  {\it J. Geom. Phys.} {\bf 120}, 169-191 (2017). 
%


%\bibitem[HK]{HasanKane}
%\textsc{Hasan, M.Z.; Kane, C.L.} :  Colloquium: Topological Insulators, {\it Rev. Mod. Phys.} {\bf 82}, 3045--3067 (2010).
%

%
%\bibitem[KK]{KatsuraKoma}
%\textsc{Katsura, H.;  Koma, T.}  The  $\Z_2$ index of disordered topological insulators with time reversal symmetry. {\it J. Math. Phys.} {\bf 57}, 021903 (2016).
%
%%\bibitem[KG]{Kennedy}
%%\textsc{Kennedy, R.; Guggenheim, C.} : Homotopy theory of strong and weak topological insulators, preprint available at \href{http://arxiv.org/abs/1409.2529}{\texttt{arXiv:1409.2529}}.
%
%\bibitem[Kir]{Kirsch}
%\textsc{Kirsch, W.} : An invitation to random Schr\"odinger operators, preprint available at \href{https://arxiv.org/abs/0709.3707}{\texttt{arXiv:0709.3707}}
%
%
%\bibitem[Ko]{Kohn}
%\textsc{Kohn, W.} : Density functional and density matrix method scaling linearly with the number of atoms, {\it Phys. Rev. Lett.} {\bf 76}, 3168 (1996).
%
%
%\bibitem[Ma]{MarcelliPhD}
%\textsc{Marcelli, G.} : {\it A Mathematical Analysis of Spin and Charge Transport in Topological Insulators},  Ph.D. thesis in Mathematics, \virg{La Sapienza} Universit\'a di Roma, Rome, 2017.	
%	
%%\bibitem[MMPTT]{MMPTT}
%%\textsc{Marcelli, G.; Monaco, D.; Panati, G.; Tauber, C.; Teufel, S.} :  
%%%%% Title
%%in preparation (2018).
%
%\bibitem[MMPTe]{MMPTe}
%\textsc{Marcelli, G.; Monaco, D.; Panati, G.; Teufel, S.} :  
%Quantum (Spin) Hall conductivity: Kubo formula (and beyond),
%in preparation (2018).
%
%\bibitem[MPTa]{MPTa}
%\textsc{Marcelli, G.; Panati, G.; Tauber, C.} :  
%Quantum Spin Hall conductance: a first principle analysis, 
%in preparation (2018).
%
%\bibitem[MP]{PaMo}
%\textsc{Monaco, D.; Panati, G.} : Symmetry and localization in periodic crystals: triviality of Bloch bundles with a fermionic time-reversal symmetry, 
%%Proceedings of the conference {\it  ``SPT2014 -- Symmetry and Perturbation Theory'', Cala Gonone, Italy}, 
%{\it Acta App. Math.} {\bf 137}, 185--203 (2015).
%
%\bibitem[MT]{MoTa}
%\textsc{Monaco, D.; Tauber, C.} : Gauge-theoretic invariants for topological insulators: a bridge between Berry, Wess-Zumino, and Fu-Kane-Mele. {\it Lett. Math. Phys.}, {\bf 107}, 1315--1343 (2017).
%
%\bibitem[MB]{Moore Balents}
%\textsc{Moore, J.E.; Balents, L.} : Topological invariants of time-reversal-invariant band structures, {\it Phys. Rev. B} {\bf 75}, 121306(R) (2007).
%
%\bibitem[Mu]{Murakami}
%\textsc{Murakami, S.} :  Quantum spin Hall effect and enhanced magnetic response by spin-orbit coupling. {\it Phys. Rev. Lett.}  {\bf 97}, 236805 (2006).
%
%\bibitem[Pa]{Panati}
%\textsc{Panati, G.}: Triviality of Bloch and Bloch-Dirac bundles, {\it Ann. Henri Poincar\'e} {\bf 8},  995--1011 (2007).
%
%\bibitem[PK]{ProdanKohn}
%\textsc{Prodan, E.; Kohn, W.} : Nearsightedness of electronic matter, {\it Proc. Natl. Acad. Sci. U.S.A.} {\bf 102}, 11635--11638 (2005).
%
%\bibitem[Pr$_1$]{Prodan1} 
%\textsc{Prodan, E.} : Robustness of the Spin-Chern number, {\it Phys. Rev. B} {\bf 80}, 125327 (2009).
%
%%\bibitem[Pr$_2$]{Prodan2}
%%\textsc{Prodan, E.} : Disordered topological insulators: A non-commutative geometry perspective, {\it J. Phys. A} {\bf 44}, 113001 (2011).
%
%\bibitem[Pr$_3$]{Prodan3}
%\textsc{Prodan, E.} : Manifestly gauge-independent formulations of the $\Z_2$ invariants, {\it Phys. Rev. B} {\bf 83},  235115 (2011).
%
%\bibitem[RS$_1$]{ReedSimon1}
%\textsc{Reed, M.; Simon, B.} : {\it Methods of Modern Mathematical Physics I: Functional Analysis}.  Academic Press, New York,  (1979).
%				
%%\bibitem[RF]{RoydenFitzpatrick} 
%%\textsc{Royden, H.L.; Fitzpatrick, P.M.} : {\it Real Analysis}. Prentice Hall, United States of America, (2010).
%	
%
%\bibitem[Sch$_1$]{Schulz-BaldesCMP}
%\textsc{Schulz-Baldes, H.} : Persistence of spin edge currents in disordered Quantum Spin Hall systems, {\it Commun. Math. Phys.} {\bf 324}, 589--600 (2013).
%
%\bibitem[Sch$_2$]{Schulz-Baldes13} 
%\textsc{Schulz-Baldes, H.} : $\Z_2$-indices and factorization properties of odd symmetric Fredholm operators, {\it Doc. Math.} {\bf 20}, 1481--1500 (2015).
%
%%%%  { Prodan1, Prodan2, Prodan3, Schulz-BaldesCMP, AvilaSchulz-Baldes12, Schulz-Baldes13, GrafPorta, FMP2, PaMo,  
%%Lyon, Lyon2015}
%
%\bibitem[SZXN]{ShiZhangXiaoNiu}
%\textsc{Shi, J.; Zhang, P.; Xiao, D.; Niu, Q.} : Proper definition of spin current in spin-orbit coupled systems. {\it Phys. Rev. Lett.} {\bf 96}, 076604 (2006).
%
%\bibitem[Si]{Simon}
%\textsc{Simon, B.} : {\it Trace Ideals and their Applications}, American Mathematical Society, 2005.
%
%\bibitem[SXW]{SunXieWang} \textsc{Sun, Q. F.; Xie, X. C.; Wang, J.} : Persistent spin current in nano-devices and definition of the spin current. {\it Phys. Rev. B} {\bf 77}, 035327 (2008).
%
%%\bibitem[SPFKS]{Fuchs}
%%\textsc{Sticlet, D. ; P\'{e}chon, F.; Fuchs, J.-N.; Kalugin, P.; Simon P.} : Geometrical engineering of a two-band Chern insulator in two dimensions with arbitrary topological index, {\it Phys. Rev. B} {\bf 85}, 165456  (2012).
%%
%%\bibitem[SV$_1$]{Vanderbilt1}
%%\textsc{Soluyanov, A.A.; Vanderbilt, D.} : Wannier representation of $\Z_2$ topological insulators, {\it Phys. Rev. B} {\bf 83}, 035108 (2011).
%%
%%\bibitem[SV$_2$]{Vanderbilt2}
%%\textsc{Soluyanov, A.A.; Vanderbilt, D.} : Computing topological invariants without inversion symmetry, {\it Phys. Rev. B} {\bf 83}, 235401 (2011).
%%
%%\bibitem[SV$_3$]{Vanderbilt3}
%%\textsc{Soluyanov, A.A.; Vanderbilt, D.} : Smooth gauge for topological insulators, {\it Phys. Rev. B} {\bf 85}, 115415 (2012).
%
%
%
%\bibitem[WFGW]{WuFatemi2018}
%\textsc{Wu, S. et al.} :  Observation of the Quantum Spin Hall effect up to $100$ Kelvin in a monolayer crystal. 
%{\it Science} {\bf 359}, 76-79 (2018).
%
%\bibitem[ZWSXN]{cinesi}
%\textsc{Zhang, P.;  Wang, Z.;  Shi, J.; Xiao, D.; Niu, Q.} : 
%Theory of conserved spin current and its application to a two-dimensional hole gas. 
%{\it Phys. Rev. B} {\bf 77}, 075304 (2008).

\end{thebibliography}
\end{document}